\documentclass[a4paper,11pt]{article}
%\documentclass{JHEP}
%\preprint{}
\usepackage{jheppub} % for details on the use of the package, please see the JINST-author-manual
\makeatletter
\gdef\@fpheader{}
\makeatother
\usepackage{amsmath}
\usepackage{amssymb}
\usepackage{amsthm}
\usepackage{empheq}
\usepackage{bbm}
\usepackage{amsfonts}
\usepackage{calligra}
\usepackage{amssymb}
\usepackage{braket}
\usepackage{enumitem}
\usepackage[english]{babel}
\usepackage{xcolor}
\usepackage{amssymb}
\usepackage{stmaryrd}
\usepackage{tikz-cd} %Added by FDJ
\usepackage{tikz} %Added by FDJ
\usepackage{todonotes} %Added by GP
\usepackage{cases} %Added by GP
\usepackage[utf8]{inputenc} %Added by GP
\usepackage{hyperref} %Added by GP
\usepackage{cleveref} %Added by GP
\usepackage{mathrsfs} %Added by GP
\usepackage{cancel} %Added by GP
\usepackage[percent]{overpic} %Added by GP
\usepackage{longtable} %Added by GP
\usepackage{array} %Added by GP
\usepackage{booktabs} %Added by GP
\usepackage{tabularx} %Added by GP
\usepackage{colortbl} %Added by GP
\usepackage{xcolor} %Added by GP
\usepackage{hhline} %Added by GP
\usepackage{float} %Added by GP
\usepackage{tabularx} %Added by GP
\usepackage{enumitem} %Added by GP

\newtheorem{theorem}{Theorem}[section] %Added by GP
 %Added by GP
\newtheorem{lemma}[theorem]{Lemma} %Added by GP
\newtheorem{result}[theorem]{Result} %Added by GP
\newtheorem{remark}[theorem]{Remark} %Added by GP
 %Added by GP
 %Added by GP
 %Added by GP
\crefname{subsubsection}{section}{sections} %Added by GP
\crefname{lemma}{lemma}{lemmas} %Added by GP

\def\del{{\partial}}
\def\cA{{\mathcal{A}}}

\def\I{{\mathcal{I}}}
\def\zb{\bar{z}}

\newcommand{\black}{\color{black}}
\newcommand{\red}{\color{red}}
\newcommand{\white}{\color{white}}
\newcommand{\be}{\begin{equation}}
\newcommand{\ee}{\end{equation}}
\setuptodonotes{linecolor=black!50,backgroundcolor=yellow!80,size=footnotesize}
\allowdisplaybreaks

% \arxivnumber{1234.56789} % if you have one

\title{\boldmath Homotopy kinematic algebras at null infinity}

\author[a]{Felipe D\'iaz-Jaramillo,}
\author[b]{Silvia Nagy,}
\author[b]{Giorgio Pizzolo}

\affiliation[a]{Institute for Physics, Humboldt University Berlin,
Zum Großen Windkanal 6, D-12489 Berlin, Germany}
\affiliation[b]{Department of Mathematical Sciences, Durham University, Durham, DH1 3LE, UK}

\emailAdd{felipe.diaz-jaramillo@hu-berlin.de}
\emailAdd{silvia.nagy@durham.ac.uk}
\emailAdd{giorgio.pizzolo@durham.ac.uk}

\abstract{ 
We present the first formulation of a homotopy algebra adapted to a $1/r$ expansion near future null infinity ($\mathcal{I^+}$). Focusing on self-dual Yang-Mills theory in Bondi coordinates, we demonstrate that imposing the homotopy algebra relations naturally yields the physically consistent fall-off behaviour of the fields near $\mathcal{I^+}$. Furthermore, we employ this framework to systematically construct kinematic algebras, uncovering novel infinite families of such algebras that satisfy the Jacobi identity on slices near $\mathcal{I^+}$.
\black
}

\begin{document}
\maketitle
\flushbottom

\section{Introduction}

Holography, initially in the context of the AdS/CFT correspondence\cite{Maldacena:1997re}, has garnered significant attention for its profound insights on both sides of the duality. More recently, celestial holography, which represents its flat-space counterpart, has been gaining a lot of traction\footnote{See reviews \cite{Pasterski:2021raf,Raclariu:2021zjz} and references within.}. In this framework, an important step is understanding the behaviour of theories close to the relevant boundary of flat space-time, which is null infinity. This in turn controls the so-called asymptotic symmetries, which govern the behaviours or scattering amplitudes, and have natural counterparts in the study of CFT correlators.

Yet another connection between gravity and quantum field theory emerges through the double copy program, which posits that gravity can be obtained by ``squaring" Yang-Mills theory \cite{Bern:2008qj,Bern:2010ue}. In the context of scattering amplitudes, this process relies on the so-called colour-kinematics duality which requires that the colour and kinematic information of Yang-Mills\footnote{This also holds for other theories with colour degrees of freedom.} amplitudes obey the same algebraic relations. The colour information is encoded in the structure constants (or traces of generators) of a Lie algebra, while the kinematic information is contained in the polarization vectors and momenta of the external particles participating in the scattering event. Thus, colour-kinematics duality suggests the existence of an algebraic structure describing the kinematics of Yang-Mills theory dubbed the \textit{kinematic algebra}. Due to colour being encoded in a Lie algebra, the search for the kinematic algebra was initially also focused on Lie algebras.

Early success was found in self-dual Yang-Mills theory (SDYM)\cite{Monteiro:2011pc} where a convenient scalar description of the theory was available \cite{Plebanski:1975wn,Bardeen:1995gk,Chalmers:1996rq,Prasad:1979zc,Dolan:1983bp,Parkes:1992rz,Cangemi:1996pf,Popov:1996uu,Popov:1998pc}. This has led to an elegant double copy to self-dual gravity and numerous extensions \cite{Monteiro:2011pc,Chen:2019ywi,Chen:2021chy,Monteiro:2022nqt,Monteiro:2013rya,Elor:2020nqe,Armstrong-Williams:2022apo,Farnsworth:2021wvs,Skvortsov:2022unu,Campiglia:2021srh,Brandhuber:2022enp,Ben-Shahar:2022ixa,CarrilloGonzalez:2024sto,Berman:2018hwd,Nagy:2022xxs,Monteiro:2022lwm,Krasnov:2021cva,Monteiro:2022xwq,Lipstein:2023pih,Borsten:2023paw,Brown:2023zxm,Doran:2023cmj,Chowdhury:2024dcy,LopesCardoso:2024ttc,Correa:2024mub}. Kinematic algebras have also been revealed in Chern-Simons theory \cite{Ben-Shahar:2021zww, Borsten:2022vtg, Bonezzi:2024dlv}, the nonlinear sigma model \cite{Cheung:2016prv}, a certain formulation of YM following from Heavy-Mass Effective Field Theory \cite{Chen:2022nei,Brandhuber:2022enp,Brandhuber:2021bsf}, via pure spinor actions \cite{Ben-Shahar:2021doh,Borsten:2023reb}, and even in abelian theories \cite{Armstrong-Williams:2024icu}. 

In recent years, the algebraic structures underlying perturbative field theories, called homotopy algebras, have been used to formalise the construction of kinematic algebras and systematize the implementation of the double copy \cite{Reiterer:2019dys,Bonezzi:2022bse,Bonezzi:2022yuh,Borsten:2021hua,Borsten:2021gyl,Borsten:2020zgj,Bonezzi:2023ced,Borsten:2022vtg,Escudero:2022zdz}. Homotopy algebras, as we shall explain in \autoref{sec:Homotopy algebras and homotopy transfer}, are generalizations of more commonly known algebras in the physics literature, such as \textit{strict} Lie algebras\footnote{These are generally just referred to as Lie algebras in the physics literature, however, we use the term strict to differentiate from homotopy algebras which will feature in the article.}, that encode the symmetries and dynamics of perturbative field theories. Formulating physical theories with colour degrees of freedom in this framework allows one to algebraically factorize the theories in a well-defined manner into colour and kinematics, construct a kinematic algebra, and finally take the tensor product of two copies of the kinematic algebra which yields a new homotopy algebra that encodes a gravitational theory. 

Notably, in general, the kinematic algebra that arises from the homotopy-algebraic framework is not a strict Lie algebra. Rather, the general structure obtained by these means is a so-called BV$_{\infty}^{\Box}$ algebra, first introduced by Reiterer in \cite{Reiterer:2019dys} to understand colour-kinematics duality. To this end, however, one can work with simpler, strict algebras at the cost of locality \cite{Borsten:2023ned,Borsten:2020zgj,Borsten:2021hua} or by making particular gauge choices. Moreover, in the context of holography, homotopy algebras have been used to transfer information from the bulk to the boundary \cite{Chiaffrino:2023wxk,Alfonsi:2024utl}.

Thus homotopy algebras hold great promise in both the study of holography, and that of the double copy. In this paper, we make this idea more tangible by focusing on self-dual Yang-Mills theory at null infinity. We demonstrate that the physical fall-off behaviour of fields, gauge parameters, and equations of motion can be derived from a homotopy algebraic perspective. Furthermore, we identify a strict kinematic Lie algebra that arises in carefully selected slices of the radial expansion near the boundary, providing a novel approach to its formulation.

Our strategy is as follows: we begin by recasting the homotopy algebra formulation of SDYM in flat Bondi coordinates and then perform an expansion in powers of 1/r. Without making any physical assumptions, we consider slices corresponding to approximations of varying precision near null infinity. By requiring that these slices preserve all the defining conditions of the homotopy algebras, we derive constraints on the fall-off behaviour of the fields that precisely match the physical fall-off conditions described in \cite{Strominger:2013lka,Strominger:2017zoo} and reviewed in \autoref{Review of physical fall-off at null infinity}:
\begin{align}\label{A_falls_intro}
        A_r=\frac{A_r^{(-2)}}{r^2}+..., \qquad A_u=A^{(0)}_u+\frac{A_u^{(-1)}}{r}+..., \qquad A_z=A^{(0)}_z+..., \qquad A_{\bar{z}}=A^{(0)}_{\Bar{z}}+...
    \end{align}
as detailed in \autoref{sec:SDYM at null infinity via slice truncation}.

Armed with a well-defined homotopy algebra near null infinity, we then construct the kinematic algebra, and specifically look for strict Lie algebra versions of these. We manage to find a number of novel ways to partially gauge fix and truncate the fields in order to reveal new strict kinematic algebras at null infinity. The solutions are of the form
\be \label{intro_slice}
A_\mu=\sum_{n=t_\mu^k}^{f_{i,\mu}^k}r^{n}A_\mu^{(n)}
\ee 
where $t_\mu^k$ and $f_{i,\mu}^k$, with $t_\mu^k\leq f_{i,\mu}^k\leq0$, define ``slices" in the expansion in $1/r$. We refer to $t_\mu^k$ as the truncation, as it encodes the precision to which we work in $1/r$, whereas $f_{i,\mu}^k$ is determined by the fall-off \eqref{A_falls_intro}, together with possible additional constraints. We find infinite families of solutions labelled by $k$, where $|k|$ encodes the ``thickness" of the slice, summarised in the table below:

\begin{table}[H]
    \centering
    \arrayrulecolor{black}
    \renewcommand{\arraystretch}{1.5} % Adjust row height
    \begin{tabular}{|c|c|c|c|c!{\vrule width 0.4mm}c|c|c|c|c|c|c|}
        \cline{2-12}
        \multicolumn{1}{c|}{} & \multicolumn{4}{c!{\vrule width 0.4mm}}{$k=-1$} & \multicolumn{7}{c|}{$k\leq-2$} \\
        \cline{2-12}
        \hhline{~~~~~~~~----}
        \multicolumn{1}{c|}{} & $t_\mu^{-1}$ & $f_{1,\mu}^{-1}$ & $f_{2,\mu}^{-1}$ & $f_{3,\mu}^{-1}$ & $t_\mu^k$ & $f_{4,\mu}^k$ & $f_{5,\mu}^k$ & \cellcolor[gray]{0.9}$f_{6,\mu}^k$ & \cellcolor[gray]{0.9}$f_{7,\mu}^k$ & \cellcolor[gray]{0.9}$f_{8,\mu}^k$ & \cellcolor[gray]{0.9}$f_{9,\mu}^k$ \\
        \hline
        $r$ & $-3$ & $-2$ & $-2$ & $-3$ & $k-2$ & $-2$ & $k-2$ & \cellcolor[gray]{0.9}$k-1$ & \cellcolor[gray]{0.9}$-2$  & \cellcolor[gray]{0.9}$-2$ & \cellcolor[gray]{0.9}$k-1$  \\
        $u$ & $-2$ & $-1$ & $0$ & $0$ & $k-1$ & $k+1$ & $0$ & \cellcolor[gray]{0.9}$0$ & \cellcolor[gray]{0.9}$k$ & \cellcolor[gray]{0.9}$0$ & \cellcolor[gray]{0.9}$k$   \\
        $z$ & $-1$ & $0$ & $0$ & $-1$ & $k$ & $0$ & $k$ & \cellcolor[gray]{0.9}$0$ & \cellcolor[gray]{0.9}$k+1$  & \cellcolor[gray]{0.9}$k+1$ & \cellcolor[gray]{0.9}$0$  \\
        $\bar z$ & $-2$ & $0$ & $-1$ & $0$ & $k-1$ & $k+1$ & 0 & \cellcolor[gray]{0.9}$k$ & \cellcolor[gray]{0.9}$0$  & \cellcolor[gray]{0.9}$k$ & \cellcolor[gray]{0.9}$0$   \\
        \hline
    \end{tabular}
    \caption{Slices manifesting strict kinematic algebras. The slices are defined in Eq. \eqref{intro_slice}. Note that while solutions 1-3 can be obtained via gauge choices, the families of solutions 4-7 require further elimination of unphysical d.o.f. and/or a restriction to a subsector of the theory. In the limit $k\to -\infty$ and upon imposing the Lorenz gauge condition, slices 6 and 7 reproduce two versions of the area-preserving diffeomorphisms in \cite{Monteiro:2011pc}. The families of solutions 6-9 in the shaded boxes only become strict upon imposing additional constrains coming from certain components of the e.o.m.. See \autoref{sec:Strict kinematic algebras} for more details.}
    \label{table: introduction}
\end{table}

The article is organised as follows: in \autoref{Physical motivation}, we start with a brief overview of the two main topics to be studied. The first is the physical fall-off of fields, equations of motion and gauge parameters in YM in a radial expansion near null infinity, and the second is the kinematic algebra arising as a consequence of color-kinematics duality in the double copy programme. In \autoref{sec:Homotopy algebras and homotopy transfer} we review our main mathematical tool, i.e. the homotopy algebras. In the brief \autoref{Homotopy algebra formulation of SDYM in Bondi coordinates} we present the homotopy algebra formulation of SDYM in flat Bondi coordinates, which are best adapted for studying asymptotic phenomena in this theory. In \autoref{sec:SDYM at null infinity via slice truncation}, we formulate and solve the constraints arising from the requirement that the projection to slices near null infinity preserves the defining conditions of homotopy algebras, giving an alternative derivation of the physical fall-off \eqref{A_falls_intro}. In \autoref{sec:Strict kinematic algebras}, we employ the homotopy algebra machinery to find new infinite families of kinematic algebras, as summarised in \autoref{table: introduction}.  Then in \autoref{sec:algebra_via_homotopy_transfer} we introduce some further homotopy algebra machinery, namely co-chain maps and quasi-isomorphisms, and use these to give an alternative proof for the fall-off. We conclude in \autoref{sec: Conclusions}. \autoref{sec:appendix_proofs} contains the analytic proof of the solution presented in \autoref{sec:SDYM at null infinity via slice truncation}, while \autoref{sec:appendix other proofs} details the construction of the strict kinematic algebras. Finally, \autoref {app:Homotopy transfer} gives details of the homotopy transfer procedure. 

\vspace{15 pt}

% SOME REFERENCES
% -Strominger review \cite{Strominger:2017zoo}
% -soft theorems \cite{Weinberg1965,Low1958}
% coming from asymptotic asymptotic symmetries at leading order \cite{Strominger:2013lka,Strominger:2013jfa,He:2014cra}

\section{Physical motivation}
\label{Physical motivation}
In this section we review the two areas where we will be applying the homotopy algebra techniques. In \autoref{ckduality}, we give a very brief overview of colo-kinematics duality, and the emergence of the kinematic algebra, focusing on SDYM. Then, in \autoref{Review of physical fall-off at null infinity}, we give a quick description of gauge fields near null infinity. 

\subsection{Color-kinematics duality and the kinematic algebra}\label{ckduality}

Color-kinematics duality is a revolutionary observation about scattering amplitudes for non-abelian gauge theory \cite{Bern:2008qj,Bern:2010ue}, which has led to immense progress in relating gravity to gauge theory via the double copy. To state it, we first formulate n-point amplitudes as sums over cubic diagrams
\be 
\cA_n=g^{n-2}\sum_i\frac{c_i n_i}{D_i}\ ,
\ee 
where $c_i$ are the color factors (consisting of products of the structure constants of the gauge group), $n_i$ are kinematic factors (depending on momenta and polarisation vectors), and $D_i$ are products of Feynman propagators. The observation is that there exists a gauge freedom at the level of the expression above, such that we can choose kinematic factors which exactly mimic the behaviour of the color factors. The two relevant ingredients are antisymmetry under exchange of particle labels
\be 
c_i\ \to\ -c_i \quad\Rightarrow\quad n_i\to -n_i  
\ee 
and the Jacobi identity
\be \label{Jacobi_amplitudes}
c_i+c_j+c_k=0 \quad\Rightarrow\quad n_i+n_j+n_k=0 \ .
\ee 
The LHS of the above is of course just a consequence of the fact that the gauge algebra is a \emph{strict} Lie algebra\footnote{Strict Lie algebras are usually just referred to as Lie algebras in the physics literature, but we need to be more precise in order to differentiate from the structures introduced later in the paper.}. This suggests the existence of a strict Lie algebra governing the kinematics of the theory, which has been dubbed the kinematic algebra. 

A very clean manifestation of this algebra arises in the self-dual sector of Yang-Mills theory, given by the following constraint in Lorentzian signature:
\be \label{SD_eqn_original}
F_{\mu\nu}=\frac{i}{2}\epsilon_{\mu\nu\rho\sigma} F^{\rho\sigma} \ .
\ee 
The above encapsulates both the e.o.m. and Bianchi identities. Working in light-cone coordinates
\be\label{LCcoord} 
x^-=\tfrac{1}{\sqrt{2}}(t-z),\quad x^+=\tfrac{1}{\sqrt{2}}(t+z),\quad w=\tfrac{1}{\sqrt{2}}(x+iy),\quad \bar{w}=\tfrac{1}{\sqrt{2}}(x-iy) \ ,
\ee 
with line element
\be 
ds^2=-2dx^-dx^+\,+\, 2dwd\bar{w} \ .
\ee 
In order to reveal the kinematic algebra, we first impose the light-cone gauge 
\be 
A_+=0 \ .
\ee 
Then additionally solving the constraint equations coming from \eqref{SD_eqn_original}, we find 
\be 
A_w=0\ . 
\ee 
Working with the remaining non-vanishing components $A^m=(A^+,A^w)=(A_-,A_{\bar{w}})$, we find a kinematic bracket in the form of a Schouten-Nijenhuis bracket  (\cite{Bonezzi:2023pox})
\be \label{SNbracket}
\big[\cA_1,\cA_2\big]^m_{\rm SN}=\cA_1^n\del_n\cA_2^m-\cA_2^n\del_n\cA_1^m \ .
\ee
In the above, $\cA^m$ is the so-called color-stripped version on $A^m$ (see \cref{sec:Homotopy algebras and homotopy transfer} for details), i.e. we are treating it as a standard 2 dimensional vector. The bracket \eqref{SNbracket} corresponds to a diffeomorphism algebra, which is additionally area-preserving due to the constraint            $\partial_m\cA^m=0$, arising from \eqref{SD_eqn_original}. This, of course, satisfies Jacobi, and is the relevant bracket underpinning the kinematic factors. Upon further expressing the gauge fields in terms of scalars, we can recover the famed original expression in \cite{Monteiro:2011pc} (see \cite{Bonezzi:2023pox} for details).

% \blue shorten scalar, relation to above\black

% and the non-zero components can be writted in a terms of a scalar field \cite{scalar descritpions} 
% \be 
% A_V=-\tfrac{1}{4}\partial_w \Phi,\quad
% A_{\bar{w}}=-\tfrac{1}{4}\partial_U\Phi
% \ee 
% Then, as found in \cite{Monteiro:2011pc}, the dynamical equation for the scalar degree of freedom
% \be 
% \square\Phi+i[\partial_w\Phi,\partial_U\Phi]=0
% \ee 
% exhibts, in addition to the expected gauge algebra bracket, an additional kinematic Poisson bracket
% \be 
% [\partial_w\Phi,\partial_U\Phi]=\tfrac{1}{2}[\{\Phi,\Phi\}]
% \ee 
% where
% \be
% \{f,g\}=\partial_w f \partial_U g
% -\partial_U f \partial_w g
% \ee 
% corresponds to an area-preserving diffeomorphism.

%We can write structure ``constants" for the above, which are momentum dependent:
%\be 
%F_{p_1p_2}^{\quad\  p_3}=(2\pi)^4 \delta^4(p_1+p_2-p_3) 
%(p_{1w}p_{2U}-p_{1U}p_{2w})
%\ee
%and satisfy the Jacobi identity, directly verifying \eqref{Jacobi_amplitudes} above. 

Let us pause to take stock of how the kinematic algebra arose.
Note that we had to
\vspace{-8 pt}
\begin{itemize}[noitemsep] 
\item find a convenient gauge choice, 
\item eliminate a non-propagating degree of freedom,
\end{itemize}
\vspace{-8 pt}
in order to reveal the presence of a strict kinematic algebra. 

Can we systematize the search for set-ups manifesting strict kinematic algebras? A promising tool are the so-called homotopy algebras, reviewed in detail in \cref{sec:Homotopy algebras and homotopy transfer}. Within this context, there exists an algorithmic way of constructing the kinematic algebra, which eliminates the guesswork \cite{Reiterer:2019dys}. The trade-off is that the outcome is not usually a strict algebra, i.e. the Jacobi Identity \eqref{Jacobi_amplitudes} is relaxed. Fortunately, the failure to satisfy \eqref{Jacobi_amplitudes} is precisely encoded in an object called the Jacobiator, also constructed in an algorithmic way from the data in the theory. This greatly simplifies the task of determining under which circumstances the kinematic algebra will be a strict one, by looking at how to set the Jacobiator to 0. 

Even though the double copy procedure can be performed even in the absence of a strict kinematic algebra \cite{Bonezzi:2022bse,Bonezzi:2022yuh,Bonezzi:2023ced,Bonezzi:2024dlv,Bonezzi:2023xhn,Bonezzi:2023lkx,Bonezzi:2023ciu}, its presence provides dramatic simplifications of the Yang-Mills theory, and consequently of the map to the resulting gravity theory. This has exploited for example in pushing the remit of the double copy to curved backgrounds \cite{Lipstein:2023pih,CarrilloGonzalez:2024sto}.

Later in the article we will show that there is a novel way of revealing strict kinematic algebras, which will be related to taking appropriate slices in the fall-of of fields close to null infinity.

\subsection{Review of physical fall-off at null infinity}
\label{Review of physical fall-off at null infinity}
We will work in the so-called flat Bondi coordinates $x^\mu=(r,u,z,\zb)$ 
in which the leading order part of the metric is:
\be\label{fl_B_metric}
ds^2=-2drdu+2r^2dzd\zb  \  .
\ee
In these coordinates, the celestial sphere becomes a complex plane spanned by $z$ and $\zb$. These are particularly suited for working in the self-dual sector, and they have a particularly simple map to the light-cone coordinates \eqref{LCcoord} 
\be\label{LCtoflB} 
x^-=rz\zb+u,\quad x^+=r,\quad w=rz,\quad \bar{w}=r\zb \ .
\ee 
The fall-off will be completely analogous to the more standard Bondi coordinates.

Here we will give a brief review of the physical motivation for the fall-offs that are generally assumed for the gauge potential at $\I$  \cite{Strominger:2013lka,Strominger:2017zoo}:
 \begin{align}\label{A_falls_no_u_gauge}
        A_r=\frac{A_r^{(-2)}}{r^2}+..., \qquad A_u=A^{(0)}_u+\frac{A_u^{(-1)}}{r}+..., \qquad A_z=A^{(0)}_z+..., \qquad A_{\bar{z}}=A^{(0)}_{\Bar{z}}+...
    \end{align}

In the above, we assume that the coefficients in the $r$-expansion are functions of $y^\alpha=(u,z,\zb)$. The fall-off above and the motivation below apply to both YM and Maxwell theory. By extension, it will also apply to the self-dual sector of these theories\footnote{Recall that \eqref{SD_eqn_original} encodes the general YM equations; this can be seen by contracting with the gauge-covariant derivative.}. We are seeking the most general fall-off that yields physically reasonable solutions. To start off, we recall that the electric field should fall off as 
\be 
E_r=\frac{E_r^{(-2)}}{r^2}+...
\ee 
The corresponding field strength component in Bondi coordinates is
\be\label{Fur_fall_off} 
F_{ur}=\frac{F_{ur}^{(-2)}}{r^2}+...
\ee 
This already tells us that $A_r$ starts at $\mathcal{O}(\tfrac{1}{r^2})$, so we can further conclude that
\be\label{Frz_fall_off} 
F_{rz}=\frac{F_{rz}^{(-2)}}{r^2}+...
\ee 
We then consider the energy flux, encoded in the $T_{uu}$ component of the energy momentum tensor
\be 
T_{\mu\nu}=\text{Tr}\left(F_{\mu\rho}F^{\rho}_{\ \nu}-\tfrac{1}{4}g_{\mu\nu}F_{\rho\sigma}F^{\rho\sigma} \right)
\ .
\ee 
Requiring that $T_{uu}$ falls of like 
\be 
T_{uu}=\frac{T_{uu}^{(-2)}}{r^2}+...
\ee 
which is needed in order to get a finite integrated flux, and using \eqref{Fur_fall_off} and \eqref{Frz_fall_off} we get
\be 
T_{uu}\approx\frac{F_{uz}F_{u\zb}}{r^2}+...
\ee
where the $...$ refer to terms with a sharper fall-off. We deduce that
\be \label{Fuz_fall_off}
F_{uz}=F_{uz}^{(0)}+...
\ee 
Finally, it is straightforward to see that the fall-offs in \eqref{A_falls_no_u_gauge} follow from \eqref{Fur_fall_off}, \eqref{Frz_fall_off} and \eqref{Fuz_fall_off} above. We remark that it is very common to set
\be \label{Au0_gauge_fix}
A^{(0)}_u=0 \ ,
\ee 
however this does not strictly follow from the physical considerations above. Generally, \eqref{Au0_gauge_fix} can be thought of as a partial gauge fixing; indeed it is often imposed in conjunction with Lorenz gauge $\nabla^\mu A_\mu=0$, or radial gauge $A_r=0$. In both these cases, the residual symmetry in these gauges is sufficient to impose \eqref{Au0_gauge_fix}. A useful consequence of \eqref{Au0_gauge_fix} is that the soft gluon mode $N_z$ can be written straightforwardly as a difference between the free data $A_z^{(0)}$ at future timelike infinity and spatial infinity. Specifically, we have
\be 
N_z=\lim_{\omega\to 0}\int_{-\infty}^{\infty}du\ F_{uz}^{(0)}\ e^{i\omega u}=
\int_{-\infty}^{\infty}du\ F_{uz}^{(0)} = A_z^{(0)}\Big|_{\I_+^+} - 
A_z^{(0)}\Big|_{\I_-^+}
\ee
where to get to the last line we used \eqref{Au0_gauge_fix}.

We pause here to comment on an implicit assumption in the fall-offs above: we have assumed the absence of log terms. More generally, the gauge field could be expanded as a polyhomogeneous function 
\be \label{A_st_exp}
A_\mu=\sum_{n,k\geq0}A_\mu^{(-n;k)}(y^\alpha)\frac{\text{log}^k r}{r^n} \ ,
\ee
The assumption that the log terms are absent (i.e. $A_\mu^{(-n;k)}(y^\alpha)=0$ for $k\neq 0$) is a well motivated one at tree-level, and is supported by the polynomial tree level fall-offs in the $u$ variable which relate to sub-leading effects \footnote{See e.g. \cite{Nagy:2024jua} for more details.}. In certain specific gauges, such as harmonic gauge  \cite{Campiglia:2021oqz}, or when including loop effects, it may be that one needs to relax the fall-off assumption to something of the form in \eqref{A_st_exp}. We leave the study of this more general scenario for future work. 

The fall-off in the gauge parameter following from the considerations above is 
\be 
\Lambda=\Lambda^{(0)}+...
\ee 
where $\Lambda^{(0)}$ is a function of $(z,\zb)$. The celebrated Weinberg soft theorem in gauge theory \cite{Weinberg1965} can be derived, at leading order, as a Ward identity \cite{Strominger:2013lka,He:2014cra} for the large gauge symmetry with parameter $\Lambda^{(0)}$, where the term large refers to the fact that $\Lambda^{(0)}$ does not vanish in the limit $r\to\infty$, unlike a standard gauge symmetry.

\section{Homotopy algebras}
\label{sec:Homotopy algebras and homotopy transfer}

An algebra is a vector space equipped with a bilinear map. We classify algebras according to the types of relations that the map obeys, which here we call \textit{defining relations}. For example, an associative algebra is a vector space equipped with a bilinear map that is associative, while a Lie algebra is a vector space equipped with an antisymmetric bilinear map that obeys the Jacobi identity. We call these algebras \textit{strict} because their defining relations are obeyed strictly. Homotopy algebras are generalizations of strict algebras in that the defining relations are relaxed: they are not obeyed strictly, but \textit{up to homotopy}. Mathematically, this means that there exist higher multiplicity maps (sometimes called higher brackets or higher products) that control the failure of the defining relations to hold strictly. 

In this section, we present the definition of the homotopy generalization of Lie algebras, called $L_{\infty}$ algebras, as well as the homotopy generalization of associative commutative algebras, called $C_{\infty}$ algebras, and finally, more exotic algebras called BV$_{\infty}^{\square}$ algebras. These algebras have been observed to be relevant in the mathematical description of the double copy program. In the following, we present these mathematical concepts in the context of physical theories, in particular, in the context of self-dual Yang-Mills theory.

\subsection{\texorpdfstring{$L_{\infty}$}{L infty} algebras}\label{Linfdefinitions}
To define an $L_{\infty}$ algebra, it will be useful to first define a linear structure called a \textit{cochain complex}. A cochain complex $(\mathcal{X}, B_{1})$ is a graded vector space $\mathcal{X}=\bigoplus_{i}X_{i}$ equipped with a nilpotent linear map $B_{1}:\mathcal{X}\to \mathcal{X}$ called a \textit{differential}. The integer label $i\in \mathbb{Z}$ is called the \textit{degree} of the vector space, and it is inherited by the elements of each space, i.e.,  for an element $x\in X_{i}$ we say that $x$ has degree $i$. In our conventions the differential has degree $+1$, namely $B_{1}:X_{i}\to X_{i+1}$, and nilpotency means $B_{1}(B_{1}(x))=0$. Diagrammatically, we can represent cochain complexes as follows:
\begin{equation}
\label{def:cochain_complex}
\begin{tikzcd}[row sep=10mm]
\ldots\arrow{r}{B_1}&X_{-1}\arrow{r}{B_1}&X_{0}\arrow{r}{B_1}&X_{1}\arrow{r}{B_1}&\ldots
\end{tikzcd}    \;.
\end{equation}
Let us remark that the differential can act differently on each space, as long as it is nilpotent. 

The cohomology of the differential in degree $i$ is defined as the equivalence class
\begin{equation}
\label{def:cohomology_of_B1}
    H^{i}=\frac{\text{Ker}\, B_{1}^{(i)}}{\text{Im}\, B_{1}^{(i-1)}}\;,
\end{equation}
where the superscripts in the differential denote the space where it is acting (we omit the superscripts in the following) and the kernel and image of the differential are subspaces of the cochain complex defined as
\begin{equation}
\begin{split}
    \text{Ker}\, B_{1}^{(i)}&=\{x\in X_{i}| B_{1}(x)=0\}\;,\\
    \text{Im}\, B_{1}^{(i-1)}&=\{x\in X_{i}|x = B_{1}(y)\; \text{with\;} y\in X_{i-1}\}\;.
\end{split}
\end{equation}
We say that an element of the kernel of the differential is closed, while an element of the image is exact. The cohomology is then the space of elements of $\mathcal{X}$ that are closed but not exact. A prominent example of a cochain complex is the de Rahm complex in differential geometry, where the graded vector space is the sum of the spaces of forms of all degrees and the differential is the de Rahm differential. In this case, the cohomology is related to the Euler number of the manifold where the differential forms are defined. In the $L_{\infty}$ formulation of field theories, the cohomology of the differential $H^{i}$ encodes the physics of the theory, as we shall see below when we formulate self-dual Yang-Mills theory in this framework.

An $L_{\infty}$ algebra is a cochain complex $(\mathcal{X}, B_{1})$ equipped with a set of multilinear maps (or $n$-brackets) $B_{n}:\mathcal{X}^{\otimes n}\to \mathcal{X}$ with $n\geq 1$ that obey a set of relations called generalized Jacobi relations. The maps $B_{n}$ are graded symmetric, which means that we can exchange the order of adjacent input elements of the maps at the cost of a sign determined by the degree of the elements, namely 
\begin{equation}
\label{def:graded_symmetry_Bn}
B_{n}(\ldots, x_{l}, x_{l+1},\ldots)=(-1)^{x_{l}x_{l+1}} B_{n}(\ldots, x_{l+1}, x_{l},\ldots)\;,
\end{equation}
where the $x$'s in the exponent denote the degree of the elements $x_{l}$ and $x_{l+1}$. The number of non-trivial generalized Jacobi relations depends on the number of non-trivial maps $B_{n}$ which, in principle, may be infinite. The first few generalized Jacobi relations are the nilpotency of the differential
\begin{equation}
\label{def:nilpotency_of_differential}
B_{1}(B_{1}(x))=0\;,
\end{equation}
the Leibniz rule of the differential with respect to the two-bracket $B_{2}$
\begin{equation}
\label{def:Leibniz_rule_B1_B2}
B_{1}(B_{2}(x_{1},x_{2}))+B_{2}(B_{1}(x_{1}),x_{2})+(-1)^{x_{1}}B_{2}(x_{1},B_{1}(x_{2}))=0\;,
\end{equation}
and the Jacobi identity \textit{up to homotopy}
\begin{equation}
\label{def:Jacobi_up_to_homotopy}
\begin{split}
&B_{2}(B_{2}(x_{1},x_{2}),x_{3})+(-1)^{x_{1}(x_{2}+x_{3})}B_{2}(B_{2}(x_{2},x_{3}),x_{1})+(-1)^{x_{3}(x_{1}+x_{2})}B_{2}(B_{2}(x_{3},x_{1}),x_{2})\\
+&B_{1}(B_{3}(x_{1},x_{2},x_{3}))+B_{3}(B_{1}(x_{1}), x_{2},x_{3})+(-1)^{x_{1}}B_{3}(x_{1},B_{1}(x_{2}),x_{3})\\
+&(-1)^{x_{1}+x_{2}}B_{3}(x_{1},x_{2},B_{1}(x_{3}))=0\;.
\end{split}
\end{equation}
The first line of this relation is the combination of terms that would vanish if $B_{2}$ obeyed the (graded) Jacobi identity (also called the Jacobiator). The relation states that in an $L_{\infty}$ algebra we allow the strict Jacobi identity to fail by terms involving the differential $B_{1}$ and a higher map $B_{3}$ or, in more technical terms, that the Jacobi identity is obeyed in the cohomology of the differential. For notational convenience and compactness, in the following, we write the Leibniz rule and the Jacobi identity up to homotopy as
\begin{equation}\label{eq:Leib and Jac}
    \begin{split}
    \text{Leib}(B_{1},B_{2})\equiv[B_{1},B_{2}]=0\;,\\
    \text{Jac}(B_{2})+[B_{1},B_{3}]=0\;,
    \end{split}
\end{equation}
respectively, with the Jacobiator $\text{Jac}(B_{2})$ defined as
\begin{equation}
    \begin{split}
    \text{Jac}(B_{2})(x_{1},x_{2},x_{3})=B_{2}(B_{2}(x_{1},x_{2}),x_{3})&+(-1)^{x_{1}(x_{2}+x_{3})}B_{2}(B_{2}(x_{2},x_{3}),x_{1})\\
    &+(-1)^{x_{3}(x_{1}+x_{2})}B_{2}(B_{2}(x_{3},x_{1}),x_{2})\;.
    \end{split}
\end{equation}
We also introduced the graded commutator of multilinear maps 
\begin{equation}
    [O_{1},O_{2}]=O_{1}O_{2}-(-1)^{O_{1}O_{2}}O_{2}O_{1}\;,
\end{equation}
which in the case of the homotopy Jacobi relation is given by 
\begin{equation}
\begin{split}
    [B_{1},B_{3}](x_{1},x_{2},x_{3})&= B_{1}(B_{3}(x_{1},x_{2},x_{3}))+B_{3}(B_{1}(x_{1}), x_{2},x_{3})\\
    &+(-1)^{x_{1}}B_{3}(x_{1},B_{1}(x_{2}),x_{3})
+(-1)^{x_{1}+x_{2}}B_{3}(x_{1},x_{2},B_{1}(x_{3}))\;.
\end{split}
\end{equation}
Let us emphasize that, in general, there are higher relations between the maps $B_{2}$, $B_{3}$, and $B_{4}$, and so on. These higher relations, however, will play no role in this paper, so we omit them.

$L_{\infty}$ algebras are relevant in physics because one can encode perturbative field theories as data of $L_{\infty}$ algebras. Generally speaking, the differential $B_{1}$ and the higher maps $B_{n}$ encode the free dynamics and interactions of the theory, respectively, whereas the generalized Jacobi relations encode the consistency of the theory, including gauge covariance, closure of the gauge algebra, and so on. In the following, we will describe the $L_{\infty}$ formulation of self-dual Yang-Mills theory. Let us recall that self-dual Yang-Mills theory consists of a gauge field (or one-form) $A=A_{\mu}^{a}\, dx^{\mu}\otimes t_{a}$ valued in a gauge algebra $\mathfrak{g}$ with generators $t_{a}$. In the following, for the sake of compactness, we shall use the formalism of differential forms for which we use the following conventions:
\begin{align*}
    \omega&=\frac{1}{r!}\omega_{\mu_1\dots\mu_r}\mathrm{d}x^{\mu_1}\wedge\dots\wedge\mathrm{d}x^{\mu_r}\;,\\
    \mathrm{d}\omega&=\frac{1}{r!}\partial_\nu\omega_{\mu_1\dots\mu_r}\mathrm{d}x^{\nu}\wedge\mathrm{d}x^{\mu_1}\wedge\dots\wedge\mathrm{d}x^{\mu_r}\;,\\
    *\omega&=\frac{\sqrt{|g|}}{r!(4-r)!}\omega_{\mu_1\dots\mu_r}{\epsilon^{\mu_1\dots\mu_r}}_{\nu_{r+1}\dots\nu_4}\mathrm{d}x^{\nu_{r+1}}\wedge\dots\wedge\mathrm{d}x^{\nu_4}\;.
\end{align*}

The dynamics of the gauge field are determined by the first-order equation
\begin{equation}\label{SDYMfeqs}
2\, P_{-}\, dA + P_{-}[A,A]=0\;,
\end{equation} 
where the projector onto anti-self-dual two-forms is defined as 
\begin{equation}
    P_{-}:=\frac{1}{2}(1-*)\;,
\end{equation}
while the bracket $[\cdot,\cdot]$ is a combination of the wedge product of forms and the structure constants of the Lie algebra of the color Lie group, namely, for two color-valued forms $\omega$ and $\eta$, the bracket is defined as
\begin{equation}
    [\omega,\eta] = \omega^{a}\wedge \eta^{b}\, f_{ab}{}^{c}\, t_{c}\;.
\end{equation}

The self-dual Yang-Mills field equations \eqref{SDYMfeqs} are covariant with respect to the non-abelian gauge transformation
\begin{equation}\label{gaugetfs}
\delta A=d\Lambda+[A,\Lambda]\;,
\end{equation}
generated by the $\mathfrak{g}$-valued zero-form $\Lambda = \Lambda^{a}t_{a}$.

In practice, we start formulating field theories in the framework of homotopy algebras by organizing the elements of the theory (gauge parameters, fields, and field equations) in a cochain complex. In the particular case at hand, we organize the elements of the theory in a cochain complex $(\mathcal{X}^{\rm{SDYM}}, B_{1})$ with $\mathcal{X}^{\rm{SDYM}}=\bigoplus_{i=-1}^{1}X_{i}$ as
\begin{equation}\label{SDYMXcomplex}
\begin{tikzcd}[row sep=2mm]
&X_{-1}\arrow{r}{B_{1}} & X_{0}\arrow{r}{B_{1}} & X_{1} \\
&\Lambda & A & E
\end{tikzcd}    \;,
\end{equation}
where the space $X_{-1}$ is the space of gauge parameters ($\mathfrak{g}$-valued zero-forms) $\Lambda=\Lambda^{a}t_{a}$, $X_{0}$ is the space of gauge fields ($\mathfrak{g}$-valued one-forms) $A=A_{\mu}^{a}\, dx^{\mu}\otimes t_{a}$, and $X_{1}$ is the space of field equations ($\mathfrak{g}$-valued anti-self-dual two-forms) $E = \tfrac{1}{2}E^{a}_{\mu\nu}\, dx^{\mu}\wedge dx^{\nu}\otimes t_{a}$. Next, to read off the action of the differential and the higher maps, we assume that the field equations can be written perturbatively as
\begin{equation}\label{SDYMfeqsalg}
B_{1}(A)+\frac{1}{2}B_{2}(A,A)=0\;,
\end{equation}
while the gauge transformations are assumed to be
\begin{equation}\label{gaugetfsalg}
\delta A = B_{1}(\Lambda)+B_{2}(A, \Lambda)\;.
\end{equation}
Comparing the field equations \eqref{SDYMfeqs} and gauge transformation \eqref{gaugetfs} with \eqref{SDYMfeqsalg} and \eqref{gaugetfsalg}, we conclude that the differential and the two-bracket act as 
\begin{align}\label{QB2}
B_{1}(A)&=2\, P_{-}dA\, \in X_{1}\; ,& B_{1}(\Lambda)&=d\Lambda\, \in X_{0}\nonumber\\
B_{2}(A^{1},A^{2})&=2\, P_{-}[A^{1},A^{2}]\, \in X_{1}\;, & B_{2}(A,\Lambda)&=[A,\Lambda]\, \in X_{0}\; .    
\end{align}
As we advertised previously, the differential $B_{1}$ encodes the free theory, that is, the linearized field equations and the linearized gauge transformation, while the bracket $B_{2}$ encodes the non-linearities in the field equations and gauge transformations. In addition to the above maps, there also exist the following maps that arise from the algebra of gauge transformations and gauge covariance of the field equations:
\begin{equation}\label{otherB2s}
B_{2}(\Lambda^{1},\Lambda^{2})=-[\Lambda^{1},\Lambda^{2}]\, \in X_{-1}\; ,\quad B_{2}(\Lambda,E)=-[\Lambda,E]\, \in X_{1}\; .
\end{equation}
Notice that in this case there are no higher maps than the two-bracket $B_{2}$.

Finally, one checks that the maps obey the generalized Jacobi relations mentioned above. Due to the absence of higher brackets, in self-dual Yang-Mills theory we only need to check the nilpotency of the differential, the Leibniz rule of the differential with respect to the two-bracket, and the strict graded Jacobi identity. These relations encode the consistency of the theory. As an example, let us comment on the nilpotency of the differential. Let us act twice with the differential on a gauge parameter:
\begin{equation}
\begin{split}
B_{1}(B_{1}(\Lambda))&=B_{1}(d\Lambda)\\
&=2\, P_{-}d(d\Lambda)\equiv 0\;,
\end{split}
\end{equation}
where in the last step we used $d^{2}=0$. Notice that the outermost $B_{1}$ acts on $B_{1}(\Lambda)\in X_{0}$ which is a field, which justifies the presence of the projectors $P_{-}$. The above simple computation is the statement of gauge invariance of the linearized theory. Consequently, it encodes information about the physical propagating degrees of freedom of the theory which are elements of the cohomology of the differential $B_{1}$. Indeed, the cohomology of $B_{1}$ in degree zero is
\begin{equation}\label{eq:phyisical interpretation}
H^{0}=\{A\in X_{0}\,|\,B_{1}(A)=0\; \text{and}\; A\neq B_{1}(\Lambda)\}\;,
\end{equation}
 \emph{which are field configurations that satisfy the linear field equations modulo linearized gauge transformations, which correspond to the theory's physical or propagating degrees of freedom}. In that sense, the generalized Jacobi relations encode the consistency of the theory. Proving the Leibniz rule and the graded Jacobi identity of $B_{2}$ amounts to checking that the theory is gauge covariant, that gauge transformations close, and so on. We will not explicitly show this in this paper.

Let us close this section by noting that since self-dual Yang-Mills theory has only up to bilinear terms in the field equations and gauge transformations, we only have up to bilinear brackets $B_{2}$, and no higher brackets. This algebraic structure is called a \textit{strict} $L_{\infty}$ algebra, or a \textit{differential graded Lie algebra} (dgLa).  In the full theory of Yang-Mills, however, the field equations have up to trilinear terms (coming from the quartic interactions in the action), and as a consequence, the theory also has a map $B_{3}$. In contrast, in the perturbative expansion of General Relativity, the field equations have infinite coupling terms. Thus the algebra that describes General Relativity has an infinite number of higher maps.

\subsection{\texorpdfstring{$C_{\infty}$}{C infty} algebras}
\label{sub:Cinf alg}

The double copy program in scattering amplitudes relies on \textit{color decomposition} or \textit{color stripping}: the fact that Yang-Mills scattering amplitudes\footnote{Or amplitudes of any other theory with color degrees of freedom.} factorize into color and kinematic information. Similarly, when formulating Yang-Mills theory in the framework of homotopy algebras, a color decomposition also occurs. More precisely, the $L_{\infty}$ algebra $\mathcal{X}^{\rm{YM}}$ that describes Yang-Mills theory factorizes as the tensor product\footnote{This factorization also holds for other theories with color degrees of freedom such as Chern-Simons theory and, as we show more concretely at the end of the section, self-dual Yang-Mills theory.} $\mathcal{X}^{\rm{YM}}=\mathcal{K}^{\rm{YM}}\otimes \mathfrak{g}$, where $\mathfrak{g}$ is a finite-dimensional Lie algebra that encodes color, while $\mathcal{K}^{\rm{YM}}$ is a homotopy algebra called $C_{\infty}$ algebra that generalizes associative commutative algebras. In this section, we define $C_{\infty}$ algebras and discuss the algebraic color decomposition and $C_{\infty}$ algebra of self-dual Yang-Mills theory.

$C_{\infty}$ algebras are homotopy generalizations of associative commutative algebras, where the associativity relation is relaxed to hold up to homotopy. More precisely, a $C_{\infty}$ algebra is a cochain complex $(\mathcal{K}, m_{1})$
\begin{equation}
\begin{tikzcd}[row sep=10mm]
\ldots\arrow{r}{m_1}&K_{0}\arrow{r}{m_1}&K_{1}\arrow{r}{m_1}&K_{2}\arrow{r}{m_1}&\ldots
\end{tikzcd}    \;,
\end{equation}
equipped with a set of multilinear products $m_{n}: \mathcal{K}^{\otimes n}\to \mathcal{K}$ that obey a set of homotopy associativity relations. In our conventions, the products $m_{n}$ have degree $2-n$ and have the symmetry property of \textit{vanishing on shuffles} which for the two-product means that it is graded symmetric, namely
\begin{equation}\label{eq:vanishing on shuffles}
m_{2}(u_{1},u_{2})=(-1)^{u_{1}u_{2}}m_{2}(u_{2},u_{1})\;,
\end{equation}
while for the higher products, there are more complicated relations that will not be relevant in this paper. The first few homotopy associativity relations are the nilpotency of the differential
\begin{equation}\label{eq:nilpotency of differential C_infty}
m_{1}(m_{1}(u))=0\;,
\end{equation}
the Leibniz rule of the differential with respect to the two-product $m_{2}$
\begin{equation}
m_{1}(m_{2}(u_{1},u_{2}))-m_{2}(m_{1}(u_{1}),u_{2})-(-1)^{u_{1}}m_{2}(u_{1},m_{1}(u_{2}))=0\;,
\end{equation}
and associativity up to homotopy
\begin{equation}\label{eq:associativity up to homotopy}
\begin{split}
&m_{2}(m_{2}(u_{1},u_{2}),u_{3})-m_{2}(u_{1},m_{2}(u_{2},u_{3}))=m_{1}(m_{3}(u_{1},u_{2},u_{3}))+m_{3}(m_{1}(u_{1}),u_{2},u_{3})\\
&+(-1)^{u_{1}}m_{3}(u_{1},m_{1}(u_{2}),u_{3})+(-1)^{u_{1}+u_{2}}m_{3}(u_{1},u_{2},m_{1}(u_{3}))\;.
\end{split}
\end{equation}
Similarly to the Jacobi identity up to homotopy in $L_{\infty}$ algebras, the homotopy associativity relation above tells us that the failure of the product $m_{2}$ to be associative is governed by a higher product $m_{3}$ and the differential $m_{1}$.

Let us now turn to the color decomposition of the dgLa underlying self-dual Yang-Mills theory. Recall that the elements of the vector space $\mathcal{X}^{\rm{SDYM}}$ are $\mathfrak{g}$-valued differential forms. Hence, a generic element of $\mathcal{X}^{\rm{SDYM}}$ can be written as
\begin{equation}
x = u^{a}\otimes t_{a}\;,
\end{equation}
where $u^{a}$ are either zero-, one-, or anti-self-dual two-forms, suggesting, analogously to Yang-Mills theory, the following factorization of the vector space
\begin{equation}\label{eq:factorisation}
\mathcal{X}^{\rm{SDYM}} = \mathcal{K}^{\rm{SDYM}}\otimes \mathfrak{g}\;,
\end{equation}
where $\mathcal{K}^{\rm{SDYM}}$ is an infinite-dimensional graded vector space encoding kinematics and spacetime dependence, and $\mathfrak{g}$ is a finite-dimensional vector space encoding color. Moreover, the maps that act on $\mathcal{X}^{\rm{SDYM}}$ exhibit such factorization. Indeed, given that the differential $B_{1}$ acts as a first-order differential operator, its action factorizes as
\begin{equation}
B_{1}(x)=m_{1}(u^{a})\otimes t_{a}\;.
\end{equation}
Similarly, the two-bracket $B_{2}$ factorizes as
\begin{equation}\label{eq:m2 from B2}
B_{2}(x_{1},x_{2})=(-1)^{x_{1}}m_{2}(u_{1}^{b},u_{2}^{c})\otimes f^{a}_{bc}t_{a}\;,
\end{equation}
where the sign is conventional and $f^{a}_{bc}$ are the structure constants of $\mathfrak{g}$. 

The vector space $\mathcal{K}^{\rm{SDYM}}=\bigoplus_{i=0}^{2}K_{i}$ equipped with the differential $m_{1}$ makes up a cochain complex
\begin{equation}\label{Kdiagram}
\begin{tikzcd}[row sep=2mm]
K_{0}\arrow{r}{m_{1}} & K_{1}\arrow{r}{m_{1}} & K_{2} \\
 \lambda& \cA & \mathcal{E}
\end{tikzcd}    \;,
\end{equation}
where we have dropped the color indices because they are irrelevant in this kinematic setting, and we use a different font for the elements of $\mathcal{K}$ relative to \eqref{SDYMXcomplex} to avoid confusion between colorful and color-less objects. Equipping this cochain complex with the two-product $m_{2}$ yields a strict $C_{\infty}$ algebra also called a \textit{differential graded commutative algebra}, or dgca, which can be easily shown by checking the homotopy associativity relations mentioned above using the explicit action of the maps on the elements of $\mathcal{K}^{\rm{SDYM}}$, which reads
\begin{align}\label{m2def}
    m_{1}(\lambda)&=d\lambda \in K_{1}\;, &\; \; m_{1}(\cA)&=2\, P_{-}d\cA\in K_{2}\nonumber\\
    m_{2}(\lambda_{1},\lambda_{2})&=\lambda_{1}\wedge \lambda_{2}\in K_{0}\; , &\;\; m_{2}(\lambda,\cA)&=\lambda\wedge \cA\in K_{1}\\
    m_{2}(\cA_{1},\cA_{2})&=2\, P_{-}\big(\cA_{1}\wedge\cA_{2}\big)\in K_{2}\; , &\;\; m_{2}(\lambda,\mathcal{E})&=\lambda\wedge\mathcal{E}\in K_{2}\;\nonumber .
\end{align}

\subsection{Kinematic algebras and BV\texorpdfstring{$_{\infty}^{\square}$}{BV infty box} algebras}
\label{sub:kin alg gen}

As discussed in \cref{ckduality}, a necessary condition for the double copy of scattering amplitudes is that color-kinematics duality be\ fulfilled. In that section, we discussed the duality in terms of algebraic relations that the kinematic numerators of Yang-Mills amplitudes have to obey. We further showed that, for the particular case of self-dual Yang-Mills theory in light-cone gauge, there is a \textit{kinematic algebra} underlying the relations between the numerators, namely the Lie algebra of area-preserving diffeomorphisms. Recently, in \cite{Bonezzi:2023pox}, using the framework of homotopy algebras, a more general, gauge-independent algebra for self-dual Yang-Mills theory was constructed which, upon imposing light-cone gauge and eliminating an unphysical degree of freedom, led to the Lie algebra of area-preserving diffeomorphisms. The gauge-independent algebraic structure constructed in \cite{Bonezzi:2023pox} is a so-called kinematic BV$_{\infty}^{\square}$ algebra. BV$_{\infty}^{\square}$ algebras were first introduced by Reiterer in \cite{Reiterer:2019dys} to construct a kinematic algebra for pure Yang-Mills theory and later used in \cite{Bonezzi:2022bse,Bonezzi:2022yuh,Bonezzi:2023ced,Bonezzi:2024dlv,Bonezzi:2023xhn,Bonezzi:2023lkx,Bonezzi:2023ciu} to construct gravity (and double field theory) using off-shell algebraic methods. The main goal of this section is to introduce BV$_{\infty}^{\square}$ algebras and explain how they can be constructed in the context of color-kinematics duality and the double-copy, especially in self-dual Yang-Mills theory.

Before formally introducing BV$_{\infty}^{\square}$ algebras, it will be illustrative to first recall the definition of a Poisson algebra. A Poisson algebra is a vector space $V$ equipped with a commutative associative two-product $\circ:V\times V\to V$ and a Lie bracket $\{\cdot, \cdot\}:V\times V \to V$ that obey the following compatibility relation:
\begin{equation}\label{compa}
\{f\circ g, h\} - \{g,h\}\circ f-\{f,h\}\circ g=0\;,\;\;\; f,g,h\in V.
\end{equation}
A common Poisson algebra in physics is the Poisson algebra of classical mechanics, where $V$ is the phase space, the associative commutative product $\circ$ is the product of functions of phase space, and the Lie bracket is the Poisson bracket of classical mechanics. In words, Poisson algebras consist of an associative, commutative subsector given by $\circ$, and a Lie subsector given by $\{\cdot, \cdot\}$. Both subsectors are united by the compatibility relation \eqref{compa}. BV$_{\infty}^{\square}$ algebras are homotopy generalizations of Poisson algebras, as we show in the following. 

To construct a kinematic BV$_{\infty}^{\square}$ algebra for self-dual Yang-Mills, we start with the dgca $(\mathcal{K}^{\rm{SDYM}}, m_{1}, m_{2})$ that we found after color stripping in the previous section, and introduce a nilpotent map $b:\mathcal{K}^{\rm{SDYM}}\to \mathcal{K}^{\rm{SDYM}}$ of degree $-1$. Diagrammatically, $b$ acts as
\begin{equation}
\begin{tikzcd}[row sep=4mm]
\ldots\arrow[r, "m_{1} "]&\arrow[l, bend left=50, "b"] K_{0}\arrow[r, "m_{1} "]&\arrow[l, bend left=50, "b"] K_{1}\arrow[r, "m_{1}"]&\arrow[l, bend left=50, "b"]  K_{2} \arrow[r, "m_{1} "]&\arrow[l, bend left=50, "b"] \ldots\;,
\end{tikzcd}    
\end{equation}
and (graded) commutes with the differential $m_{1}$ to the d'Alembert operator in Minkowski space\footnote{The explicit form of $\Box$ naturally depends on our choice of coordinates.}, namely
\begin{equation}
\label{def: box}
[b,m_{1}](u)\equiv m_{1}(b(u))+b(m_{1}(u)) = \Box u\;,\;\;\; u\in \mathcal{K}^{\rm{SDYM}}\;.
\end{equation}
The operator $b$ that satisfies the condition above is the adjoint de Rham operator
\begin{equation}
\label{def:differential b form notation}
b = d^{\dagger}:=-\star d\star\;.
\end{equation}
This operator, as one can explicitly check, does not obey the Leibniz rule with respect to $m_{2}$. We parametrize this failure by defining a \textit{kinematic two-bracket} 
\begin{equation}
\label{def:b2}
b_{2}(u_{1},u_{2}):=[b, m_{2}](u_{1},u_{2})\equiv bm_{2}(u_{1}, u_{2})-m_{2}(bu_{1},u_{2})-(-1)^{u_{1}}m_{2}(u_{1},bu_{2})\;,
\end{equation}
which acts explicitly as
\begin{equation}\label{b2_gen_exprs}
\begin{split}
    &b_{2}(\lambda,\cA)=d^{\dagger}(\cA\wedge\lambda)-d^{\dagger}\cA\wedge\lambda\in K_{0}\; ,\\
    &b_{2}(\cA_{1},\cA_{2})=2\, d^{\dagger} P_{-}(\cA_{1}\wedge \cA_{2})-2\, d^{\dagger}\cA_{[1}\wedge \cA_{2]}\in K_{1}\; ,\\
    &b_{2}(\lambda,\mathcal{E})=\star(\mathcal{E}\wedge d\lambda)\in K_{1}\; ,\\
    &b_{2}(\cA,\mathcal{E})=2\, P_{-}(\cA\wedge d^{\dagger}\mathcal{E})-d^{\dagger}\cA\wedge \mathcal{E}\in K_{2}\; ,
\end{split}
\end{equation}

A BV$_{\infty}^{\square}$ algebra is a collection $(\mathcal{K}, m_{1}, b, \Box, m_{2}, b_{2}, \theta_{3},\ldots)$ where the two-product $m_{2}$ and the two-bracket $b_{2}$ obey a generalization of the Poisson compatibility relation \eqref{compa}. More precisely, the maps obey the Poisson compatibility relation up to terms involving a trilinear map $\theta_{3}$, i.e.,
\begin{equation}\label{hompoi}
\begin{split}
    b_{2}(m_{2}(u_{1},u_{2}),u_{3})&-(-1)^{u_{1}(u_{2}+u_{3})}m_{2}(b_{2}(u_{2},u_{3}),u_{1})\\
    &-(-1)^{u_{3}(u_{1}+u_{2})}m_{2}(b_{2}(u_{3},u_{1}),u_{2})=[m_{1},\theta_{3}](u_{1},u_{2},u_{3})\;,
\end{split}
\end{equation}
with
\begin{equation}\label{commutator}
\begin{split}
    [m_{1},\theta_{3}](u_{1},u_{2},u_{3}):=m_{1}\theta_{3}(u_{1},u_{2},u_{3})-\theta_{3}(m_{1}u_{1},u_{2},u_{3})&-(-1)^{u_{1}}\theta_{3}(u_{1},m_{1}u_{2},u_{3})\\
    &-(-1)^{u_{1}+u_{2}}\theta_{3}(u_{1},u_{2},m_{1}u_{3})\; ,
\end{split}
\end{equation}
Notice that, upon identifying $b_{2}$ with $\{\cdot,\cdot\}$ and $m_{2}$ with $\circ$, the left-hand side of equation \eqref{hompoi} has the same combination of terms (albeit graded) as the left-hand side of equation \eqref{compa}. In the former, however, the right-hand side is non-vanishing and the trilinear map $\theta_{3}$ is a homotopy map that controls the failure of $m_{2}$ and $b_{2}$ to be Poisson compatible. Indeed, $\theta_{3}$ plays the same role as $B_{3}$ in $L_{\infty}$ algebras or $m_{3}$ in $C_{\infty}$ algebras. The homotopy map $\theta_{3}$ has the following two non-vanishing components:
\begin{equation}\label{theta3}
\begin{split}
\theta_{3}(\cA_{1},\cA_{2},\cA_{3})&=-\star \big(\cA_{1}\wedge \cA_{2}\wedge \cA_{3}\big)\in K_{1}\; ,\\
\theta_{3}(\mathcal{E} ,\cA_{1},\cA_{2})&=2\, P_{-}\Big\{ \star\big(\mathcal{E} \wedge\cA_{[1}\big)\wedge \cA_{2]}  \Big\}\in K_{2}\; .
\end{split}
\end{equation}
which were determined by explicit computation in \cite{Bonezzi:2023pox}. To higher order in inputs, a BV$_{\infty}^{\square}$ algebra has further compatibility relations that the non-linear maps obey, which means that the algebra may have higher homotopies, e.g., a quadri-linear map $\theta_{4}$, and so on. In this paper, however, we only consider this algebra up to trilinear maps because we are interested in finding a strict algebra with a vanishing $\theta_{3}$, i.e., we wish to find a regime where $m_{2}$ and $b_{2}$ are Poisson compatible and there are no higher homotopies $\theta_{n}$.

Let us remark that the general definition of a BV$_{\infty}^{\square}$ algebra allows one to have associativity up to homotopy, as is the case in pure Yang-Mills theory, where after color stripping one obtains a $C_{\infty}$ algebra with a non-trivial $m_{3}$. For a more detailed description of the construction of BV$_{\infty}^{\square}$ algebras, refer to \cite{Reiterer:2019dys, Bonezzi:2022bse}.

The failure of $m_{2}$ and $b_{2}$ to be Poisson compatible, and the fact that $b_{2}$ is derived from $m_{2}$, implies that the kinematic bracket $b_{2}$ does not obey the strict Jacobi identity. Instead, it obeys the Jacobi identity up to homotopy and \textit{modulo box}, i.e.,
\begin{equation}
\begin{split}
\text{Jac}(b_{2})(u_{1},u_{2},u_{3})+[m_{1} ,b_{3}](u_{1},u_{2},u_{3})+[\Box,\theta_3](u_1,u_2,u_3)=0\;,    
\end{split}    
\end{equation}
where the kinematic three-bracket $b_{3}$ is derived as $b_3=-[b,\theta_3]$ and the graded commutators that parametrize the failure of the Jacobi identity are given by
\begin{equation}
\begin{split}
[m_{1},b_{3}](u_{1},u_{2},u_{3}):=\; &m_{1}b_{3}(u_{1},u_{2},u_{3})+b_{3}(m_{1}u_{1},u_{2},u_{3})+(-1)^{u_{1}}b_{3}(u_{1},m_{1}u_{2},u_{3})\\
&+(-1)^{u_{1}+u_{2}}b_{3}(u_{1},u_{2},m_{1}u_{3})\;,\\
[\Box ,\theta_{3}](u_{1},u_{2},u_{3}):=\; &\Box \theta_{3}(u_{1},u_{2},u_{3})-\theta_{3}(\Box u_{1},u_{2},u_{3})-\theta_{3}(u_{1},\Box u_{2},u_{3})-\theta_{3}(u_{1},u_{2},\Box u_{3})\;,\\
[b,\theta_{3}](u_{1},u_{2},u_{3}):=\; &b\, \theta_{3}(u_{1},u_{2},u_{3})-\theta_{3}(b u_{1},u_{2},u_{3})-(-1)^{u_{1}}\theta_{3}(u_{1},b u_{2},u_{3})\\
    &-(-1)^{u_{1}+u_{2}}\theta_{3}(u_{1},u_{2},b u_{3})\; .
\end{split}    
\end{equation}
Hence, the gauge-invariant kinematic BV$_{\infty}^{\square}$ we constructed here does not have a strict Lie algebra subsector. Notice, however, that for a vanishing $\theta_{3}$, which in turn implies a vanishing $b_{3}$, $b_{2}$ would obey the strict Jacobi identity. Indeed, as shown in \cite{Bonezzi:2023pox}, if one imposes light-cone gauge and integrates out an unphysical degree of freedom, then the kinematic bracket $b_{2}$ reduces, in general, to the Schouten-Nijenhuis bracket of polyvectors, which for the particular case of two gauge fields is the Lie bracket of area-preserving diffeomorphisms
\be \label{SNbracket}
b_{2}(\cA_{1},\cA_{2})^{m}=\big[\cA_1,\cA_2\big]^m_{\rm SN}=\cA_1^n\del_n\cA_2^m-\cA_2^n\del_n\cA_1^m \ .
\ee
This tells us that, to significantly simplify the kinematic algebra and obtain a strict kinematic Lie algebra, one needs to find either particular gauge choices or regimes of the self-dual Yang-Mills.

\subsection{Summary and notation}

% \begin{table}[h!]
% \centering
% \begin{tabularx}{\textwidth}{|l|X|}
% \hline
% \cellcolor[gray]{0.9}Algebra & \cellcolor[gray]{0.9}Definition\\
% \hline
% $L_\infty$ & Cochain complex $(\mathcal{X}, B_{1})$ equipped with a set of graded symmetric multilinear maps ($n$-brackets) $\{B_{n}:\mathcal{X}^{\otimes n}\to \mathcal{X}\}_{n\geq1}$ obeying generalized Jacobi relations, the first of which are \eqref{def:nilpotency_of_differential} and \eqref{eq:Leib and Jac}. \eqref{def:graded_symmetry_Bn} \\
% \hline
% $C_\infty$ & Cite from (3.23) to (3.27) (say it comes from color stripping and equipping with $m_2$) \\
% \hline
% $\mathrm{BV}_\infty^\Box$ & Row 4, Col 2 \\
% \hline
% \end{tabularx}
% \caption{Definition of algebras.}
% \end{table}

\begin{table}[h!]
\centering
\begin{tabularx}{\textwidth}{|l|X|X|}
\hline
\cellcolor[gray]{0.9}Algebra & \cellcolor[gray]{0.9}Definition & \cellcolor[gray]{0.9}Physics \\
\hline
$L_\infty$ & Cochain complex $(\mathcal{X}, B_{1})$ equipped with a set of graded symmetric \eqref{def:graded_symmetry_Bn} multilinear maps ($n$-brackets) $\{B_{n}:\mathcal{X}^{\otimes n}\to \mathcal{X}\}_{n\geq1}$ obeying generalized Jacobi relations, the first of which are \eqref{def:nilpotency_of_differential} and \eqref{eq:Leib and Jac}. Homotopy generalisation of \textit{Lie algebras}. If $B_n$ vanishes for all $n\geq3$ it is a differential graded Lie algebra (dgLa).& Encodes perturbative field theories: the differential $B_{1}$ and the higher maps $B_{n}$ describe the free dynamics and interactions, respectively, whereas the generalized Jacobi relations ensure the consistency of the theory (e.g. gauge covariance and  closure of the gauge algebra). See for instance equation \eqref{eq:phyisical interpretation}.\\
\hline
$C_\infty$ & Cochain complex $(\mathcal{K}, m_{1})$ equipped with a set of graded symmetric \eqref{eq:vanishing on shuffles} multilinear maps ($n$-products) $\{m_{n}:\mathcal{K}^{\otimes n}\to \mathcal{K}\}_{n\geq1}$ obeying homotopy associativity relations, the first of which are \eqref{eq:nilpotency of differential C_infty}-\eqref{eq:associativity up to homotopy}. Homotopy generalisation of \textit{associative commutative algebras}. If $m_n$ vanishes for all $n\geq3$ it is a differential graded commutative algebra (dgca). & Encodes the color-stripped version of theories with color degrees of freedom within the framework of homotopy algebras. For instance, given the $L_\infty$ algebra describing self-dual Yang-Mills, the corresponding $C_\infty$ algebra comes from the decomposition \eqref{eq:factorisation}, by equipping the graded vector space $\mathcal{K}^\mathrm{SDYM}$ with the two-product $m_2$ arising from \eqref{eq:m2 from B2}. \\
\hline
$\mathrm{BV}_\infty^\Box$ & Collection $(\mathcal{K}, m_{1}, b, \Box, m_{2}, b_{2}, \theta_{3},\ldots)$ where
\begin{itemize}[noitemsep, topsep=0pt, parsep=0pt, partopsep=0pt, label=\tiny$\bullet$]
    \item $(\mathcal{K},m_1,m_2,\dots)$ is a $C_\infty$ algebra,
    \item $b\colon\mathcal{K}\rightarrow\mathcal{K}$ is a nilpotent map of degree $-1$ satisfying the defining  relation \eqref{def: box} with the operator $\Box$,
    \item $b_2$ is the kinematic two-bracket \eqref{def:b2}, satisfying \eqref{hompoi},
    \item $\theta_3$ is the trilinear map appearing in \eqref{hompoi}.
\end{itemize}
Homotopy generalisation of \textit{Poisson algebras}. & \textit{Kinematic algebra} encoding the relations that the kinematic numerators in a theory with color degrees of freedom have to obey, in the context of the color-kinematics duality. For instance, for self-dual Yang-Mills it is constructed from the dgca $(\mathcal{K}^\mathrm{SDYM},m_1,m_2)$ by introducing the adjoint de Rham operator $b$ \eqref{def:differential b form notation} and the d'Alembert operator $\Box$.  \\
\hline
\end{tabularx}
\caption{Homotopy algebras defined in \cref{sec:Homotopy algebras and homotopy transfer}.}
\end{table}

\section{Homotopy algebra formulation of SDYM in Bondi coordinates}
\label{Homotopy algebra formulation of SDYM in Bondi coordinates}
We now wish to translate the results of the previous section in a language appropriate for taking the limit to null infinity. We choose to work with the flat Bondi coordinates 
\be 
x^\mu=(r,u,z,\zb)\equiv (r,y^\alpha) \ .
\ee 
due to the simplicity of the map from the light-cone coordinates, see \eqref{LCtoflB}. In the the flat Bondi coordinates, the celestial sphere is replaced by the complex plane via a stereographic projection. The flat Bondi metric, given in \eqref{fl_B_metric}, is reproduced below for convenience:
\begin{equation}\label{flat_Bdi}
    ds^2=-2drdu+2r^2dzd\Bar{z} \ .
\end{equation}

\subsection{\texorpdfstring{$L_\infty$}{L infty} algebra}
\label{subsec:algebra for SDYM in the bulk}
Let $\mathcal{M}$ be a manifold with global coordinates $x^\mu=(r,y^\alpha)$. We define $C_r^\infty(\mathcal{M})$ as the space of all \(C^\infty\) functions on the manifold $\mathcal{M}$ that admit an expansion in powers of \(r\):
\begin{equation}
    \label{def:C^infty_r}
    C_r^\infty(\mathcal{M}):=\{f\in C^\infty(\mathcal{M}) \ | \ f(r,y^\alpha)=\sum_{n\in\mathbb{Z}} r^n f^{(n)}(y^\alpha)\} \ .
\end{equation}
As in \cref{Linfdefinitions}, we define the following graded vector space:
    \begin{equation}
    \label{def:initial_graded_vector_space}
        \mathcal{X}:=\bigoplus_{i=-1}^{1}X_i \ .
    \end{equation}
where \(X_{-1},X_0\) and \(X_1\) are, respectively, the spaces of scalars \(\lambda\), covariant vectors \(A_\mu\) and skew-symmetric anti-self-dual \((0,2)\)-tensors \(E_{\mu\nu}\) over the manifold \(M\). Moreover, \(\forall\psi\in\chi\), \(\psi\) is \(\mathfrak{g}\)-valued and has the expansion
\begin{equation}
    \label{def:general_expansion_for_psi}
    \psi(r,u,z,\Bar{z})=\sum_{n=-\infty}^{+\infty}r^n\psi^{(n)}(u,z,\Bar{z}) \  .
\end{equation}
Note that at this point we haven't yet truncated, i.e. we consider all powers in $r$. Using the skew-symmetry and the anti-self-duality of elements in $X_1$, we find that a suitable basis is given by \(\mathcal{B}_{X_1}=\{\underline{e}^{i}_{\mu\nu}\}_{i=1,2,3}\), with
\begin{equation}
\label{def:basis_for_X1}
    \underline{e}^1_{\mu\nu}=\begin{pmatrix}
        0 & 1 & 0 & 0 \\
        -1 & 0 & 0 & 0 \\
        0 & 0 & 0 & r^2 \\
        0 & 0 & -r^2 & 0
    \end{pmatrix}, \qquad
    \underline{e}^2_{\mu\nu}=\begin{pmatrix}
        0 & 0 & 0 & 1 \\
        0 & 0 & 0 & 0 \\
        0 & 0 & 0 & 0 \\
        -1 & 0 & 0 & 0
    \end{pmatrix}, \qquad
    \underline{e}^3_{\mu\nu}=\begin{pmatrix}
        0 & 0 & 0 & 0 \\
        0 & 0 & 1 & 0 \\
        0 & -1 & 0 & 0 \\
        0 & 0 & 0 & 0
    \end{pmatrix}.
\end{equation}
Therefore, a generic element \(E_{\mu\nu}\in X_1\) is of the form
\begin{equation}
    \label{eq:E_full_generl_form}
    E_{\mu\nu}=\begin{pmatrix}
        0 & E_{ru} & 0 & E_{r\Bar{z}} \\
        -E_{ru} & 0 & E_{uz} & 0 \\
        0 & -E_{uz} & 0 & r^2E_{ru} \\
        -E_{r\Bar{z}} & 0 & -r^2E_{ru} & 0
    \end{pmatrix}.
\end{equation}
In other words, in order to determine completely a tensor in \(X_1\) it is sufficient to specify three of its components, namely  \(E_{ru}\), \(E_{r\Bar{z}}\) and \(E_{uz}\). The $B_1$ operator (\(B_1\colon\mathcal{X}\rightarrow\mathcal{X}\)) defined in \eqref{QB2}  becomes, in Bondi coordinates,
\begin{equation}
    \label{def:B_1 in components}
    \begin{aligned}
        B_1(\Lambda)_\mu&=\partial_\mu\Lambda\in X_0 \\
        B_1(A)_{\mu\nu}&=2\left(\partial_{[r}A_{u]}+r^{-2}\partial_{[z}A_{\Bar{z}]}\right)\underline{e}^1_{\mu\nu}+4\partial_{[r}A_{\Bar{z}]}\underline{e}^2_{\mu\nu}+4\partial_{[u}A_{z]}\underline{e}^3_{\mu\nu}\in X_1\ .
    \end{aligned}
\end{equation}
Similarly
\(B_2\colon\mathcal{X}\times\mathcal{X}\rightarrow\mathcal{X}\) can be written as follows:
\begin{equation}
    \label{def:B_2 in components}
    \begin{aligned}
        B_2(\Lambda_1,\Lambda_2)&=-[\Lambda_1,\Lambda_2]\in X_{-1}\\
        B_2(\Lambda,A)_\mu&=-[\Lambda,A_\mu]\in X_0\\
        B_2(\Lambda,E)_{\mu\nu}&=-[\Lambda,E_{\mu\nu}]\in X_1\\
        B_2(A_1,A_2)_{\mu\nu}&=2 \left([A_{1\,[r},A_{2\,u]}]+r^{-2}[A_{1\,[z},A_{2\,\Bar{z}]}] \right)\underline{e}^1_{\mu\nu} \\
        &\quad+4[A_{1\,[r},A_{2\,\Bar{z}]}]\underline{e}^2_{\mu\nu}+4[A_{1\,[u},A_{2\,z]}]\underline{e}^3_{\mu\nu}\in X_1 \ .
    \end{aligned}
\end{equation}
Then \((\mathcal{X},B_1,B_2)\) is an \(L_\infty\) algebra, with $B_1$ and $B_2$ satisfying the generalized Jacobi relations \eqref{def:nilpotency_of_differential}, \eqref{def:Leibniz_rule_B1_B2}, and \eqref{def:Jacobi_up_to_homotopy} with $B_3=0$.

\subsection{Color stripping and kinematic algebra}
\label{sec:color stripping and kinematic algebra}
We color-strip as in \cref{sub:Cinf alg}, to construct the differential graded commutative algebra \((\mathcal{K},m_1,m_2)\)\footnote{Recall that we use a different font relative to the previous section to denote the color-stripped scalars, vectors, and anti-self-dual 2-forms.}, with

\begin{equation}
    \label{def:m_1 in components}
    \begin{aligned}
        m_1(\lambda)_\mu&=\partial_\mu\lambda\in K_1 \\
        m_1(\mathcal{A})_{\mu\nu}&=2\left(\partial_{[r}\mathcal{A}_{u]}+r^{-2}\partial_{[z}\mathcal{A}_{\Bar{z}]}\right)\underline{e}^1_{\mu\nu}+4\partial_{[r}\mathcal{A}_{\Bar{z}]}\underline{e}^2_{\mu\nu}+4\partial_{[u}\mathcal{A}_{z]}\underline{e}^3_{\mu\nu}\in K_2
    \end{aligned}
\end{equation}
and
\begin{equation}
    \label{def:m_2 in components}
    \begin{aligned}
        m_2(\lambda_1,\lambda_2)&=\lambda_1\lambda_2\in K_{0}\\
        m_2(\lambda,\mathcal{A})_\mu&=\lambda \mathcal{A}_\mu\in K_1\\
        m_2(\lambda,\mathcal{E})_{\mu\nu}&=\lambda\mathcal{E}_{\mu\nu}\in K_2\\
        m_2(\mathcal{A}_1,\mathcal{A}_2)_{\mu\nu}&=2 \left(\mathcal{A}_{[1\,r}\mathcal{A}_{2]\,u}+r^{-2}\mathcal{A}_{[1\,z}\mathcal{A}_{2]\,\Bar{z}} \right)\underline{e}^1_{\mu\nu} \\
        &\quad +4\mathcal{A}_{[1\,r}\mathcal{A}_{2]\,\Bar{z}}\,\underline{e}^2_{\mu\nu}+4\mathcal{A}_{[1\,u}\mathcal{A}_{2]\,z}\,\underline{e}^3_{\mu\nu}\in K_2 \;.
    \end{aligned}
\end{equation}
The construction of the kinematic algebra
\begin{equation}
\label{def:kinematic algebra bulk}
    \mathfrak{Kin}=(\mathcal{K},m_1,b,\Box,m_2,b_2,\theta_3)
\end{equation}
proceeds as in \cref{sub:kin alg gen}. The main difference is that for the coordinate system \eqref{flat_Bdi}, not all Christoffel symbols vanish, so, for example, the action of $b$ is
\begin{equation}
    \label{def:differential b in components}
    \begin{aligned}
        b(\mathcal{A})&=-\nabla^\mu\mathcal{A}_\mu=2\left(\partial_{(r}\mathcal{A}_{u)}-r^{-2}\partial_{(z}\mathcal{A}_{\bar z)}+r^{-1}\mathcal{A}_u\right)\in K_0 \\
        b(\mathcal{E})_\mu&=\nabla^\nu\mathcal{E}_{\mu\nu}\in K_1 \;,
    \end{aligned}
\end{equation}
\black
while
\be
\Box\coloneqq dd^\dagger+d^\dagger d=-\nabla^2=-\partial^2+2r^{-1}\mathfrak{c} \;,
\ee
% \begin{equation}
%     \begin{aligned}
%         \Box\Lambda&=2\left(\partial_r\partial_u-r^{-2}\partial_z\partial_{\bar z}+r^{-1}\partial_u\right)\Lambda \\
%         (\Box\mathcal{A})_r&=2\left(\partial_r\partial_u-r^{-2}\partial_z\partial_{\bar z}\right)\mathcal{A}_r+2r^{-1}\left(\partial_u \mathcal{A}_r-r^{-1}\mathcal{A}_u+2r^{-2}\partial_{(z}\mathcal{A}_{\bar z)}\right)\\
%         (\Box\mathcal{A})_\alpha&=2\left(\partial_r\partial_u-r^{-2}\partial_z\partial_{\bar z}\right)\mathcal{A}_\alpha+2r^{-1}\partial_\alpha \mathcal{A}_u \\
%         (\Box\mathcal{E})_{ru}&=2\left(\partial_r\partial_u-r^{-2}\partial_z\partial_{\bar z}\right)\mathcal{E}_{ru}+2r^{-1}\left(\partial_u\mathcal{E}_{ru}-r^{-2}\partial_{\bar z}E_{uz}\right) \\
%         (\Box\mathcal{E})_{r\bar z}&=2\left(\partial_r\partial_u-r^{-2}\partial_z\partial_{\bar z}\right)\mathcal{E}_{r\bar z}+4r^{-1}\partial_{\bar z}E_{ru} \\
%         (\Box\mathcal{E})_{uz}&=2\left(\partial_r\partial_u-r^{-2}\partial_z\partial_{\bar z}\right)\mathcal{E}_{uz}
%     \end{aligned}
% \end{equation}
where we define
\begin{equation}
    \label{def:partial square}
    \nabla^2\coloneqq\nabla_\mu\nabla^\mu\;, \quad \partial^2\coloneqq\partial_\mu\partial^\mu=-2\partial_r\partial_u+2r^{-2}\partial_z\partial_{\bar z} \;,
\end{equation}
and the operator $\mathfrak{c}$ (that comes from the non-vanishing Christoffel symbols) acts as
\begin{equation}
    \begin{aligned}
        \mathfrak{c}(\Lambda)&=\partial_u\Lambda \\
        \mathfrak{c}(\mathcal{A})_\mu&=\partial_\mu\mathcal{A}_u-\delta_{\mu r}\left(2\partial_{[r}\mathcal{A}_{u]}-2r^{-2}\partial_{(z}\mathcal{A}_{\bar z)}+r^{-1}\mathcal{A}_u\right) \\
        \mathfrak{c}(\mathcal{E})_{\mu\nu}&=\left(\partial_u\mathcal{E}_{ru}-r^{-2}\partial_{\bar z}\mathcal{E}_{uz}\right)\underline{e}^1_{\mu\nu}+2\partial_{\bar z}\mathcal{E}_{ru}\underline{e}^2_{\mu\nu} \;.
    \end{aligned}
\end{equation}
Finally, we will need the explicit form of the kinematic bracket $b_2$ (between two gauge fields) and of the trilinear map $\theta_3$ in these coordinates. From \eqref{def:b2} and \eqref{theta3} we obtain, respectively,
% %Layout 1
% \begin{equation}
%     \begin{split}
%         b_2(\mathcal{A}_1,\mathcal{A}_2)_r&=4\mathcal{A}_{[1\,(r}\partial_{u)}\mathcal{A}_{2]\,r}-4r^{-2}\left(\mathcal{A}_{[1\,[r}\partial_{\bar z]}\mathcal{A}_{2]\,z}-\mathcal{A}_{[1\,[r}\partial_{z]}\mathcal{A}_{2]\,\bar z}+\mathcal{A}_{[1\,\bar z}\partial_{z}\mathcal{A}_{2]\,r}\right) \\
%         b_2(\mathcal{A}_1,\mathcal{A}_2)_u&=4\mathcal{A}_{[1\,(r}\partial_{u)}\mathcal{A}_{2]\,u}+4r^{-2}\left(\mathcal{A}_{[1\,[u}\partial_{\bar z]}\mathcal{A}_{2]\,z}-\mathcal{A}_{[1\,[u}\partial_{z]}\mathcal{A}_{2]\,\bar z}-\mathcal{A}_{[1\, z}\partial_{\bar z}\mathcal{A}_{2]\,u}\right) \\
%         b_2(\mathcal{A}_1,\mathcal{A}_2)_z&=4\mathcal{A}_{[1\, u}\partial_{r}\mathcal{A}_{2]\,z}+4\mathcal{A}_{[1\, [r}\partial_{z]}\mathcal{A}_{2]\,u}-4\mathcal{A}_{[1\, [u}\partial_{z]}\mathcal{A}_{2]\,r}-4r^{-1}\mathcal{A}_{[1\,u}\mathcal{A}_{2]\,z} \\
%         &\quad -4r^{-2}\mathcal{A}_{[1\, (z}\partial_{\bar z)}\mathcal{A}_{2]\,z} \\
%         b_2(\mathcal{A}_1,\mathcal{A}_2)_{\bar z}&=4\mathcal{A}_{[1\, r}\partial_{u}\mathcal{A}_{2]\,\bar z}-4\mathcal{A}_{[1\, [r}\partial_{\bar z]}\mathcal{A}_{2]\,u}+4\mathcal{A}_{[1\, [u}\partial_{\bar z]}\mathcal{A}_{2]\,r}-4r^{-1}\mathcal{A}_{[1\,u}\mathcal{A}_{2]\,\bar z} \\
%         &\quad -4r^{-2}\mathcal{A}_{[1\, (z}\partial_{\bar z)}\mathcal{A}_{2]\,\bar z}
%     \end{split}
% \end{equation}
%Layout 2
\begin{equation}
\label{sec4:kin bracket}
    \begin{split}
        \frac{1}{2}b_2(\mathcal{A}_1,\mathcal{A}_2)_\mu&=\begin{pmatrix}
            \mathcal{A}_{1\,(r}\partial_{u)}\mathcal{A}_{2\,r} \\
            \mathcal{A}_{1\,(r}\partial_{u)}\mathcal{A}_{2\,u} \\
            \mathcal{A}_{1\, u}\partial_{r}\mathcal{A}_{2\,z}+\mathcal{A}_{1\, [r}\partial_{z]}\mathcal{A}_{2\,u}-\mathcal{A}_{1\, [u}\partial_{z]}\mathcal{A}_{2\,r} \\
            \mathcal{A}_{1\, r}\partial_{u}\mathcal{A}_{2\,\bar z}-\mathcal{A}_{1\, [r}\partial_{\bar z]}\mathcal{A}_{2\,u}+\mathcal{A}_{1\, [u}\partial_{\bar z]}\mathcal{A}_{2\,r}
        \end{pmatrix}-\frac{1}{r}\begin{pmatrix}
            0 \\
            0 \\
            \mathcal{A}_{1\,u}\mathcal{A}_{2\,z} \\
            \mathcal{A}_{1\,u}\mathcal{A}_{2\,\bar z}
        \end{pmatrix} \\
        &\quad -\frac{1}{r^2}\begin{pmatrix}
            \mathcal{A}_{1\,[r}\partial_{\bar z]}\mathcal{A}_{2\,z}-\mathcal{A}_{1\,[r}\partial_{z]}\mathcal{A}_{2\,\bar z}+\mathcal{A}_{1\,\bar z}\partial_{z}\mathcal{A}_{2\,r} \\
            -\mathcal{A}_{1\,[u}\partial_{\bar z]}\mathcal{A}_{2\,z}+\mathcal{A}_{1\,[u}\partial_{z]}\mathcal{A}_{2\,\bar z}+\mathcal{A}_{1\, z}\partial_{\bar z}\mathcal{A}_{2\,u} \\
            \mathcal{A}_{1\, (z}\partial_{\bar z)}\mathcal{A}_{2\,z} \\
            \mathcal{A}_{1\, (z}\partial_{\bar z)}\mathcal{A}_{2\,\bar z}
        \end{pmatrix}-(1\leftrightarrow2)
    \end{split}
\end{equation}
and
\begin{equation}
    \label{eq:theta3AAA explicit expression}
    \begin{split}
        \theta_{3}(\cA_{1},\cA_{2},\cA_{3})_\mu&=6\sqrt{\det g}\;\varepsilon_{\mu\nu\rho\sigma}g^{\nu\tau}g^{\rho\delta}g^{\sigma\epsilon}\mathcal{A}_{[1\,\tau}\mathcal{A}_{2\,\delta}\mathcal{A}_{3]\,\epsilon}=6\begin{pmatrix}
            r^{-2}\mathcal{A}_{[1\,r}\mathcal{A}_{2\,z}\mathcal{A}_{3]\,\bar z} \\
            -r^{-2}\mathcal{A}_{[1\,u}\mathcal{A}_{2\,z}\mathcal{A}_{3]\,\bar z} \\
            -\mathcal{A}_{[1\,r}\mathcal{A}_{2\,u}\mathcal{A}_{3]\,z} \\
            \mathcal{A}_{[1\,r}\mathcal{A}_{2\,u}\mathcal{A}_{3]\,\bar z}
        \end{pmatrix}\in K_1 \;,
    \end{split}
\end{equation}
\begin{equation}
    \label{eq:theta3EAA explicit expression}
    \begin{aligned}
        \theta_3(\mathcal{E},\mathcal{A}_1,\mathcal{A}_2)_{\mu\nu}&=2r^{-2}\left(\mathcal{E}_{uz}\mathcal{A}_{[1\,r}\mathcal{A}_{2]\,\bar z}-\mathcal{E}_{r\bar z}\mathcal{A}_{[1\,u}\mathcal{A}_{2]\,z}\right)\underline{e}^1_{\mu\nu} \\
        &\quad -\left[4\mathcal{E}_{ru}\mathcal{A}_{[1\,r}\mathcal{A}_{2]\,\bar z}-2\mathcal{E}_{r\bar z}\left(\mathcal{A}_{[1\,r}\mathcal{A}_{2]\,u}+r^{-2}\mathcal{A}_{[1\,z}\mathcal{A}_{2]\,\bar z}\right)\right]\underline{e}^2_{\mu\nu} \\        &\quad +\left[4\mathcal{E}_{ru}\mathcal{A}_{[1\,u}\mathcal{A}_{2]\,z}-2\mathcal{E}_{uz}\left(\mathcal{A}_{[1\,r}\mathcal{A}_{2]\,u}+r^{-2}\mathcal{A}_{[1\,z}\mathcal{A}_{2]\,\bar z}\right)\right]\underline{e}^3_{\mu\nu}\in K_2 \;,
    \end{aligned}
\end{equation}
for $\mathcal{A}_1,\mathcal{A}_2,\mathcal{A}_3\in K_1$ and $\mathcal{E}\in K_2$. One of our main goals in this article is to look for strict kinematic algebras close to null infinity, in other words study the conditions under which $\theta_3$ vanishes.

%Consistency check for Kin bracket
To connect with previous work, if one chooses the gauge $A_u=0$, which is equivalent to light-cone gauge in Bondi coordinates, and one partially solves the field equations which impose $A_{z}=0$ and $\partial_u\mathcal{A}_r=r^{-2}\partial_z\mathcal{A}_{\bar z}$, one finds the components of the strict kinematic bracket
\begin{equation}
\label{SN_bracket_in_Bondi_coords}
    \begin{split}
        b_2(\mathcal{A}_1,\mathcal{A}_2)_r&=2\mathcal{A}_{[1\,r}\partial_{u}\mathcal{A}_{2]\,r}-r^{-2}\left(-2\mathcal{A}_{[1\,r}\partial_{z}\mathcal{A}_{2]\,\bar z}+4\mathcal{A}_{[1\,\bar z}\partial_{z}\mathcal{A}_{2]\,r}\right) \\
        &=4\mathcal{A}_{[1\,r}\partial_{u}\mathcal{A}_{2]\,r}-4r^{-2}\mathcal{A}_{[1\,\bar z}\partial_{z}\mathcal{A}_{2]\,r}=-2[\mathcal{A}_1,\mathcal{A}_2]^\mathrm{SN}_r \\
        b_2(\mathcal{A}_1,\mathcal{A}_2)_u&=b_2(\mathcal{A}_1,\mathcal{A}_2)_z=0 \\
        b_2(\mathcal{A}_1,\mathcal{A}_2)_{\bar z}&=4\mathcal{A}_{[1\, r}\partial_{u}\mathcal{A}_{2]\,\bar z}-2\mathcal{A}_{[1\, \bar z}\partial_{u}\mathcal{A}_{2]\,r}-2r^{-2}\mathcal{A}_{[1\, \bar z}\partial_{z}\mathcal{A}_{2]\,\bar z} \\
        &=4\mathcal{A}_{[1\, r}\partial_{u}\mathcal{A}_{2]\,\bar z}-4r^{-2}\mathcal{A}_{[1\, \bar z}\partial_{z}\mathcal{A}_{2]\,\bar z}=-2[\mathcal{A}_1,\mathcal{A}_2]^\mathrm{SN}_{\bar z}\;,
    \end{split}
\end{equation}
which correspond to the components of the Schouten-Nijenhuis bracket \eqref{SNbracket} in Bondi coordinates.

\section{Physical fall-offs from $L_{\infty}$ algebras}
\label{sec:SDYM at null infinity via slice truncation}
In this section and the following one, we formulate self-dual Yang-Mills theory when approaching $\mathcal{I}$, i.e. when $r\rightarrow\infty$, using the homotopy algebra formalism again. That is to say, starting from the algebra $(\mathcal{X},B_1,B_2)$ presented in \cref{subsec:algebra for SDYM in the bulk}, we build a new $L_\infty$ algebra
\begin{equation}
    \label{def:general_expression_algebra_I}
    (\mathcal{X}^\mathcal{I},B_1^\mathcal{I},B_2^\mathcal{I})
\end{equation}
that encodes self-dual Yang-Mills in the regime close to null infinity.

 In practice, when doing computations close to null infinity, one expands gauge fields and parameters in powers of the radial coordinate $r$ and then truncates the expansion at a certain order. The tools that we introduce here will allow us to implement such truncations algebraically to construct $L_{\infty}$ algebras that describe the physics of Yang-Mills theory in certain regimes or slices close to null infinity, and consequently, their respective kinematic algebras.

\subsection{Requirements on the \texorpdfstring{$L_\infty$}{L infty} algebra near \texorpdfstring{$\mathcal{I}$}{I}}
\label{subsec:Requirements on L infty}
Before moving on to the explicit construction of the algebra near $\mathcal{I}$, let us focus on the properties that the space and the brackets in \eqref{def:general_expression_algebra_I} shall satisfy (in addition to the $L_\infty$ relations) as a consequence of the restriction of the initial theory close to the null infinity regime.
\begin{enumerate}
    \item The graded vector space $\mathcal{X}^{\mathcal{I}}$ contains the gauge parameters, gauge fields and field equations near $\mathcal{I}$, where they admit an expansion in (non-negative) powers of $1/r$. Thus, its elements have the same form as \eqref{def:initial_graded_vector_space}, except now the sum \eqref{def:general_expansion_for_psi} does not include the terms with $n>0$, that are divergent at null infinity (where $r\rightarrow\infty$):
    \begin{equation}
        \label{def:space_at_I}
        \mathcal{X}^\mathcal{I}\subseteq\mathcal{X}^{n\leq0}\subset\mathcal{X}
    \end{equation}
    where we defined $\mathcal{X}^{n\leq0}\coloneqq\left\{\psi\in\mathcal{X} \ | \ \psi^{(n)}=0 \ \mathrm{for} \ n>0\right\}$. Note that we allow $\mathcal{X}^\mathcal{I}$ to be a subspace of $\mathcal{X}^{n\leq0}$, to account for the possibility that the following requirements in this list impose further restrictions on $\mathcal{X}^\mathcal{I}$ (as is indeed the case). Moreover, for reasons that will be clarified shortly, it is useful to consider the projection
    \begin{equation}
        \label{def:capital_Pi}
        \Pi\colon\mathcal{X}\rightarrow\mathcal{X}^\mathcal{I}
    \end{equation}
    from the initial space to the one encoding the physics at null infinity; this map is well-defined by virtue of \eqref{def:space_at_I}. The terms in the expansion \eqref{def:general_expansion_for_psi} that diverge at null infinity are all contained in the projection to the subspace complementary to $\mathcal{X}^\mathcal{I}$,
    \begin{equation}
        \label{def:bar Pi}
        \Pi^\complement:=1_\mathcal{X}-\Pi \; .
    \end{equation}
    Notice that $\Pi^\complement\Pi=\Pi\Pi^\complement=0$, or equivalently $\mathrm{Im}\Pi^\complement=\mathrm{Ker}\Pi$. In other words, $\mathcal{X}=\mathrm{Im}\Pi\oplus\mathrm{Im}\Pi^\complement$, i.e. for all $\psi\in\mathcal{X}$ it is always possible to write a unique decomposition of the form $\psi=\Pi(\psi)+\Pi^\complement(\psi)$.
    \item Since our goal is to transfer the structure of the algebra $(\mathcal{X},B_1,B_2)$ in the bulk to the new $L_\infty$ algebra close to null infinity, the brackets of the latter shall be related to the ones of the former. Thus, the most natural way to define $B_1^\mathcal{I}$ and $B_2^\mathcal{I}$ is to restrict the domain of $B_1$ and $B_2$ to $\mathcal{X}^\mathcal{I}$ and $\mathcal{X}^\mathcal{I}\otimes\mathcal{X}^\mathcal{I}$, respectively, and then project their image via the map $\Pi$, namely
    \begin{equation}
        \label{def:B_i^I}
        B_i^\mathcal{I}\coloneqq\Pi B_i|_{(\mathcal{X}^\mathcal{I})^{\otimes i}} \;,
    \end{equation}
    for $i=1,2$; in the above and in the rest of the paper we use the notation
    \begin{equation}
        B_i|_{(\mathcal{X}^\mathcal{I})^{\otimes i}}\colon(\mathcal{X}^\mathcal{I})^{\otimes i}\to \mathcal{X}
    \end{equation}
    to denote the restriction of the domain of $B_i$ to the subspace $(\mathcal{X}^\mathcal{I})^{\otimes i}\subset\mathcal{X}^{\otimes i}$. In this way, both the domain and the codomain of the brackets are the correct ones.
    \item Recall that the physics of the theory in the bulk is encoded in the cohomology \eqref{def:cohomology_of_B1} of the differential $B_1$. To transfer this information to the algebra close to null infinity, we promote the projection $\Pi$ to be a \textit{cochain map}, namely a morphism $\Pi\colon(\mathcal{X},B_1)\rightarrow(\mathcal{X}^\mathcal{I},B_1^\mathcal{I})$ of cochain complexes\footnote{We remark that, \textit{a priori}, $(\mathcal{X}^\mathcal{I},B_1^\mathcal{I})$ is not guaranteed to be a cochain complex, since we have not imposed the nilpotency of the differential, yet. However, $(B_1^\mathcal{I})^2$ vanishes automatically from \eqref{def:B_i^I} and \eqref{def:Pi_as_quasi_isomorphism}, as we prove in \eqref{pf:nilpotency_of_B1I}, meaning that the map $\Pi$ satisfying condition \eqref{def:Pi_as_quasi_isomorphism} is indeed a morphism of cochain complexes.} such that
    \begin{equation}
        \label{def:Pi_as_quasi_isomorphism}
        \Pi B_1=B_1^\mathcal{I}\Pi
    \end{equation}
    or, equivalently, such that the following diagram commutes:
    \begin{equation}
        \label{diagram:physical_requirement}
        \begin{tikzcd}[row sep=10mm]
            X_i\arrow{d}{\Pi}\arrow{r}{B_1}&X_{i+1}\arrow{d}{\Pi}\\
            X_i^\mathcal{I}\arrow{r}{B_1^\mathcal{I}}&X_{i+1}^\mathcal{I}
        \end{tikzcd} \;,
    \end{equation}
    i.e., if, starting from the space $X_{i}$, we can follow either of the two possible routes to $ X_{i+1}^\mathcal{I}$ and obtain the same result.
    As we will explain in details in \cref{sec:cochain}, a consequence of the above requirement is that $\Pi$ induces morphisms $H^i(\mathcal{X})\rightarrow H^i(\mathcal{X}^\mathcal{I})$ on the cohomology groups defined in \eqref{def:cohomology_of_B1}.
    
    A different perspective that further explains the necessity of imposing \eqref{def:Pi_as_quasi_isomorphism} arises from the following situation. As explained above, $\forall\psi\in\mathcal{X}$, when going close to null infinity, only $\Pi(\psi)$ survives while $\Pi^\complement(\psi)$ is disregarded because it encloses the divergent part. Now, consider the action of the bracket $B_1$ on $\psi$ and then its projection close to $\mathcal{I}$:
    \begin{equation}
    \label{eq:projected_b1_psi}
        \Pi B_1(\psi)=\Pi B_1(\Pi(\psi)+\Pi^\complement(\psi))\;.
    \end{equation}
    In general, $\Pi B_1\Pi^\complement(\psi)\neq0$, meaning that the divergent terms in $\psi$ can become finite after applying the differential in the bulk, thus contributing to the projected $B_1(\psi)$ close to null infinity. Promoting $\Pi$ to a cochain map is equivalent to asking that this can never happen. In other words, we want the second term in the RHS of \eqref{eq:projected_b1_psi} to vanish for all $\psi\in\mathcal{X}$, namely
    \begin{equation}
        \label{eq:condition_on_Im_barPi}
        \mathrm{Im}\Pi^\complement\subseteq\mathrm{Ker}(\Pi B_1) \ .
    \end{equation}
    By virtue of \eqref{def:B_i^I} and the definition of $\Pi^\complement$, this condition is indeed equivalent to \eqref{def:Pi_as_quasi_isomorphism}.
    \item Finally, in order for \eqref{def:general_expression_algebra_I} to be a homotopy Lie algebra, we impose the $L_\infty$ relations on its brackets.  Note that our definition of the map $B_1^\mathcal{I}$ together with condition \eqref{def:Pi_as_quasi_isomorphism} and equation \eqref{def:nilpotency_of_differential} already imply the nilpotency of the differential at null infinity,
    \begin{equation}
    \label{pf:nilpotency_of_B1I}
        (B_{1}^\mathcal{I})^2=B_1^\mathcal{I}\Pi B_1|_{\mathcal{X}^\mathcal{I}}=\Pi B_1B_1|_{\mathcal{X}^\mathcal{I}}=0 \ .
    \end{equation}
    In principle, one should allow for the appearance of higher (than two-) brackets and higher homotopy relations in the homotopy algebra that describes the theory close to null infinity. Indeed, it is not obvious that truncating the theory by selecting powers of $r$ should preserve its original algebraic structure. However, the constraints that will be imposed on the projector $\Pi$ will lead to a dgLa close to null infinity, as detailed at the end of \cref{subsec:Conditions on the power selection sets}. So, for simplicity and brevity, let us for now assume that the theory is encoded by a dgLa in this regime, and only impose the Leibniz rule of $B_1^\mathcal{I}$ with respect to the two-bracket and the (strict) Jacobi identity for $B_2^\mathcal{I}$,
    \begin{align}
        \label{eq:Leib_Jac_at_I}
        \mathrm{Leib}\left(B_1^\mathcal{I},B_2^\mathcal{I}\right)=0 \quad \mathrm{and} \quad \mathrm{Jac}\left(B_2^\mathcal{I}\right)=0 \ .
    \end{align}

    % \footnote{Explicitly:
    % \begin{align}
    %     \mathrm{Leib}\left[B_1^\mathcal{I},B_2^\mathcal{I}\right](\psi_1,\psi_2)&:=B_{1}^\mathcal{I}(B_{2}^\mathcal{I}(\psi_1,\psi_2))+B_{2}^\mathcal{I}(B_{1}^\mathcal{I}(\psi_1),\psi_2)+(-1)^{\psi_1}B_{2}^\mathcal{I}(\psi_1,B_{1}^\mathcal{I}(\psi_2)) \\
    %     \mathrm{Jac}\left[B_2^\mathcal{I}\right](\psi_1,\psi_2,\psi_3)&:=B_{2}^\mathcal{I}(B_{2}^\mathcal{I}(\psi_1,\psi_2),\psi_3)+(-1)^{\psi_1(\psi_2+\psi_3)}B_{2}^\mathcal{I}(B_{2}^\mathcal{I}(\psi_2,\psi_3),\psi_1) \\
    %         &+(-1)^{\psi_3(\psi_1+\psi_2)}B_{2}^\mathcal{I}(B_{2}^\mathcal{I}(\psi_3,\psi_1),\psi_2) \nonumber
    % \end{align}}
\end{enumerate}
A triplet $(\mathcal{X}^\mathcal{I},B_1^\mathcal{I},B_2^\mathcal{I})$ satisfying all the above requirements successfully encodes self-dual Yang-Mills theory in the null infinity region. To find such an algebra (or such algebras, as for now it is not clear whether the solution of the constraints listed here is unique), note that the graded vector space close to null infinity can be implicitly defined as the image of the projection introduced in \eqref{def:capital_Pi},
\begin{equation}
    \mathcal{X}^\mathcal{I}\coloneqq\mathrm{Im}\Pi \ .
\end{equation}
Recalling equation \eqref{def:B_i^I}, we point out that both the graded vector space and the brackets in $(\mathcal{X}^\mathcal{I},B_1^\mathcal{I},B_2^\mathcal{I})$ can be written in terms of the initial $L_\infty$ algebra $(\mathcal{X},B_1,B_2)$ and the map $\Pi$. Hence, the problem of determining the homotopy Lie algebra close to null infinity can be reformulated as the task of finding a projection $\Pi$ satisfying the system
\begin{subnumcases}{\label{eq:master_system}}
\mathrm{Im}\Pi\subseteq\mathcal{X}^{n\leq0} \label{eq:master_system_EQ1} \\
        \mathrm{Ker}\Pi\subseteq\mathrm{Ker}(\Pi B_1) \label{eq:master_system_EQ2}\\
        \mathrm{Leib}\left(\Pi B_1|_{\mathrm{Im}\,\Pi},\Pi B_2|_{\mathrm{Im}\,\Pi\otimes\mathrm{Im}\,\Pi}\right)=0 \label{eq:master_system_EQ3} \\
        \mathrm{Jac}\left(\Pi B_2|_{\mathrm{Im}\,\Pi\otimes\mathrm{Im}\,\Pi}\right)=0 \;, \label{eq:master_system_EQ4}
\end{subnumcases}
that is just a reformulation of conditions \eqref{def:space_at_I}, \eqref{eq:condition_on_Im_barPi}\footnote{Recall that $\mathrm{Im}\,\Pi^\complement=\mathrm{Ker}\,\Pi$.} and \eqref{eq:Leib_Jac_at_I} using the definition of $\mathcal{X}^\mathcal{I}$ and $B^\mathcal{I}_i$ in terms of $\Pi$. \eqref{eq:master_system} plays a central role in our analysis and will be referenced frequently throughout the paper; thus, for convenience, we shall refer to it as the \textit{master system}. In what follows, we prove the existence of an infinite family of solutions $\{\Pi_k\}_{k\in\mathbb{Z}^{\leq0}}$ for the system above, by explicit construction.

\subsection{Solutions of the master system}
\label{subsec:construction of solutions Pik}
Equation \eqref{eq:master_system_EQ1} states that the action of the projector $\Pi$ on an element $\psi\in\mathcal{X}$ must set to zero all the terms with positive powers of $r$ in its expansion - see the definition of the space $\mathcal{X}^{n\leq0}$ below equation \eqref{def:space_at_I}. Based on this observation, one can conjecture that the solution of the first equation of the master system is a map $\pi_{\mathrm{trial}}$ that acts on $\psi\in\mathcal{X}$ as
\begin{equation}
    \pi_{\mathrm{trial}}(\psi)\coloneqq\sum_{n\leq0}r^n\psi^{(n)}(y^a) \ .
\end{equation}
However, this truncation of the sum \eqref{def:general_expansion_for_psi} does not preserve, in general, the anti-self-duality of the elements in $X_1$ (recall that $X_1$ is the space of field equations, whose elements are anti-self-dual two-forms), namely
\begin{equation}
    \label{eq:subspace_failure}
    (1+*)\pi_{\mathrm{trial}}(E)_{\mu\nu}\neq0
\end{equation}
for $E_{\mu\nu}\in X_1$. The underlying reason is the presence of different powers of $r$ in different components of the basis element $\underline{e}^1_{\mu\nu}$ in \eqref{def:basis_for_X1}. Thus, truncating the $r$-expansion of $E_{\mu\nu}$ uniformly in all its components produces a tensor that does not have the general form \eqref{eq:E_full_generl_form}. Formally stated, $\pi_{\mathrm{trial}}$ cannot be a solution for $\Pi$ in \eqref{eq:master_system_EQ1} because $\mathrm{Im}\pi\not\subseteq\mathcal{X}^{n\leq0}$\footnote{Remember that, by definition, $\mathcal{X}^{n\leq0}$ is a subspace of $\mathcal{X}$. In particular, the tensors in $X_1^{n\leq0}\subset X_1$ are anti-self-dual.}. This consideration implies that the map solving the master system should act non-uniformly on the components of the vectors and tensors of $\mathcal{X}$. To find it, we must first construct a suitable truncation function that operates on individual components, which we do in the following.

\subsubsection{Projection \texorpdfstring{$\mathcal{P}_N$}{PN}}
We now construct a map that allows us to select specific terms in the $r$-expansion of each component of the objects in the bulk. Note that the components of the vectors/tensors in $\mathcal{X}$ are elements of the space $C^\infty_r(\mathcal{M})$ defined in \eqref{def:C^infty_r}, where $\mathcal{M}$ is the spacetime with Bondi coordinates $x^\mu=(r,y^\alpha)=(r,u,z,\bar z)$. On this space, given a set\footnote{Note that we do not require $N$ to be a discrete interval in $\mathbb{Z}$ (i.e. a set of consecutive integers). In fact, most of the proofs in \cref{sec:appendix_proofs} hold for arbitrary sets. However, we ultimately reduce to the case where all the sets consist of consecutive integers without gaps in order to extract the physical solution.} $N\subseteq\mathbb{Z}$, we define an endomorphism $\mathcal{P}_N$ that acts on functions $f\in C_r^\infty(\mathcal{M})$ as
\begin{equation}
    \label{def:projector_P_N}
    \mathcal{P}_{N}f(r,y^\alpha)\coloneqq\sum_{n\in N}r^{n}f^{(n)}(y^\alpha)
\end{equation}
To give a concrete example, for $N=\{-20,-2,0,3\}$ we have
\begin{equation}
    \mathcal{P}_Nf=r^3f^{(3)}+f^{(0)}+\frac{f^{(-2)}}{r^2}+\frac{f^{(-20)}}{r^{20}} \ .
\end{equation}
In the above equation and in what follows, we drop the explicit coordinate dependence of $f$ and $f^{(n)}$ for the sake of readability.

The map \eqref{def:projector_P_N} satisfies the properties \eqref{eq:P property} listed in \cref{sec:appendix_proofs}; in particular, the first of these relations implies that $\mathcal{P}_N$ is a projection $\forall N$, making it an ideal candidate to serve as a building block for the solution of the master system.

\subsubsection{Projection \texorpdfstring{$\pi_\mathcal{N}$}{piN}}
\label{subsec:projection piN}
Now that we have developed the machinery necessary to manipulate the $r$-expansion of the single components of the elements in $\mathcal{X}$, we can construct a function that truncates them non-uniformly, employing the projection defined above.

To do so, we need to specify how $\mathcal{P}$ acts on each component of a generic element of the space $\mathcal{X}$. In other words, we shall define a collection of sets
\begin{equation}
    \label{def:power selection sets}
    \mathcal{N}:=\{N_\Lambda,N_{A_\mu},N_{E_{\mu\nu}}\subseteq\mathbb{Z}\}
\end{equation}
% \begin{equation}
%     \label{def:power selection sets}
%     \mathcal{N}:=\{N_\Lambda,N(A_\mu),N_{E_{\mu\nu}} \ | \ N(\cdot)\subseteq\mathbb{Z}\}_{\mu,\nu}
% \end{equation}
that contains all the information about how to select the powers in the $r$-expansion of the components of $\psi$. For this reason, we refer to the sets in $\mathcal{N}$ as the \textit{power selection sets}. For example, the $u$ component of all elements in $X_0$ is truncated using $\mathcal{P}_{N_{A_u}}$, where $N_{A_u}\in\mathcal{N}$ is a set of integer numbers. This means that we truncate all the gauge parameters in $X_{-1}$ using $\mathcal{P}_{N_\Lambda}$\footnote{That is to say, given two gauge parameters $\Lambda_1,\Lambda_2$, we truncate them in the same way: $\mathcal{P}_{N_\Lambda}\Lambda_1$ and $\mathcal{P}_{N_\Lambda}\Lambda_2$, respectively.}, we truncate the $r$ component of all the gauge fields in $X_0$ using $\mathcal{P}_{N_{A_r}}$, and so on\footnote{The cardinality of \eqref{def:power selection sets} is $|\mathcal{N}|=1+4+16=21$.}. 

The next step is to introduce a function that simultaneously applies $\mathcal{P}$ to all the components of a given $\psi\in\mathcal{X}$, using the power selection sets in \eqref{def:power selection sets}. We observe that it is not guaranteed that after truncating the components of an element $E_{\mu\nu}\in X_1$, we obtain an anti-self-dual tensor, as explained below equation \eqref{eq:subspace_failure}. In other words, the result of the projection on an arbitrary equation of motion in $X_1$ does not belong, in general, to (a subspace of) $X_1$. For this reason, we start by introducing the following graded vector space
\begin{equation}
    \mathcal{C}:=\bigoplus_{i=-1}^1C_i=\overbrace{C_r^\infty(\mathcal{M})}^{C_{-1}}\oplus \overbrace{C_r^\infty(\mathcal{M})^4}^{C_0}\oplus\big(\overbrace{C_r^\infty(\mathcal{M})^4\otimes C_r^\infty(\mathcal{M})^4}^{C_1}\big) \;,
\end{equation}
and we define a morphism\footnote{A morphism between two graded vector spaces $\mathcal{V}=\bigoplus_{i\in I}V_i$ and $\mathcal{W}=\bigoplus_{i\in I}W_i$ is a linear map $f\colon\mathcal{V}\rightarrow\mathcal{W}$ of degree zero, i.e. $f(V_i)\subseteq W_i \ \forall i\in I$.} $\pi_\mathcal{N}\colon\mathcal{X}\rightarrow\mathcal{C}$ given by
\begin{equation}
    \label{def:better_truncation}
    \pi_\mathcal{N}(\psi)\coloneqq\begin{cases}
        \mathcal{P}_{N_\Lambda}(\Lambda) &\mathrm{if} \ \psi\in X_{-1} \\
        \mathcal{P}_{N_{A_\mu}}(A_\mu) &\mathrm{if} \ \psi\in X_0 \\
        \mathcal{P}_{N_{E_{\mu\nu}}}(E_{\mu\nu}) &\mathrm{if} \ \psi\in X_1
    \end{cases} \;.
\end{equation}
% \begin{equation}
%     \pi_\mathcal{N}(\psi):=\begin{cases}
%         \mathcal{P}_{N_\Lambda}\psi\in C_r^\infty(\mathcal{M}) &\mathrm{if} \ \psi\in X_{-1} \\
%         \mathcal{P}_{N(A_\mu)}\psi_\mu\in C_r^\infty(\mathcal{M})^4 &\mathrm{if} \ \psi\in X_0 \\
%         \mathcal{P}_{N_{E_{\mu\nu}}}\psi_{\mu\nu}\in C_r^\infty(\mathcal{M})^4\otimes C_r^\infty(\mathcal{M})^4 &\mathrm{if} \ \psi\in X_1
%     \end{cases} \;.
% \end{equation}
% To this end, we define the surjection $\pi_\mathcal{N}\colon\mathcal{X}\rightarrow\mathrm{Im}\;\pi_\mathcal{N}$ given by
% \begin{equation}
%     \pi_\mathcal{N}(\psi):=\begin{cases}
%         \mathcal{P}_{N_\Lambda}\psi &\mathrm{if} \ \psi\in X_{-1} \\
%         \mathcal{P}_{N(A_\mu)}\psi_\mu &\mathrm{if} \ \psi\in X_0 \\
%         \mathcal{P}_{N_{E_{\mu\nu}}}\psi_{\mu\nu} &\mathrm{if} \ \psi\in X_1
%     \end{cases} \;.
% \end{equation}
We remark that $\pi_\mathcal{N}$ is well defined since it applies the projection $\mathcal{P}_N$ to each component of the objects in $\mathcal{X}$, and those components are elements of $C^\infty_r(\mathcal{M})$. As an example, consider the following choice for the power selection sets for the gauge fields:
\begin{equation}
    \begin{aligned}
        N_{A_r}=\{-3,-2\}, \quad N_{A_u}=N_{A_{\bar z}}=\{-2,-1,0\}, \quad N_{A_z}=\{-1,0\},
    \end{aligned}
\end{equation}
then the image of an arbitrary gauge field $A_\mu\in X_0$ through $\pi_\mathcal{N}$ is
\begin{equation}
    (\pi_\mathcal{N}A)_\mu=\begin{pmatrix}
        \mathcal{P}_{N_{A_r}}A_r \\
        \mathcal{P}_{N_{A_u}}A_u \\
        \mathcal{P}_{N_{A_z}}A_z \\
        \mathcal{P}_{N_{A_{\bar z}}}A_{\bar z}
    \end{pmatrix}=\begin{pmatrix}
        \frac{A_r^{(-2)}}{r^2}+\frac{A_r^{(-3)}}{r^3} \\
        A_u^{(0)}+\frac{A_u^{(-1)}}{r}+\frac{A_u^{(-2)}}{r^2} \\
        A_z^{(0)}+\frac{A_z^{(-1)}}{r} \\
        A_{\bar z}^{(0)}+\frac{A_{\bar z}^{(-1)}}{r}+\frac{A_{\bar z}^{(-2)}}{r^2}
    \end{pmatrix}\in C_0 \ .
\end{equation}
Observe that $\pi_\mathcal{N}$ is not a projection (as $\mathcal{C}\not\subseteq\mathcal{X}$, since $C_1\not\subseteq X_1$), despite being defined in terms of the projection $\mathcal{P}$. Notice that, in contrast to pure Yang-Mills, the $z$ and $\bar z$ components have different truncations. This asymmetry arises from the self-duality condition.

At this point, it is finally possible to solve the first equation of the master system: $\Pi=\pi_\mathcal{N}$ is a solution for \eqref{eq:master_system_EQ1} if the power selection sets in $\mathcal{N}$ satisfy
\begin{subnumcases}{\label{def:pi_N_contraints_on_N}}
    N_I\subseteq\mathbb{Z}^{\leq0} \quad \forall I\in\{\Lambda,A_\mu,E_{\mu\nu}\} \label{def:pi_N_contraints_on_N 1} \\
    N_{E_{\mu\nu}}=N_{E_{\nu\mu}} \label{def:pi_N_contraints_on_N 2}\\
    N_{E_{z\bar z}}=\mathcal{T}_2N_{E_{ru}} \label{def:pi_N_contraints_on_N 3}
\end{subnumcases}
where $\mathbb{Z}^{\leq0}$ is the set of non-positive integers and $\mathcal{T}$ is the translation of sets, i.e. $\forall s\in\mathbb{Z}$ and $\forall U\subseteq\mathbb{Z}$
\begin{equation}
    \label{def:translation}
    \mathcal{T}_sU\coloneqq\{u+s, \ u\in U\}\subseteq\mathbb{Z} \ .
\end{equation}
The first condition of the system above ensures that no positive power of $r$ appears in the expansion of an element in $\mathrm{Im}\pi_{\mathcal{N}}$. Moreover, having in mind the general form \eqref{eq:E_full_generl_form} of the equations of motion, it is easy to see that the second condition above ensures that $\pi_\mathcal{N}(E)_{\mu\nu}$ are skew-symmetric tensors $\forall E_{\mu\nu}\in X_1$, while the third guarantees their anti-self-duality. Therefore, the morphism $\pi_\mathcal{N}$ with $\mathcal{N}$ satisfying \eqref{def:pi_N_contraints_on_N} is a projection\footnote{This justifies the title of the current subsection.} and its image is a subspace of $\mathcal{X}^{n\leq0}$, so it is a solution (in fact, the most general one) of the master system's first equation. 
% In particular, the latter implies
% \begin{equation}
% \label{def:N_Eru}
%     -1,0\notin N_{E_{ru}}
% \end{equation}
% since, otherwise, $N_{E_{z\bar z}}\not\subseteq\mathbb{Z}^{\leq0}$; in other words, $\pi_N_{E_{z\bar z}}$ would contain positive powers of $r$, that diverge in the regime close to $\mathcal{I}$.

The above considerations, combined with the fact that $E_{rz}$ always vanishes, indicate that it is sufficient to specify the sets $N_{E_{ru}}$, $N_{E_{r\bar z}}$ and $N_{E_{uz}}$ to completely define the action of $\pi_\mathcal{N}$ on an arbitrary equation of motion in $X_1$, which can thus be written as a linear combination of \eqref{def:basis_for_X1}, as required.

\subsubsection{An infinite family of \texorpdfstring{$L_\infty$}{L infty} algebras}
\label{subsec:an infinite family of Linf algebras}
We are now ready to present the solution of the master system \eqref{eq:master_system}, which relies on the formalism built above. The complete proof is detailed in \cref{sec:appendix_proofs}, where we show that the system is indeterminate and its physically relevant solution is a family of projections $\{\Pi_k\}_{k\in\mathbb{Z}^{\leq0}}$ of the form 
\begin{equation}
\label{res:master system solution}
    \Pi_k=\pi_{\mathcal{N}_k}\;, \quad \mathcal{N}_k=\{N^k_\Lambda,N^k_{A_\mu},N^k_{E_{\mu\nu}}\subseteq\mathbb{Z}^{\leq0}\}
\end{equation}
with
\begin{equation}
\label{res:master system solution sets}
    N^k_I=\begin{cases}
        \{k,\dots,0\} &\mathrm{if} \ I=A_z,E_{uz} \\
        \{k-1,\dots,0\} &\mathrm{if} \ I=\Lambda,A_u,A_{\bar z} \\
        \{k-2,\dots,-2\} &\mathrm{if} \ I=A_r,E_{ru},E_{r\bar z}
    \end{cases}\;, \quad \mathrm{for \ }k\leq0\;.
\end{equation}
% \blue Say something about
% \begin{equation}
%     N_{A_r}^{(k)}=\mathcal{T}_{-2}N_{A_z}^{(k)}\;,\quad N_{A_u}^{(k)}=N_{A_{\bar z}}^{(k)}
% \end{equation}\black
% Note that:
% \begin{table}[h!]
%     \centering
%     \begin{tabular}{|c|c|c|}
%         \hline
%         Equation & Condition & Forbidden power in the fall-off \\ \hline
%         \eqref{eq:master_system_EQ1} & No divergences & $N_{E_{ru}}\not\ni-1,0$ \\ \hline
%         \eqref{eq:master_system_EQ2} & Cochain map condition & $N_{A_r}\cup N_{E_{r\bar z}}\not\ni0$ \\ \hline
%         \eqref{eq:master_system_EQ3} & Leibniz rule & $N_{A_r}\cup N_{E_{r\bar z}}\not\ni-1$ \\ \hline
%     \end{tabular}
%     \caption{Forbidden fall-off}
%     \label{tab:forbidden falloff}
% \end{table}
% \begin{equation}
%     N_\infty(\psi)=\begin{cases}
%         \mathbb{Z}^{\leq0} \qquad\qquad\ \ \, \mathrm{if} \ \psi=\Lambda,A_u,A_z,A_{\bar z},E_{uz}\\
%         \mathbb{Z}^<\setminus\{-1\} \quad \ \ \mathrm{if} \ \psi=A_r,E_{ru},E_{r\bar z}
%     \end{cases}
% \end{equation}
Using these projections, we build an infinite family of $L_\infty$ algebras $\{(\mathcal{X}^{(k)},B_1^{(k)},B_2^{(k)})\}_{k\in\mathbb{Z}^{\leq0}}$ where
\begin{equation}
    \label{def:space and brackets for slices}
    \mathcal{X}^{(k)}:=\mathrm{Im}\Pi_k, \qquad B_i^{(k)}:=\Pi_k\circ B_i|_{(\mathrm{Im}\,\Pi_k)^{\otimes i}} \;.
\end{equation}
This will determine a series of natural ``slices" $\mathfrak{S}_k$, displayed in \cref{fig:generic slice}, on which we can define our theory and construct the kinematic algebra. Remarkably, the upper boundary, which gives the so-called fall-off of the fields, is identical for all slices and independent of $k$, and it coincides exactly with the physical fall-off conditions presented in \cite{Strominger:2013lka,Strominger:2017zoo} and reviewed in \autoref{Review of physical fall-off at null infinity}. Such fall-offs therefore arise from general properties of the master system (see \cref{fig:fall off} for an explanation of how each condition of the system prohibits a specific power in the fall-off). The lower boundary of the slice is simply the choice of precision in the expansion in $1/r$. In particular, we are interested in the fall-offs of the gauge fields. Below, we define two objects that encode the approximation (truncation from below) and the fall-offs (truncation from above) of each slice $\mathfrak{S}_k$:
\begin{equation}
    \label{def:truncation and falloff}
    \begin{aligned}
        t_\mu^k&\coloneqq\min N_{A_\mu}^{(k)}=(k-2,k-1,k,k-1) \qquad \text{(truncation from below)} \\
        f_\mu&\coloneqq\max N_{A_\mu}^{(k)}=(-2,0,0,0) \qquad\qquad\qquad\ \text{(fall-off)}\;.
    \end{aligned}
\end{equation}

Finally, we remark that the homotopy algebra construction automatically ensures that the truncated gauge fields entering our slices are exactly the ones needed to write the truncated equations of motion. This emerges directly when solving the master system \eqref{eq:master_system} as shown in \cref{sec:appendix_proofs}, i.e. we did not need to impose it as an additional condition.   

% Schematically,
% \begin{equation}
%     (\mathcal{X}^{(k)},B_1^{(k)},B_2^{(k)})\coloneqq\Pi_k\circ(\mathcal{X},B_1,B_2)
% \end{equation}
To wrap up this paragraph, we provide, as an illustrative example, the explicit form of the power selection sets that define the slice $\mathfrak{S}_0$:
\begin{equation}
    \label{res:solution N0 for Pi}
    N^0_I=\begin{cases}
        \{0\} &\mathrm{if} \ 
        I=A_z,E_{uz}\\
        \{-1,0\} &\mathrm{if} \ I=\Lambda,A_u,A_{\bar z} \\
        \{-2\} &\mathrm{if} \ I=A_r,E_{ru},E_{r\bar z}
    \end{cases}.
\end{equation}
Thus, the gauge parameters, gauge fields and e.o.m. in $\mathcal{X}^{(0)}\coloneqq\mathrm{Im}\Pi_0$ are, respectively,
\begin{equation}
    \label{eq:slice 0 elements}
    \Lambda=\Lambda^{(0)}+\frac{\Lambda^{(-1)}}{r}, \quad
    A_\mu=\begin{pmatrix}
        \frac{A_r^{(-2)}}{r^2} \\
        A_u^{(0)}+\frac{A_u^{(-1)}}{r} \\
        A_z^{(0)} \\
        A_{\bar z}^{(0)}+\frac{A_{\bar z}^{(-1)}}{r}
    \end{pmatrix}, \quad E_{\mu\nu}=\begin{pmatrix}
        0 & \frac{E_{ru}^{(-2)}}{r^2} & 0 & \frac{E_{r\bar z}^{(-2)}}{r^2} \\
        -\frac{E_{ru}^{(-2)}}{r^2} & 0 & E_{uz}^{(0)} & 0 \\
        0 & -E_{uz}^{(0)} & 0 & E_{ru}^{(-2)} \\
        -\frac{E_{r\bar z}^{(-2)}}{r^2} & 0 & -E_{ru}^{(-2)} & 0 \\
    \end{pmatrix}\;.
\end{equation}

\begin{figure}[h!]
    \centering
    \begin{overpic}[width=0.9\linewidth]{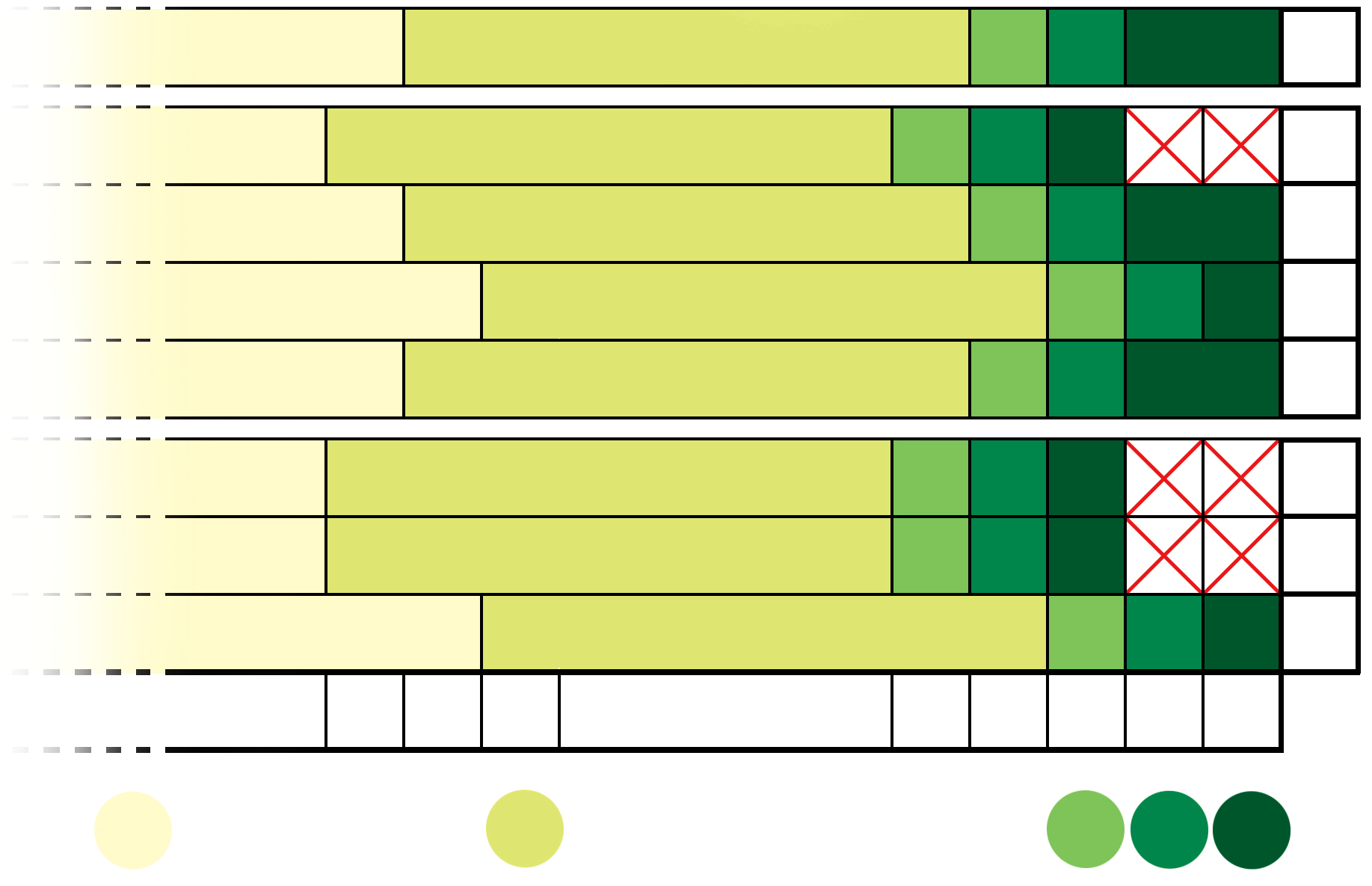} % Use "grid,tics=5"
        % Sets
        \put(94,28.7){$E_{ru}$}
        \put(94,23.2){$E_{r\bar z}$}
        \put(94,17.6){$E_{uz}$}
        \put(94.7,53){$A_r$}
        \put(94.7,47.3){$A_u$}
        \put(94.7,41.7){$A_z$}
        \put(94.7,36){$A_{\bar z}$}
        \put(95.2,60){$\Lambda$}
        % Slices
        \put(89.6,3.3){\white$\mathfrak{S}_0$\black}
        \put(82.7,3.3){\white$\mathfrak{S}_{-1}$\black}
        %\put(85.5,3){\white\tiny$-1$\black}
        \put(76.5,3.3){\white$\mathfrak{S}_{-2}$\black}
        %\put(79.3,4){\white\tiny$-2$\black}
        \put(36.5,3.3){$\mathfrak{S}_k$}
        \put(8,3.3){$\mathfrak{S}_{-\infty}$}
        % Numbers
        \put(24,11.8){$k$}
        \put(25.8,11.8){\footnotesize$-2$}
        \put(30,11.8){$k$}
        \put(31.8,11.8){\footnotesize$-1$}
        \put(37,11.8){$k$}
        \put(66,11.8){$-4$}
        \put(71.5,11.8){$-3$}
        \put(77,11.8){$-2$}
        \put(83,11.8){$-1$}
        \put(89.7,11.8){$0$}
        \put(51,11.8){$\dots$}
        % Infinity
        \put(3,11.8){$-\infty$}
        \put(9,11.8){$\longleftarrow$}
    \end{overpic}
    \caption{The drawing illustrates the slices $\mathfrak{S}_k$ (where $k\in\mathbb{Z}^{\leq0}$) obtained by projecting the initial graded vector space $\mathcal{X}$ via the maps $\Pi_k$ in \eqref{res:master system solution}, that are the solutions of the master system \eqref{eq:master_system}. Each slice is represented in a different shade of green. The lighter slices always include the darker ones, in other words $\mathfrak{S}_{k-1}\subset\mathfrak{S}_k$ for all $k$. Moreover, all slices share the same maxima and therefore the same fall-offs.}
    \label{fig:generic slice}
\end{figure}

\begin{figure}[h!]
    \centering
    \vspace{1cm}
    \begin{overpic}[width=0.9\linewidth]{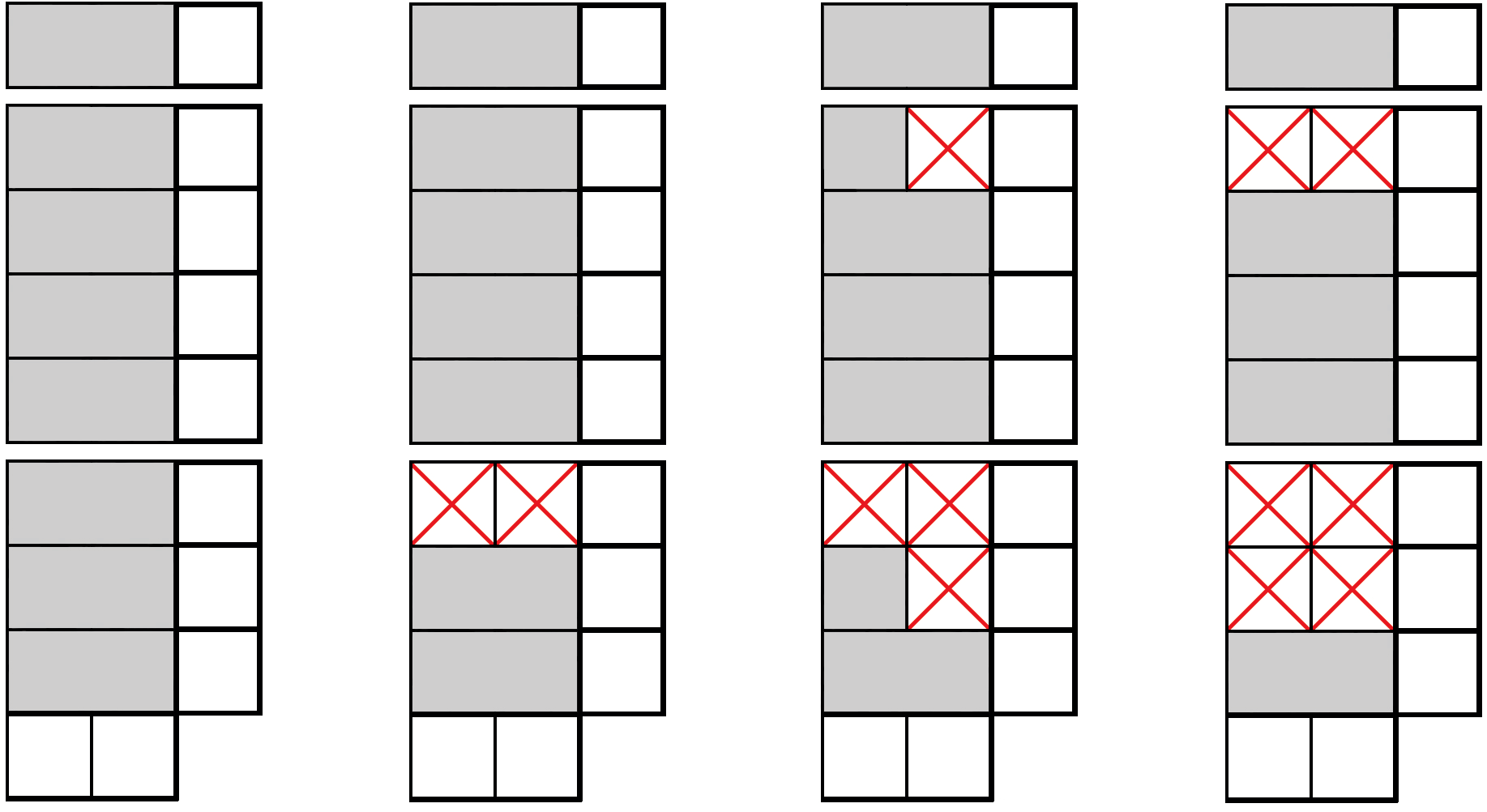} % Use "grid,tics=5"
        % Sets 1
        \put(13.7,50.7){$\Lambda$}
        \put(13.1,44){$A_r$}
        \put(13.1,38.4){$A_u$}
        \put(13.1,32.8){$A_z$}
        \put(13.1,27){$A_{\bar z}$}
        \put(12.4,20){$E_{ru}$}
        \put(12.4,14.5){$E_{r\bar z}$}
        \put(12.4,8.8){$E_{uz}$}
        % Sets 2
        \put(40.8,50.7){$\Lambda$}
        \put(40.2,44){$A_r$}
        \put(40.2,38.4){$A_u$}
        \put(40.2,32.8){$A_z$}
        \put(40.2,27){$A_{\bar z}$}
        \put(39.5,20){$E_{ru}$}
        \put(39.5,14.5){$E_{r\bar z}$}
        \put(39.5,8.8){$E_{uz}$}
        % Sets 3
        \put(68.4,50.7){$\Lambda$}
        \put(67.8,44){$A_r$}
        \put(67.8,38.4){$A_u$}
        \put(67.8,32.8){$A_z$}
        \put(67.8,27){$A_{\bar z}$}
        \put(67.1,20){$E_{ru}$}
        \put(67.1,14.5){$E_{r\bar z}$}
        \put(67.1,8.8){$E_{uz}$}
        % Sets 4
        \put(95.6,50.7){$\Lambda$}
        \put(95,44){$A_r$}
        \put(95,38.4){$A_u$}
        \put(95,32.8){$A_z$}
        \put(95,27){$A_{\bar z}$}
        \put(94.3,20){$E_{ru}$}
        \put(94.3,14.5){$E_{r\bar z}$}
        \put(94.3,8.8){$E_{uz}$}
        % Numbers 1
        \put(1.6,3){$-1$}
        \put(8.3,3){$0$}
        % Numbers 2
        \put(28.7,3){$-1$}
        \put(35.4,3){$0$}
        % Numbers 3
        \put(56.3,3){$-1$}
        \put(63,3){$0$}
        % Numbers 4
        \put(83.5,3){$-1$}
        \put(90.2,3){$0$}
        % Horizontal arrows
        \put(20,29.5){\huge{$\Rightarrow$}}
        \put(47.3,29.5){\huge{$\Rightarrow$}}
        \put(74.8,29.5){\huge{$\Rightarrow$}}
        % Equations
        \put(18.2,33.5){\eqref{eq:master_system_EQ1}}
        \put(45.5,33.5){\eqref{eq:master_system_EQ2}}
        \put(73,33.5){\eqref{eq:master_system_EQ3}}
        % Conditions
        \put(20.5,60){No}
        \put(15.5,56.7){divergences}
        \put(42,60){Cochain map}
        \put(44.2,56.7){condition}
        \put(72.8,60){Leibniz}
        \put(74.7,56.7){rule}
        % Vertical arrows
        \begin{tikzpicture}[overlay, remember picture]
            % Manually define the arrow coordinates in the overpic system
            \draw[->, black] (0.202\linewidth, 0.335\linewidth) -- (0.202\linewidth, 0.495\linewidth);
            \draw[->, black] (0.448\linewidth, 0.335\linewidth) -- (0.448\linewidth, 0.495\linewidth);
            \draw[->, black] (0.695\linewidth, 0.335\linewidth) -- (0.695\linewidth, 0.495\linewidth);
        \end{tikzpicture}
    \end{overpic}
    \caption{The fall-offs follow directly from the general properties of the master system: each condition in the system restricts specific powers in the fall-off, and the figure illustrates how these conditions determine the excluded powers, which are represented by red crosses.}
    \label{fig:fall off}
\end{figure}

\newpage

\section{Strict kinematic algebras}
\label{sec:Strict kinematic algebras}
Let us take stock of the situation. Thanks to the method explained in \cref{sec:SDYM at null infinity via slice truncation}, we have found a way to define an infinite family of strict $L_\infty$ algebras (in fact, differential graded Lie algebras) that describe self-dual Yang Mills near null infinity. Each algebra $(\mathcal{X}^{(k)},B_1^{(k)},B_2^{(k)})$ is labelled by a non-positive integer $k$, that corresponds to the approximation we adopt when truncating ``from below" the $r$-expansion of the elements in $\mathcal{X}$, the graded vector space of the initial $L_\infty$ algebra encoding SDYM in the bulk.

\subsection{Kinematic algebras on slices}
\label{subsec:Kinematic algebras on slices}
At this point, we may proceed with the construction of the kinematic algebra on each slice $\mathfrak{S}_k$. We follow the procedure outlined in sections \ref{sub:Cinf alg}, \ref{sub:kin alg gen} and employed in \cref{sec:color stripping and kinematic algebra}, with the sole difference that we now use the projection $\Pi_{k}$ defined above to ensure that the kinematic algebra closes on the slice. First, by color-stripping, we obtain a differential graded commutative algebra $(\mathcal{K}^{(k)},m_1^{(k)},m_2^{(k)})$, where $m_i^{(k)}=\Pi_k m_i$ for $i=1,2$\footnote{\label{ftn:projection on kin space}Recall that $\Pi_k\colon\mathcal{X}\rightarrow\mathcal{X}^{(k)}$, where $\mathcal{X}=\mathcal{K}\otimes\mathfrak{g}$ and $\mathcal{X}^{(k)}=\mathcal{K}^{(k)}\otimes\mathfrak{g}$. Let $\Pi_k^\mathrm{kin}\colon\mathcal{K}\rightarrow\mathcal{K}^{(k)}$ be the projection that acts directly on the color-stripped graded vector space. Since the color-stripping procedure does not affect the $r$-expansion, we can identify $\Pi_k^\mathrm{kin}$ with $\Pi_k$. For this reason, by an abuse of notation, we suppress the superscript ``$\mathrm{kin}$" whenever the domain of the projection is clear from the context.}. Next, we equip the graded vector space $\mathcal{K}^{(k)}$ with the map $b^{(k)}\coloneqq\Pi_k b$, which is the projection on the slice $k$ of the differential $b$ introduced in \eqref{def:differential b in components}. As we show in the appendix, $(\mathcal{K}^{(k)},b^{(k)})$ is a chain complex (see \cref{rmk:Kk bk chain complex}). Finally, we define the box operator and the kinematic bracket, respectively, as
\begin{equation}
    \label{def: projected box and kin bracket}
    \Box^{(k)}\coloneqq[b^{(k)},m_1^{(k)}]\;, \quad b_2^{(k)}:=[b^{(k)}, m_2^{(k)}] \;,
\end{equation}
where the commutator in the second equation acts as in \eqref{def:b2}. As expected, the two-product $m_2^{(k)}$ and the kinematic bracket satisfy a generalised version of the Poisson compatibility relation. The failure to satisfy this relation strictly is governed by a trilinear map $\theta_3^{(k)}$. Putting all these ingredients together, we obtain a $\mathrm{BV}_\infty^{\Box^{(k)}}$ algebra
\begin{equation}
    \label{def:kinematic algebra for slice}
    \mathfrak{Kin}^{(k)}\coloneqq(\mathcal{K}^{(k)}, m_{1}^{(k)}, b^{(k)}, \Box^{(k)}, m_{2}^{(k)}, b_{2}^{(k)}, \theta_{3}^{(k)}) \;,
\end{equation}
which is the kinematic algebra on the slice $\mathfrak{S}_k$. In \cref{sec:appendix other proofs} we show that the box operator of this algebra is simply the projection on the slice of the d'Alembert operator, $\Box^{(k)}=\Pi_k\Box$ (see \cref{rmk:box k}). In particular, this implies that $\mathfrak{Kin}^{(k)}$ is a $\mathrm{BV}_\infty^{\Pi_k\Box}$ algebra. Similarly, one can show that $b_2^{(k)}=\Pi_kb_2$ and $\theta_3^{(k)}=\Pi_k\theta_3$. In light of these observations, a simple relation between the kinematic algebra on the slice and the one in the bulk can be written:
\begin{equation}
\label{projection_kin_alg}
    \mathfrak{Kin}^{(k)}=\Pi_k\circ\mathfrak{Kin}\;,
\end{equation}
cf. equations \eqref{def:kinematic algebra bulk} and \eqref{def:kinematic algebra for slice}. By this formal expression, we mean that every element (whether a space or a map) that defines $\mathfrak{Kin}^{(k)}$ is obtained by applying the projection $\Pi_k$ to the corresponding element of $\mathfrak{Kin}$.

As explained at the end of \cref{sub:kin alg gen}, the kinematic algebra contains a strict Lie algebra as a subsector iff it is itself strict. In other words, the kinematic bracket $b_2^{(k)}$ satisfies the strict Jacobi identity iff the ternary bracket $\theta_3^{(k)}$ vanishes. In order to study when this occurs, it is useful find the explicit expression for $\theta_3^{(k)}$. This is achieved by applying the projection $\Pi_k$ to equation \eqref{theta3}, and by employing the identities derived in the appendices\footnote{In particular, see \cref{lemma:properties of projector P} and equation \eqref{eq:projector on a product}.}. The components of the trilinear map applied to three gauge fields $\mathcal{A}_1,\mathcal{A}_2,\mathcal{A}_3\in K_1^{(k)}$ are given by
\begin{equation}
    \label{eq:theta3kAAA slice}
    \begin{aligned}
        \theta_3^{(k)}(\cA_{1},\cA_{2},\cA_{3})_r&=\sum_{\substack{m,n,l\in\{k-2,\dots,-2\} \\ p\in\{k-1,\dots,0\}}}6\delta_{m,n+l+p}r^m\mathcal{A}_{[1\,r}^{(n)}\mathcal{A}_{2\,z}^{(l+2)}\mathcal{A}_{3]\,\bar z}^{(p)} \\
        \theta_3^{(k)}(\cA_{1},\cA_{2},\cA_{3})_u&=-\sum_{\substack{m,n_1,n_3\in\{k-1,\dots,0\}\\ n_2\in\{k-2,\dots,-2\}}} 6\delta_{m,n_1+n_2+n_3}r^m\mathcal{A}_{[1\,u}^{(n_1)}\mathcal{A}_{2\,z}^{(n_2+2)}\mathcal{A}_{3]\,\bar z}^{(n_3)} \\
        \theta_3^{(k)}(\cA_{1},\cA_{2},\cA_{3})_z&=-\sum_{\substack{m,n_3\in\{k,\dots,0\} \\ n_1\in\{k-2,\dots,-2\}\\ n_2\in\{k-1,\dots,0\}}}6\delta_{m,n_1+n_2+n_3}r^m\mathcal{A}_{[1\,r}^{(n_1)}\mathcal{A}_{2\,u}^{(n_2)}\mathcal{A}_{3]\,z}^{(n_3)} \\
        \theta_3^{(k)}(\cA_{1},\cA_{2},\cA_{3})_{\bar z}&=\sum_{\substack{m,n_2,n_3\in\{k-1,\dots,0\} \\ n_1\in\{k-2,\dots,-2\}}}6\delta_{m,n_1+n_2+n_3}r^m\mathcal{A}_{[1\,r}^{(n_1)}\mathcal{A}_{2\,u}^{(n_2)}\mathcal{A}_{3]\,\bar z}^{(n_3)} \;,
    \end{aligned}
\end{equation}
% \begin{equation}
%     \theta_3^{(k)}(\cA_{1},\cA_{2},\cA_{3})_\mu=\begin{pmatrix}
%         \sum_{\substack{m,n_1,n_2\in\{k-2,\dots,-2\} \\ n_3\in\{k-1,\dots,0\}}}6\delta_{m,n_1+n_2+n_3}r^m\mathcal{A}_{[1\,r}^{(n_1)}\mathcal{A}_{2\,z}^{(n_2+2)}\mathcal{A}_{3]\,\bar z}^{(n_3)} \\
%         -\sum_{\substack{m,n_1,n_3\in\{k-1,\dots,0\}\\ n_2\in\{k-2,\dots,-2\}}} 6\delta_{m,n_1+n_2+n_3}r^m\mathcal{A}_{[1\,u}^{(n_1)}\mathcal{A}_{2\,z}^{(n_2+2)}\mathcal{A}_{3]\,\bar z}^{(n_3)} \\
%         -\sum_{\substack{m,n_3\in\{k,\dots,0\} \\ n_1\in\{k-2,\dots,-2\}\\ n_2\in\{k-1,\dots,0\}}}6\delta_{m,n_1+n_2+n_3}r^m\mathcal{A}_{[1\,r}^{(n_1)}\mathcal{A}_{2\,u}^{(n_2)}\mathcal{A}_{3]\,z}^{(n_3)} \\
%         \sum_{\substack{m,n_2,n_3\in\{k-1,\dots,0\} \\ n_1\in\{k-2,\dots,-2\}}}6\delta_{m,n_1+n_2+n_3}r^m\mathcal{A}_{[1\,r}^{(n_1)}\mathcal{A}_{2\,u}^{(n_2)}\mathcal{A}_{3]\,\bar z}^{(n_3)}
%     \end{pmatrix}
% \end{equation}
while the components of $\theta_3^{(k)}$ on an equation of motion $\mathcal{E}\in K_2^{(k)}$ and two gauge fields are
\begin{equation}
    \label{eq:theta3EAA slice}
    \begin{aligned}
        \theta_3^{(k)}(\mathcal{E},\cA_{1},\cA_{2})_{ru}&=\sum_{\substack{m,n_1,n_2\in\{k-2,\dots,-2\}\\ n_3\in\{k-1,\dots,0\}}}2r^m\delta_{m,n_1+n_2+n_3}\Big(\mathcal{E}_{uz}^{(n_1+2)}\mathcal{A}_{[1\,r}^{(n_2)}\mathcal{A}_{2]\,\bar z}^{(n_3)}-\mathcal{E}_{r\bar z}^{(n_1)}\mathcal{A}_{[1\,u}^{(n_3)}\mathcal{A}_{2]\,z}^{(n_2+2)}\Big) \\
        \theta_3^{(k)}(\mathcal{E},\cA_{1},\cA_{2})_{r\bar z}&=\sum_{\substack{m,n_1,n_2\in\{k-2,\dots,-2\}\\ n_3\in\{k-1,\dots,0\}}}2r^m\delta_{m,n_1+n_2+n_3}\Big[\mathcal{E}_{r\bar z}^{(n_1)}\left(\mathcal{A}_{[1\,r}^{(n_2)}\mathcal{A}_{2]\,u}^{(n_3)}+\mathcal{A}_{[1\,z}^{(n_2+2)}\mathcal{A}_{2]\,\bar z}^{(n_3)}\right) \\
        &\qquad\qquad\qquad\qquad\qquad\qquad\qquad\qquad\ \raisebox{2ex}[0pt][0pt]{$-2\mathcal{E}_{ru}^{(n_1)}\mathcal{A}_{[1\,r}^{(n_2)}\mathcal{A}_{2]\,\bar z}^{(n_3)}\Big]$} \\
        \theta_3^{(k)}(\mathcal{E},\cA_{1},\cA_{2})_{uz}&=-\sum_{\substack{m,n_1\in\{k,\dots,0\}\\ n_2\in\{k-2,\dots,-2\}\\ n_3\in\{k-1,\dots,0\}}}2r^m\delta_{m,n_1+n_2+n_3}\Big[\mathcal{E}_{uz}^{(n_1)}\left(\mathcal{A}_{[1\,r}^{(n_2)}\mathcal{A}_{2]\,u}^{(n_3)}+\mathcal{A}_{[1\,z}^{(n_2+2)}\mathcal{A}_{2]\,\bar z}^{(n_3)}\right) \\
        &\qquad\qquad\qquad\qquad\qquad\qquad\qquad\quad\ \raisebox{4ex}[0pt][0pt]{$-2\mathcal{E}_{ru}^{(n_2)}\mathcal{A}_{[1\,u}^{(n_3)}\mathcal{A}_{2]\,z}^{(n_1)}\Big]\;.$}
    \end{aligned}
\end{equation}
Inspection of the above equations reveals that $\theta_3^{(k)}$ vanishes for all $\mathcal{A},\mathcal{E}$ if $k=0$, meaning that the kinematic algebra on the thinnest slice, i.e. $\mathfrak{Kin}^{(0)}$, is strict. However, this case is trivial and uninteresting, as the kinematic bracket between two fields vanishes upon imposing $A_u^{(0)}=0$, a standard gauge choice\footnote{The kinematic bracket between two gauge fields on the slice $k=0$ is $b_2^{(0)}(\mathcal{A}_1,\mathcal{A}_2)=\Pi_0b_2(\mathcal{A}_1,\mathcal{A}_2)$, where $\mathcal{A}_1,\mathcal{A}_2\in K_1^{(0)}$ are the color-stripped version of the fields given in \eqref{eq:slice 0 elements}. By explicit computation, one can check that the only non-zero component is $b_2^{(0)}(\mathcal{A}_1,\mathcal{A}_2)_{\bar z}=-4r^{-1}\mathcal{A}_{[1\,u}^{(0)}\mathcal{A}_{2]\,\bar z}^{(0)}$, which vanishes upon setting $A_u^{(0)}=0$.}.

In general, the equations above show that the kinematic algebra $\mathfrak{Kin}^{(k)}$ on a slice $\mathfrak{S}_k$ with $k$ non-vanishing is not strict. Since our goal is to construct kinematic brackets that strictly satisfies the Jacobi identity, we must seek alternative means of strictifying the $\mathrm{BV}_\infty^{\Pi_k\Box}$ algebra on $\mathfrak{S}_{k\neq0}$. Note that one could attempt to carefully impose relations among the components of the gauge fields for each $\mathfrak{S}_k$, so as to force $\theta_3^{(k)}$ to vanish. For example, if one were to require that all the components of $A_\mu$ be identical, the trilinear map in \eqref{eq:theta3kAAA slice} would vanish by symmetry. We leave this approach for future work.

In contrast, our strategy is to \textit{refine} $\mathfrak{S}_k$ by allowing the fall-offs (i.e., the maxima of the power selection sets) to vary, in such a way that the kinematic algebra defined on these refined slices becomes strict. This method enables us to construct the desired brackets without introducing additional relations among the components of the field, but merely by carefully selecting terms of their $r$-expansion\footnote{Together with a small precaution that we will explain below.}. In what follows, we develop this idea and show that for each slice, at least two distinct refinements\footnote{We wish to stress that, in the present context, the term \textit{refinement} is employed in a sense that diverges from its conventional meaning in topology.} can be defined on which the kinematic algebra is strictified.

\subsection{Strict kinematic algebras on refined slices}\label{Strict kin algebras on refined slices}
In what follows, as anticipated above, we allow the fall-off of the various components of the gauge fields to vary, by further projecting to zero some of the leading order coefficients of $A_\mu\in X_0^{(k)}$. This adjustment enables the construction of \textit{refined slices} $\mathfrak{S}_{k,f}\subseteq\mathfrak{S}_k$ on which a strict kinematic algebra can be defined (we add the subscript $f$ to indicate that these new slices depend on a modified fall-off). As a result, we obtain a kinematic bracket that satisfies the Jacobi identity, thereby forming a strict Lie algebra.

Before proceeding, a brief yet essential clarification: one might think that modifying the fall-offs (by projecting to zero particular coefficients) is not possible since, as shown in \cref{sec:SDYM at null infinity via slice truncation}, the projections $\Pi_k=\pi_{\mathcal{N}_k}$ (that define the slices $\mathfrak{S}_k$) are the only physical solutions to the master system, and the power selection sets in $\mathcal{N}_k$ have fixed maxima, i.e. fall-off. Any modification of these maxima would violate the $L_{\infty}$ relations, resulting in the loss of the $L_\infty$ algebras on $\mathfrak{S}_k$. This issue can be resolved by constructing the refined slices through an additional projection that not only further truncates $r$-expansion but also modifies the functional dependence of the gauge parameters, thus ensuring that the $L_{\infty}$ relations are obeyed in $\mathfrak{S}_{k,f}$, preserving the $L_{\infty}$ structure. Since this detail is not essential to the following discussion, we defer its analysis to \cref{subsec:kin k f}.

 The crucial point is that for each slice $\mathfrak{S}_k$, refined slices $\mathfrak{S}_{k,f}$ can be defined, each depending on a specific choice of the fall-off of the gauge fields. More precisely, $\mathfrak{S}_{k,f}$ are constructed following the same procedure described in \cref{subsec:an infinite family of Linf algebras} but using a modified version of $\Pi_k$. This new projection, which we call $\Pi_{k,f}$ and whose explicit definition is given in \cref{subsec:kin k f}, truncates the $r$-expansion of the gauge fields through power selection sets $N_{A_\mu}^{(k,f)}\subseteq N_{A_\mu}^{(k)}$ defined as:
\begin{equation}
\label{res:master system solution sets gauge field strict}
   N_{A_\mu}^{(k,f)}\coloneqq\begin{cases}
       \{k-2,\dots,f_r^k\} &\mu=r \\
       \{k-1,\dots,f_u^k\} &\mu=u \\
       \{k,\dots,f_z^k\} &\mu=z \\
       \{k-1,\dots,f_{\bar z}^k\} &\mu=\bar z
    \end{cases}
\end{equation}
where
\begin{equation}
    \label{eq:def inequalities for f}
    f_\mu^k\coloneqq\max N_{A_\mu}^{(k,f)}\in\mathbb{Z} \quad \text{s.th.} \quad \begin{cases}
        k-2\leq f^k_r\leq-2 \\
        k-1\leq f^k_u,f^k_{\bar z}\leq0 \\
        k\leq f^k_z\leq0
    \end{cases}
\end{equation}
are the modified fall-off. The inequalities above are necessary to ensure that the sets in \eqref{res:master system solution sets gauge field strict} are well-defined, and that they are subsets of $N_{A_\mu}^{(k)}$. We point out that the refined slices are obtained by modifying the fall-off but not the truncation of the $r$-expansion from ``below", which is not affected by the refinement procedure. In other words, $\min N_{A_\mu}^{(k,f)}\equiv\min N_{A_\mu}^{(k)}=t_\mu^k$ for all $k$ and for all $f$, cf. equations \eqref{res:master system solution sets gauge field strict} and \eqref{def:truncation and falloff}. Indeed, if we choose the fall-offs to be the same as those in \eqref{def:truncation and falloff}, then the resulting refined slices coincide with the initial ones, as expected
% \footnote{When referring to the refined slice corresponding to the choice of fall-off $f_{i,\mu}^k$, we drop the superscript $k$ and the subscript $\mu$ in $f$, and we simply write $\mathfrak{S}_{k,f_i}\coloneqq\mathfrak{S}_{k,f_{i,\mu}^k}$.}:
\begin{equation}
    \label{eq:refined slice reduces to initial one}
    \mathfrak{S}_{k,f}\equiv \mathfrak{S}_k \ \Leftrightarrow \ f_{\mu}^k\equiv f_\mu=(-2,0,0,0)\;.
\end{equation}

In a completely analogous way to what was done for the slices $\mathfrak{S}_k$, we can construct a kinematic algebra on each refined slice $\mathfrak{S}_{k,f}$, which we denote by $\mathfrak{Kin}^{(k,f)}$. The steps remain the same as before: we use the projection $\Pi_{k,f}$ to define a subspace $\mathcal{X}^{(k,f)}\subseteq\mathcal{X}$, on which we build an $L_\infty$ algebra by projecting the brackets $B_1,B_2$, then apply color-stripping, and finally obtain the algebra $\mathfrak{Kin}^{(k,f)}=\Pi_{k,f}\circ\mathfrak{Kin}$. Such an algebra, for a generic choice of fall-off $f_\mu^k$, is a homotopy algebra (to be precise, it is a $\mathrm{BV}_\infty^{\Box^{(k,f)}}$ algebra) and, as such, is not strict. However, crucially, there exist specific choices of fall-offs that strictify the kinematic algebra on the corresponding refined slice $\mathfrak{S}_{k,f}$. Specifically:
\begin{equation}
    \label{res:strict kin algebras on refined slices}
    \mathfrak{Kin}^{(k,f)} \text{\ is strict} \quad \Leftrightarrow \quad f_\mu^k \text{\ satisfy}\quad \begin{cases}
        f_r^k+f_u^k+f_z^k<k \\
        f_r^k+f_u^k+f_{\bar z}^k<k-1 \\
        f_r^k+f_z^k+f_{\bar z}^k<k \\
        f_u^k+f_z^k+f_{\bar z}^k<k+1 \\
        f_r^k+f_u^k<k \\
        f_r^k+f_{\bar z}^k<k \\
        f_u^k+f_z^k<k+2 \\
        f_z^k+f_{\bar z}^k<k+2 \;.
    \end{cases}
\end{equation}
This statement, which we prove in the appendix (see \cref{res:strict kin algebra}), provides a general criterion for finding (refined) slices on which a kinematic Lie bracket can be constructed. The first four inequalities correspond to the requirement that $\theta_3^{(k,f)}(\mathcal{A}_1,\mathcal{A}_2,\mathcal{A}_3)$ vanishes for any three fields $\mathcal{A}_i$, while the last four ensure that $\theta_3^{(k,f)}(\mathcal{E},\mathcal{A}_1,\mathcal{A}_2)=0$.

First of all, we observe that the fall-off $f_\mu$ in \eqref{def:truncation and falloff} is a solution of the system above if and only if $k$ vanishes. This fact together with equation \eqref{eq:refined slice reduces to initial one} imply that the kinematic algebra $\mathfrak{Kin}^{(k)}$ on the (unrefined) slice $\mathfrak{S}_k$ is strict only when $k=0$, in full agreement with what we found in \cref{subsec:Kinematic algebras on slices}.

In general, given a slice $\mathfrak{S}_k$ with $k\leq-1$, there exist multiple possible refined slices $\mathfrak{S}_{k,f}$ with $f_\mu^k$ satisfying the system above. We seek to minimise the number of constraints imposed on the fields, so as to find the most general refined slices manifesting a strict kinematic algebra. Of course any subslice of a refined slice (together with appropriate restrictions on the gauge parameters) will also be a solution.

The expansion for the gauge field corresponding to these solutions is
\begin{equation}
\label{exp with t and f}
    A_\mu=\sum_{n=t_\mu^k}^{f_{i,\mu}^k}r^nA_\mu^{(n)}
\end{equation}

Below, we present the solutions for the minimal non-trivial slice $k=-1$, and the general slices $k\leq-2$, respectively. 

% \blue  All solutions for slice $k=-1$ can be obtained simply by gauge choices. If one wants to impose the non-dynamical constraints, these solutions are still not trivial.

\subsubsection{Strict kinematic algebra for \texorpdfstring{$\mathfrak{S}_{-1}$}{ slice -1}}
\label{subsec:strict kin algebra for k=-1}

For $k=-1$, we obtain 3 different slices, summarised in the table below:
\begin{table}[H]
    \centering
        \renewcommand{\arraystretch}{1.5}
        \begin{tabular}{c|c|c|c|c|}
        \cline{2-5}
        \multicolumn{1}{c|}{} & \multicolumn{4}{c|}{$k=-1$} \\
        \cline{2-5}
        \multicolumn{1}{c|}{} & $t_\mu^{-1}$ & $f_{1,\mu}^{-1}$ & $f_{2,\mu}^{-1}$ & $f_{3,\mu}^{-1}$ \\
        \hline
        \multicolumn{1}{|c|}{$r$} & $-3$ & $-2$ & $-2$ & $-3$ \\
        \multicolumn{1}{|c|}{$u$} & $-2$ & $-1$ & $0$ & $0$ \\
        \multicolumn{1}{|c|}{$z$} & $-1$ & $0$ & $0$ & $-1$ \\
        \multicolumn{1}{|c|}{$\bar z$} & $-2$ & $0$ & $-1$ & 0 \\
        \hline
        \end{tabular}
        \caption{The three refined slices $\mathfrak{S}_{-1,f_i}$ that strictify $\mathfrak{Kin}^{(-1)}$.}
    \label{table: solutions for slice -1}
\end{table}

Each of the solution above arises as a simple (partial) gauge choice on the slice\footnote{It is hence possible to reach these via projections which satisfy the cochain map condition \eqref{def:Pi_as_quasi_isomorphism}. We will not present these here explicitly, as our projected fields can be obtained by a simple field redefinition of the ones arising from imposing the co-chain map condition.}. We collect the explicit kinematic brackets below.

\begin{itemize}
\item Slice  $\mathfrak{S}_{(-1,f_1)}$
\be 
 f_{1,\mu}^{-1}=(-2,-1,0,0) \quad \Rightarrow \quad A_\mu=\begin{pmatrix}
        \frac{A_r^{(-2)}}{r^2}+\frac{A_r^{(-3)}}{r^3} \\
        \frac{A_u^{(-1)}}{r}+\frac{A_u^{(-2)}}{r^2} \\
        A_z^{(0)}+\frac{A_z^{(-1)}}{r} \\
        A_{\bar z}^{(0)}+\frac{A_{\bar z}^{(-1)}}{r}+\frac{A_{\bar z}^{(-2)}}{r^2}
    \end{pmatrix}\in X_0^{(-1,f_1)} \ ,
\ee
with the non-vanishing components of the kinematic bracket given by
\begin{equation}
        \begin{aligned}
            b_2(\mathcal{A}_1,\mathcal{A}_2)_u&=2r^{-2}\partial_u\left(\mathcal{A}_{[1\,z}^{(0)}\mathcal{A}_{2]\,\bar z}^{(0)}\right) \\
            b_2(\mathcal{A}_1,\mathcal{A}_2)_{\bar z}&=2r^{-2}\Big(\partial_u\mathcal{A}_{[1\,r}^{(-2)}\mathcal{A}_{2]\,\bar z}^{(0)}+2\mathcal{A}_{[1\,r}^{(-2)}\partial_u\mathcal{A}_{2]\,\bar z}^{(0)}-\mathcal{A}_{[1\,u}^{(-1)}\mathcal{A}_{2]\,\bar z}^{(0)} \\
            &\quad\quad\quad\ \ -\mathcal{A}_{[1\,\bar z}^{(0)}\partial_z\mathcal{A}_{2]\,\bar z}^{(0)}-\mathcal{A}_{[1\,z}^{(0)}\partial_{\bar z}\mathcal{A}_{2]\,\bar z}^{(0)}\Big) \ . \\
        \end{aligned}
    \end{equation}    
\item Slice  $\mathfrak{S}_{(-1,f_2)}$
\be 
 f_{2,\mu}^{-1}=(-2,0,0,-1) \quad \Rightarrow \quad A_\mu=\begin{pmatrix}
        \frac{A_r^{(-2)}}{r^2}+\frac{A_r^{(-3)}}{r^3} \\
        A_u^{(0)}+\frac{A_u^{(-1)}}{r}+\frac{A_u^{(-2)}}{r^2} \\
        A_z^{(0)}+\frac{A_z^{(-1)}}{r} \\
        \frac{A_{\bar z}^{(-1)}}{r}+\frac{A_{\bar z}^{(-2)}}{r^2}
    \end{pmatrix}\in X_0^{(-1,f_2)}  \ ,
\ee
with kinematic bracket given by
\begin{equation}
        \begin{aligned}
            b_2(\mathcal{A}_1,\mathcal{A}_2)_r&=4r^{-3}\mathcal{A}_{[1\,r}^{(-2)}\mathcal{A}_{2]\,u}^{(0)} \\
            b_2(\mathcal{A}_1,\mathcal{A}_2)_u&=2r^{-2}\left(\mathcal{A}_{[1\,r}^{(-2)}\partial_u\mathcal{A}_{2]\,u}^{(0)}-\mathcal{A}_{[1\,u}^{(0)}\mathcal{A}_{2]\,u}^{(-1)}+2\partial_z\mathcal{A}_{[1\,u}^{(0)}\mathcal{A}_{2]\,z}^{(0)}+\mathcal{A}_{[1\,u}^{(0)}\partial_z\mathcal{A}_{2]\,z}^{(0)}\right) \\
            b_2(\mathcal{A}_1,\mathcal{A}_2)_z&=-4r^{-1}\mathcal{A}_{[1\,u}^{(0)}\mathcal{A}_{2]\,z}^{(0)} \\
            b_2(\mathcal{A}_1,\mathcal{A}_2)_{\bar z}&=-2r^{-2}\left[2\mathcal{A}_{[1\,u}^{(0)}\mathcal{A}_{2]\,\bar z}^{(-1)}+\partial_{\bar z}\left(\mathcal{A}_{[1\,r}^{(-2)}\mathcal{A}_{2]\,u}^{(0)}\right)\right] \ . \\
        \end{aligned}
    \end{equation}
    Note that if we also set $\mathcal{A}_u^{(0)}=0$, then $b_2(\mathcal{A}_1,\mathcal{A}_2)=0$.
\item Slice  $\mathfrak{S}_{(-1,f_3)}$
\be 
 f_{3,\mu}^{-1}=(-3,0,-1,0) \quad \Rightarrow \quad A_\mu=\begin{pmatrix}
        \frac{A_r^{(-3)}}{r^3} \\
        A_u^{(0)}+\frac{A_u^{(-1)}}{r}+\frac{A_u^{(-2)}}{r^2} \\
        \frac{A_z^{(-1)}}{r} \\
        A_{\bar z}^{(0)}+\frac{A_{\bar z}^{(-1)}}{r}+\frac{A_{\bar z}^{(-2)}}{r^2}
    \end{pmatrix}\in X_0^{(-1,f_3)} \ ,
\ee
    with the non-vanishing components of the kinematic bracket given by
    \begin{equation}
        \label{eq:bs for slice f3}
        b_2(\mathcal{A}_1,\mathcal{A}_2)_\mu=2r^{-2}\mathcal{A}_{[1\,\mu}^{(0)}\left(\mathcal{A}_{2]\,u}^{(-1)}-\partial_z\mathcal{A}_{2]\,\bar z}^{(0)}\right) \;, \quad \mu=u,\bar z
    \end{equation}
    Note that if we also set $\mathcal{A}_u^{(0)}=0$, then the only non vanishing component is $\bar z$.
\end{itemize}

% \blue
% DELETE TABLE BELOW LATER:
% \blue
% \begin{table}[H]
%     \centering
%     \begin{tabular}{|c|c|c|}
%         \hline
%         $f_{i,\mu}^k$ & Condition stronger than gauge? & Trivialises? \\
%         \hline
%         1 & No & No \\
%         \hline
%         2 & No & No \\
%         \hline
%         3 & No & No \\
%         \hline
%         4 & Yes (impossible with gauge choice) & No \\
%         \hline
%         5 & Yes (with gauge choice it trivializes) & highly constraint, but non trivial \\
%         \hline
%         6 & No & No \\
%         \hline
%         7 & No & No \\
%         \hline
%         8 & Yes (impossible with gauge choice) & Yes (if stronger assumptions*, otherwise no) \\
%         \hline
%         9 & Yes (impossible with gauge choice) & Yes (if stronger assumptions*, otherwise no) \\
%         \hline
%     \end{tabular}
%     \caption{*eg $\partial_{\bar z}A_r^{(-2)}=0\Rightarrow A_r^{(-2)}=0$}
%     \label{tab:example}
% \end{table}
% \black

\subsubsection{Strict kinematic algebra for \texorpdfstring{$\mathfrak{S}_{k\leq-2}$}{ slice from -2}}
Refined slices $\mathfrak{S}_{k,f_i}$ with $k\leq-2$ are collected in the table below:
\begin{table}[H]
    \centering
    \renewcommand{\arraystretch}{1.5}
    \begin{tabular}{|c|c|c|c|c|c|c|c|}
    \cline{2-8}
    \multicolumn{1}{c|}{} & \multicolumn{7}{c|}{$\forall k\leq-2$} \\
    \cline{2-8}
    \hhline{~|~~~|----}
    \multicolumn{1}{c|}{} & $t_\mu^k$ & $f_{4,\mu}^k$ & $f_{5,\mu}^k$ & \cellcolor[gray]{0.9}$f_{6,\mu}^k$ & \cellcolor[gray]{0.9}$f_{7,\mu}^k$ & \cellcolor[gray]{0.9}$f_{8,\mu}^k$ & \cellcolor[gray]{0.9}$f_{9,\mu}^k$ \\
    \hhline{|-|---|----}  
    $r$ & $k-2$ & $-2$ & $k-2$ & \cellcolor[gray]{0.9}$k-1$ & \cellcolor[gray]{0.9}$-2$ & \cellcolor[gray]{0.9}$-2$ & \cellcolor[gray]{0.9}$k-1$ \\
    $u$ & $k-1$ & $k+1$ & $0$ & \cellcolor[gray]{0.9}$0$ & \cellcolor[gray]{0.9}$k$ & \cellcolor[gray]{0.9}$0$ & \cellcolor[gray]{0.9}$k$ \\
    $z$ & $k$ & $0$ & $k$ & \cellcolor[gray]{0.9}$0$ & \cellcolor[gray]{0.9}$k+1$ & \cellcolor[gray]{0.9}$k+1$ & \cellcolor[gray]{0.9}$0$ \\
    $\bar z$ & $k-1$ & $k+1$ & $0$ & \cellcolor[gray]{0.9}$k$ & \cellcolor[gray]{0.9}$0$ & \cellcolor[gray]{0.9}$k$ & \cellcolor[gray]{0.9}$0$ \\
    \hline
    \end{tabular}
    \caption{Refined slices with $k\leq -2$. Slices $\mathfrak{S}_{k,f_4}$ and $\mathfrak{S}_{k,f_5}$ fully strictify  $\mathfrak{Kin}^{(k)}$, by setting all $\theta_3$'s in \eqref{eq:theta3AAA explicit expression} and \eqref{eq:theta3EAA explicit expression} to $0$. Slices $\mathfrak{S}_{k,f_6}$ to $\mathfrak{S}_{k,f_9}$ are obtained from the weaker requirement that only  $\theta_3(\cA_{1},\cA_{2},\cA_{3})$  in \eqref{eq:theta3AAA explicit expression} vanishes.
} 
    \label{table: refined slices for k leq -2}
\end{table}

% and corresponding gauge fields:
% \renewcommand{\arraystretch}{1.5}
% \begin{align}
%     f_{4,\mu}^k=(-2,k+1,0,k+1) \quad \Rightarrow \quad A_\mu&=\begin{pmatrix}
%         \sum_{n=k-2}^{-2}r^{n}A_r^{(n)} \\
%         r^{k-1}A_u^{(k-1)}+r^kA_u^{(k)}+r^{k+1}A_u^{(k+1)} \\
%         \sum_{n=k-1}^{0}r^{n}A_z^{(n)} \\
%         r^{k-1}A_{\bar z}^{(k-1)}+r^kA_{\bar z}^{(k)}+r^{k+1}A_{\bar z}^{(k+1)}
%     \end{pmatrix}\in X_0^{(k,f_4)} \\
%     f_{5,\mu}^k=(k-2,0,k,0) \quad \Rightarrow \quad A_\mu&=\begin{pmatrix}
%         r^{k-2}A_r^{(k-2)} \\
%         \sum_{n=k-1}^{0}r^{n}A_r^{(n)} \\
%         r^kA_z^{(k)} \\
%         \sum_{n=k-1}^{0}r^{n}A_{\bar z}^{(n)}
%     \end{pmatrix}\in X_0^{(k,f_5)}
% \end{align}
\renewcommand{\arraystretch}{1}

We have obtained six infinite families of solutions, valid for any $k\leq -2$. In contrast with the previous section, these slices require, in addition to some gauge choice, the elimination of certain unphysical degrees of freedom, or the restriction of the solution space\footnote{And hence they might not necessarily by obtainable by a projection satisfying the co-chain map condition.}.  Additionally, in each of the slices with $k\leq-2$, we found that restrictions were needed on two of the components, resulting in $\binom{4}{2}=6$ families of solutions.   

The kinematic algebra brackets can be obtained by plugging in the expansion for the gauge field given in \eqref{exp with t and f}, for the explicit values of $t_\mu^k$ and $f_{i,\mu}$, with $i\in\{4,5,6,7,8,9\}$ given in the table, into the expression in \eqref{b2_gen_exprs}, together with the projection in \eqref{res:master system solution strict}.

 Solutions $f_{4,\mu}^k$ and $f_{5,\mu}^k$ automatically strictify the kinematic algebra, as they satisfy all eight conditions in \eqref{res:strict kin algebras on refined slices}. For example, the non-vanishing components of the kinematic bracket between two gauge fields for the refined slice $\mathfrak{S}_{k,f_5}$ are
\begin{equation}
    b_2^{(k,f_5)}(\mathcal{A}_1,\mathcal{A}_2)_\mu=2r^{k-1}\mathcal{A}_{[1\,\mu}^{(0)}\left[(k+2)A_{2]\,u}^{(k)}-\partial_z\mathcal{A}_{2]\,\bar z}^{(k+1)}\right]\;,\quad \mu=u,\bar z\;.
\end{equation}
Notice that the above expression with $k=-1$ reduces to \eqref{eq:bs for slice f3}, the kinematic bracket of the slice $\mathfrak{S}_{-1,f_3}$.
% are obtained by requiring strictification of the algebra for all the vector spaces, i.e. imposing the vanishing of all $\theta_3$'s, which in practice amounts to setting both  $\theta_3(\cA_{1},\cA_{2},\cA_{3})$  in \eqref{eq:theta3AAA explicit expression} and  $\theta_3(\mathcal{E},\cA_{1},\cA_{2})$ in \eqref{eq:theta3EAA explicit expression} to $0$, with the appropriate projections\footnote{See \autoref{subsec:kin k f} for details.}. More generally, one might only be interested in a strict kinematic algebra realised on the gauge fields; in this case, the weaker requirement that only  $\theta_3(\cA_{1},\cA_{2},\cA_{3})$ should vanish additionally leads to solutions $f_{6,\mu}^k$ to $f_{9,\mu}^k$, in the shaded columns in the table. 

The families of solutions in the shaded columns of the table, i.e. $\mathfrak{S}_{k,f_6}$ to $\mathfrak{S}_{k,f_9}$, only satisfy the first four conditions in \eqref{res:strict kin algebras on refined slices}. We have checked that they strictify the kinematic algebra upon imposing further constrains arising from the non-dynamical components of the equations of motion. Specifically, they turn out to be a subset of the components for which the quadratic part of the e.o.m. vanishes. 

Interestingly, for slice $\mathfrak{S}_{k,f_7}$ we find that these conditions are exactly required by the co-chain map condition, which amounts to imposing the following constraints on the projected gauge fields:
\begin{equation}\label{slice_7_conds}
    \begin{cases}
        \partial_uA_z^{(k+1)}=0 \\
        \partial_zA_{\bar z}^{(n+2)}=\partial_uA_r^{(n)} &n\in\{k,\dots,-2\}
    \end{cases}.
\end{equation}

Let us now recall the Schouten-Nijenhuis 
bracket previously obtained in \cite{Bonezzi:2023pox}, and which reduces to the Poisson bracket of Monteiro and O'Connell \cite{Monteiro:2011pc} upon writing the gauge field in terms of a scalar. This was reviewed in \autoref{sub:kin alg gen} and presented in Bondi coordinates in \autoref{sec:color stripping and kinematic algebra}. A natural question is whether this can arise as some kind of limit of one of the solutions in \autoref{table: refined slices for k leq -2}. Indeed this is the case. By taking the $k\to-\infty$ limit of slice 7 we have:
\be 
\lim_{k\rightarrow-\infty}\mathfrak{S}_{k,f_7} \ \Rightarrow \ A_u=A_z=0  \ .
\ee 
In this limit, the first condition in \eqref{slice_7_conds} is trivially satisfied, while the second one reduces to the  Lorenz gauge condition\footnote{Notice that the Lorenz condition in Bondi coordinates is given by $b(A)=0$, where $b$ is the operator in \eqref{def:differential b in components}.} discussed in \cite{Bonezzi:2023pox}, and hence the the kinematic bracket reduces to the Schouten-Nijenhuis bracket in the 2-dimensional space spanned by $r$ and $\bar{z}$, see \eqref{SN_bracket_in_Bondi_coords}. 

It turns out that the limit $k\to-\infty$ of slice 6 gives a very similar result: 
\be 
\lim_{k\rightarrow-\infty}\mathfrak{S}_{k,f_6} \ \Rightarrow \ A_r=A_{\bar z}=0 \ ,
\ee 
for which the kinematic bracket is just the Schouten-Nijenhuis in the 2-dimensional space spanned by $u$ and $z$. Indeed, this just results from an alternative gauge choice of the solution in \cite{Bonezzi:2023pox}, which leads to the elimination of a different unphysical degree of freedom.

It would be very interesting to study the other families of solutions in more detail, and see whether they correspond to any useful subspaces of the theory, perhaps for the study of subleading infrared phenomena. We leave this for future work.

\section{Fall-offs and quasi-isomorphism}
\label{sec:algebra_via_homotopy_transfer}
In the previous sections, we have shown how to construct $L_{\infty}$ that encode SDYM in different slices with different powers of the radial coordinate $r$. More precisely, we have defined projection maps that truncate the $r$ expansion of all the elements of the $L_{\infty}$ algebra that describes SDYM. Naturally, performing such truncations or projections can lead to a loss of physical information. Indeed, as we mentioned earlier and will explain in detail in the following subsection, such projections induce maps from the cohomology of the original theory to the cohomology of a truncated version of the theory. In general, these induced maps are not invertible (isomorphic), which implies that the cohomologies of the original and truncated theories are not the same, indicating the loss of physical information. 

To understand this from a mathematical perspective and to implement truncations that preserve all the physical information of the theory, we begin this section by introducing the notion \textit{quasi-isomorphisms}, which are morphisms between chain complexes that induce isomorphisms between their cohomologies. Following that initial formal discussion, we introduce maps we use to construct $L_{\infty}$ algebras with isomorphic cohomologies.

\subsection{Cochain maps and quasi-isomorphisms}\label{sec:cochain}
Intuitively, we want to identify subspaces of $\mathcal{X}^{\rm{SDYM}}$ that can describe physics close to null infinity. Therefore, we need to understand how to relate the physical information of the whole space $\mathcal{X}^{\rm{SDYM}}$ with that of the smaller subspaces, which we shall denote by $\bar{\mathcal{X}}$, that we wish to consider. In the $L_{\infty}$ formulation of field theories, the physical information of the theory is encoded in the cohomology of the differential $B_{1}$. Indeed, recalling the cochain complex of the theory
\begin{equation}
\begin{tikzcd}[row sep=2mm]
&X_{-1}\arrow{r}{B_{1}} & X_{0}\arrow{r}{B_{1}} & X_{1} \\
&\Lambda & A & E
\end{tikzcd}    \;,
\end{equation}
we see that the cohomology of $B_{1}$ corresponds to field configurations that satisfy the linear field equations modulo linearized gauge transformations, which correspond to the theory's physical or propagating degrees of freedom. Consequently, to understand the relation between the physics of the full theory and that of the subspaces, we need to understand the relation between the cohomologies of the cochain complexes $(B_{1}, \mathcal{X}^{\rm{SDYM}})$ and $(\bar{B}_{1}, \bar{\mathcal{X}})$ with $\bar{\mathcal{X}}\subset \mathcal{X}^{\rm{SDYM}}$, where $\bar B_{1}$ is the differential that describes the theory in the desired regime, i.e, at the desired power of the radial coordinate $r$. In the subsequent general discussion, we follow \cite{Arvanitakis:2020rrk}. 

The first step in establishing the relation between the physics of the full theory and that of the subspaces is to introduce a projection map $\pi:\mathcal{X}^{\rm{SDYM}}\to \bar{\mathcal{X}}$ which acts diagrammatically as
\begin{equation}
    \label{homotopy transfer diagram}
    \begin{tikzcd}[row sep=10mm]
        X_{-1}\arrow{d}{\pi}\arrow{r}{B_1}&X_{0}\arrow{d}{\pi}\arrow{r}{B_1}&X_{1}\arrow{d}{\pi}\\\
         \bar X_{-1}\arrow{r}{\bar B_1}& \bar X_0\arrow{r}{\bar B_1}& \bar X_1\;.
    \end{tikzcd}
\end{equation}

If the projection map $\pi$ obeys the following relation:
\begin{equation}\label{cochcond}
    \pi \circ B_{1} = \bar B_{1}\circ \pi\;,
\end{equation}
then we say that the $\pi$ is a \textit{cochain map}. Diagrammatically, we say that $\pi$ is a cochain map if the following diagram commutes
\begin{equation}
    \label{homotopy transfer diagram}
    \begin{tikzcd}[row sep=10mm]
        X_{i}\arrow{d}{\pi}\arrow{r}{B_1}&X_{i+1}\arrow{d}{\pi}\\
         \bar X_{i}\arrow{r}{\bar B_1}& \bar X_{i+1}\;,
    \end{tikzcd}
\end{equation}
for all $i$. In words, the diagram commutes if, starting from the space $X_{i}$, we can follow either of the two possible routes to $\bar X_{i+1}$ and obtain the same result. Naturally, cochain maps induce maps on the cohomologies $H^{i}$ and $\bar H^{i}$ of the differentials $B_{1}$ and $\bar B_{1}$, respectively, as we shall prove in the following. Consider an element $x\in H^{i}$, that is, $x\sim x+B_{1}(y)$, where $\sim$ indicates that they are equivalent in $H^{i}$. We denote the whole equivalence class as $[x]$. The projection map acts on an arbitrary representative of $H^{i}$ $[x+B_{1}(y)]$ as 
\begin{equation}
\begin{split}
\bar H^{i}\ni [\pi(x+B_{1}(y))]& =[\pi(x)+ \pi (B_{1}(y))]\\
& = [\pi(x)+\bar B_{1}(\pi(y))] \\
& = [\pi(x)]\;,
\end{split}
\end{equation}
where in the second line we used the cochain map condition \eqref{cochcond} and to get to the last line we used the fact that, in the cohomology $\bar H^{i}$, $\pi(x) \sim \pi(x)+\bar B_{1}(\pi(y))$. The above computation proves that, indeed, $\pi$ induces a well-defined, representative independent map on cohomologies, which, in general, is non-invertible. Physically, $\pi$ takes an arbitrary element of the full theory in $\mathcal{X}^{\rm{SDYM}}$ and projects it to the regime we wish to consider. As we shall define in detail in the following sections, for our purposes, $\pi$ truncates powers of the radial coordinate $r$ to obtain expansions in this coordinate of the form \eqref{A_falls_no_u_gauge}.

In this paper, we also consider a map $\iota:\bar{\mathcal{X}}\to \mathcal{X}^{\rm{SDYM}}$ called the \textit{inclusion map} that acts diagrammatically as
\begin{equation}
    \label{homotopy transfer diagram}
    \begin{tikzcd}[row sep=10mm]
        X_{-1}\arrow{r}{B_1}&X_{0}\arrow{r}{B_1}&X_{1}\\\
         \bar X_{-1}\arrow{u}{\iota}\arrow{r}{\bar B_1}& \bar X_0\arrow{u}{\iota}\arrow{r}{\bar B_1}& \bar X_1\arrow{u}{\iota}\;,
    \end{tikzcd}
\end{equation}
which tells us what the elements of $\bar{\mathcal{X}}$ look like in the original, bigger space $\mathcal{X}^{\rm{SDYM}}$. To induce a map in cohomologies, analogously to $\pi$, $\iota$ shall obey the cochain map condition
\begin{equation}
B_{1}\circ\iota = \iota \circ \bar B_{1}\;.
\end{equation}
Thus, the inclusion map $\iota$ induces a map inverse to that induced by the projection map $\pi$ in cohomologies, i.e.,
\begin{equation}
\begin{split}
H^{i}\ni [\iota(\bar x+\bar B_{1}(\bar y))]& =[\iota(\bar x)+ \iota (\bar B_{1}(\bar y))]\\
& = [\iota(\bar x)+ B_{1}(\iota(\bar y))] \\
& = [\iota(\bar x)]\;,
\end{split}
\end{equation}
with $\bar x, \bar y \in \bar{\mathcal{X}}$. In other words, there exists an isomorphism on cohomologies and we say that there exists a \textit{quasi-isomorphism} between the two chain complexes $(B_{1}, \mathcal{X}^{\rm{SDYM}})$ and $(\bar{B}_{1}, \bar{\mathcal{X}})$. Physically, this indicates that both chain complexes contain the same physical information.

\subsection{Cochain map conditions}
The goal is to construct a new $L_\infty$-algebra $(\bar{\mathcal{X}},\bar B_1,\bar B_2)$ that encodes the physics of the initial theory near $\mathcal{I}$. To this end, we shall define two maps $\pi:\mathcal{X}\rightarrow\bar{\mathcal{X}}$ and $\iota:\bar{\mathcal{X}}\rightarrow\mathcal{X}$ such that the diagram
\begin{equation}
    \label{homotopy transfer diagram}
    \begin{tikzcd}[row sep=10mm]
        0\arrow{r}{B_1}&X_{-1}\arrow{d}{\pi}\arrow{r}{B_1}&X_{0}\arrow{d}{\pi}\arrow{r}{B_1}&X_{1}\arrow{d}{\pi}\arrow{r}{B_1}&0\\
        0\arrow{r}{\bar B_1}& \bar X_{-1}\arrow[shift left=2mm]{u}{\iota}\arrow{r}{\bar B_1}& \bar X_0\arrow[shift left=2mm]{u}{\iota}\arrow{r}{\bar B_1}& \bar X_1\arrow[shift left=2mm]{u}{\iota}\arrow{r}{\bar B_1}&0
    \end{tikzcd}
\end{equation}
commutes. In other words, we require $\pi$ and $\iota$ to be cochain maps. In general, the first two brackets on $\bar{\mathcal{X}}$ are defined in terms of the brackets on $\mathcal{X}$ via the cochain maps as
\begin{equation}
    \label{def:bar B_i}
    \bar B_i:=\pi B_i \iota, \quad i=1,2 \;.
\end{equation}
We observe that
\begin{equation}
    \mathrm{\eqref{homotopy transfer diagram} \ commutes} \ \Leftrightarrow \ 
    \begin{cases}
        \bar B_1\pi=\pi B_1 \\
        B_1\iota=\iota\bar B_1
    \end{cases} \ \Leftrightarrow \ \begin{cases}
        \mathrm{Im}(1_\mathcal{X}-\iota\pi)\subseteq\mathrm{Ker}(\pi B_1) \\
        \mathrm{Im}(B_1\iota)\subseteq\mathrm{Ker}(1_\mathcal{X}-\iota\pi)
    \end{cases} \;,
\end{equation}
where in the second equivalence we used the definition of $\bar B_1$. We want $\bar{\mathcal{X}}$ to satisfy an equation similar to \eqref{def:space_at_I}, namely
\begin{equation}
    \label{eq:condition 1 for bar X}    \bar{\mathcal{X}}\subseteq\mathcal{X}^{n\leq0}\subset\mathcal{X} \;.
\end{equation}
In particular, we require $\bar{\mathcal{X}}$ to be a subspace of $\mathcal{X}$. Hence, the maps $\pi,\iota$ shall be, respectively, the projection to $\bar{\mathcal{X}}$ and the canonical inclusion:
\begin{equation}
    \begin{aligned}
        \pi\colon\mathcal{X}&\rightarrow\bar{\mathcal{X}} \\
        \psi&\mapsto\pi(\psi)\;,
    \end{aligned} \qquad
    \begin{aligned}
        \iota\colon\bar{\mathcal{X}}&\hookrightarrow\mathcal{X} \\
        \bar\psi&\mapsto\bar\psi\;.
    \end{aligned}
\end{equation}
Notice that for an arbitrary $\psi\in\mathcal{X}$, we can write $\iota\pi(\psi)=\pi(\psi)$, where both sides of this equality are viewed as elements of $\mathcal{X}$ due to \eqref{eq:condition 1 for bar X}. In particular,
\begin{equation}
    1_\mathcal{X}-\iota\pi=1_\mathcal{X}-\pi=:\pi^\complement \;,
\end{equation}
where $\pi^\complement$ is the projection to the subspace complementary to $\bar{\mathcal{X}}$ (see discussion in \cref{subsec:Requirements on L infty}). Moreover, $\mathcal{X}=\mathrm{Im}\pi\oplus\mathrm{Im}\pi^\complement$, where $\mathrm{Im}\pi=\bar{\mathcal{X}}$. Of course, $\mathrm{Im}\pi^\complement=\mathrm{Ker}\pi$ (and this identity is symmetric under $\pi\leftrightarrow\pi^\complement$). This implies that \eqref{homotopy transfer diagram} commutes iff
\begin{subnumcases}{\label{eq:condition subalgebra}}
    \mathrm{Im}\pi^\complement\subseteq\mathrm{Ker}(\pi B_1) \label{eq:condition subalgebra 1} \\
    \mathrm{Im}(B_1\iota)\subseteq\mathrm{Ker}\pi^\complement=\mathrm{Im}\pi=\bar{\mathcal{X}} \;. \label{eq:condition subalgebra 2}
\end{subnumcases}
In particular, the second inclusion above implies that the subspace $\bar{\mathcal{X}}$ is closed under the action of the differential $B_1$. Thus, $\bar B_1=B_1|_{\bar{\mathcal{X}}}$.

In \cref{sec:SDYM at null infinity via slice truncation} we showed that \eqref{eq:condition 1 for bar X} is solved by
\begin{equation}
    \label{def:homotopy transfer target space}
    \bar{\mathcal{X}}:=\{\psi\in\mathrm{Im}\pi_\mathcal{N} \ | \ \mathcal{N} \ \mathrm{satisfies \ \eqref{def:pi_N_contraints_on_N}}\} \;,
\end{equation}
where $\mathcal{N}$ is the collection of power selection sets introduced in \eqref{def:power selection sets} and $\pi_\mathcal{N}$ is the morphism defined in \eqref{def:better_truncation}. In other words, $\pi=\pi_\mathcal{N}$. In what follows, we will prove that there is only one choice of sets in $\mathcal{N}$ such that \eqref{eq:condition subalgebra} with $\pi=\pi_\mathcal{N}$ is satisfied. (For this reason, $\bar{\mathcal{X}}$ in \eqref{def:homotopy transfer target space} does not carry the label $\mathcal{N}$.)

\subsection{Fall-off from cochain map conditions}
\label{subsec:falloff_from_HT}
\begin{result}
    Let $D$ be the diagram in \eqref{homotopy transfer diagram} with $\bar{\mathcal{X}}$ as in \eqref{def:homotopy transfer target space}, projection $\pi=\pi_\mathcal{N}$ and canonical inclusion $\iota$. Then $D$ commutes iff the power selection sets in $\mathcal{N}$ are
    \begin{equation}
    \label{res:physical fall-off}
    \begin{aligned}
        N_\Lambda&=N_{A_\alpha}=N_{E_{uz}}=\mathbb{Z}^{\leq0} \\
        N_{A_r}&=N_{E_{ru}}=N_{E_{r\bar z}}=\mathbb{Z}^{<0}\setminus\{-1\}.
    \end{aligned}
    \end{equation}
\end{result}
\begin{proof}
    From equation \eqref{eq:condition subalgebra} it follows that
    \begin{equation}
        \label{pf:conditions D commutes}
        D \ \mathrm{commutes} \ \Leftrightarrow \ \begin{cases}
            \mathrm{Im}\pi_\mathcal{N}^\complement\subseteq\mathrm{Ker}(\pi_\mathcal{N} B_1) \\
            \mathrm{Im}(B_1\iota)\subseteq\mathrm{Ker}\pi_\mathcal{N}^\complement
        \end{cases}\Leftrightarrow \ \begin{cases}
            \Psi_\mathcal{N}(\psi)\coloneqq\pi_\mathcal{N} B_1\pi_\mathcal{N}^\complement(\psi)=0 &\forall\psi\in\mathcal{X} \\
            \Phi_\mathcal{N}(\bar\psi)\coloneqq\pi_\mathcal{N}^\complement B_1\pi_\mathcal{N}(\bar\psi)=0 &\forall\bar\psi\in\bar{\mathcal{X}} \;,
        \end{cases}
    \end{equation}
    where $\pi_\mathcal{N}^\complement\coloneqq 1_\mathcal{X}-\pi_\mathcal{N}$ is the projection to the complement of $\bar{\mathcal{X}}$ and we used $\iota(\bar\psi)=\bar\psi=\pi_\mathcal{N}(\bar\psi)$. We want to find the conditions that the power selection sets in $\mathcal{N}$ must obey so that the second system above holds. For the first equation of this system, the answer is provided by lemma \ref{lemma:conditions on N from commuting diagram} of \cref{sec:appendix_proofs}. There, we prove that $\Psi_\mathcal{N}(\psi)$ vanishes for all $\psi\in\mathcal{X}$ iff the sets in $\mathcal{N}$ satisfy the inclusions listed in \eqref{eq:conditions on N from commuting diagram}. To solve the second equation in \eqref{pf:conditions D commutes}, we observe that $\Phi_\mathcal{N}$ can be obtained from $\Psi_\mathcal{N}$ by switching $\pi_\mathcal{N}\leftrightarrow\pi_\mathcal{N}^\complement$. Moreover, from \eqref{def:better_truncation} and \eqref{pf:bar piN} it follows that $\pi_\mathcal{N}^\complement=\pi_{\mathcal{N}^\complement}$, where $\mathcal{N}^\complement$ is the collection of sets containing the complements in $\mathbb{Z}$ of the sets in $\mathcal{N}$. Therefore, we deduce that $\Phi_\mathcal{N}=\Psi_{\mathcal{N}^\complement}$. So, the system in \eqref{pf:conditions D commutes} is solved by a projector $\pi_\mathcal{N}$ with $\mathcal{N}$ such that both $\mathcal{N}$ and $\mathcal{N}^\complement$ satisfy the inclusions in \eqref{eq:conditions on N from commuting diagram}. Such a collection $\mathcal{N}$ consists of sets satisfying the system:
    \begin{subnumcases}{\label{pf:D commutes sets equalities}}
        N_{A_r}\setminus\{-1\}=\mathcal{T}_{-1}N_\Lambda\setminus\{-1\} \label{pf:D commutes sets equalities 1} \\
        N_{A_\alpha}=N_\Lambda \label{pf:D commutes sets equalities 2} \\
        N_{E_{ru}}=N_{A_r}=\mathcal{T}_{-1}N_{A_u}=\mathcal{T}_{-2}N_{A_z}=\mathcal{T}_{-2}N_{A_{\bar z}} \label{pf:D commutes sets equalities 3} \\
        N_{E_{r\bar z}}=N_{A_r} \\
        N_{E_{r\bar z}}\setminus\{-1\}=\mathcal{T}_{-1}N_{A_{\bar z}}\setminus\{-1\} \\
        N_{E_{uz}}=N_{A_u}=N_{A_z} \;,
    \end{subnumcases}
    where $\mathcal{T}$ is the translation defined in \eqref{def:translation}. To find the above equations, we used the following fact: given $M,N,L\subseteq\mathbb{Z}$ and $m\in\mathbb{Z}$, one can prove that
    \begin{enumerate}[label=\textit{\roman*})]
        \item if $M\setminus\{m\}\subseteq N$ and $M^\complement\setminus\{m\}\subseteq N^\complement$, then $M\setminus\{m\}=N\setminus\{m\}$;
        \item if $L\subseteq M\cap N$ and $L^\complement\subseteq M^\complement\cap N^\complement$, then $L=M=N$.
    \end{enumerate}
    At this point, it is easy to see that \eqref{res:physical fall-off} is the only possible solution for both \eqref{def:pi_N_contraints_on_N} (which is our hypothesis) and the system \eqref{pf:D commutes sets equalities}.
\end{proof}

Equation \eqref{res:physical fall-off} provides a proof of the fall-offs given in \eqref{A_falls_no_u_gauge}; thus, we are effectively proving the same result already established in \cref{sec:SDYM at null infinity via slice truncation}, but in a more concise and elegant way. Moreover, by comparing the power selection sets found solving the master system in \cref{sec:SDYM at null infinity via slice truncation} and the ones found here, we observe that
\begin{equation}
    \bar{\mathcal{X}}=\lim_{k\rightarrow-\infty}\mathcal{X}^{(k)} \;,
\end{equation}
cf. equations \eqref{res:physical fall-off} and \eqref{res:master system solution sets}.

\subsection{\texorpdfstring{$(\bar{\mathcal{X}},\bar B_1,\bar B_2)$}{X bar} is an \texorpdfstring{$L_\infty$}{L infty} sub-algebra}
\begin{lemma}
    $\bar B_2=B_2|_{\bar{\mathcal{X}}\otimes\bar{\mathcal{X}}}$
\end{lemma}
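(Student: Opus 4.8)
The plan is to unfold the definition \eqref{def:bar B_i}, $\bar B_2 = \pi B_2 \iota$, and use that $\iota$ is the canonical inclusion, so that for $\bar\psi_1,\bar\psi_2\in\bar{\mathcal{X}}$ we have $\bar B_2(\bar\psi_1,\bar\psi_2)=\pi B_2(\bar\psi_1,\bar\psi_2)$ with $B_2$ acting on the elements regarded inside $\mathcal{X}^{\otimes 2}$. The assertion $\bar B_2 = B_2|_{\bar{\mathcal{X}}\otimes\bar{\mathcal{X}}}$ is then equivalent to saying that $\pi=\pi_{\mathcal{N}}$ acts as the identity on $B_2(\bar\psi_1,\bar\psi_2)$, i.e. that $\bar{\mathcal{X}}$ is \emph{closed} under the two-bracket, $B_2(\bar{\mathcal{X}}\otimes\bar{\mathcal{X}})\subseteq\bar{\mathcal{X}}$. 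This is the exact analogue for $B_2$ of the property \eqref{eq:condition subalgebra 2} already established for $B_1$, and proving it is the whole content of the lemma. Note that $B_2$ has degree $+1$, mapping $X_i\otimes X_j\to X_{i+j+1}$, so the output automatically sits in the correct graded piece; only the $r$-powers and the anti-self-duality need checking.

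The single elementary fact underlying everything is that if $f,g\in C^\infty_r(\mathcal{M})$ have $r$-expansions with powers bounded above by $a$ and $b$, then $fg$ has powers bounded above by $a+b$, since $r^m f^{(m)}\cdot r^n g^{(n)}=r^{m+n}f^{(m)}g^{(n)}$ with $m+n\le a+b$. Combined with the fall-offs fixed by the Result \eqref{res:physical fall-off} — powers $\le 0$ for $\Lambda$, for $A_u,A_z,A_{\bar z}$ and for $E_{uz}$, and powers $\le -2$ for $A_r,E_{ru},E_{r\bar z}$ — this controls every component produced by $B_2$. It is worth recording that ``powers $\le -2$'' coincides with the set $\mathbb{Z}^{<0}\setminus\{-1\}$, so whenever a product has leading power exactly $-2$ the absence of an $r^{-1}$ term is automatic and no further work is needed to land inside $N_{A_r}=N_{E_{ru}}=N_{E_{r\bar z}}$.

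Then I would verify closure case by case from the explicit form \eqref{def:B_2 in components}. The three brackets involving a gauge parameter, $B_2(\Lambda_1,\Lambda_2)=-[\Lambda_1,\Lambda_2]$, $B_2(\Lambda,A)_\mu=-[\Lambda,A_\mu]$ and $B_2(\Lambda,E)_{\mu\nu}=-[\Lambda,E_{\mu\nu}]$, are immediate: multiplying by the scalar $\Lambda$ (powers $\le 0$) cannot raise the leading power of any component, so each output component stays in its prescribed set; moreover multiplication by a scalar preserves the tensor structure, so $B_2(\Lambda,E)$ remains anti-self-dual and the relation \eqref{def:pi_N_contraints_on_N 3} between $E_{z\bar z}$ and $E_{ru}$ is respected. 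The genuinely load-bearing case is $B_2(A_1,A_2)$, where I read off the coefficients of $\underline{e}^1,\underline{e}^2,\underline{e}^3$: the $\underline{e}^1$ coefficient ($=E_{ru}$) is $2([A_{1[r},A_{2u]}]+r^{-2}[A_{1[z},A_{2\bar z]}])$, whose first term has powers $\le -2+0=-2$ and whose second term has powers $\le -2$ after the $r^{-2}$; the $\underline{e}^2$ coefficient ($=E_{r\bar z}$) is $4[A_{1[r},A_{2\bar z]}]$ with powers $\le -2$; and the $\underline{e}^3$ coefficient ($=E_{uz}$) is $4[A_{1[u},A_{2z]}]$ with powers $\le 0$. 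These match $N_{E_{ru}}=N_{E_{r\bar z}}=\mathbb{Z}^{<0}\setminus\{-1\}$ and $N_{E_{uz}}=\mathbb{Z}^{\le 0}$, and anti-self-duality is automatic since the output is already expressed in the $\underline{e}^i$ basis. Hence $B_2(A_1,A_2)\in\bar{\mathcal{X}}$.

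Having verified closure in every case, $\pi B_2 \iota = B_2|_{\bar{\mathcal{X}}\otimes\bar{\mathcal{X}}}$ follows at once. The step I would flag as the main obstacle is the bookkeeping of the $r^{-2}$ factor in the $\underline{e}^1$ coefficient: it is exactly this factor that converts the $\le 0$ fall-off of $A_z,A_{\bar z}$ into the sharper $\le -2$ fall-off demanded of $E_{ru}$, so the closure is tight rather than accidental. This is also the mechanism that makes the asymmetric fall-off $N_{A_r}\subsetneq N_{A_u}$ consistent with $B_2$ closing on $\bar{\mathcal{X}}$.
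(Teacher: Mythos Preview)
Your proposal is correct and follows essentially the same approach as the paper: both reduce the lemma to showing that $\bar{\mathcal{X}}$ is closed under $B_2$, and then verify this case by case. The only cosmetic difference is that the paper packages the power-counting inside the projector identity \eqref{eq:P property 7} (writing $\pi_\mathcal{N}^\complement B_2(\pi_\mathcal{N}\otimes\pi_\mathcal{N})=0$), whereas you spell out the leading-power arithmetic directly; the paper also only displays the $(\Lambda_1,\Lambda_2)$ case explicitly and defers the rest to ``analogously'', while you work through the genuinely nontrivial $B_2(A_1,A_2)$ case in full.
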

\begin{proof}
    From the definition of $\bar B_2$, see \eqref{def:bar B_i}, it follows that the result we want to prove is equivalent to the statement that $\bar{\mathcal{X}}$ is closed under the action of $B_2$\footnote{We observe that this is equivalent to saying that $B_2$ commutes with the inclusion $\iota$, since $\mathrm{Im}[B_2(\iota\otimes\iota)]\subseteq\mathrm{Ker}\pi_\mathcal{N}^\complement=\mathrm{Ker}(1_\mathcal{X}-\iota\pi_\mathcal{N})$ iff $B_2(\iota\otimes\iota)=\iota\bar B_2$.}, i.e.
    \begin{equation}
        \mathrm{Im}[B_2(\iota\otimes\iota)]\subseteq \bar{\mathcal{X}}=\mathrm{Im}\pi_\mathcal{N}=\mathrm{Ker}\pi_\mathcal{N}^\complement \;.
    \end{equation}
    In other words, it suffices to show that $\Xi\coloneqq\pi_\mathcal{N}^\complement B_2(\iota\otimes\iota)=\pi_\mathcal{N}^\complement B_2(\pi_\mathcal{N}\otimes\pi_\mathcal{N})$ vanishes on $\bar{\mathcal{X}}\otimes\bar{\mathcal{X}}$. This is always true, as one can show using the explicit form of $B_2$ and the properties of the projectors\footnote{In particular, use \eqref{eq:P property 7}.}. 
    For instance, for arbitrary $\bar\Lambda_1,\bar\Lambda_2\in\bar X_{-1}$,
    \begin{equation}
        \Xi(\bar\Lambda_1,\bar\Lambda_2)=-\mathcal{P}_{\mathbb{Z}^{>0}} [\mathcal{P}_{\mathbb{Z}^{\leq0}}\bar\Lambda_1,\mathcal{P}_{\mathbb{Z}^{\leq0}}\bar\Lambda_2]=-\mathcal{P}_{\mathbb{Z}^{>0}\cap\mathbb{Z}^{\leq0}} [\bar\Lambda_1,\bar\Lambda_2]=0 \;.
    \end{equation}
    Analogously, one can prove that $\Xi(\psi_1,\psi_2)$ vanishes for all other choices of $\psi_1,\psi_2\in\bar{\mathcal{X}}$.
\end{proof}

This implies that $(\bar{\mathcal{X}},\bar B_1,\bar B_2)$ is an $L_\infty$ sub-algebra of $(\mathcal{X}, B_1, B_2)$. In appendix \ref{app:Homotopy transfer}, we provide a proof of this result based on the concept of \textit{homotopy transfer} which consists of the transfer of algebraic information between different chain complexes. Moreover, we show how the kinematic algebra for these choices of fall-offs can be constructed.

\section{Conclusions}
\label{sec: Conclusions}

 In this article, we have employed the power of homotopy algebras to construct robust truncations of self-dual Yang-Mills in a $\frac{1}{r}$ expansion near null infinity. We have found that the physical fall-off of gauge fields can be \emph{derived} from the requirement that slices preserve the $L_{\infty}$ relations of the original theory. We then proceeded to construct novel strict kinematic algebras on various slices near null infinity.

In our search for a strict kinematic algebra, we restricted ourselves to axial-type (partial) gauge constrains in the current article. We anticipate that there are a number of other options for revealing the strict kinematic algebra, via the vanishing of the $\theta$ terms in \eqref{eq:theta3kAAA slice} and \eqref{eq:theta3EAA slice}, which exactly parametrise the failure of the algebra to be strict. We leave these for future work.

The natural next step is to extend our results to self-dual gravity. In this framework, it would be intriguing to explore whether the fall-off behaviour of the metric components, stemming from asymptotic flatness, can similarly be derived from a homotopy algebraic perspective. Additionally, incorporating the BMS symmetry group into this analysis presents an exciting avenue for future work. The presence of the strict kinematic algebras identified in this article also suggests that the double copy will proceed straightforwardly\footnote{For some previous work on the double copy at null infinity, see \cite{Campiglia:2021srh,Adamo:2021dfg,Nagy:2022xxs,Ferrero:2024eva,Mao:2021kxq,Godazgar:2021iae}}.

So far, our analysis has centred on the self-dual sub-sectors of field theories, where the cochain complex exhibits a simpler structure, and the $L_\infty$ algebra is limited to brackets no higher than $B_2$. Extending this framework to encompass full Yang-Mills and gravity will require working with a more intricate structure, including the presence of higher-order brackets. Nevertheless, we anticipate that the master equation in \eqref{eq:master_system} will admit a natural generalization of the following form:
\begin{subnumcases}{\label{eq:master_system_YM}}
\mathrm{Im}\Pi\subseteq\mathcal{X}^{n\leq0} \label{eq:master_system_EQ1_YM} \\
        \mathrm{Ker}\Pi\subseteq\mathrm{Ker}(\Pi B_1) \label{eq:master_system_EQ2_YM}\\
        \mathrm{Leib}\left(\Pi B_1|_{\mathrm{Im}\,\Pi},\Pi B_2|_{\mathrm{Im}\,\Pi\otimes\mathrm{Im}\,\Pi}\right)=0 \label{eq:master_system_EQ3_YM} \\
        \mathrm{Jac}\left(\Pi B_2|_{\mathrm{Im}\,\Pi\otimes\mathrm{Im}\,\Pi}\right)+[\Pi B_{1}|_{\mathrm{Im}\,\Pi}, \Pi B_3|_{\mathrm{Im}\,\Pi\otimes\mathrm{Im}\, \Pi\otimes\mathrm{Im}\,\Pi}]=0\label{eq:master_system_EQ4_YM}\\
        \Pi B_2|_{\mathrm{Im}\,\Pi\otimes\mathrm{Im}\,\Pi} \Pi B_3|_{\mathrm{Im}\,\Pi\otimes\mathrm{Im}\, \Pi\otimes\mathrm{Im}\,\Pi}+\Pi B_3|_{\mathrm{Im}\,\Pi\otimes\mathrm{Im}\, \Pi\otimes\mathrm{Im}\,\Pi} \Pi B_2|_{\mathrm{Im}\,\Pi\otimes\mathrm{Im}\,\Pi} = 0\;,
\end{subnumcases}
where in the second to last relation we included a term containing the three-bracket $B_{3}$ that captures the quartic interactions of Yang-Mills, and the last relation schematically represents a higher homotopy compatibility relation between the two-bracket $B_{2}$ and the higher bracket $B_{3}$. 

The study of infrared phenomena and their relations to asymptotic symmetries has more recently been extended to include subleading effects \cite{Campiglia:2015qka,Campiglia:2020qvc,Campiglia:2021oqz,Strominger:2013jfa,Lysov:2014csa,Donnelly:2016auv,Speranza:2017gxd,Freidel:2020ayo,Freidel:2020svx,Freidel:2020xyx,Freidel:2021dxw,Ciambelli:2021nmv,Freidel:2021cjp,Freidel:2021dfs,Ciambelli:2021vnn,Nagy:2024dme,Nagy:2025hip,Nagy:2022xxs,Geiller:2024bgf,Geiller:2022vto,Peraza:2023ivy,Campiglia:2018dyi}. This has unveiled infinite symmetry algebras \cite{Pope:1989ew,Bakas:1989xu,Fairlie:1990wv,Pope:1991ig,Pope:1991zka,Strominger:2021mtt,Guevara:2021abz,Strominger:2021lvk,Himwich:2021dau,Jiang:2021ovh,Boyer:1985aj,Park:1989fz,Park:1989vq,Adamo:2021lrv,Monteiro:2022lwm,Bu:2022iak,Bittleston:2023bzp,Bittleston:2024rqe,Taylor:2023ajd,Kmec:2024nmu,Nagy:2024dme,Lipstein:2023pih,Geiller:2024bgf,Freidel:2021ytz,CarrilloGonzalez:2024sto}, which appear in the OPE's of the CFT on the boundary in celestial holography. On the YM side, the relevant algebra is the so-called S-algebra (see \cite{Freidel:2023gue} and \cite{Hu:2023geb} for its phase space realisation). It would be interesting to include the overleading large gauge parameters (which are known to lead to subleading effects, see e.g. \cite{Campiglia:2021srh} and \cite{Nagy:2022xxs} for SDYM) in our analysis to see if the homotopy algebra gives rise to some known deformations of the S-algebra. On the gravity side, the corresponding infinite algebra is the $w_{1+\infty}$ algebra; here a potentially useful starting point is \cite{Monteiro:2022lwm}, where the  $w_{1+\infty}$ algebra was derived from the kinematic algebra in its original presentation as an area-preserving diffeomorphism in the self-dual sector \cite{Monteiro:2011pc}. Finally, another long-term goal to explore how to use the homotopy-transfer-based framework developed in the context of AdS/CFT in \cite{Chiaffrino:2023wxk} in this setting.

 \section*{Acknowledgments} 

We thank Roberto Bonezzi, Christoph Chiaffrino, and Olaf Hohm for enlightening discussions and previous collaborations on the topic. 

\noindent
F.D.J was funded by the
Deutsche Forschungsgemeinschaft (DFG, German Research Foundation) ``Rethinking Quantum Field Theory", Projektnummer 417533893/GRK2575. G.P. is funded by STFC Doctoral Studentship 2023. S.N. is supported in part by STFC consolidated grant T000708.

\appendix

\section{Construction of the master system's solutions}
\label{sec:appendix_proofs}
In this appendix, we provide a detailed proof of the solutions \eqref{res:master system solution} for the master system \eqref{eq:master_system}, using the formalism introduced in \cref{subsec:construction of solutions Pik}. We start by listing some useful properties of the projection $\mathcal{P}_N$, then we formulate the conditions that the power selection sets in \eqref{def:power selection sets} must satisfy in order for $\pi_\mathcal{N}$ to solve the system. Finally, we construct the infinite family of physically relevant solutions $\{\mathcal{N}_k\}_{k\in\mathbb{Z}^{\leq0}}$ for these conditions, which in turn allows us to derive the $L_\infty$ algebras encoding self-dual YM theory near null infinity.

\subsection{Properties of \texorpdfstring{$\mathcal{P}_N$}{PN}}
In this subsection, we provide some useful properties of the projection $\mathcal{P}_N$ defined in \eqref{def:projector_P_N}, which will be frequently referenced in the remainder of this appendix.

We begin by introducing some notation. Let $M,N\subseteq\mathbb{Z}$. We define the complement of $M$ in $\mathbb{Z}$ and the addition of the sets $M$ and $N$, respectively, as
\begin{equation}
    \label{def:complement and sum of sets}
    M^\complement\coloneqq\mathbb{Z}\setminus M\;, \quad M+N\coloneqq\{m+n \ | \ m\in M,n\in N\}\subseteq\mathbb{Z} \;.
\end{equation}
Moreover, we adopt the following notation for the projector $\mathcal{P}_N$: if $N$ is a singleton $\{n\}$, we simply write $\mathcal{P}_n:=\mathcal{P}_{\{n\}}$. With this in mind, and recalling that we work with Bondi coordinates $x^\mu=(r,y^\alpha)=(r,u,z,\bar z)$, we can prove the properties below.
\begin{lemma}[Properties of projection]
    \label{lemma:properties of projector P}
    Let $\mathcal{P}_N$ be the map
    \begin{equation}
        \mathcal{P}_{N}f(r,y^\alpha):=\sum_{n\in N}r^{n}f^{(n)}(y^\alpha) \tag{\ref{def:projector_P_N}}
    \end{equation}
    and $\mathcal{T}$ the translation in \eqref{def:translation}. Then $\forall s\in\mathbb{Z}$, $\forall L,M,N\subseteq\mathbb{Z}$ and $\forall f,g\in C^\infty_r(\mathcal{M})$ the following relations hold:
    \begin{subequations}
        \label{eq:P property}
        \begin{align}
            \label{eq:P property 1}
            \mathcal{P}_M\mathcal{P}_N&=\mathcal{P}_{M\cap N} \\
            \label{eq:P property 2}
            \mathcal{P}_M-\mathcal{P}_N&=\mathcal{P}_{M\setminus N} \quad \mathrm{if \ } N\subseteq M \\
            \label{eq:P property 3}
            \mathcal{P}_M(r^sf)&=r^s\mathcal{P}_{\mathcal{T}_{-s}M}f \\
            \label{eq:P property 4}
            \mathcal{P}_M\partial_r=\mathcal{P}_{M\setminus\{-1\}}\partial_r&=\partial_r\mathcal{P}_{\mathcal{T}_1M\setminus\{0\}}=\partial_r\mathcal{P}_{\mathcal{T}_1M} \\
            \label{eq:P property 5}
            \mathcal{P}_M\partial_\alpha&=\partial_\alpha\mathcal{P}_M \\
            \label{eq:P property 6}
            \mathcal{P}_M(fg)&=\sum_{m\in M}\sum_{n\in\mathbb{Z}}\mathcal{P}_{m-n}(f)\mathcal{P}_n(g) \\
            \label{eq:P property 7}
            \mathcal{P}_L(\mathcal{P}_M(f)\mathcal{P}_N(g))&=\begin{cases}
                \mathcal{P}_M(f)\mathcal{P}_N(g) \quad \mathrm{if \ } L\supseteq M+N \\
                0 \quad \quad \quad \quad \quad \ \ \ \mathrm{if \ } L\subseteq (M+N)^\complement
            \end{cases}
        \end{align}
    \end{subequations}
    In particular, $\mathcal{P}_N$ is a projector and the commutator $[\mathcal{P}_M,\mathcal{P}_N]$ vanishes for all $M,N$.
\end{lemma}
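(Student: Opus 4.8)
The unifying principle behind all seven identities is that every $f\in C^\infty_r(\mathcal{M})$ admits a \emph{unique} decomposition $f=\sum_{n\in\mathbb{Z}}r^nf^{(n)}$ into pieces homogeneous in $r$, so that $\mathcal{P}_N$ is nothing but the linear selector that retains those pieces whose degree lies in $N$. My plan is therefore to prove each relation by applying both sides to a generic $f$ (or pair $f,g$), reading off the coefficient of each power $r^m$, and invoking uniqueness of the expansion. The single elementary fact to record first is that $(\mathcal{P}_N f)^{(n)}=f^{(n)}$ for $n\in N$ and $0$ otherwise. With this in hand, \eqref{eq:P property 1} is immediate, since applying $\mathcal{P}_M$ after $\mathcal{P}_N$ keeps exactly the degrees in $M\cap N$, and \eqref{eq:P property 2} follows from linearity together with the hypothesis $N\subseteq M$, which guarantees that the shared terms cancel and leave precisely the degrees $M\setminus N$.

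For the shift identity \eqref{eq:P property 3}, I would insert the expansion of $f$, relabel $m=n+s$, and observe that selecting degree $m\in M$ \emph{after} multiplying by $r^s$ is the same as selecting degree $m-s\in\mathcal{T}_{-s}M$ \emph{before}, which is exactly the definition of the translated set. Identity \eqref{eq:P property 5} is transparent: for $\alpha\in\{u,z,\bar z\}$ the derivative $\partial_\alpha$ acts only on the coefficient functions $f^{(n)}(y^\alpha)$ and leaves the powers $r^n$ untouched, so it commutes with degree selection without any bookkeeping.

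The one identity requiring genuine care is \eqref{eq:P property 4}, because $\partial_r$ simultaneously lowers the degree by one and multiplies by the old degree, giving $(\partial_r f)^{(m)}=(m+1)f^{(m+1)}$. The crucial point is that the prospective $r^{-1}$ term carries the factor $(m+1)\big|_{m=-1}=0$; this is precisely why $\mathcal{P}_M\partial_r$ and $\mathcal{P}_{M\setminus\{-1\}}\partial_r$ coincide, and symmetrically why the $n=0$ term may be freely inserted or deleted on the $\partial_r\mathcal{P}_{\mathcal{T}_1M}$ side. I would close the chain of four equalities by computing $(\mathcal{P}_M\partial_r f)^{(m)}$ directly and matching it to $(\partial_r\mathcal{P}_{\mathcal{T}_1M}f)^{(m)}$, with the translation by $+1$ accounting for the degree shift that $\partial_r$ induces.

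Finally, for the product relations I would first establish the Cauchy-product coefficient $(fg)^{(m)}=\sum_n f^{(m-n)}g^{(n)}$ from the two expansions; applying $\mathcal{P}_M$ and rewriting each summand as $\mathcal{P}_{m-n}(f)\,\mathcal{P}_n(g)$ yields \eqref{eq:P property 6}. The key structural input for \eqref{eq:P property 7} is that the $r$-degrees appearing in $\mathcal{P}_M(f)\,\mathcal{P}_N(g)$ are all contained in $M+N$; hence $\mathcal{P}_L$ retains the whole product when $L\supseteq M+N$, while it annihilates it when $L\cap(M+N)=\varnothing$, i.e.\ $L\subseteq(M+N)^\complement$. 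The two concluding remarks are then corollaries of \eqref{eq:P property 1}: idempotency $\mathcal{P}_N^2=\mathcal{P}_{N\cap N}=\mathcal{P}_N$ shows $\mathcal{P}_N$ is a projector, and $[\mathcal{P}_M,\mathcal{P}_N]=\mathcal{P}_{M\cap N}-\mathcal{P}_{N\cap M}=0$ gives commutativity. I expect no deep difficulty anywhere; the only genuine subtleties are the vanishing $r^{-1}$ coefficient in \eqref{eq:P property 4} and the support-in-$M+N$ observation underlying \eqref{eq:P property 7}, with the remaining work being careful index bookkeeping.
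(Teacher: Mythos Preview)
Your proposal is correct and follows essentially the same approach as the paper. The only cosmetic difference is organizational: the paper first introduces the singleton projector $\mathcal{P}_m$ via the identity $\mathcal{P}_M=\sum_{m\in M}\mathcal{P}_m$, verifies each relation for $\mathcal{P}_m$, and then sums over $m\in M$, whereas you work directly with the coefficient map $(\mathcal{P}_N f)^{(n)}$; both pivot on the same two genuine points you identified, namely the vanishing $(m+1)|_{m=-1}=0$ for \eqref{eq:P property 4} and the support-in-$M+N$ observation for \eqref{eq:P property 7}.
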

\begin{proof}
    $\mathcal{P}_M$ can be rewritten as
    \begin{align}
        \label{id:trivial identity for projector}
        \mathcal{P}_M=\sum_{m\in M}\mathcal{P}_m \;,
    \end{align}
    where $\mathcal{P}_m$ acts as
    \begin{equation}
        \label{pf:definition Pn}
        \mathcal{P}_m(f)\coloneqq\sum_{n\in\mathbb{Z}}\delta_{mn}r^nf^{(n)}=r^mf^{(m)}\;.
    \end{equation}
    By virtue of this trivial identity, we can prove the above lemma in the simple case when $L,M$ and $N$ are singletons; the general result can then be recovered using \eqref{id:trivial identity for projector}. $\forall l,m,n,s\in\mathbb{Z}$ and $\forall f,g\in C^\infty_r(\mathcal{M})$ the following identities hold:
    \begin{subequations}\label{eq:P_m properties}
        \begin{align}
            \mathcal{P}_m\mathcal{P}_nf&=\mathcal{P}_m(r^nf^{(n)})=\delta_{mn}r^nf^{(n)}=\delta_{mn}\mathcal{P}_nf \\
            \mathcal{P}_m(r^sf)&=\sum_{n\in\mathbb{Z}}\delta_{m,n+s}r^{n+s}f^{(n)}=r^{m}f^{(m-s)}=r^s\mathcal{P}_{m-s}f \label{eq:P_m properties 2}\\
            \mathcal{P}_m\partial_rf&=\sum_{n\in\mathbb{Z}}n\delta_{m,n-1}r^{n-1}f^{(n)}=(m+1)r^mf^{(m+1)} \label{eq:P_m properties 3}\\
            &=\partial_r(r^{m+1}f^{(m+1)})=\partial_r\mathcal{P}_{m+1}f \notag \\        \mathcal{P}_m(fg)&=\sum_{l,n\in\mathbb{Z}}\delta_{m,l+n}r^{l+n}f^{(l)}g^{(n)}=r^m\sum_{n\in\mathbb{Z}}f^{(m-n)}g^{(n)} \\
            &=\sum_{n\in\mathbb{Z}}\mathcal{P}_{m-n}(f)\mathcal{P}_n(g) \notag \\
            \mathcal{P}_l[\mathcal{P}_m(f)\mathcal{P}_n(g)]&=\delta_{l,m+n}r^{m+n}f^{(m)}g^{(n)}=\delta_{l,m+n}\mathcal{P}_m(f)\mathcal{P}_n(g) \;.
        \end{align}
    \end{subequations}
    The first equation above implies that $\mathcal{P}_m$ is a projector (since $\mathcal{P}_m^2=\mathcal{P}_m$), while from the third it is clear that
    \begin{equation}
        \label{eq:P-1_vanishes}
        \mathcal{P}_{-1}(\partial_rf)=0 \quad \forall f\in C^\infty_r(\mathcal{M}) \ .
    \end{equation}
    Note that this equation holds because we do not include logarithms in the definition of $C^\infty_r(\mathcal{M})$, see the expansion \eqref{def:C^infty_r}. It is easy to see that $\mathcal{P}_m$ commutes with $\partial_\alpha$. Equations \eqref{eq:P property 1}-\eqref{eq:P property 6} follow straightforwardly. For instance:
    \begin{equation}
        \mathcal{P}_M\mathcal{P}_N=\sum_{m\in M}\sum_{n\in N}\mathcal{P}_m\mathcal{P}_n=\sum_{m\in M}\sum_{n\in N}\mathcal{P}_m\delta_{mn}=\sum_{m\in M\cap N}\mathcal{P}_m=\mathcal{P}_{M\cap N} \ ,
    \end{equation}
    thus $\mathcal{P}_M$ is a projector.
\end{proof}

\subsection{Conditions on the power selection sets}
\label{subsec:Conditions on the power selection sets}
As discussed in \cref{subsec:projection piN}, the morphism $\pi_\mathcal{N}$ with power selection sets satisfying \eqref{def:pi_N_contraints_on_N} is the most general solution for $\Pi$ for the first equation of the master system. Below, we find the additional conditions that the sets in $\mathcal{N}$ must satisfy in order for $\pi_\mathcal{N}$ to solve all the other equations of the master system, too. In this way, we end up with a system of equations for the power selection sets. A projection $\pi_\mathcal{N}$ with $\mathcal{N}$ solving such a system will be the desired solution for $\Pi$ in \eqref{eq:master_system}.

\subsubsection{Cochain map condition}
We start by looking at the second equation of the master system \eqref{eq:master_system}, the one encoding the request that $\Pi$ is a cochain map.
\begin{lemma}[Cochain map condition]
    \label{lemma:conditions on N from commuting diagram}
    Let $\pi_\mathcal{N}$ be the morphism defined in \eqref{def:better_truncation} with the sets in $\mathcal{N}$ obeying condition \eqref{def:pi_N_contraints_on_N}. Then $\pi_\mathcal{N}$ is a solution for $\Pi$ in equation \eqref{eq:master_system_EQ2} iff the corresponding power selection sets satisfy all the inclusions\footnote{$\mathcal{T}$ is the translation defined in \eqref{def:translation}.}
    \begin{subequations}
        \label{eq:conditions on N from commuting diagram}
        \begin{align}
            \label{eq:condition on N 1}
            N_{A_r}\setminus\{-1\}&\subseteq \mathcal{T}_{-1}N_\Lambda \\
            \label{eq:condition on N 2}
            N_{A_\alpha}&\subseteq N_\Lambda \\
            \label{eq:condition on N 3}
            N_{E_{ru}}&\subseteq N_{A_r}\cap\mathcal{T}_{-1}N_{A_u}\cap\mathcal{T}_{-2}\left(N_{A_z}\cap N_{A_{\bar z}}\right) \\
            \label{eq:condition on N 4}
            N_{E_{r\bar z}}&\subseteq N_{A_r} \\
            \label{eq:condition on N 4.1}
            N_{E_{r\bar z}}\setminus\{-1\}&\subseteq \mathcal{T}_{-1}N_{A_{\bar z}} \\
            \label{eq:condition on N 5}
            N_{E_{uz}}&\subseteq N_{A_u}\cap N_{A_z} \; .
        \end{align}
    \end{subequations}
\end{lemma}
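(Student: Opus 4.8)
The plan is to turn the inclusion \eqref{eq:master_system_EQ2} into a clean operator identity and then test it component by component, exploiting the algebra of the projections $\mathcal{P}_N$ collected in \cref{lemma:properties of projector P}. Since under the hypothesis \eqref{def:pi_N_contraints_on_N} the morphism $\pi_\mathcal{N}$ is idempotent, one has $\mathrm{Ker}\,\pi_\mathcal{N}=\mathrm{Im}\,\pi_\mathcal{N}^\complement$ with $\pi_\mathcal{N}^\complement=1_\mathcal{X}-\pi_\mathcal{N}$, so that \eqref{eq:master_system_EQ2} is equivalent to the requirement $\pi_\mathcal{N}B_1\pi_\mathcal{N}^\complement=0$ on all of $\mathcal{X}$. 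Because $\pi_\mathcal{N}$ acts componentwise through the $\mathcal{P}_N$ and $1-\mathcal{P}_N=\mathcal{P}_{N^\complement}$, the complement $\pi_\mathcal{N}^\complement$ acts componentwise as well, so that on a gauge parameter it produces $\bar\Lambda:=\mathcal{P}_{N_\Lambda^\complement}(\Lambda)$ and on a gauge field $\bar A_\mu:=\mathcal{P}_{N_{A_\mu}^\complement}(A_\mu)$. Every step below is an equivalence, which gives the ``iff'' in both directions at once.

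First I would split $B_1$ by the degree of its input. On $X_{-1}$ one has $B_1(\bar\Lambda)_\mu=\partial_\mu\bar\Lambda$, so the $\mu$-component of $\pi_\mathcal{N}B_1\pi_\mathcal{N}^\complement$ is $\mathcal{P}_{N_{A_\mu}}\partial_\mu\mathcal{P}_{N_\Lambda^\complement}$. For $\alpha\in\{u,z,\bar z\}$ the commutation relation \eqref{eq:P property 5} together with \eqref{eq:P property 1} rewrites this as $\partial_\alpha\mathcal{P}_{N_{A_\alpha}\cap N_\Lambda^\complement}$, which kills every $\Lambda$ precisely when the selection set is empty, i.e. $N_{A_\alpha}\subseteq N_\Lambda$, which is \eqref{eq:condition on N 2}. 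For $\mu=r$ I would use \eqref{eq:P property 4} to move the derivative out, $\mathcal{P}_{N_{A_r}}\partial_r=\partial_r\mathcal{P}_{\mathcal{T}_1N_{A_r}}$, and note that $\partial_r$ annihilates only the power $r^0$ (equivalently \eqref{eq:P-1_vanishes}); vanishing for all $\Lambda$ then forces $\mathcal{T}_1N_{A_r}\cap N_\Lambda^\complement\subseteq\{0\}$, which unwinds to \eqref{eq:condition on N 1}.

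Next I would repeat this on $X_0$, reading $B_1(\bar A)$ off \eqref{def:B_1 in components} through its three independent anti-self-dual components $E_{ru},E_{r\bar z},E_{uz}$, each projected by the corresponding $\mathcal{P}_{N_E}$. The four fields $A_r,A_u,A_z,A_{\bar z}$ are independent, so each monomial must vanish separately. Terms carrying only angular/retarded derivatives commute through $\mathcal{P}$ and force the relevant intersection to be empty, giving $N_{E_{uz}}\subseteq N_{A_u}\cap N_{A_z}$, $N_{E_{r\bar z}}\subseteq N_{A_r}$ and $N_{E_{ru}}\subseteq N_{A_r}$; the $r^{-2}$ factors in the $E_{ru}$ slot are absorbed by \eqref{eq:P property 3}, shifting the selection set by $\mathcal{T}_2$ and producing $N_{E_{ru}}\subseteq\mathcal{T}_{-2}(N_{A_z}\cap N_{A_{\bar z}})$; the $\partial_r$ terms again yield $\mathcal{T}_{-1}$-shifted inclusions carrying a $\setminus\{-1\}$. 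Collecting these gives \eqref{eq:condition on N 3}--\eqref{eq:condition on N 5}.

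The only genuinely delicate point—and thus the main obstacle—is the bookkeeping of the single power $r^{-1}$, the one place where $\partial_r$ has nontrivial kernel. The $\partial_r\bar A_u$ piece of $E_{ru}$ yields the raw condition $N_{E_{ru}}\setminus\{-1\}\subseteq\mathcal{T}_{-1}N_{A_u}$, whereas \eqref{eq:condition on N 3} asserts the cleaner $N_{E_{ru}}\subseteq\mathcal{T}_{-1}N_{A_u}$. The resolution is to work inside the \emph{conjunction} of all conditions: the companion inclusion $N_{E_{ru}}\subseteq\mathcal{T}_{-2}N_{A_z}$ together with the hypothesis $N_{A_z}\subseteq\mathbb{Z}^{\leq0}$ already forces $N_{E_{ru}}\subseteq\mathbb{Z}^{\leq-2}$, so that $-1\notin N_{E_{ru}}$ and the two forms coincide. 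For $E_{r\bar z}$ and for the $\Lambda$-sector there is no companion inclusion pinning the set below $-2$, which is exactly why \eqref{eq:condition on N 1} and \eqref{eq:condition on N 4.1} must genuinely retain the $\setminus\{-1\}$. Care with this $-1$ accounting, rather than any lengthy computation, is where the equivalence is won.
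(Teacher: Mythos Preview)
Your proof is correct and follows essentially the same route as the paper: reformulate \eqref{eq:master_system_EQ2} as $\pi_\mathcal{N}B_1\pi_\mathcal{N}^\complement=0$, act componentwise via the $\mathcal{P}_N$ calculus, and read off the inclusions term by term using independence of the field components. Your explicit discussion of the $r^{-1}$ bookkeeping is a nice addition that the paper largely leaves implicit; one small simplification worth noting is that $-1\notin N_{E_{ru}}$ already follows directly from the hypothesis \eqref{def:pi_N_contraints_on_N 3} together with \eqref{def:pi_N_contraints_on_N 1} (since $N_{E_{z\bar z}}=\mathcal{T}_2N_{E_{ru}}\subseteq\mathbb{Z}^{\leq0}$), so you need not invoke the companion $N_{A_z}$ inclusion to drop the $\setminus\{-1\}$ in \eqref{eq:condition on N 3}.
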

\begin{proof}
    We want to find the constraints that the sets in $\mathcal{N}$ shall satisfy in order to have $\mathrm{Ker}\pi_\mathcal{N}\subseteq\mathrm{Ker}(\pi_\mathcal{N} B_1)$. Following the same approach discussed in \cref{subsec:Requirements on L infty}\footnote{Cfr. equations \eqref{def:bar Pi} and \eqref{eq:condition_on_Im_barPi}.}, it is convenient to rewrite this expression as $\mathrm{Im}\pi_\mathcal{N}^\complement\subseteq\mathrm{Ker}(\pi_\mathcal{N}B_1)$, where $\pi_\mathcal{N}^\complement:=1_\mathcal{X}-\pi_\mathcal{N}$ is the projection to the complement of $\mathrm{Im}\pi_\mathcal{N}$ in $\mathcal{X}$\footnote{Recall that, by hypothesis, $\mathrm{Im}\pi_\mathcal{N}\subseteq\mathcal{X}^{n\leq0}\subset\mathcal{X}$, see equations \eqref{def:space_at_I} and \eqref{eq:master_system_EQ1}.}. This condition is equivalent to asking that
    \begin{equation}
        \label{pf:psi_N}
        \Psi_\mathcal{N}(\psi):=\pi_\mathcal{N}B_1\pi_\mathcal{N}^\complement(\psi)=0 \quad \forall\psi\in\mathcal{X} \;,
    \end{equation}
    where we introduce the notation $\Psi_\mathcal{N}$ to increase readability in what follows. From equations \eqref{def:better_truncation} and \eqref{eq:P property 2} we find that, $\forall\psi\in\mathcal{X}$,
    \begin{equation}
        \label{pf:bar piN}
        \pi_\mathcal{N}^\complement(\psi)=\psi-\pi_\mathcal{N}(\psi)=\begin{cases}
            (\mathcal{P}_\mathbb{Z}-\mathcal{P}_{N_\Lambda})\Lambda=\mathcal{P}_{N_\Lambda^\complement}\Lambda &\mathrm{if} \ \psi\in X_{-1} \\
            (\mathcal{P}_\mathbb{Z}-\mathcal{P}_{N_{A_\mu}})A_\mu=\mathcal{P}_{N_{A_\mu}^\complement}A_\mu &\mathrm{if} \ \psi\in X_0 \\
            (\mathcal{P}_\mathbb{Z}-\mathcal{P}_{N_{E_{\mu\nu}}})E_{\mu\nu}=\mathcal{P}_{N_{E_{\mu\nu}}^\complement}E_{\mu\nu} &\mathrm{if} \ \psi\in X_1
        \end{cases} \;.
    \end{equation}
    where $N_I^\complement:=\mathbb{Z}\setminus N_I$ is the complement of the set $N_I$, with $I\in\{\Lambda,A_\mu,E_{\mu\nu}\}$.
    
    This observation allows us to rewrite $\Psi_\mathcal{N}(\psi)$ in terms of the projection $\mathcal{P}$, so that we can manipulate it using the properties listed in \eqref{eq:P property}. If $\psi=\Lambda\in X_{-1}$, then $\Psi_\mathcal{N}(\Lambda)$ is an element of $X_0$ ($\Psi_\mathcal{N}$ has degree $+1$, since it contains the differential) and its components are\footnote{Recall equations \eqref{def:better_truncation}, \eqref{def:B_1 in components} and \eqref{pf:bar piN} for $\pi_\mathcal{N}$, $B_1$ and $\pi_\mathcal{N}^\complement$, respectively. Moreover, use properties \eqref{eq:P property 1}, \eqref{eq:P property 4} and \eqref{eq:P property 5}.}
    \begin{equation}
        \label{pf:PsiN Lambda}
        \begin{aligned}
            \Psi_\mathcal{N}(\Lambda)_r&=\mathcal{P}_{N_{A_r}}\partial_r\mathcal{P}_{N_\Lambda^\complement}\Lambda=\mathcal{P}_{N_{A_r}\setminus\{-1\}\cap \mathcal{T}_{-1}N_\Lambda^\complement}\partial_r\Lambda \\
            \Psi_\mathcal{N}(\Lambda)_\alpha&=\mathcal{P}_{N_{A_\alpha}}\partial_\alpha\mathcal{P}_{N_\Lambda^\complement}\Lambda=\mathcal{P}_{N_{A_\alpha}\cap N_\Lambda^\complement}\partial_\alpha\Lambda \;.
        \end{aligned}
    \end{equation}
    To study when these expressions reduce to zero, note that given any subsets $U,V\subseteq\mathbb{Z}$, the intersection $U\cap V^\complement$ vanishes iff $U\subseteq V$. Moreover, the translation $\mathcal{T}_sU$ of the set $U$ by an arbitrary integer $s$ satisfies the relation $(\mathcal{T}_sU)^\complement=\mathcal{T}_s(U^\complement)$. Thus, the components in \eqref{pf:PsiN Lambda} vanish $\forall\Lambda\in X_{-1}$ iff the inclusions \eqref{eq:condition on N 1} and \eqref{eq:condition on N 2} are satisfied. A similar argument applies when $\psi=A_\mu\in X_0$. In this case, $\Psi_\mathcal{N}(A)$ is an equation of motion in $X_1$ and its independent components are given by
    \begin{equation}
        \begin{aligned}
            \Psi_\mathcal{N}(A)_{ru}&=2\mathcal{P}_{N_{E_{ru}}}\big[\partial_{[r}\big(\mathcal{P}_{N_A^\complement}A\big)_{u]}+r^{-2}\partial_{[z}\big(\mathcal{P}_{N_A^\complement}A\big)_{\Bar{z}]}\big] \\
            &=\mathcal{P}_{N_{E_{ru}}}\big[\mathcal{P}_{\mathcal{T}_{-1}N_{A_u}^\complement}\partial_{r}A_{u}-\mathcal{P}_{N_{A_r}^\complement}\partial_{u}A_{r}+\mathcal{P}_{\mathcal{T}_{-2}N_{A_{\Bar{z}}}^\complement}(r^{-2}\partial_{z}A_{\Bar{z}})-\mathcal{P}_{\mathcal{T}_{-2}N_{A_{z}}^\complement}(r^{-2}\partial_{\bar{z}}A_z)\big] \\
            \Psi_\mathcal{N}(A)_{r\bar{z}}&=4\mathcal{P}_{N_{E_{r\bar{z}}}}\partial_{[r}\big(\mathcal{P}_{N_A^\complement}A\big)_{\bar{z}]}=2\mathcal{P}_{N_{E_{r\bar{z}}}}\big(\mathcal{P}_{\mathcal{T}_{-1}N_{A_{\bar{z}}}^\complement\setminus\{-1\}}\partial_{r}A_{\bar{z}}-\mathcal{P}_{N_{A_r}^\complement}\partial_{\bar{z}}A_{r}\big) \\
            \Psi_\mathcal{N}(A)_{uz}&=4\mathcal{P}_{N_{E_{uz}}}\partial_{[u}\big(\mathcal{P}_{N_A^\complement}A\big)_{z]}=2\mathcal{P}_{N_{E_{uz}}}\big(\mathcal{P}_{N_{A_z}^\complement}\partial_{u}A_z-\mathcal{P}_{N_{A_u}^\complement}\partial_zA_{u}\big) \;.
        \end{aligned}
    \end{equation}
    Since we want these expressions to vanish for arbitrary $A_\mu$, we require that each term above vanishes independently (because different terms in a given component of $\Psi_\mathcal{N}(A)_{\mu\nu}$ depend on different components of $A_\mu$). For instance,
    \begin{equation}
        \Psi_\mathcal{N}(A)_{uz}=0 \quad \forall A_\mu \ \Leftrightarrow \ \begin{cases}
            \mathcal{P}_{N_{E_{uz}}\cap N_{A_z}^\complement}\partial_{u}A_z=0 &\forall A_z \\
            \mathcal{P}_{N_{E_{uz}}\cap N_{A_u}^\complement}\partial_zA_{u}=0 &\forall A_u
        \end{cases} \ \Leftrightarrow \ N_{E_{uz}}\subseteq N_{A_u}\cap N_{A_z} \;,
    \end{equation}
    which gives equation \eqref{eq:condition on N 5}. Similarly, one can prove that $\Psi_\mathcal{N}(A)_{ru}$ and $\Psi_\mathcal{N}(A)_{r\bar z}$ vanish for all $A_\mu$ iff equations \eqref{eq:condition on N 3}, \eqref{eq:condition on N 4} and \eqref{eq:condition on N 4.1} hold. Finally, by degree, $\Psi_\mathcal{N}(E)=0$ for all $E\in X_1$, so no additional information on $\mathcal{N}$ can be extracted from this condition.
\end{proof}

\subsubsection{Leibniz rule condition}
The next step is to determine the additional conditions that the power selection sets in $\mathcal{N}$ must satisfy to ensure that $\pi_\mathcal{N}$ is a solution of the Leibniz rule in \eqref{eq:master_system_EQ3}.
\begin{lemma}[Leibniz rule condition]
    \label{lemma:conditions on N from Leib}
    Let $\pi_\mathcal{N}$ be the morphism defined in \eqref{def:better_truncation} with the sets in $\mathcal{N}$ obeying conditions \eqref{def:pi_N_contraints_on_N} and \eqref{eq:conditions on N from commuting diagram}. Then $\pi_\mathcal{N}$ is a solution for $\Pi$ in equation \eqref{eq:master_system_EQ3} iff the corresponding power selection sets satisfy all the following equations:\footnote{As a reminder, we reference the notation defined in \eqref{def:complement and sum of sets}.}
    \begin{subequations}
        \label{eq:conditions on N from Leibniz}
        \begin{align}
            \label{eq:condition on N 6}
            N_{A_r}\setminus\{-1\}\cap\big[\mathcal{T}_{-1}N_\Lambda\setminus (N_{A_r}\cup\{-1\})+N_\Lambda\big]&=\emptyset \\
            \label{eq:condition on N 7}
            N_{A_\alpha}\cap\big(N_\Lambda\setminus N_{A_\alpha}+N_\Lambda\big)&=\emptyset \\
            \label{eq:condition on N 8}
            N_{E_{ru}}\cap\big[\big(N_{A_r}\cup \mathcal{T}_{-1}(N_{A_u}\setminus\{0\})\cup \mathcal{T}_{-2}(N_{A_z}\cup N_{A_{\bar{z}}})\big)\setminus N_{E_{ru}}+N_\Lambda\big]&=\emptyset \\
            \label{eq:condition on N 9}
            N_{E_{r\bar z}}\setminus\{-1\}\cap \big[\big(N_{A_r}\cup\mathcal{T}_{-1}(N_{A_{\bar z}}\setminus\{0\})\big)\setminus N_{E_{r\bar z}}+N_\Lambda\big]&=\emptyset \\
            \label{eq:condition on N 10}
            N_{E_{uz}}\cap\big[\big(N_{A_u}\cup N_{A_z}\big)\setminus N_{E_{uz}}+N_\Lambda\big]&=\emptyset \;,
        \end{align}
    \end{subequations}
    together with the conditions
    \begin{subequations}
        \label{eq:conditions on N from Leibniz last ones}
        \begin{align}
            \label{eq:condition on N 11}
            N_{A_r}\ni-1\ &\Rightarrow\ N_\Lambda\ni0 \\
            \label{eq:condition on N 12}
            N_{E_{r\bar z}}\ni-1\ &\Rightarrow\ N_{A_{\bar z}}\ni0 \;.
        \end{align}
    \end{subequations}
\end{lemma}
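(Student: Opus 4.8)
The plan is to measure how badly the Leibniz rule is spoiled by the projection, starting from the fact that the bulk triple $(\mathcal{X},B_1,B_2)$ is a strict dgLa, so $\mathrm{Leib}(B_1,B_2)=0$ holds identically. Writing $\pi:=\pi_\mathcal{N}$, $\pi_\mathcal{N}^\complement:=1_\mathcal{X}-\pi_\mathcal{N}$ and $B_i^\mathcal{I}:=\pi_\mathcal{N}B_i$ on $\mathrm{Im}\,\pi_\mathcal{N}$, I would first subtract $\pi_\mathcal{N}$ applied to the bulk Leibniz identity from the projected expression. For $x_1,x_2\in\mathrm{Im}\,\pi_\mathcal{N}$ this yields
\[
\mathrm{Leib}(B_1^\mathcal{I},B_2^\mathcal{I})(x_1,x_2)=-\pi_\mathcal{N}B_1\pi_\mathcal{N}^\complement B_2(x_1,x_2)-\pi_\mathcal{N}B_2(\pi_\mathcal{N}^\complement B_1x_1,x_2)-(-1)^{x_1}\pi_\mathcal{N}B_2(x_1,\pi_\mathcal{N}^\complement B_1x_2).
\]
The first term vanishes identically: by hypothesis $\mathcal{N}$ already obeys the inclusions \eqref{eq:conditions on N from commuting diagram}, which by \cref{lemma:conditions on N from commuting diagram} are equivalent to $\pi_\mathcal{N}B_1\pi_\mathcal{N}^\complement=0$. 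Hence the whole obstruction is carried by the two ``mismatch'' terms $\pi_\mathcal{N}B_2(\pi_\mathcal{N}^\complement B_1x_i,\,x_j)$, where $\pi_\mathcal{N}^\complement B_1x_i$ is exactly the part of $B_1x_i$ whose $r$-powers lie outside the relevant selection sets. A degree count then restricts attention to the only input pairs that can produce a nonzero output: $(\Lambda_1,\Lambda_2)$, whose output lives in $X_0$ and therefore constrains $N_{A_\mu}$ (giving \eqref{eq:condition on N 6} and \eqref{eq:condition on N 7}), and $(\Lambda,A)$, whose output lives in $X_1$ and constrains $N_{E_{\mu\nu}}$ (giving \eqref{eq:condition on N 8}--\eqref{eq:condition on N 5}, i.e.\ \eqref{eq:condition on N 10}).

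Next I would expand each output component using the explicit brackets \eqref{def:B_1 in components} and \eqref{def:B_2 in components}, writing every field through the projection $\mathcal{P}$. Since $B_2$ is purely multiplicative while $B_1$ supplies the derivatives, each mismatch term is a sum of products of the schematic form $(\partial\psi)\,\chi$ or $\psi\,(\partial\chi)$ with one factor pre-projected to a complementary set. These distinct bilinear--differential structures are linearly independent in the fields, so requiring $\mathrm{Leib}(B_1^\mathcal{I},B_2^\mathcal{I})=0$ for \emph{all} inputs forces each to vanish separately; property \eqref{eq:P property 7} then converts ``$\mathcal{P}_{N_{\mathrm{out}}}(\mathcal{P}_Mf\,\mathcal{P}_Ng)=0$ for all $f,g$'' into the set-theoretic disjointness $N_{\mathrm{out}}\cap(M+N)=\emptyset$. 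The precise shape of the sets $M,N$ is dictated by how derivatives interact with $\mathcal{P}$: by \eqref{eq:P property 5} the transverse derivatives $\partial_\alpha$ commute with $\mathcal{P}$ and leave clean sets such as $N_\Lambda$, $N_{A_\mu}$, whereas by \eqref{eq:P property 4} the radial derivative $\partial_r$ shifts the sets by $\mathcal{T}_{-1}$ and, crucially, annihilates the $r^0$ mode ($\mathcal{P}_{-1}\partial_r=0$, cf.\ \eqref{eq:P-1_vanishes}). This is what produces the translated, punctured combinations $\mathcal{T}_{-1}(N_{A_u}\setminus\{0\})$, $\mathcal{T}_{-1}(N_{A_{\bar z}}\setminus\{0\})$ appearing in \eqref{eq:condition on N 8} and \eqref{eq:condition on N 9}.

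The remaining work is to collapse the many raw disjointness conditions into the compact list \eqref{eq:conditions on N from Leibniz}, which I expect to follow by repeatedly absorbing the already-assumed constraints \eqref{def:pi_N_contraints_on_N} (in particular $N_I\subseteq\mathbb{Z}^{\leq0}$, which bounds the sum sets from above) and \eqref{eq:conditions on N from commuting diagram}; several sub-conditions coming from the $\partial\Lambda$-type terms should turn out to be implied by the cochain inclusions. I expect the genuine obstacle to be the bookkeeping at the single power $r^{-1}$. Because $\mathcal{P}_{-1}\partial_r=0$, the radial-derivative contribution never reaches output power $-1$, so the disjointness statement extracted from \eqref{eq:P property 7} only controls the selection sets away from $-1$; this is why \eqref{eq:condition on N 6} and \eqref{eq:condition on N 9} are written with $N_{A_r}\setminus\{-1\}$ and $N_{E_{r\bar z}}\setminus\{-1\}$. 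Restoring control of the missing power $-1$ is not a disjointness statement at all but the separate requirement that the complementary (non-radial) contribution also vanish there, which I anticipate reduces exactly to the implications \eqref{eq:condition on N 11} and \eqref{eq:condition on N 12}, effectively upgrading the weakened cochain inclusions \eqref{eq:condition on N 1} and \eqref{eq:condition on N 4.1} so that they hold at $-1$ as well. Carefully disentangling this $r^{-1}$ edge case from the generic powers, and checking that the two directions of the ``iff'' both go through, is the delicate heart of the proof; the rest is systematic application of \cref{lemma:properties of projector P}.
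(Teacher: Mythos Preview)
Your strategy is sound and, after the first step, coincides with the paper's. The paper does not explicitly subtract the bulk Leibniz identity; instead it expands $\mathrm{Leib}_\mathcal{N}$ directly and then systematically inserts $\Lambda_i=\mathcal{P}_{N_\Lambda}\Lambda_i$, $A_\mu=\mathcal{P}_{N_{A_\mu}}A_\mu$ for the inputs. Algebraically this is exactly your subtraction trick, so from that point on the term-by-term analysis (split into $(\Lambda,\Lambda)\to X_0$ and $(\Lambda,A)\to X_1$, then property \eqref{eq:P property 7} to turn each surviving bilinear into a disjointness statement) is identical. Your framing via $\pi_\mathcal{N}B_1\pi_\mathcal{N}^\complement=0$ is a clean way to see why the $B_1\!\circ\!B_2$ piece drops out and only the two ``mismatch'' terms survive.

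One point where your anticipation is off: condition \eqref{eq:condition on N 11} does \emph{not} arise as the missing $r^{-1}$ piece paired with \eqref{eq:condition on N 6}. In the $(\Lambda_1,\Lambda_2)$ $r$-component the residual $\delta_{A_r,-1}\mathcal{P}_{-1}$ term actually vanishes automatically, because $0\notin N_{A_r}$ (from \eqref{def:pi_N_contraints_on_N 1} and \eqref{eq:condition on N 1}) forces $\max\big(N_{A_r}\setminus\{-1\}+N_\Lambda\big)\le -2$. Instead, \eqref{eq:condition on N 11} emerges in the $(\Lambda,A)$ analysis, from the $\partial_u\Lambda$-type mismatch in the $E_{ru}$ component: it is a genuine disjointness condition $N_{E_{ru}}\cap\big(N_\Lambda\setminus N_{A_u}+N_{A_r}\big)=\emptyset$ which, under the already-assumed inclusions \eqref{eq:condition on N 3} and \eqref{eq:condition on N 7}, reduces via a case split on whether $-1\in N_{A_r}$ to precisely \eqref{eq:condition on N 11}. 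Your description of \eqref{eq:condition on N 12} (non-radial $\partial_{\bar z}\Lambda$ contribution at output power $-1$ in the $E_{r\bar z}$ component) is correct. So the $r^{-1}$ bookkeeping is indeed the delicate part, but the two implication-type conditions do not have the symmetric origin you expect; carrying the computation through would reveal this without invalidating your overall scheme.
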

\begin{proof}
    The goal is to determine the conditions on the sets in the collection $\mathcal{N}$ so that
    \begin{equation}
        \mathrm{Leib}_\mathcal{N}(\psi_1,\psi_2)=0 \quad \forall\psi_1,\psi_2\in\mathrm{Im}\pi_\mathcal{N} \;,
    \end{equation}
    where, for brevity, we defined $\mathrm{Leib}_\mathcal{N}:=\mathrm{Leib}\left(\pi_\mathcal{N} B_1|_{\mathrm{Im}\pi_\mathcal{N}},\pi_\mathcal{N} B_2|_{\mathrm{Im}\pi_\mathcal{N}\otimes\mathrm{Im}\pi_\mathcal{N}}\right)$. By degree, the above equation is a trivial identity when the sum of the degree of $\psi_1$ and $\psi_2$ is non-negative. Moreover, the Leibniz rule is graded symmetric\footnote{From equations \eqref{def:graded_symmetry_Bn} and \eqref{def:Leibniz_rule_B1_B2} it follows that $\mathrm{Leib}_\mathcal{N}(\psi_1,\psi_2)=(-1)^{\psi_1\psi_2}\mathrm{Leib}_\mathcal{N}(\psi_2,\psi_1)$.}. Therefore, the problem reduces to the study of the two cases with $\psi_1\in X_{-1}$ and $\psi_2$ belonging to either $X_{-1}$ or $X_0$.
    
    To begin, we apply $\mathrm{Leib}_\mathcal{N}$ to two gauge parameters $\Lambda_1,\Lambda_2\in\mathrm{Im}\pi_\mathcal{N}\cap X_{-1}$, obtaining the following gauge field:
    \begin{equation}
        \label{pf:LeibN lambdas all components}
        \begin{aligned}
            \mathrm{Leib}_\mathcal{N}(\Lambda_1,\Lambda_2)_\mu&=\pi_\mathcal{N}B_1\pi_\mathcal{N}B_2(\Lambda_1,\Lambda_2)_\mu+\pi_\mathcal{N}B_2(\pi_\mathcal{N}B_1(\Lambda_1)_\mu,\Lambda_2)-\pi_\mathcal{N}B_2(\Lambda_1,\pi_\mathcal{N}B_1(\Lambda_2)_\mu) \\
            &=\mathcal{P}_{N_{A_\mu}}\big(\partial_\mu \pi_\mathcal{N}B_2(\Lambda_1,\Lambda_2)+[\pi_\mathcal{N}B_1(\Lambda_1)_\mu,\Lambda_2]+[\Lambda_1,\pi_\mathcal{N}B_1(\Lambda_2)_\mu]\big) \\
            &=\mathcal{P}_{N_{A_\mu}}\big(-\partial_\mu\mathcal{P}_{N_\Lambda}[\Lambda_1,\Lambda_2]+2[\mathcal{P}_{N_{A_\mu}}\partial_\mu\Lambda_{[1},\Lambda_{2]}]\big) \;,
        \end{aligned}
    \end{equation}
    where we used the expressions \eqref{def:B_1 in components} and \eqref{def:B_2 in components} for $B_1$ and $B_2$, respectively, and the definition of $\pi_\mathcal{N}$. At this point, it is worth noting the following fact: since $\Lambda_i$ (with $i=1,2$) is an element of $\mathrm{Im}\pi_\mathcal{N}$ and $\pi_\mathcal{N}$ is idempotent, it is always possible to write
    \begin{equation}
        \label{pf:extract projector}
        \Lambda_i=\mathcal{P}_{N_\Lambda}\Lambda_i \;.
    \end{equation}
    By applying this observation, together with the properties in \eqref{eq:P property}, the $r$-component of the expression above takes the form 
    \begin{equation}
        \begin{aligned}
            \mathrm{Leib}_\mathcal{N}(\Lambda_1,\Lambda_2)_r&=-\mathcal{P}_{N_{A_r}\setminus\{-1\}}\partial_r\mathcal{P}_{N_\Lambda}[\Lambda_1,\Lambda_2]+2\mathcal{P}_{N_{A_r}}[\mathcal{P}_{N_{A_r}\setminus\{-1\}}\partial_r\Lambda_{[1},\Lambda_{2]}] \\
            \label{pf:L_2 lambda r component}
            &=-2\mathcal{P}_{N_{A_r}\setminus\{-1\}\cap \mathcal{T}_{-1}N_\Lambda}[\mathcal{P}_{\mathcal{T}_{-1}N_\Lambda\setminus\{-1\}}\partial_r\Lambda_{[1},\Lambda_{2]}] \\
            &\quad +2\mathcal{P}_{N_{A_r}}[\mathcal{P}_{N_{A_r}\setminus\{-1\}}\partial_r\Lambda_{[1},\Lambda_{2]}] \;.
        \end{aligned}
    \end{equation}
    The projector $\mathcal{P}_{N_{A_r}}$ in the last line can be written as $\mathcal{P}_{N_{A_r}\setminus\{-1\}}+\delta_{A_r,-1}\mathcal{P}_{-1}$, where
    \begin{equation}
        \label{pf:delta A_r}
        \delta_{A_r,-1}:=\begin{cases}
            1 &\mathrm{if} \ N_{A_r}\ni-1 \\
            0 &\mathrm{otherwise}
        \end{cases}
    \end{equation}
    Moreover, equation \eqref{eq:condition on N 1} (which holds by hypothesis) implies $N_{A_r}\setminus\{-1\}\cap \mathcal{T}_{-1}N_\Lambda=N_{A_r}\setminus\{-1\}$. Thus the above expression becomes
    \begin{equation}
        \begin{aligned}
            \mathrm{Leib}_\mathcal{N}(\Lambda_1,\Lambda_2)_r&=-2\mathcal{P}_{N_{A_r}\setminus\{-1\}}[(\mathcal{P}_{\mathcal{T}_{-1}N_\Lambda\setminus\{-1\}}-\mathcal{P}_{N_{A_r}\setminus\{-1\}})\partial_r\Lambda_{[1},\Lambda_{2]}] \\
            &\quad +2\delta_{A_r,-1}\mathcal{P}_{-1}[\mathcal{P}_{N_{A_r}\setminus\{-1\}}\partial_r\Lambda_{[1},\Lambda_{2]}] \\
            \label{pf:LeibNr}
            &=-2\mathcal{P}_{N_{A_r}\setminus\{-1\}}[\mathcal{P}_{\mathcal{T}_{-1}N_\Lambda\setminus(N_{A_r}\cup\{-1\})}\partial_r\Lambda_{[1},\mathcal{P}_{N_\Lambda}\Lambda_{2]}] \\
            &\quad +2\delta_{A_r,-1}\mathcal{P}_{-1}[\mathcal{P}_{N_{A_r}\setminus\{-1\}}\partial_r\Lambda_{[1},\mathcal{P}_{N_\Lambda}\Lambda_{2]}] \;,
        \end{aligned}
    \end{equation}
    where in the second equality we used \eqref{eq:P property 2}. The last line vanishes for all $\Lambda_1,\Lambda_2$. We prove this in two steps. First, observe that the hypotheses of this lemma imply that:
    \begin{equation}
        \label{eq:0_not_in_NAr}
        N_{A_r}\not\ni0 \;.
    \end{equation}
    To show why this is true, we argue by contradiction. Suppose $N_{A_r}\ni0$, then from \eqref{eq:condition on N 1} we conclude that $1\in N_\Lambda$. However, this contradicts \eqref{def:pi_N_contraints_on_N 1}, since all the sets we are considering here must not contain positive integers. Thus, $N_{A_r}\not\ni0$. The second step is to note that $\max N_\Lambda\leq0$ and $\max N_{A_r}<0$, thanks to \eqref{def:pi_N_contraints_on_N 1} and \eqref{eq:0_not_in_NAr}, hence\footnote{Here we use the fact that for all sets $U,V\subseteq\mathbb{Z}$, the maximum of the sum of the two sets is the sum of their maxima, i.e. $\max (U+V)=\max U+\max V$.}
    % \begin{equation}
    %     \begin{aligned}
    %         &\max(N_{A_r}\setminus\{-1\}+N_\Lambda)=\max(N_{A_r}\setminus\{-1\})+\max N_\Lambda=-2 \\
    %         &\Rightarrow -1\in\overline{N_{A_r}\setminus\{-1\}+N_\Lambda} \ ,
    %     \end{aligned}
    % \end{equation}
    \begin{equation}
        \max(N_{A_r}\setminus\{-1\}+N_\Lambda)<-1 \ \Rightarrow\ -1\not\in N_{A_r}\setminus\{-1\}+N_\Lambda \;,
    \end{equation}
    which implies that the second term in the RHS of \eqref{pf:LeibNr} is zero by virtue of \eqref{eq:P property 7}. Employing this property once more, we observe that the other term vanishes for all $\Lambda_1,\Lambda_2$ iff equation \eqref{eq:condition on N 6} is satisfied. Now, consider the other components of \eqref{pf:LeibN lambdas all components}, namely $\mu=\alpha$. By following the same line of reasoning as above, we find\footnote{Use properties \eqref{eq:P property 1}, \eqref{eq:P property 2}, \eqref{eq:P property 5}, \eqref{eq:P property 7}, \eqref{eq:condition on N 2} and equation \eqref{pf:extract projector}.}
    \begin{equation}
        \begin{aligned}
            \mathrm{Leib}_\mathcal{N}(\Lambda_1,\Lambda_2)_\alpha&=\mathcal{P}_{N_{A_\alpha}}\big(-\partial_\alpha\mathcal{P}_{N_\Lambda}[\Lambda_1,\Lambda_2]+2[\mathcal{P}_{N_{A_\alpha}}\partial_\alpha\Lambda_{[1},\Lambda_{2]}]\big) \\
            &=-2\mathcal{P}_{N_{A_\alpha}\cap N_\Lambda}[\mathcal{P}_{N_\Lambda}\partial_\alpha\Lambda_{[1},\Lambda_{2]}]+2\mathcal{P}_{N_{A_\alpha}}[\mathcal{P}_{N_{A_\alpha}}\partial_\alpha\Lambda_{[1},\Lambda_{2]}] \\
            &=-2\mathcal{P}_{N_{A_\alpha}}[\mathcal{P}_{N_\Lambda\setminus N_{A_\alpha}}\partial_\alpha\Lambda_{[1},\mathcal{P}_{N_\Lambda}\Lambda_{2]}] \ ,
        \end{aligned}
    \end{equation}
    that is zero for all $\Lambda_1,\Lambda_2\in\mathrm{Im}\pi_\mathcal{N}\cap X_{-1}$ iff $N_\Lambda$ and $N_{A_\alpha}$ satisfy equation \eqref{eq:condition on N 7}.

    We now turn to the case where we apply the Leibniz rule to a gauge parameter $\Lambda\in\mathrm{Im}\pi_\mathcal{N}\cap X_{-1}$ and a gauge field $A_\mu\in\mathrm{Im}\pi_\mathcal{N}\cap X_0$:
    \begin{align}
        \label{pf:LeibN lambda A all components}
        \mathrm{Leib}_\mathcal{N}(\Lambda,A)_{\mu\nu}&=\pi_\mathcal{N}B_1\pi_\mathcal{N}B_2(\Lambda,A)_{\mu\nu}+\pi_\mathcal{N}B_2(\pi_\mathcal{N}B_1(\Lambda),A)_{\mu\nu}-\pi_\mathcal{N}B_2(\Lambda,\pi_\mathcal{N}B_1(A))_{\mu\nu} \;.
    \end{align}
    The line of reasoning is the same as before, although the expressions are more involved. Let us analyse the three independent components of \eqref{pf:LeibN lambda A all components} one by one, starting from the $ru$-component:
    \begin{equation}
        \label{pf:LeibN_LambdA_alpha_ru}
        \begin{aligned}
            \mathrm{Leib}_\mathcal{N}(\Lambda,A)_{ru}&=2\mathcal{P}_{N_{E_{ru}}}\big(\partial_{[r}\pi_\mathcal{N}B_2(\Lambda,A)_{u]}+r^{-2}\partial_{[z}\pi_\mathcal{N}B_2(\Lambda,A)_{\Bar{z}]}+[\pi_\mathcal{N}B_1(\Lambda)_{[r},A_{u]}] \\
            &\quad +r^{-2}[\pi_\mathcal{N}B_1(\Lambda)_{[z},A_{\Bar{z}]}]+\tfrac{1}{2}[\Lambda,\pi_\mathcal{N}B_1(A)_{ru}]\big) \\
            &=2\mathcal{P}_{N_{E_{ru}}}\big(-\partial_{[r}\pi_\mathcal{N}([\Lambda,A])_{u]}-r^{-2}\partial_{[z}\pi_\mathcal{N}([\Lambda,A])_{\Bar{z}]}+[\pi_\mathcal{N}(\partial\Lambda)_{[r},A_{u]}] \\
            &\quad +r^{-2}[\pi_\mathcal{N}(\partial\Lambda)_{[z},A_{\Bar{z}]}]+[\Lambda,\mathcal{P}_{N_{E_{ru}}}(\partial_{[r}A_{u]}+r^{-2}\partial_{[z}A_{\Bar{z}]})]\big) \;.
        \end{aligned}
    \end{equation}
    Let $T_{1,ru}$ be the sum of the terms above that contain $A_r$ and $A_u$. At this point, we apply the same trick as before: we take advantage of the identities $\Lambda=\mathcal{P}_{N_\Lambda}\Lambda$ and $A_\mu=\mathcal{P}_{N_{A_\mu}}A_\mu$, together with the equations in \eqref{eq:P property}, to find 
    \begin{equation}
        \begin{aligned}
            T_{1,ru}&:=2\mathcal{P}_{N_{E_{ru}}}\left(-\partial_{[r}\pi_N([\Lambda,A])_{u]}+[\pi_\mathcal{N}(\partial\Lambda)_{[r},A_{u]}]+[\Lambda,\mathcal{P}_{N_{E_{ru}}}\partial_{[r}A_{u]}]\right) \\
            &=\mathcal{P}_{N_{E_{ru}}}\big\{\mathcal{P}_{N_{A_r}}\left([\mathcal{P}_{N_\Lambda}\partial_u\Lambda,A_r]+[\Lambda,\mathcal{P}_{N_{A_r}}\partial_{u}A_r]\right) \\
            &\quad -\mathcal{P}_{\mathcal{T}_{-1}N_{A_u}\setminus\{-1\}}\big([\mathcal{P}_{\mathcal{T}_{-1}N_\Lambda\setminus\{-1\}}\partial_r\Lambda,A_u]+[\Lambda,\mathcal{P}_{\mathcal{T}_{-1}N_{A_u}\setminus\{-1\}}\partial_rA_u]\big) \\
            &\quad +[\mathcal{P}_{N(A_{r})\setminus\{-1\}}\partial_{r}\Lambda,A_{u}]-[\mathcal{P}_{N_{A_u}}\partial_{u}\Lambda,A_{r}]+2[\Lambda,\mathcal{P}_{N_{E_{ru}}}\partial_{[r}A_{u]}]\big\} \\
            &=\mathcal{P}_{N_{E_{ru}}}\big([\mathcal{P}_{N_\Lambda}\partial_u\Lambda,A_r]+[\Lambda,\mathcal{P}_{N_{A_r}}\partial_{u}A_r]-[\mathcal{P}_{\mathcal{T}_{-1}N_\Lambda\setminus\{-1\}}\partial_r\Lambda,A_u] \\
            &\quad -[\Lambda,\mathcal{P}_{\mathcal{T}_{-1}N_{A_u}\setminus\{-1\}}\partial_rA_u]+[\mathcal{P}_{N(A_{r})\setminus\{-1\}}\partial_{r}\Lambda,A_{u}]-[\mathcal{P}_{N_{A_u}}\partial_{u}\Lambda,A_{r}] \\
            &\quad +2[\Lambda,\mathcal{P}_{N_{E_{ru}}}\partial_{[r}A_{u]}]\big) \ ,
        \end{aligned}
    \end{equation}
    where in the last equality we make use of \eqref{eq:condition on N 1}-\eqref{eq:condition on N 3} to simplify the projectors\footnote{In particular, equation \eqref{eq:condition on N 3} implies that $N_{E_{ru}}\subseteq\mathcal{T}_{-1}N_{A_u}\setminus\{-1\}$, since from \eqref{def:pi_N_contraints_on_N 1} together with \eqref{def:pi_N_contraints_on_N 3} it follows that $N_{E_{ru}}\not\ni-1$.}. Employing property \eqref{eq:P property 2}, we obtain
    \begin{equation}
        \label{pf:T1ru}
        \begin{aligned}
            T_{1,ru}&=\mathcal{P}_{N_{E_{ru}}}\big([\mathcal{P}_{N_\Lambda\setminus N_{A_u}}\partial_u\Lambda,A_r]-[\mathcal{P}_{\mathcal{T}_{-1}N_\Lambda\setminus(N_{A_r}\cup\{-1\})}\partial_r\Lambda,A_u] \\
            &\quad +[\Lambda,\mathcal{P}_{N_{A_r}\setminus N_{E_{ru}}}\partial_{u}A_r]-[\Lambda,\mathcal{P}_{\mathcal{T}_{-1}N_{A_u}\setminus (N_{E_{ru}}\cup\{-1\})}\partial_rA_u]\big) \ .
        \end{aligned}
    \end{equation}
    Let us focus on the first term on the RHS. We want to analyse it using the properties of the projector $\mathcal{P}$. To this end, define the set
    \begin{equation}
        \label{S_def}S:=N_{E_{ru}}\cap\left(N_\Lambda\setminus N_{A_u}+N_{A_r}\right)\subseteq\mathbb{Z}^{\leq0} \;.
    \end{equation}
    By virtue of the second line in equation \eqref{eq:P property 7}, we know that the first term in the RHS of \eqref{pf:T1ru} vanishes for all $\Lambda,A_\mu$ when the set $S$ above is empty. To determine when $S=\emptyset$, consider the following key observations.
    \begin{enumerate}[label=\roman*)]
        \item From equations \eqref{eq:condition on N 3} and \eqref{eq:condition on N 7} we deduce that
        \begin{equation}
            \label{pf:key obs 1}
            N_{E_{ru}}\subseteq\mathcal{T}_{-1}N_{A_u}\subseteq\mathcal{T}_{-1}(N_\Lambda\setminus N_{A_u}+N_\Lambda)^\complement \;.
        \end{equation}
        \item Equation \eqref{eq:condition on N 1} implies
        % \begin{equation}
        %     \label{pf:key obs 2}
        %     N_{A_r}=N_{A_r}\setminus\{-1\}\cup\delta_{A_r,-1}\{-1\}\subseteq\mathcal{T}_{-1}N_\Lambda\cup\delta_{A_r,-1}\{-1\} \;.
        % \end{equation}
        \begin{equation}
            \label{pf:key obs 2}
            N_{A_r}=\begin{cases}
                N_{A_r}\setminus\{-1\}\cup\{-1\}\subseteq\mathcal{T}_{-1}N_\Lambda\cup\{-1\} &\mathrm{if} \ N_{A_r}\ni-1 \\
                N_{A_r}\setminus\{-1\}\subseteq\mathcal{T}_{-1}N_\Lambda &\mathrm{if} \ N_{A_r}\not\ni-1 \;.
            \end{cases}
        \end{equation}
        \item Let $U,V\subseteq\mathbb{Z}$ and $s\in\mathbb{Z}$. It is always true that
        \begin{equation}
            \label{eq:translation distribution on sums}
            \mathcal{T}_s(U+V)=\mathcal{T}_sU+V=U+\mathcal{T}_sV \;.
        \end{equation}
        We emphasize that this identity means that the addition of sets of integers defined in \eqref{def:complement and sum of sets} and the usual notion of union of sets behave differently with the translation \eqref{def:translation}\footnote{Recall that $\mathcal{T}_s(U\cup V)=\mathcal{T}_sU\cup\mathcal{T}_sV$.}. Moreover, the identity element for the addition of sets is the singleton containing the identity element of the addition in $\mathbb{Z}$, that is to say $U+\{0\}=\{0\}+U=U$ for all $U$. Together, these two facts suggest a useful relation:
        \begin{equation}
            \label{pf:key obs 3}
            \begin{aligned}
                U+(\mathcal{T}_{-1}V\cup\{-1\})=\mathcal{T}_{-1}(U+(V\cup\{0\}))&=\mathcal{T}_{-1}[(U+V)\cup U] \;.
            \end{aligned}
        \end{equation}
    \end{enumerate}
    We combine equation \eqref{S_def} with \eqref{pf:key obs 1}-\eqref{pf:key obs 3} and we distinguish two different cases:
    \begin{enumerate}
        \item if $N_{A_r}\ni-1$, then
        \begin{equation}
            \label{pf:finding condition on 0 in N_Lambda}
            \begin{aligned}
                S&\subseteq \mathcal{T}_{-1}\left(N_\Lambda\setminus N_{A_u}+N_\Lambda\right)^\complement\cap\mathcal{T}_{-1}\left[(N_\Lambda\setminus N_{A_u}+N_\Lambda)\cup N_\Lambda\setminus N_{A_u}\right] \\
                &=\mathcal{T}_{-1}\big\{[(N_\Lambda\setminus N_{A_u}+N_\Lambda)^\complement\cap(N_\Lambda\setminus N_{A_u}+N_\Lambda)] \\
                &\quad\quad\quad\cup [(N_\Lambda\setminus N_{A_u}+N_\Lambda)^\complement\cap N_\Lambda\setminus N_{A_u}]\big\}=\emptyset \quad \Leftrightarrow \quad N_\Lambda\ni0 \;,
            \end{aligned}
        \end{equation}
        where we used the fact that for $U,V\subseteq\mathbb{Z}^{\leq0}$ we have
        \begin{equation}
            \label{pf:U+Vcompl cup U}
            (U+V)^\complement\cap U=\emptyset \quad\Leftrightarrow\quad U\subseteq U+V \quad\Leftrightarrow\quad V\ni0 \;.
        \end{equation}
        \item If $N_{A_r}\not\ni-1$, then
        \begin{equation}
            \begin{aligned}
                S&\subseteq \mathcal{T}_{-1}\left(N_\Lambda\setminus N_{A_u}+N_\Lambda\right)^\complement\cap\mathcal{T}_{-1}(N_\Lambda\setminus N_{A_u}+N_\Lambda)=\emptyset \;.
            \end{aligned}
        \end{equation}
    \end{enumerate}
    Hence, we find that the first term in \eqref{pf:T1ru} vanishes $\forall\Lambda,A_\mu$ if we impose condition \eqref{eq:condition on N 11}, which comes from the first case above. The second term in \eqref{pf:T1ru} is zero as well, as one can see by employing a similar argument, which results in the following chain of inclusions:
    \begin{equation}
        \begin{aligned}
            N_{E_{ru}}&\cap\left[\mathcal{T}_{-1}N_\Lambda\setminus(N_{A_r}\cup\{-1\})+N_{A_u}\right] \\
            &\subseteq N_{A_r}\setminus\{-1\}\cap\left[\mathcal{T}_{-1}N_\Lambda\setminus(N_{A_r}\cup\{-1\})+N_\Lambda\right] \\
            &\subseteq \left[\mathcal{T}_{-1}N_\Lambda\setminus (N_{A_r}\cup\{-1\})+N_\Lambda\right]^\complement\cap\left[\mathcal{T}_{-1}N_\Lambda\setminus(N_{A_r}\cup\{-1\})+N_\Lambda\right]=\emptyset \;.
        \end{aligned}
    \end{equation}
    Conversely, the two terms in the second line of \eqref{pf:T1ru} do not vanish for arbitrary $\Lambda,A_\mu$ unless
    \begin{equation}
        \label{pf:NEru1}
        \begin{cases}
            N_{E_{ru}}\cap(N_{A_r}\setminus N_{E_{ru}}+N_\Lambda)=\emptyset \\
            N_{E_{ru}}\cap\big[\mathcal{T}_{-1}N_{A_u}\setminus(N_{E_{ru}}\cup\{-1\})+N_\Lambda\big]=\emptyset \;.
        \end{cases}
    \end{equation}
    At this point, we focus on the remaining terms in \eqref{pf:LeibN_LambdA_alpha_ru}, denoting their sum as $T_{2,ru}$. After some algebra (completely analogous to what we do above for $T_{1,ru}$, with the only difference that now there is an additional coefficient $r^{-2}$ that needs to be treated carefully, via the identity \eqref{eq:P property 3}), we rearrange the terms in $T_{2,ru}$ as
    \begin{equation}
        \begin{aligned}
            T_{2,ru}&:=2\mathcal{P}_{N_{E_{ru}}}\big(-r^{-2}\partial_{[z}\pi_\mathcal{N}([\Lambda,A])_{\Bar{z}]}+r^{-2}[\pi_\mathcal{N}(\partial\Lambda)_{[z},A_{\Bar{z}]}]+[\Lambda,\mathcal{P}_{N_{E_{ru}}}(r^{-2}\partial_{[z}A_{\Bar{z}]})]\big) \\
            % &=2r^{-2}\mathcal{P}_{\mathcal{T}_2N_{E_{ru}}}\Big(-\nabla_{[z}\pi_N([\Tilde{\Lambda},\Tilde{A}])_{\Bar{z}]}+\left[\pi_N(\partial\Tilde{\Lambda})_{[z},\Tilde{A}_{\Bar{z}]}\right] \\
            % &\quad +\left[\Tilde{\Lambda},\mathcal{P}_{\mathcal{T}_2N_{E_{ru}}}\nabla_{[z}\Tilde{A}_{\Bar{z}]}\right]\Big) \\
            % &=r^{-2}\mathcal{P}_{\mathcal{T}_2N_{E_{ru}}}\Big\{-\mathcal{P}_{N_{A_{\bar z}}}\left(\left[\mathcal{P}_{N_\Lambda}\partial_z\Lambda,\tilde A_{\bar z}\right]+\left[\tilde\Lambda,\mathcal{P}_{N_{A_{\bar z}}}\partial_z A_{\bar z}\right]\right) \\
            % &\quad +\left[\mathcal{P}_{N_{A_z}\cap N_\Lambda}\partial_z\Lambda,\Tilde{A}_{\Bar{z}}\right]+\left[\Tilde{\Lambda},\mathcal{P}_{\mathcal{T}_2N_{E_{ru}}\cap N_{A_{\bar z}}}\nabla_z A_{\bar z}\right]\Big\}-(z\leftrightarrow\bar z) \\
            % &=r^{-2}\mathcal{P}_{\mathcal{T}_2N_{E_{ru}}}\Big(-\left[\mathcal{P}_{N_\Lambda}\partial_z\Lambda,\tilde A_{\bar z}\right]-\left[\tilde\Lambda,\mathcal{P}_{N_{A_{\bar z}}}\partial_z A_{\bar z}\right] \\
            % &\quad +\left[\mathcal{P}_{N_{A_z}}\partial_z\Lambda,\Tilde{A}_{\Bar{z}}\right]+\left[\Tilde{\Lambda},\mathcal{P}_{\mathcal{T}_2N_{E_{ru}}}\partial_z A_{\bar z}\right]\Big)-(z\leftrightarrow\bar z) \\
            &=-r^{-2}\mathcal{P}_{\mathcal{T}_2N_{E_{ru}}}\big([\mathcal{P}_{N_\Lambda\setminus N_{A_z}}\partial_z\Lambda,A_{\bar z}]+[\Lambda,\mathcal{P}_{N_{A_{\bar z}}\setminus\mathcal{T}_2N_{E_{ru}}}\partial_z A_{\bar z}]\big)-(z\leftrightarrow\bar z) \;.
        \end{aligned}
    \end{equation}
    One can prove that the first term is zero by means of the following sequence of inclusions
    \begin{equation}
        \begin{aligned}
            \mathcal{T}_2N_{E_{ru}}\cap\left(N_\Lambda\setminus N_{A_z}+N_{A_{\bar z}}\right)&\subseteq N_{A_z}\cap(N_\Lambda\setminus N_{A_z}+N_\Lambda) \\
            &\subseteq (N_\Lambda\setminus N_{A_z}+N_\Lambda)^\complement\cap(N_\Lambda\setminus N_{A_z}+N_\Lambda)=\emptyset \;,
        \end{aligned}
    \end{equation}
    whereas the second term vanishes iff
    \begin{equation}
        \label{pf:NEru2}
        \mathcal{T}_2N_{E_{ru}}\cap(N_{A_{\bar z}}\setminus\mathcal{T}_2N_{E_{ru}}+N_\Lambda)=\emptyset \;.
    \end{equation}
    % i.e. if
    % \begin{equation}
    %     N_{E_{ru}}\subseteq\overline{\mathcal{T}_{-2}N_{A_{\bar z}}\setminus N_{E_{ru}}+N_\Lambda}
    % \end{equation}
    Analogous results hold for the last two terms in $T_{2,ru}$ (the ones with $z\leftrightarrow\bar{z}$). By putting \eqref{pf:NEru1} and \eqref{pf:NEru2} together, and using \eqref{eq:translation distribution on sums}, one obtains equation \eqref{eq:condition on N 8}.

    The proof of the last two equations of this lemma is conceptually analogous to this last one. The $r\bar z$-component of the Leibniz rule \eqref{pf:LeibN lambda A all components} is
    \begin{equation}
        \begin{aligned}
            \mathrm{Leib}_\mathcal{N}(\Lambda,A)_{r\bar{z}}&=4\mathcal{P}_{N_{E_{r\bar z}}}\big(-\partial_{[r}\pi_\mathcal{N}([\Lambda,A])_{\bar z]}+[\pi_\mathcal{N}(\partial\Lambda)_{[r},A_{\bar z]}]+[\Lambda,\mathcal{P}_{N_{E_{r\bar z}}}\partial_{[r}A_{\bar z]}]\big) \;.
        \end{aligned}
    \end{equation}
    This expression can be expanded using \eqref{def:better_truncation}, obtaining a sum of many terms. We find it useful to group them as follows:
    \begin{subequations}
        \begin{align}
            \begin{split}\label{pf:T1rbz}
                T_{1,r\bar z}&:=2\mathcal{P}_{N_{E_{r\bar z}}\setminus\{-1\}}\big([\mathcal{P}_{N_\Lambda\setminus N_{A_{\bar z}}}\partial_{\bar z}\Lambda,A_r]-[\mathcal{P}_{\mathcal{T}_{-1}N_\Lambda\setminus(N_{A_r}\cup\{-1\})}\partial_r\Lambda,A_{\bar z}]\big) \\
                &\quad +2\delta_{E_{r\bar z},-1}\mathcal{P}_{-1}\big([\Lambda,\mathcal{P}_{N_{A_r}\setminus N_{E_{r\bar z}}}\partial_{\bar z} A_r]+[\mathcal{P}_{N_{A_r}\setminus\{-1\}}\partial_r\Lambda,A_{\bar z}] \\
                &\quad +[\Lambda,\mathcal{P}_{N_{E_{r\bar z}}\setminus\{-1\}}\partial_rA_{\bar z}]\big)
            \end{split} \\
            \begin{split}\label{pf:T2rbz}
                T_{2,r\bar z}&:=2\mathcal{P}_{N_{E_{r\bar z}}\setminus\{-1\}}\big([\Lambda,\mathcal{P}_{N_{A_r}\setminus N_{E_{r\bar z}}}\partial_{\bar z} A_r]-[\Lambda,\mathcal{P}_{\mathcal{T}_{-1}N_{A_{\bar z}}\setminus(N_{E_{r\bar z}}\cup\{-1\})}\partial_rA_{\bar z}]\big) \\
                &\quad+2\delta_{E_{r\bar z},-1}\mathcal{P}_{-1}[\mathcal{P}_{N_\Lambda\setminus N_{A_{\bar z}}}\partial_{\bar z}\Lambda,A_r] \;,
            \end{split}
        \end{align}
    \end{subequations}
    where $\delta_{E_{r\bar z},-1}\coloneqq1$ if $N_{E_{r\bar z}}\ni-1$ and it vanishes otherwise. Once again, we use a sequence of inclusions and equation \eqref{eq:P property 7} to prove that $T_{1,r\bar z}=0$ for all $\Lambda,A_\mu$. Let us start by focusing on the first term of \eqref{pf:T1rbz}. We distinguish two cases:
    \begin{enumerate}
        \item if $N_{A_r}\ni-1$, then\footnote{Use equations \eqref{eq:condition on N 4.1}, \eqref{eq:condition on N 1}, \eqref{pf:key obs 3}, \eqref{eq:condition on N 7}, \eqref{pf:U+Vcompl cup U} and \eqref{eq:condition on N 11}, in this order.}
        \begin{equation}
            \begin{aligned}
                N_{E_{r\bar z}}&\setminus\{-1\}\cap(N_\Lambda\setminus N_{A_{\bar z}}+N_{A_r}) \\
                &\subseteq \mathcal{T}_{-1}N_{A_{\bar z}}\cap\left[N_\Lambda\setminus N_{A_{\bar z}}+(\mathcal{T}_{-1}N_\Lambda\cup \{-1\})\right] \\
                &= \mathcal{T}_{-1}N_{A_{\bar z}}\cap\mathcal{T}_{-1}[(N_\Lambda\setminus N_{A_{\bar z}}+N_\Lambda)\cup N_\Lambda\setminus N_{A_{\bar z}}] \\
                &\subseteq \mathcal{T}_{-1}\big\{(N_\Lambda\setminus N_{A_{\bar z}}+N_\Lambda)^\complement\cap [(N_\Lambda\setminus N_{A_{\bar z}}+N_\Lambda)\cup N_\Lambda\setminus N_{A_{\bar z}}]\big\}=\emptyset \;.
            \end{aligned}
        \end{equation}
        The last line above is equal to the empty set since we require condition \eqref{eq:condition on N 11} to hold (the steps to prove this closely mirror those in \eqref{pf:finding condition on 0 in N_Lambda}).
        \item If $N_{A_r}\not\ni-1$, then
        \begin{equation}
            \begin{aligned}
                N_{E_{r\bar z}}&\setminus\{-1\}\cap(N_\Lambda\setminus N_{A_{\bar z}}+N_{A_r}) \\
                &\subseteq \mathcal{T}_{-1}(N_\Lambda\setminus N_{A_{\bar z}}+N_\Lambda)^\complement\cap \mathcal{T}_{-1}(N_\Lambda\setminus N_{A_{\bar z}}+N_\Lambda)=\emptyset \;.
            \end{aligned}
        \end{equation}
    \end{enumerate}
    In both cases, the first term of $T_{1,r\bar z}$ vanishes. The second term in \eqref{pf:T1rbz} is zero as well, thanks to the following chain of inclusions:
    \begin{equation}
        \begin{aligned}
            N_{E_{r\bar z}}&\setminus\{-1\}\cap\left[\mathcal{T}_{-1}N_\Lambda\setminus(N_{A_r}\cup\{-1\})+N_{A_{\bar z}}\right] \\
            &\subseteq N_{A_r}\setminus\{-1\}\cap\left[\mathcal{T}_{-1}N_\Lambda\setminus(N_{A_r}\cup\{-1\})+N_\Lambda\right] \\
            &\subseteq\left[\mathcal{T}_{-1}N_\Lambda\setminus(N_{A_r}\cup\{-1\})+N_\Lambda\right]^\complement\cap \left[\mathcal{T}_{-1}N_\Lambda\setminus(N_{A_r}\cup\{-1\})+N_\Lambda\right]=\emptyset \;.
        \end{aligned}
    \end{equation}
    We observe that the other terms in $T_{1,r\bar z}$ vanish when $N_{E_{r\bar z}}\not\ni-1$, because of the overall factor $\delta_{E_{r\bar z},-1}$. Hence, suppose $N_{E_{r\bar z}}\ni-1$. Then, from equations \eqref{eq:condition on N 4} and \eqref{eq:condition on N 11} we deduce that $N_{A_r}\ni-1$ and $N_\Lambda\ni0$. Moreover, $N_{E_{r\bar z}}$ cannot contain zero\footnote{$N_{E_{r\bar z}}$ is contained in $N_{A_r}$ by hypothesis, see \eqref{eq:condition on N 4}, so \eqref{eq:0_not_in_NAr} implies that $N_{E_{r\bar z}}\not\ni0$.}. Thus,
    \begin{equation}
        \label{pf:max of sets for rbarz component}
        \begin{aligned}
            \max N_\Lambda=0 \;, \quad \max N_{A_r}=\max N_{E_{r\bar z}}=-1\;,
        \end{aligned}
    \end{equation}
    which implies that
    \begin{equation}
        \begin{aligned}
            \max\left(N_{A_r}\setminus N_{E_{r\bar z}}+N_\Lambda\right)<-1 \ &\Rightarrow\ -1\not\in N_{A_r}\setminus N_{E_{r\bar z}}+N_\Lambda \\
            \max\left(N_{A_r}\setminus\{-1\}+N_{A_{\bar z}}\right)<-1 \ &\Rightarrow\ -1\not\in N_{A_r}\setminus\{-1\}+N_{A_{\bar z}} \\
            \max\left(N_{E_{r\bar z}}\setminus\{-1\}+N_\Lambda\right)<-1 \ &\Rightarrow \ -1\not\in N_{E_{r\bar z}}\setminus\{-1\}+N_\Lambda \;.
        \end{aligned}
    \end{equation}
    This means that each term in the second and third line of \eqref{pf:T1rbz} vanish\footnote{By virtue of \eqref{eq:P property 7}.}. Therefore, we conclude that $T_{1,r\bar z}=0$ for all $\Lambda,A_\mu$, and it does not provide any additional conditions on the power selection sets. Conversely, $T_{2,r\bar z}=0$ for all $\Lambda,A_\mu$ iff
    \begin{equation}
        \label{pf:conditionsE_rbarz}
        \begin{cases}
            N_{E_{r\bar z}}\setminus\{-1\}\cap (N_{A_r}\setminus N_{E_{r\bar z}}+N_\Lambda)=\emptyset \\
            N_{E_{r\bar z}}\setminus\{-1\}\cap [\mathcal{T}_{-1}N_{A_{\bar z}}\setminus (N_{E_{r\bar z}}\cup\{-1\})+N_\Lambda]=\emptyset \\
            \max(N_\Lambda\setminus N_{A_{\bar z}}+N_{A_r})<-1 \;.
        \end{cases}
    \end{equation}
    The first two conditions are equivalent to \eqref{eq:condition on N 9}. Given \eqref{pf:max of sets for rbarz component}, we deduce that the third equation in the system above is true only when $N_{A_{\bar z}}\ni0$, so that the maximum of the set $N_\Lambda\setminus N_{A_{\bar z}}$ is strictly less than zero. In other words, the third equation in \eqref{pf:conditionsE_rbarz} corresponds to condition \eqref{eq:condition on N 12}. This concludes the study of the $r\bar z$-component of the Leibniz rule.

    Finally, the $uz$-component of \eqref{pf:LeibN lambda A all components} is
    \begin{equation}
        \label{pf:Leib lambda A uz component}
        \begin{aligned}
            \mathrm{Leib}_\mathcal{N}(\Lambda,A)_{uz}&=4\mathcal{P}_{N_{E_{uz}}}\big(-\partial_{[u}\pi_\mathcal{N}([\Lambda,A])_{z]}+[\pi_\mathcal{N}(\partial\Lambda)_{[u},A_{z]}]+[\Lambda,\mathcal{P}_{N_{E_{uz}}}\partial_{[u}A_{z]}]\big) \\
            &=-2\mathcal{P}_{N_{E_{uz}}}\big([\mathcal{P}_{N_\Lambda\setminus N_{A_u}}\partial_u\Lambda,A_z]+[\Lambda,\mathcal{P}_{N_{A_z}\setminus N_{E_{uz}}}\partial_u A_z]\big)-(u\leftrightarrow z) \;.
        \end{aligned}
    \end{equation}
    By applying one last time the procedure described multiple times in this proof, one can show that the first term above is always zero thanks to the inclusion
    \begin{equation}
        \begin{aligned}
            N_{E_{uz}}\cap\left(N_\Lambda\setminus N_{A_u}+N_{A_z}\right)\subseteq N_{A_u}\cap\left(N_\Lambda\setminus N_{A_u}+N_\Lambda\right)=\emptyset \ ,
        \end{aligned}
    \end{equation}
    while the necessary and sufficient condition for the second term in \eqref{pf:Leib lambda A uz component} to vanish is $N_{E_{uz}}\subseteq (N_{A_u}\setminus N_{E_{uz}}+N_\Lambda)^\complement$. Similar considerations hold for the other two terms in \eqref{pf:Leib lambda A uz component}, with $u\leftrightarrow z$. Thus, with the hypotheses of this lemma, $\mathrm{Leib}_\mathcal{N}(\Lambda,A)_{uz}$ vanishes for all $\Lambda,A_\mu$ iff the power selection sets satisfy equation \eqref{eq:condition on N 10}.
\end{proof}

\subsubsection{Physically relevant solutions}
Let us highlight that the two lemmas above hold for arbitrary sets $N_I\in\mathcal{N}$, with $I\in\{\Lambda,A_\mu,E_{\mu\nu}\}$, since in none of the steps of their proofs did we rely on the explicit form of these sets. In particular, $N_I$ may contain gaps, or they may only consist of negative numbers far from zero. In such cases, however, the power selection sets describe a physically non-interesting situation, since the presence of gaps in the powers of the $1/r$ expansion and/or the absence of the leading powers (the ones closer to zero) mean that the hierarchy of subleading terms is not respected. To eliminate these scenarios and extract the physical solution from the master system, we shall introduce two additional (and quite natural) assumptions on the power selection sets:
\begin{enumerate}[label=\roman*)]
    \item $N_I$ are discrete intervals (i.e. each set contains consecutive integers without gaps). For convenience, we introduce here the following notation: for any two integers $a\leq b$, we define the discrete interval
    \begin{equation}
        \llbracket a,b\rrbracket\coloneqq [a,b]\cap\mathbb{Z}=\{a,\dots,b\} \;,
    \end{equation}
    where $[a,b]\subseteq\mathbb{R}$ is an interval in the usual sense. With this notation, the assumption we are making can be expressed as follows: the power selection sets in $\mathcal{N}$ are of the form $N_I=\llbracket a_I,b_I\rrbracket$ for some integers $a_I\leq b_I$ and for all $I$.
    \item We assume that each power selection set $N_I$ contains the maximum integer allowed, where by ``allowed" we mean compatible with \eqref{def:pi_N_contraints_on_N} and with the constraints given by lemmas \ref{lemma:conditions on N from commuting diagram} and \ref{lemma:conditions on N from Leib}. A more formal way to state this assumption is the following: for each $I$, define
    \begin{equation}
        \label{def:space of generic solutions}
        \mathfrak{N}_I^\mathrm{ms}\coloneqq\{N_I\subseteq\mathbb{Z} \ \mathrm{satisfying \ \eqref{def:pi_N_contraints_on_N}, \eqref{eq:conditions on N from commuting diagram}, \eqref{eq:conditions on N from Leibniz} \ and \ \eqref{eq:conditions on N from Leibniz last ones}}\} \;.
    \end{equation}
    In other words, $\mathfrak{N}_I^\mathrm{ms}$ is the space of solutions of the first three equations of the master system for the power selection set $N_I$. Then, we shall restrict to the subspace
    \begin{equation}
        \tilde{\mathfrak{N}}_I^\mathrm{ms}\coloneqq\Big\{\tilde{N}_I\in\mathfrak{N}_I^\mathrm{ms} \ \Big| \ \max \tilde{N}_I=\max_{N_I\in\mathfrak{N}_I}\left( \max N_I\right) \Big\} \;.
    \end{equation}
\end{enumerate}
To sum up, building on the two assumptions above, we state that the physically relevant solutions of the first three equations of the master system are given by projectors $\pi_\mathcal{N}$ with power selection sets $N_I\in\mathcal{N}$ of the form
\begin{equation}
    \label{eq:ansats for power selection sets}
    N_I=\llbracket a_I,b_I\rrbracket\in\tilde{\mathfrak{N}}_I^\mathrm{ms}\;,
\end{equation}
for some non-positive integers $a_I\leq b_I$ and for all $I\in\{\Lambda,A_\mu,E_{\mu\nu}\}$.
This means that each power selection set in the collection $\mathcal{N}$ is a discrete interval whose maximum is as close to zero as possible, consistent with \eqref{def:pi_N_contraints_on_N} and the constraints given by lemmas \ref{lemma:conditions on N from commuting diagram} and \ref{lemma:conditions on N from Leib}.

\subsubsection{Master system for \texorpdfstring{$\mathcal{N}$}{N}}
One last condition in the master system needs to be examined: the Jacobiator of the two-bracket, equation \eqref{eq:master_system_EQ4}. It turns out that a morphism $\pi_\mathcal{N}$ with power selection sets of the form \eqref{eq:ansats for power selection sets}\footnote{We emphasize that \eqref{eq:ansats for power selection sets} implies, by definition, that $N_I$ satisfies \eqref{def:pi_N_contraints_on_N}, \eqref{eq:conditions on N from commuting diagram}, \eqref{eq:conditions on N from Leibniz} and \eqref{eq:conditions on N from Leibniz last ones} $\forall I$.} solves the Jacobiator equation for $\Pi$ automatically, without the need to impose further constraints on $\mathcal{N}$. This can be proven by following a procedure that is fully analogous to the proofs of lemmas \ref{lemma:conditions on N from commuting diagram} and \ref{lemma:conditions on N from Leib}, although considerably longer and more intricate. A different strategy to show this result is to check that the projections $\pi_{\mathcal{N}_k}$ in \eqref{res:master system solution} (which, as we will see shortly, are the solutions of the first three equations of the master system) solve the Jacobiator, too. One can do that explicitly for arbitrary values of $k\in\mathbb{Z}^{\leq0}$ by employing the Mathematica package developed by one of the authors, to be presented in \cite{Giorgio:2025}.

To conclude, in this subsection we proved that the most general solution for the first three equations of the master system \eqref{eq:master_system} is a projection of the form $\Pi=\pi_\mathcal{N}$ with the sets in the collection $\mathcal{N}$ satisfying all the relations contained in \eqref{def:pi_N_contraints_on_N}, \eqref{eq:conditions on N from commuting diagram}, \eqref{eq:conditions on N from Leibniz} and \eqref{eq:conditions on N from Leibniz last ones}. For brevity, we call this set of $16$ equations the \textit{master system for $\mathcal{N}$}. We reproduce it here for convenience:
\begin{subnumcases}{\label{test}}
    N_I\subseteq\mathbb{Z}^{\leq0} \quad \forall I\in\{\Lambda,A_\mu,E_{\mu\nu}\} \tag{\ref{def:pi_N_contraints_on_N 1}} \\
    N_{E_{\mu\nu}}=N_{E_{\nu\mu}} \tag{\ref{def:pi_N_contraints_on_N 2}}\\
    N_{E_{z\bar z}}=\mathcal{T}_2N_{E_{ru}} \tag{\ref{def:pi_N_contraints_on_N 3}} \\
    N_{A_\mu}\subseteq \mathcal{T}_{-\delta_{r\mu}}N_\Lambda \tag{\ref{eq:condition on N 1}, \ref{eq:condition on N 2}, \ref{eq:condition on N 11}} \\
    N_{E_{ru}}\subseteq N_{A_r}\cap\mathcal{T}_{-1}N_{A_u}\cap\mathcal{T}_{-2}\left(N_{A_z}\cap N_{A_{\bar z}}\right) \tag{\ref{eq:condition on N 3}} \\
    N_{E_{r\bar z}}\subseteq N_{A_r}\cap\mathcal{T}_{-1}N_{A_{\bar z}} \tag{\ref{eq:condition on N 4}, \ref{eq:condition on N 4.1}, \ref{eq:condition on N 12}} \\
    N_{E_{uz}}\subseteq N_{A_u}\cap N_{A_z} \tag{\ref{eq:condition on N 5}} \\
    N_{A_r}\setminus\{-1\}\cap\big[\mathcal{T}_{-1}N_\Lambda\setminus (N_{A_r}\cup\{-1\})+N_\Lambda\big]=\emptyset \tag{\ref{eq:condition on N 6}} \\
    N_{A_\alpha}\cap\big[N_\Lambda\setminus N_{A_\alpha}+N_\Lambda\big]=\emptyset \tag{\ref{eq:condition on N 7}} \\
    N_{E_{ru}}\cap\big[\big(N_{A_r}\cup \mathcal{T}_{-1}(N_{A_u}\setminus\{0\})\cup \mathcal{T}_{-2}(N_{A_z}\cup N_{A_{\bar{z}}})\big)\setminus N_{E_{ru}}+N_\Lambda\big]=\emptyset \tag{\ref{eq:condition on N 8}} \\
    N_{E_{r\bar z}}\setminus\{-1\}\cap \big[\big(N_{A_r}\cup\mathcal{T}_{-1}(N_{A_{\bar z}}\setminus\{0\})\big)\setminus N_{E_{r\bar z}}+N_\Lambda\big]=\emptyset \tag{\ref{eq:condition on N 9}} \\
    N_{E_{uz}}\cap\big[\big(N_{A_u}\cup N_{A_z}\big)\setminus N_{E_{uz}}+N_\Lambda\big]
    =\emptyset \;. \tag{\ref{eq:condition on N 10}}
\end{subnumcases}
Moreover, we showed how to impose additional natural conditions on the power selection sets to ensure that the corresponding solutions of the master system for $\mathcal{N}$ are physically relevant. In the next paragraph, we show how to construct a family $\{\mathcal{N}_k\}_{k\in\mathbb{Z}^{\leq0}}$ of such solutions.

\subsection{Solutions of the master system for \texorpdfstring{$\mathcal{N}$}{N}}
\label{subsec:Solutions of the master system for N}
In this subsection, thanks to the observations above, we are finally able to provide the proof of the main result of \cref{sec:SDYM at null infinity via slice truncation}, that is to say equations \eqref{res:master system solution} and \eqref{res:master system solution sets}.

Let us recap the situation up to this point. The master system \eqref{eq:master_system} is solved by a projector $\pi_\mathcal{N}$ with power selection sets $N_I\in\mathcal{N}$ satisfying the master system for $\mathcal{N}$ above. Moreover, we know that the physically interesting solutions of this system are sets $N_I$ as in equation \eqref{eq:ansats for power selection sets}. In particular, $N_I$ are discrete intervals, so in order to determine them it is sufficient to find their extrema. Equation \eqref{def:space of generic solutions} tells us that the maximum of each interval $N_I$ must be the greatest possible number compatible with the system above.

The following result not only allows us to find $\max N_I$ for every $I$, but also provides a prescription to construct the solution  of the master system for $\mathcal{N}$.

\begin{lemma}[Fall-off]
    \label{lemma:fall off for Ar}
    Let $\mathcal{N}=\{N_\Lambda,N_{A_\mu},N_{E_{\mu\nu}}\}$ be a collection of power selection sets obeying condition \eqref{eq:ansats for power selection sets}. Then
    \begin{equation}
        \label{res:fall-offs}
        N_{A_r}\cup N_{E_{ru}}\cup N_{E_{r\bar z}}\not\ni-1,0 \;.
    \end{equation}
\end{lemma}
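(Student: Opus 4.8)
The plan is to collapse the three-set statement to a claim about the single set $N_{A_r}$ and then to exclude $-1$ and $0$ from it one at a time. The key preliminary input is the anti-self-duality constraint \eqref{def:pi_N_contraints_on_N 3}: since $N_{E_{z\bar z}}=\mathcal{T}_2 N_{E_{ru}}$ must itself lie in $\mathbb{Z}^{\leq0}$ by \eqref{def:pi_N_contraints_on_N 1}, every element of $N_{E_{ru}}$ is at most $-2$, i.e.\ $N_{E_{ru}}\subseteq\mathbb{Z}^{\leq-2}$; in particular $N_{E_{ru}}\not\ni-1,0$. For the other two sets I would invoke the cochain inclusions \eqref{eq:condition on N 3} and \eqref{eq:condition on N 4}, which give $N_{E_{ru}}\subseteq N_{A_r}$ and $N_{E_{r\bar z}}\subseteq N_{A_r}$; hence the union equals $N_{A_r}$, and once I establish $N_{A_r}\not\ni-1,0$ the claim for $N_{E_{r\bar z}}$ follows at once. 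The exclusion of $0$ is the easy half, already recorded in \eqref{eq:0_not_in_NAr}: were $0\in N_{A_r}$, then \eqref{eq:condition on N 1} would give $0\in\mathcal{T}_{-1}N_\Lambda$, i.e.\ $1\in N_\Lambda$, contradicting $N_\Lambda\subseteq\mathbb{Z}^{\leq0}$.

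The substantive step is $-1\notin N_{A_r}$, where I would combine the maximality built into the hypothesis \eqref{eq:ansats for power selection sets} with the Leibniz condition \eqref{eq:condition on N 8}. Maximality of $N_{E_{ru}}$, together with the bound $N_{E_{ru}}\subseteq\mathbb{Z}^{\leq-2}$ just derived (whose top value $-2$ is readily attained), forces $-2\in N_{E_{ru}}$, and through \eqref{eq:condition on N 3} this propagates to $-2\in N_{A_r}$. Then $-2\in N_{A_r}\setminus\{-1\}\subseteq\mathcal{T}_{-1}N_\Lambda$ by \eqref{eq:condition on N 1}, so $-1\in N_\Lambda$. Now argue by contradiction, assuming $-1\in N_{A_r}$. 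Writing $S:=N_{A_r}\cup\mathcal{T}_{-1}(N_{A_u}\setminus\{0\})\cup\mathcal{T}_{-2}(N_{A_z}\cup N_{A_{\bar z}})$ for the set in \eqref{eq:condition on N 8}, we have $-1\in N_{A_r}\subseteq S$ while $-1\notin N_{E_{ru}}$, so $-1\in S\setminus N_{E_{ru}}$; combined with $-1\in N_\Lambda$ this yields $-2=(-1)+(-1)\in(S\setminus N_{E_{ru}})+N_\Lambda$. But $-2\in N_{E_{ru}}$, so $-2$ lies in the intersection $N_{E_{ru}}\cap\big[(S\setminus N_{E_{ru}})+N_\Lambda\big]$, which \eqref{eq:condition on N 8} requires to be empty --- a contradiction. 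Hence $-1\notin N_{A_r}$, and the lemma follows.

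The delicate point --- and the reason the statement sits after the maximality hypothesis \eqref{eq:ansats for power selection sets} rather than following from the no-divergence conditions \eqref{def:pi_N_contraints_on_N} alone --- is that $-1$ is \emph{not} forbidden by any finiteness requirement: the choice $N_{A_r}=\{-1\}$ with $N_{E_{ru}}=\emptyset$ satisfies every inclusion and emptiness relation of the master system for $\mathcal{N}$. The power $-1$ can therefore only be ruled out once one insists that $N_{E_{ru}}$ reach its maximal value $-2$; it is precisely this guaranteed element $-2\in N_{E_{ru}}\subseteq N_{A_r}$, fed back through the cochain relation \eqref{eq:condition on N 1} to force $-1\in N_\Lambda$, that makes the Leibniz obstruction \eqref{eq:condition on N 8} fire. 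The one place demanding care is the bookkeeping of the set sum and set difference in \eqref{eq:condition on N 8}, in particular that $-1$ survives in $S\setminus N_{E_{ru}}$ because $N_{E_{ru}}$ is supported at powers $\leq-2$; the rest is a straightforward chase through the inclusions.
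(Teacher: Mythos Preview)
Your argument is correct and notably more efficient than the paper's. Both identify \eqref{eq:condition on N 8} as the obstruction that rules out $-1\in N_{A_r}$, but the paper proceeds by parametrising the full test collection by integers $m,n,l$, writing every $N_I$ as an explicit interval via \eqref{pf:inclusions for N_A_mu}--\eqref{pf:inclusions for N_Lambda}, and then carrying out a substantial interval computation (including a case split on $\min(m,n-1,l-2)$) to show that the left-hand side of \eqref{eq:condition on N 8} equals the nonempty interval $\llbracket m,-2\rrbracket$. You bypass all of this by extracting just three integers --- the guaranteed $-2\in N_{E_{ru}}$ from maximality, the derived $-1\in N_\Lambda$ via \eqref{eq:condition on N 1}, and the assumed $-1\in N_{A_r}$ sitting in $S\setminus N_{E_{ru}}$ --- and exhibiting $-2=(-1)+(-1)$ as a single witness in the forbidden intersection. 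Your route makes transparent that only the top element of $N_{E_{ru}}$ does any work. The one place where the paper is marginally more self-contained is that it runs the argument in the weaker test collection (satisfying only \eqref{def:pi_N_contraints_on_N}, \eqref{eq:conditions on N from commuting diagram}, \eqref{eq:conditions on N from Leibniz last ones}), where the attainability of $\max N_{E_{ru}}=-2$ is trivial; your ``readily attained'' implicitly relies on a full solution of the master system for $\mathcal{N}$ with $-2\in N_{E_{ru}}$, e.g.\ the collection $\mathcal{N}_0$ of \eqref{res:master system solution sets}, but this is a minor bookkeeping point.
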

\begin{proof}
    By hypothesis\footnote{See equations \eqref{def:space of generic solutions}-\eqref{eq:ansats for power selection sets}.}, the sets $N_I\in\mathcal{N}$ are discrete intervals satisfying the master system for $\mathcal{N}$\footnote{Recall that the master system for $\mathcal{N}$ consists of equations \eqref{def:pi_N_contraints_on_N}, \eqref{eq:conditions on N from commuting diagram}, \eqref{eq:conditions on N from Leibniz} and \eqref{eq:conditions on N from Leibniz last ones}.}. In particular, as discussed in the proof of \cref{lemma:conditions on N from Leib}, we know that
    \begin{equation}
        \label{eq:fall off from first two conditions}
        \begin{aligned}
            \eqref{def:pi_N_contraints_on_N 1},\eqref{def:pi_N_contraints_on_N 3} &\Rightarrow N_{E_{ru}}\not\ni-1,0 \\
            \eqref{def:pi_N_contraints_on_N 1},\eqref{eq:condition on N 1},\eqref{eq:condition on N 4} &\Rightarrow N_{A_r}\cup N_{E_{r\bar z}}\not\ni0 \;,
        \end{aligned}
    \end{equation}
    see, for example, the line of reasoning below equations \eqref{eq:0_not_in_NAr}. Thus, what is left to prove is $N_{A_r}\cup N_{E_{r\bar z}}\not\ni-1$. Notice that if $N_{A_r}$ does not contain $-1$, $N_{E_{r\bar z}}$ does not contain it either, by virtue of \eqref{eq:condition on N 4}.
    
    Demonstrating that $N_{A_r}\not\ni-1$ requires additional steps. We argue by contradiction: suppose $N_{A_r}$ does contain $-1$. We then show that the following collection of power selection sets
    \begin{equation}
        \mathcal{N}^\mathrm{test}\coloneqq\{N_\Lambda,N_{A_\mu},N_{E_{\mu\nu}}\ | \ N_I \ \mathrm{are \ physical \ solutions\ of \ \eqref{def:pi_N_contraints_on_N},\eqref{eq:conditions on N from commuting diagram},\eqref{eq:conditions on N from Leibniz last ones}}\}
    \end{equation}
    fails to satisfy the Leibniz rule condition \eqref{eq:conditions on N from Leibniz}. Consequently, if $N_{A_r}\ni-1$ there is no solution for the master system for $\mathcal{N}$.
    
    The first step of this proof by contradiction is to understand the structure of the sets $N_I\in\mathcal{N}^\mathrm{test}$. As above, the term ``physical" in the definition of $\mathcal{N}^\mathrm{test}$ means that
    \begin{equation}
        N_I=\llbracket\min N_I,\max N_I\rrbracket
    \end{equation}
    are discrete intervals with maxima as close to zero as possible. Moreover, by definition, $N_I\in \mathcal{N}^\mathrm{test}$ satisfy the no divergences condition \eqref{def:pi_N_contraints_on_N} and the cochain map condition \eqref{eq:conditions on N from commuting diagram}. In particular, they cannot contain positive integers and equation \eqref{eq:fall off from first two conditions} holds. This allows us to determine the maxima of all the power selection sets in $\mathcal{N}^\mathrm{test}$:
    \begin{equation}
        \label{pf:maxima of sets}
        \max N_I=\begin{cases}
            0 & \mathrm{if \ } I=\Lambda,A_\alpha,E_{uz} \\
            -1 & \mathrm{if \ } I=A_r,E_{r\bar z} \\
            -2 & \mathrm{if \ } I=E_{ru} \;.
        \end{cases}
    \end{equation}
    Now, observe that equations \eqref{eq:condition on N 3}-\eqref{eq:condition on N 5} and \eqref{eq:condition on N 12} can be rearranged as follows:
    \begin{equation}
        \label{pf:inclusions for N_A_mu}
        \begin{aligned}
            N_{A_r}&\supseteq N_{E_{ru}}\cup N_{E_{r\bar z}} \\
            N_{A_u}&\supseteq \mathcal{T}_1N_{E_{ru}}\cup N_{E_{uz}} \\
            N_{A_z}&\supseteq \mathcal{T}_2N_{E_{ru}}\cup N_{E_{uz}} \\
            N_{A_{\bar z}}&\supseteq \mathcal{T}_2N_{E_{ru}}\cup \mathcal{T}_1N_{E_{r\bar z}} \;,
        \end{aligned}
    \end{equation}
    while equations \eqref{eq:condition on N 1}, \eqref{eq:condition on N 2} and \eqref{eq:condition on N 11} can be rewritten as
    \begin{equation}
        \label{pf:inclusions for N_Lambda}
        N_\Lambda\supseteq\mathcal{T}_1N_{A_r}\cup N_{A_u}\cup N_{A_z}\cup N_{A_{\bar z}} \;.
    \end{equation}
    In the list of inclusions above, the ``innermost" sets are the intervals $N_{E_{\mu\nu}}$. This means that these intervals are not constrained by \eqref{eq:conditions on N from commuting diagram} and \eqref{eq:conditions on N from Leibniz last ones}, but only by the system \eqref{def:pi_N_contraints_on_N}, which fixes their maxima. Therefore, the sets $N_{E_{\mu\nu}}\in\mathcal{N}^\mathrm{test}$ are of the form
    \begin{equation}
        \label{pf:sets N_E_munu}
        N_{E_{ru}}=\llbracket m,-2\rrbracket\,, \quad N_{E_{r\bar z}}=\llbracket n,-1\rrbracket\,, \quad N_{E_{uz}}=\llbracket l,0\rrbracket
    \end{equation}
    for some integers $m\leq-2$, $n\leq-1$ and $l\leq0$. This is the starting point to find the other intervals in $\mathcal{N}^\mathrm{test}$. Indeed, using \eqref{pf:inclusions for N_A_mu} we find
    \begin{equation}
        \label{pf:trial solution N_A_mu}
        \begin{aligned}
            N_{A_r}&\supseteq\llbracket\min(m,n),-1\rrbracket \\
            N_{A_u}&\supseteq\llbracket\min(m+1,l),0\rrbracket \\
            N_{A_z}&\supseteq\llbracket\min(m+2,l),0\rrbracket \\
            N_{A_{\bar z}}&\supseteq\llbracket\min(m+2,n+1),0\rrbracket \;.
        \end{aligned}
    \end{equation}
    On the one hand, notice that the maxima of the intervals in the RHS of the inclusions above coincide with the values found in \eqref{pf:maxima of sets}. On the other hand, we do not have any further constraints on the minima. For example, the set $N_{A_r}\in\mathcal{N}^\mathrm{test}$ is of the form
    \begin{equation}
        N_{A_r}=\llbracket\min N_{A_r},-1\rrbracket\,, \quad \min N_{A_r}\leq\min(m,n) \;.
    \end{equation}
    Thus, from \eqref{pf:inclusions for N_Lambda} we obtain
    \begin{equation}
        \label{pf:trial solution N_Lambda}
        N_\Lambda\supseteq\llbracket\min_\mu\left(\min N_{A_\mu}+\delta_{\mu r}\right),0\rrbracket \;.
    \end{equation}
    Nonetheless, for simplicity, let us restrict to the case where all the inclusions in \eqref{pf:trial solution N_A_mu} and \eqref{pf:trial solution N_Lambda} reduce to equalities:
    \begin{equation}
        \begin{aligned}
            \label{pf:trial solution N_A_mu with equalities}
            N_{A_r}&=\llbracket\min(m,n),-1\rrbracket \\
            N_{A_u}&=\llbracket\min(m+1,l),0\rrbracket \\
            N_{A_z}&=\llbracket\min(m+2,l),0\rrbracket \\
            N_{A_{\bar z}}&=\llbracket\min(m+2,n+1),0\rrbracket \\
            N_\Lambda&=\llbracket\min(m+1,n+1,l),0\rrbracket \;.
        \end{aligned}
    \end{equation}
    One can prove that all the steps that follow in this proof remain valid even in the general case. Therefore, the restriction we are assuming here does not affect the generality or validity of our proof.

    To sum up, we found that the sets $N_I\in\mathcal{N}^\mathrm{test}$ are given by \eqref{pf:sets N_E_munu} and \eqref{pf:trial solution N_A_mu with equalities}. A visual representation of such intervals is provided in \autoref{fig:proof fall off}.
    \begin{figure}[h!]
        \centering
        \begin{overpic}[width=0.9\linewidth]{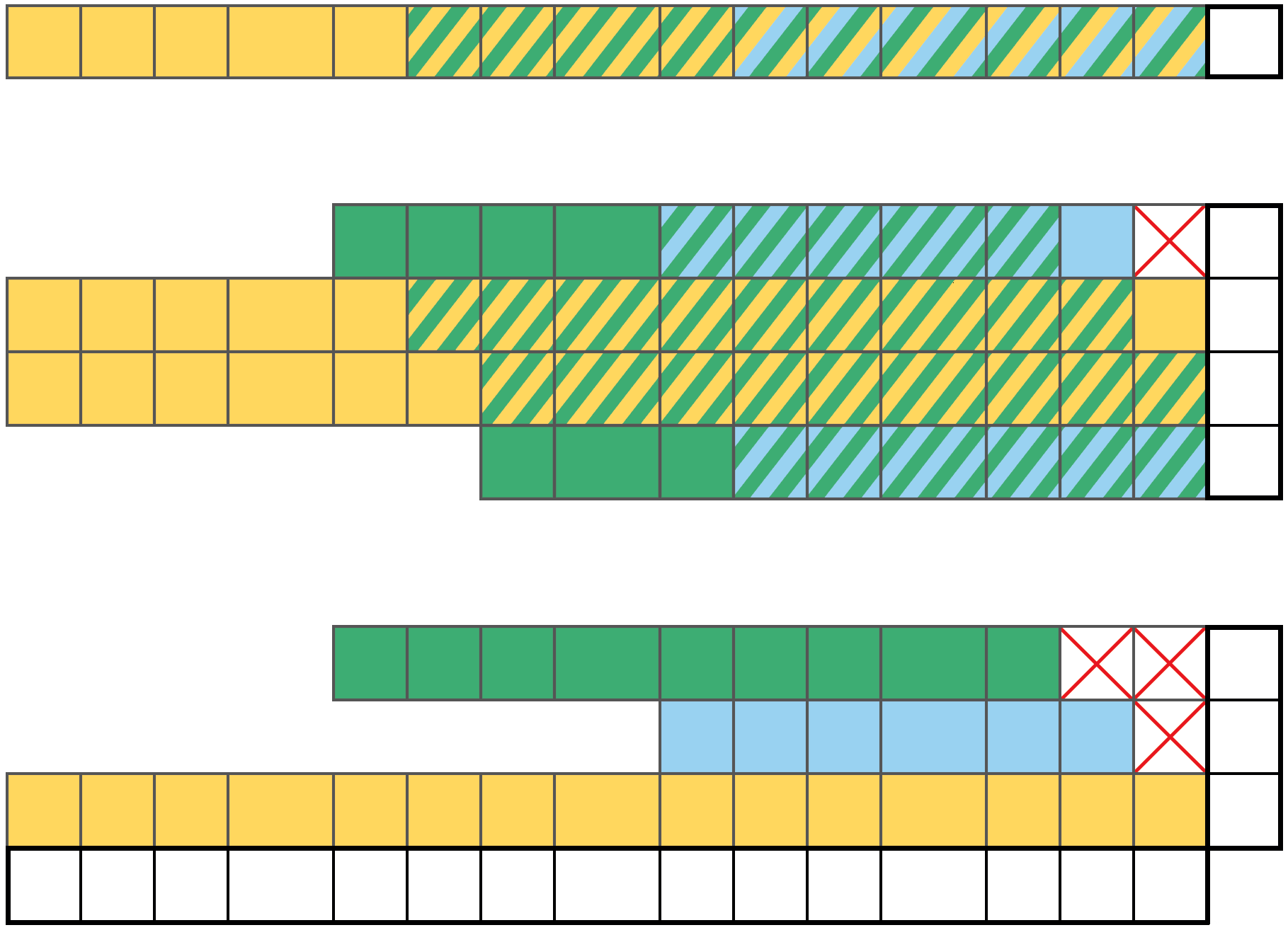} % Use "grid,tics=5"
            % Numbers
            \put(2.9,2.4){$l$}
            \put(7,2.4){$l$}
            \put(8,2.4){$+$}
            \put(10,2.4){$1$}
            \put(12.7,2.4){$l$}
            \put(13.7,2.4){$+$}
            \put(15.7,2.4){$2$}
            \put(20,2.4){$\dots$}
            \put(27.4,2.4){$m$}
            \put(32,2.4){\footnotesize$m$}
            \put(34,2.4){\footnotesize$+$}
            \put(35.8,2.4){\footnotesize$1$}
            \put(37.7,2.4){\footnotesize$m$}
            \put(39.7,2.4){\footnotesize$+$}
            \put(41.5,2.4){\footnotesize$2$}
            \put(45.5,2.4){$\dots$}
            \put(53.2,2.4){$n$}
            \put(57.4,2.4){$n$}
            \put(59,2.4){$+$}
            \put(61,2.4){$1$}
            \put(63,2.4){$n$}
            \put(64.6,2.4){$+$}
            \put(66.6,2.4){$2$}
            \put(70.7,2.4){$\dots$}
            \put(77.5,2.4){$-2$}
            \put(83.5,2.4){$-1$}
            \put(90.2,2.4){$0$}
            % Sets
            \put(94.4,19.5){$E_{ru}$}
            \put(94.4,14){$E_{r\bar z}$}
            \put(94.4,8.5){$E_{uz}$}
            \put(95,52.5){$A_r$}
            \put(95,46.7){$A_u$}
            \put(95,41){$A_z$}
            \put(95,35.5){$A_{\bar z}$}
            \put(95.6,68){$\Lambda$}
            % Arrows
            \put(96,25.7){\huge\rotatebox{90}{$\Rightarrow$}}
            \put(96,58.5){\huge\rotatebox{90}{$\Rightarrow$}}
            % Equations
            \put(86,60){\eqref{pf:inclusions for N_Lambda}}
            \put(86,27.3){\eqref{pf:inclusions for N_A_mu}}
            % Exclamation points
            \put(84.6,52.1){\Large\red{!}}
            % \put(84.6,13.6){\Large\red{!}}
        \end{overpic}
        \caption{Visual representation of the discrete intervals $N_I\in\mathcal{N}^\mathrm{test}$, see \eqref{pf:sets N_E_munu} and \eqref{pf:trial solution N_A_mu with equalities}. Each square represents an integer. The red crosses indicate the numbers forbidden by \eqref{eq:fall off from first two conditions}. The square with the exclamation mark corresponds to the element $-1\in N_{A_r}$, that is the hypothesis of the proof by contradiction. The use of colours help illustrate how the intervals $N_{A_\mu}$ and $N_\Lambda$ depend on $N_{E_{\mu\nu}}$ through equations \eqref{pf:inclusions for N_A_mu} and \eqref{pf:inclusions for N_Lambda}, respectively.}
        \label{fig:proof fall off}
    \end{figure}
    One can verify that they satisfy both equations \eqref{eq:condition on N 6} and \eqref{eq:condition on N 7}. However, crucially, they violate condition \eqref{eq:condition on N 8}, as we shall prove below. Let $\mathfrak{lhs}$ be the left hand side of equation \eqref{eq:condition on N 8}. Using the explicit expressions for $N_I$ found above, we obtain
    \begin{equation}
        \label{pf:lhs of Leibniz}
        \begin{aligned}
            \mathfrak{lhs}&\coloneqq N_{E_{ru}}\cap\big(J+N_\Lambda\big)=\llbracket m,-2\rrbracket\cap\big(J+\llbracket\min(m+1,n+1,l),0\rrbracket\big) \;,
        \end{aligned}
    \end{equation}
    where we defined
    \begin{equation}
        \label{pf:J}
        \begin{aligned}
            J&\coloneqq \big(N_{A_r}\cup \mathcal{T}_{-1}(N_{A_u}\setminus\{0\})\cup \mathcal{T}_{-2}(N_{A_z}\cup N_{A_{\bar{z}}})\big)\setminus N_{E_{ru}} \\
            &=\big(\llbracket\min(m,n),-1\rrbracket\cup\llbracket\min(m,l-1),-2\rrbracket\cup\llbracket\min(m,l-2),-2\rrbracket \\
            &\quad\cup\llbracket\min(m,n-1),-2\rrbracket\big)\setminus\llbracket m,-2\rrbracket \\
            &=\llbracket\min(m,n-1,l-2),-1\rrbracket\setminus\llbracket m,-2\rrbracket \\
            &=\{-1\}\cup\begin{cases}
                \emptyset &\mathrm{if} \ \min(m,n-1,l-2)=m \\
                \llbracket\min(n-1,l-2),m-1\rrbracket &\mathrm{otherwise}\;.
            \end{cases}
        \end{aligned}
    \end{equation}
    Notice that, when $\min(m,n-1,l-2)\neq m$, one can prove that
    \begin{equation}
        \min(n-1,l-2)+\min(m+1,n+1,l)=2\min(n,l-1) \;.
    \end{equation}
    Using this identity, we rewrite \eqref{pf:lhs of Leibniz} as
    \begin{equation}
        \begin{aligned}
            \mathfrak{lhs}&=\llbracket m,-2\rrbracket\cap\left(\llbracket\min(m,n,l-1),-1\rrbracket\cup\begin{cases}
                \emptyset \\
                \llbracket 2\min(n,l-1),m-1\rrbracket
            \end{cases}\right) \\
            &=\llbracket m,-2\rrbracket\cap\llbracket\min(m,n,l-1),-1\rrbracket=\llbracket m,-2\rrbracket \;,
        \end{aligned}
    \end{equation}
    where the two cases in the first line correspond to the same conditions of the cases in \eqref{pf:J}. We have established that $\mathfrak{lhs}\neq\emptyset$, which implies that equation \eqref{eq:condition on N 8} is violated. Thus, we have reached the anticipated contradiction. Consequently, we conclude that our initial hypothesis is false: $N_{A_r}\not\ni-1$.
    
    As a consistency check, one can show that the sets in the collection
    \begin{equation}
        \mathcal{N}^\mathrm{sol}\coloneqq\{N_I\in\mathcal{N}^\mathrm{test} \ | \ N_{A_r}\not\ni-1\}
    \end{equation}
    do satisfy the Leibniz rule condition \eqref{eq:conditions on N from Leibniz}, thus providing the most general physically relevant solution of the master system for $\mathcal{N}$. By repeating the construction outlined above, we find that $N_I\in\mathcal{N}^\mathrm{sol}$ are of the form
    \begin{equation}
        \label{pf:general solution of the master system for N}
        \begin{aligned}
            N_\Lambda&\supseteq\llbracket\min_\mu\left(\min N_{A_\mu}+\delta_{\mu r}\right),0\rrbracket \\
            N_{A_r}&\supseteq\llbracket\min(m,n),-2\rrbracket \\
            N_{A_u}&\supseteq\llbracket\min(m+1,l),0\rrbracket \\
            N_{A_z}&\supseteq\llbracket\min(m+2,l),0\rrbracket \\
            N_{A_{\bar z}}&\supseteq\llbracket\min(m+2,n+1),0\rrbracket \\
            N_{E_{ru}}&=\llbracket m,-2\rrbracket \\
            N_{E_{r\bar z}}&=\llbracket n,-2\rrbracket \\
            N_{E_{uz}}&=\llbracket l,0\rrbracket
        \end{aligned}
    \end{equation}
    for some integers $m,n\leq-2$ and $l\leq0$. They differ from the sets in $\mathcal{N}^\mathrm{test}$ only in that $\max N_{A_r}=\max N_{E_{r\bar z}}=-2$ (recall that the second equality comes from \eqref{eq:condition on N 4}). Using the same reasoning as before, one can verify that the intervals in \eqref{pf:general solution of the master system for N} solve all the inclusions in \eqref{eq:conditions on N from Leibniz}. In particular, the left hand side of \eqref{eq:condition on N 8}\footnote{In the simplified case where we take all the inclusions in \eqref{pf:general solution of the master system for N} to be equalities. The proof for the general case is analogous.} reduces to
    \begin{equation}
        \begin{aligned}
            \mathfrak{lhs}=\llbracket m,-2\rrbracket\cap\begin{cases}
                \emptyset &\mathrm{if} \ \min(m,n-1,l-2)=m \\
                \llbracket2\min(n,l-1),m-1\rrbracket & \mathrm{otherwise}
                
            \end{cases}=\emptyset \;.
        \end{aligned}
    \end{equation}
    This completes the proof of \eqref{res:fall-offs}.
\end{proof}
With \cref{lemma:fall off for Ar}, we can finally conclude this appendix. Indeed, the result above not only provides a proof of the fall-offs given in equation \eqref{A_falls_no_u_gauge}, but also allows us to explicitly construct the power selection sets that solve the master system for $\mathcal{N}$. In fact, starting from \eqref{pf:general solution of the master system for N} and taking the minimal sets\footnote{That is to say, the sets with minimum cardinality. This choice is made to avoid including terms in the $1/r$-expansion of the gauge fields (parameters) that would not contribute to the corresponding equations of motion (gauge fields).} for $N_\Lambda$ and $N_{A_\mu}$ and sets of equal cardinality for each component of the equations of motion (i.e. setting $m=n=l-2$), we obtain result \eqref{res:master system solution sets}, which in turn gives the projection \eqref{res:master system solution} solving the master system \eqref{eq:master_system}.

\section{Other proofs}
\label{sec:appendix other proofs}

\subsection{\texorpdfstring{$\mathfrak{Kin}^{(k)}$}{Kin k}}
\begin{remark}[Projected chain complex]\label{rmk:Kk bk chain complex}
    Let $\Pi_k$ with $k\leq0$ be the projection in \eqref{res:master system solution}, and let $\mathcal{K}^{(k)}$ be the graded vector space resulting from the color-stripping of $\mathcal{X}^{(k)}$ in \eqref{def:space and brackets for slices}. Moreover, define $b^{(k)}\coloneqq\Pi_kb$, where $b$ is the differential in \eqref{def:differential b in components}\footnote{See \cref{ftn:projection on kin space} for a clarification regarding $\Pi_k$ acting on the color-stripped space.}. Then $(\mathcal{K}^{(k)},b^{(k)})$ is a chain complex.
\end{remark}
\begin{proof}
    This is equivalent to show that $b^{(k)}$ is a nilpotent map of degree $-1$. Since the projection $\Pi_k$ does not affect the degree of $b$ (that is $-1$) we just need to prove that $b^{(k)}$ is nilpotent. To begin with, we write the explicit action of $b$ on gauge fields $\mathcal{A}$ and equations of motions $\mathcal{E}$ in the initial space $\mathcal{K}$, using \eqref{def:differential b in components}:
    \begin{align}
        \label{eq:bA in components}
        b(\mathcal{A})&=2\left(\partial_{(r}\mathcal{A}_{u)}-r^{-2}\partial_{(z}\mathcal{A}_{\bar z)}+r^{-1}\mathcal{A}_u\right)\in K_0 \\
        \label{eq:bE in components}
        b(\mathcal{E})_\mu&=\begin{pmatrix}
        r^{-2}\partial_z\mathcal{E}_{r\bar z}-(\partial_r+2r^{-1})\mathcal{E}_{ru} \\
        r^{-2}\partial_{\bar z}\mathcal{E}_{uz}+\partial_u\mathcal{E}_{ru} \\
        \partial_r\mathcal{E}_{uz}+\partial_z\mathcal{E}_{ru} \\
        \partial_u\mathcal{E}_{r\bar z}-\partial_{\bar z}\mathcal{E}_{ru}
    \end{pmatrix}\in K_1 \;.
    \end{align}
    From the definition of $b^{(k)}$, employing \cref{lemma:properties of projector P} and the idempotence of $\Pi_k$, one can prove that
    \begin{align}
        \label{eq:bkA in components}
        b^{(k)}(\mathcal{A})&=\mathcal{P}_{\llbracket k-1,-2\rrbracket}b(\mathcal{A})+2\mathcal{P}_{-1}(r^{-1}\mathcal{A}_u)\in K_0^{(k)} \\
        \label{eq:bkE in components}
        b^{(k)}(\mathcal{E})_\mu&=\begin{pmatrix}
        \mathcal{P}_{\llbracket k-2,-3\rrbracket}b(\mathcal{E})_r \\
        \mathcal{P}_{\llbracket k-1,-2\rrbracket}b(\mathcal{E})_u \\
        \mathcal{P}_{\llbracket k,-2\rrbracket}b(\mathcal{E})_z \\
        \mathcal{P}_{\llbracket k-1,-2\rrbracket}b(\mathcal{E})_{\bar z}
    \end{pmatrix}\in K_1^{(k)}\;,
    \end{align}
    where now $\mathcal{A}\in K_{1}^{(k)}$ and $\mathcal{E}\in K_2^{(k)}$. For instance, the $r$-component of $b^{(k)}(\mathcal{E})$ results from
    \begin{equation}
        \begin{split}
            b^{(k)}(\mathcal{E})_r&=(\Pi_kb(\mathcal{E}))_r=(\Pi_kb(\Pi_k\mathcal{E}))_r \\
            &=\mathcal{P}_{N_{A_r}^k}\Big(r^{-2}\partial_z\mathcal{P}_{N_{E_{r\bar z}}^k}\mathcal{E}_{r\bar z}-(\partial_r+2r^{-1})\mathcal{P}_{N_{E_{ru}}^k}\mathcal{E}_{ru}\Big) \\
            &=\mathcal{P}_{N_{A_r}^k\cap\mathcal{T}_{-2}N_{E_{r\bar z}}^k}(r^{-2}\partial_z\mathcal{E}_{r\bar z})-\mathcal{P}_{N_{A_r}^k\cap\mathcal{T}_{-1}N_{E_{ru}}^k}[(\partial_r+2r^{-1})\mathcal{E}_{ru}] \\
            &=\mathcal{P}_{\llbracket k-2,-4\rrbracket}(r^{-2}\partial_z\mathcal{E}_{r\bar z})-\mathcal{P}_{\llbracket k-2,-3\rrbracket}[(\partial_r+2r^{-1})\mathcal{E}_{ru}] \\
            &=\mathcal{P}_{\llbracket k-2,-3\rrbracket}b(\mathcal{E})_r-r^{-2}\partial_z\cancel{\mathcal{P}_{-1}\mathcal{E}_{r\bar z}} \;,
        \end{split}
    \end{equation}
    where the second term in the last line vanishes\footnote{Indeed, $\mathcal{P}_{-1}\mathcal{E}_{r\bar z}=\mathcal{P}_{\{-1\}\cap\llbracket k-2,-2\rrbracket}\mathcal{E}_{r\bar z}=0$.}. Clearly, $b^{(k)}(b^{(k)}(u))$ vanishes for all $u\in K_0^{(k)}\cup K_1^{(k)}$, by degree. Using \eqref{eq:bkA in components}-\eqref{eq:bkE in components} we find
    \begin{equation}
        \begin{split}
            b^{(k)}(b^{(k)}(\mathcal{E}))&=\mathcal{P}_{\llbracket k-1,-2\rrbracket}b(b^{(k)}(\mathcal{E}))+2r^{-1}\cancel{\mathcal{P}_{0}b^{(k)}(\mathcal{E})_u} \\
            &=2\mathcal{P}_{\llbracket k-1,-2\rrbracket}\Big(\partial_{(r}b^{(k)}(\mathcal{E})_{u)}-r^{-2}\partial_{(z}b^{(k)}(\mathcal{E})_{\bar z)}+r^{-1}b^{(k)}(\mathcal{E})_u\Big) \\
            &=\mathcal{P}_{\llbracket k-1,-2\rrbracket}\Big(\mathcal{P}_{\llbracket k-2,-3\rrbracket}\partial_{r}b(\mathcal{E})_u+\mathcal{P}_{\llbracket k-2,-3\rrbracket}\partial_{u}b(\mathcal{E})_r-\mathcal{P}_{\llbracket k-3,-4\rrbracket}(r^{-2}\partial_{z}b(\mathcal{E})_{\bar z}) \\
            &\quad -\mathcal{P}_{\llbracket k-2,-4\rrbracket}(r^{-2}\partial_{\bar z}b(\mathcal{E})_z)+2\mathcal{P}_{\llbracket k-2,-3\rrbracket}(r^{-1}b(\mathcal{E})_u)\Big) \\
            &=2\mathcal{P}_{\llbracket k-1,-3\rrbracket}\Big(\partial_{(r}b(\mathcal{E})_{u)}-r^{-2}\partial_{(z}b(\mathcal{E})_{\bar z)}+r^{-1}b(\mathcal{E})_u\Big)+2r^{-2}\cancel{\mathcal{P}_{-1}\partial_{(z}b(\mathcal{E})_{\bar z)}}
        \end{split}
    \end{equation}
    where the cancelled terms vanish because of \eqref{eq:bkE in components}\footnote{For instance, $\mathcal{P}_{0}b^{(k)}(\mathcal{E})_u=\mathcal{P}_{\{0\}\cap\llbracket k-1,-2\rrbracket}b(\mathcal{E})_u=0$.}. Therefore,
    \begin{equation}
        b^{(k)}(b^{(k)}(\mathcal{E}))=\mathcal{P}_{\llbracket k-1,-3\rrbracket}b(b(\mathcal{E}))=0\;,
    \end{equation}
    which follows from the nilpotency of $b$.
\end{proof}

\begin{remark}[Projected box operator]\label{rmk:box k}
    Let $\Pi_k$ with $k\leq0$ be the projection in \eqref{res:master system solution}, and let $\Box^{(k)}$ be the operator defined in \eqref{def: projected box and kin bracket}. Then $\Box^{(k)}=\Pi_k\Box$.
\end{remark}
\begin{proof}
    We observe that
    \begin{equation}
        \begin{aligned}
            \Box^{(k)}-\Pi_k\Box&=[b^k,m_1^k]-\Pi_k[b,m_1]=\Pi_k(b\Pi_km_1+\cancel{m_1\Pi_kb}-bm_1-\cancel{m_1b})=\Pi_kb\Pi_k^\complement m_1 \;,
        \end{aligned}
    \end{equation}
    where we used the definition of $m_1^{(k)}$, $b^{(k)}$ and the equation $\Pi_km_1\Pi_k=\Pi_km_1$ (the color-stripped differential $m_1^{(k)}$ inherits such identity from the commuting diagram \eqref{diagram:physical_requirement} of the differential $B_1^{(k)}$). Thus, it suffices to prove that $\Pi_kb\Pi_k^\complement m_1(u)$ vanishes for all\footnote{Notice that $\Pi_kb\Pi_k^\complement m_1(u)=0$ for all $u\in K_2^{(k)}$, by degree, so there is no need to consider this case.} $u\in\mathcal{K}^{(k)}$. This can be done using the solution \eqref{res:master system solution} for $\Pi_k$ and the properties of the projection $\mathcal{P}$ listed in \cref{sec:appendix_proofs}. For instance, consider the case where $u=\Lambda\in K_0^{(k)}$:
    \begin{equation}
        \begin{split}
            \Pi_kb\Pi_k^\complement m_1(\Lambda)&=\pi_{\mathcal{N}^k}\nabla^\mu\pi_{(\mathcal{N}^k)^\complement}\partial_\mu\Lambda=-\mathcal{P}_{N_\Lambda^k}\big\{\big(\partial_u\mathcal{P}_{(N_{A_r}^k)^\complement}\partial_r\Lambda+r\leftrightarrow u\big) \\
            &\quad -r^{-2}\big(\partial_{\bar z}\mathcal{P}_{(N_{A_z}^k)^\complement}\partial_z\Lambda+z\leftrightarrow\bar z\big)+2r^{-1}\mathcal{P}_{(N_{A_u}^k)^\complement}\partial_u\Lambda\big\} \\
            &=-2\partial_r\partial_u\mathcal{P}_1\Lambda+2r^{-2}\partial_z\partial_{\bar z}\mathcal{P}_{\{1,2\}}\Lambda+2r^{-1}\partial_u\mathcal{P}_{1}\Lambda=0 \;,
        \end{split}
    \end{equation}
    where we used the fact that $\mathcal{P}_n\Lambda=0$ for all positive $n$.
\end{proof}

\subsection{\texorpdfstring{$\mathfrak{Kin}^{(k,f)}$}{Kin k f}}
\label{subsec:kin k f}
In \cref{sec:SDYM at null infinity via slice truncation}, our goal was to define an $L_\infty$ algebra to encode SDYM near $\mathcal{I}$ by suitably truncating the $r$-expansion of the elements \eqref{def:general_expansion_for_psi} in the initial graded vector space $\mathcal{X}$. We showed that the only way to achieve this is by defining subspaces $\mathcal{X}^{(k)}\subseteq\mathcal{X}$ through the projection $\Pi_k$ and the power selection sets in $\mathcal{N}_k$, on which the required $L_\infty$ algebras can be constructed - see equations \eqref{res:master system solution}-\eqref{def:space and brackets for slices}. In particular, the maxima of the power selection sets (which correspond to the fall-off) are fixed and common to all the slices $\mathfrak{S}_k$, as we emphasize in \cref{fig:generic slice}. Our goal here is to construct strict kinematic algebras by further projecting components of the $r$-expansion to zero.

To this end, we construct the refined slices $\mathfrak{S}_{k,f}$ which are defined by projecting components of the $r$-expansion of $\psi\in\mathcal{X}$ to zero, and introducing a projection that modifies the functional dependence of the gauge parameters $\Lambda\in X_{-1}$. Indeed, if one makes a gauge choice for the fields, the gauge choice imposes constraints on the gauge parameters. To clarify this point, let us consider an explicit example. As we show in \cref{subsec:strict kin algebra for k=-1}, one way to refine $\mathfrak{S}_{-1}$ in order to strictify its kinematic algebra is to modify the fall-off of $A_u$ from $0$ to $-1$, i.e. to replace the power selection set of $A_u$ as follows:
\begin{equation}
    \label{eq:replacement falloff -1}
    N_{A_u}^{(-1)}=\{-2,-1,0\} \ \rightarrow \ N_{A_u}^{(-1,f)}:=\{-2,-1\}\;.
\end{equation}
Let $\Pi_{-1,f}$ be the projection defined as in \eqref{res:master system solution}, but with the substitution above for the set $N_{A_u}^{(-1)}$. As expected, the triple $(\mathcal{X}^{(-1,f)},B_1^{(-1,f)},B_2^{(-1,f)})$ constructed accordingly\footnote{That is, using the definition \eqref{def:space and brackets for slices} with $\Pi_{-1}$ replaced by $\Pi_{-1,f}$.} does not form an $L_\infty$ algebra. Nevertheless, one can prove that the brackets $B_i^{(-1,f)}$ satisfy the $L_\infty$ relations up to terms proportional to $\partial_u\Lambda^{(0)}$. This means that enforcing $\Lambda^{(0)}$ to be independent of $u$ guarantees the restoration of an $L_\infty$ algebra, thus resolving the problem. Therefore, we require that the projection $\Pi_{-1,f}$ acts on $\Lambda\in X_{-1}$ as
\begin{equation}
    \label{eq:non trivial transf on lambda}
    \Lambda(r,u,z,\bar z)=\sum_{n=-\infty}^\infty r^n\Lambda^{(n)}(u,z,\bar z) \ \xrightarrow{\Pi_{-1,f}} \ \Lambda^{(0)}(z,\bar z)+\sum_{n=-2}^{-1} r^n\Lambda^{(n)}(u,z,\bar z)
\end{equation}
Physically, this procedure has a clear interpretation: the replacement \eqref{eq:replacement falloff -1} simply amounts to setting $A_u^{(0)}=0$, and the consistency of the gauge transformation then implies that $\partial_u\Lambda^{(0)}$ vanishes.

Building on this reasoning and generalizing it, we now provide the explicit definition of the projector $\Pi_{k,f}$ introduced in \cref{Strict kin algebras on refined slices}, which is used to construct the refined slices $\mathfrak{S}_{k,f}$. The first ingredient that we need is the following collection of power selection sets:
\begin{equation}
    \mathcal{N}_{k,f}\coloneqq\{N^{(k,f)}_\Lambda,N^{(k,f)}_{A_\mu},N^{(k)}_{E_{\mu\nu}}\subseteq\mathbb{Z}^{\leq0}\}\;,
\end{equation}
where the sets $N_I^{(k,f)}\subseteq N_I^{(k)}$, $I=\Lambda,A_\mu$, are defined as
\begin{align}
    N_\Lambda^{(k,f)}&\coloneqq\{k-1,\dots,f_r^k+1,0\} \\
    N_{A_\mu}^{(k,f)}&\coloneqq\begin{cases}
       \{k-2,\dots,f_r^k\} &\mu=r \\
       \{k-1,\dots,f_u^k\} &\mu=u \\
       \{k,\dots,f_z^k\} &\mu=z \\
       \{k-1,\dots,f_{\bar z}^k\} &\mu=\bar z
    \end{cases}\;, \tag{\ref{res:master system solution sets gauge field strict}}
\end{align}
and the new variable fall-off are given by
\begin{equation}
    f_\mu^k\coloneqq\max N_{A_\mu}^{(k,f)}\in\mathbb{Z} \quad \text{s.th.} \quad \begin{cases}
        k-2\leq f^k_r\leq-2 \\
        k-1\leq f^k_u,f^k_{\bar z}\leq0 \\
        k\leq f^k_z\leq0
    \end{cases}. \tag{\ref{eq:def inequalities for f}}
\end{equation}
The second ingredient is an operator $\mathfrak{D}_{k,f}$ that generalizes \eqref{eq:non trivial transf on lambda}, by appropriately modifying the functional dependence of the gauge parameter. To construct such operator, which of course depends on the modified fall-offs $f_\mu^k$, we need one more definition. Let $m\in\mathbb{Z}$ and $\alpha\in\{u,z,\bar z\}$. We introduce a map $\mathfrak{d}_{\alpha|m}\in\mathrm{End}(\mathcal{C}_r^\infty(\mathcal{M}))$ that acts on $g\in\mathcal{C}_r^\infty(\mathcal{M})$ as
\begin{equation}
    \mathfrak{d}_{\alpha|m}(g)\coloneqq\sum_{n=-\infty}^m\mathcal{P}_ng+\sum_{n=m+1}^\infty\mathcal{P}_ng\Big|_{\partial_\alpha\mathcal{P}_ng=0} \;.
\end{equation}
We list below two examples to clarify how this map works:
\begin{equation}
    \begin{split}
        \mathfrak{d}_{u|4}(g)&=\sum_{n=-\infty}^4r^ng^{(n)}(u,z,\bar z)+\sum_{n=5}^\infty r^ng^{(n)}(z,\bar z) \\
        \mathfrak{d}_{z|-1}(\mathfrak{d}_{u|4}(g))&=\sum_{n=-\infty}^{-1}r^ng^{(n)}(u,z,\bar z)+\sum_{n=0}^{4}r^ng^{(n)}(u,\bar z)+\sum_{n=5}^\infty r^ng^{(n)}(\bar z) \;.
    \end{split}
\end{equation}
The operator $\mathfrak{D}_{k,f}$ is then defined as
\begin{equation}
    \mathfrak{D}_{k,f}\coloneqq\prod_\alpha\mathfrak{d}_{\alpha|f_\alpha^k}\in\mathrm{End}(\mathcal{C}_r^\infty(\mathcal{M}))\;.
\end{equation}
At this point, we can finally state the definition of the desired projection from $\mathcal{X}$ to the refined slice $\mathfrak{S}_{k,f}$ (with $k\leq0$ and fall-off $f_\mu^k$):
\begin{equation}
\label{res:master system solution strict}
    \Pi_{k,f}(\psi)\coloneqq\begin{cases}
        \mathfrak{D}_{k,f}\circ\pi_{\mathcal{N}_{k,f}}(\psi) &\psi\in X_{-1} \\
        \pi_{\mathcal{N}_{k,f}}(\psi) &\text{otherwise}
    \end{cases}
\end{equation}
To provide a couple of examples, we now explicitly write how $\Pi_{k,f}$ projects the gauge fields when selecting the first two fall-offs from \cref{table: refined slices for k leq -2}:
\begin{equation}
    \begin{aligned}
        \mathfrak{D}_{k,f_4}(\Lambda)&=\sum_{n=k+2}^{0}r^n\Lambda^{(n)}(z)+\sum_{n=k-1}^{k+1}r^n\Lambda^{(n)}(u,z,\bar z)\in X_{-1}^{(k,f_4)} \\
        \mathfrak{D}_{k,f_5}(\Lambda)&=\Lambda^{0}(u,\bar z)+r^{k-1}\Lambda^{(k-1)}(u,z,\bar z)\in X_{-1}^{(k,f_5)} \;.
    \end{aligned}
\end{equation}

Thanks to the projection above, for any non positive integer $k$ and choice of fall-off $f_\mu^k$ satisfying \eqref{eq:def inequalities for f}, it is possible to construct a kinematic algebra $\mathfrak{Kin}^{(k,f)}$ on the refined slice $\mathfrak{S}_{k,f}$ following a procedure analogous to that presented in \cref{subsec:Kinematic algebras on slices}. Hence, we can prove the following result.
\begin{result}[Strict kinematic algebra]
    \label{res:strict kin algebra}
    Let the $\mathrm{BV}_{\infty}^{\Box^{(k,f)}}$ algebra $\mathfrak{Kin}^{(k,f)}$ be the kinematic algebra of the refined slice $\mathfrak{S}_{k,f}$, for some $k\in\mathbb{Z}^{\leq0}$ and $f_\mu^k$ satisfying \eqref{eq:def inequalities for f}. Then
    \begin{equation}
        \mathfrak{Kin}^{(k,f)} \text{\ is strict} \quad \Leftrightarrow \quad f_\mu^k \text{\ satisfy}\quad \begin{cases}
            f_r^k+f_u^k+f_z^k<k \\
            f_r^k+f_u^k+f_{\bar z}^k<k-1 \\
            f_r^k+f_z^k+f_{\bar z}^k<k \\
            f_u^k+f_z^k+f_{\bar z}^k<k+1 \\
            f_r^k+f_u^k<k \\
            f_r^k+f_{\bar z}^k<k \\
            f_u^k+f_z^k<k+2 \\
            f_z^k+f_{\bar z}^k<k+2 \;.
        \end{cases} \tag{\ref{res:strict kin algebras on refined slices}}
    \end{equation}
    Note that the system above is symmetric under $f_r^k\leftrightarrow f_z^k-2$ and $f_u^k\leftrightarrow f_{\bar z}^k$. 
\end{result}
\begin{proof}
    The kinematic algebra $\mathfrak{Kin}^{(k,f)}$ of a refined slice $\mathfrak{S}_{k,f}$ with fall-off $f_\mu^k$ is strict if the corresponding trilinear bracket $\theta_3^{(k,f)}$ vanishes (see discussion at the end of \cref{sub:kin alg gen}). Thus, proving the above result is equivalent to finding the conditions such that
    \begin{equation}
        \theta_3^{(k,f)}(u_1,u_2,u_3)=0 \quad \forall u_1,u_2,u_3\in\mathcal{K}^{(k,f)}
    \end{equation}
    To study when this happens, recall that
    \begin{equation}
        \theta_3^{(k,f)}(u_1,u_2,u_3)=\Pi_{k,f}\theta_3(u_1,u_2,u_3)=\Pi_{k,f}\theta_3(\Pi_{k,f}u_1,\Pi_{k,f}u_2,\Pi_{k,f}u_3) \;.
    \end{equation}
    We start by considering $\theta_3^{(k,f)}$ acting on three gauge fields. The $r$-component is
    \begin{equation}
        \label{eq:theta3kfAAA}
        \theta_{3}^{(k,f)}(\cA_{1},\cA_{2},\cA_{3})_r=\mathcal{P}_{N_{A_r}^{(k,f)}}\big[
        6r^{-2}\big(\mathcal{P}_{N_{A_r}^{(k,f)}}\mathcal{A}_{[1\,r}\big)\big(\mathcal{P}_{N_{A_z}^{(k,f)}}\mathcal{A}_{2\,z}\big)\big(\mathcal{P}_{N_{A_{\bar z}}^{(k,f)}}\mathcal{A}_{3]\,\bar z}\big)\big]
    \end{equation}
    To determine when this expression vanishes, we employ a generalisation of identity \eqref{eq:P property 7} of \cref{sec:appendix_proofs}. Let $i\in\mathbb{N}$ and $L,M_i\subseteq\mathbb{Z}$, one can show that\footnote{The procedure is analogous to the proof of \eqref{eq:P property 7}.}
    \begin{equation}
        \label{eq:projector on a product}
        \mathcal{P}_L\Big(\prod_i\mathcal{P}_{M_i}g_i\Big)=0 \quad\forall g_i\in C_r^\infty(\mathcal{M}) \quad \Leftrightarrow\quad L\cap\sum_iM_i=\emptyset \;.
    \end{equation}
    Thus, we conclude that \eqref{eq:theta3kfAAA} vanishes for all $\mathcal{A}_1,\mathcal{A}_2,\mathcal{A}_3\in K_1^{(k,f)}$ iff
    \begin{equation}
        \begin{aligned}
            \emptyset&=\mathcal{T}_2 N_{A_r}^{(k,f)}\cap\big( N_{A_r}^{(k,f)}+ N_{A_z}^{(k,f)}+ N_{A_{\bar z}}^{(k,f)}\big) \\
            &=\llbracket k,f_r^k+2\rrbracket\cap\big(\llbracket k-2,f_r^k\rrbracket+\llbracket k,f_z^k\rrbracket+\llbracket k-1,f_{\bar z}^k\rrbracket\big) \\
            &=\llbracket k,f_r+2\rrbracket\cap\llbracket 3k-3,f_r^k+f_z^k+f_{\bar z}^k\rrbracket \;.
        \end{aligned}
    \end{equation}
    We note that $f_r^k+2\geq 3k-3$, since by hypothesis\footnote{See eq. \eqref{eq:def inequalities for f}.} $f_r^k\geq k-2$ and $k\leq0$. Thus, the above equation is true iff
    \begin{equation}
        f_r^k+f_z^k+f_{\bar z}^k<k \;.
    \end{equation}
    Repeating the same procedure for the other components, we find that
    \begin{equation}
        \label{eq:condition th3AAA vanishing}
        \theta_3^{(k,f)}(\mathcal{A}_1,\mathcal{A}_2,\mathcal{A}_3)=0 \quad \forall\mathcal{A}_1,\mathcal{A}_2,\mathcal{A}_3\in K_1^{(k,f)} \quad \Leftrightarrow \quad \begin{cases}
            \sum_{\mu\neq r}f_\mu^k<k+1 \\
            \sum_{\mu\neq u}f_\mu^k<k \\
            \sum_{\mu\neq z}f_\mu^k<k-1 \\
            \sum_{\mu\neq \bar z}f_\mu^k<k
        \end{cases}\;,
    \end{equation}
    which prove the first four inequalities of the system in \eqref{res:strict kin algebras on refined slices}. Now, let us turn to the study of $\theta_3^{(k,f)}(\mathcal{E},\mathcal{A}_1,\mathcal{A}_2)$, where $\mathcal{E}\in K_2^{(k,f)}$. Its $ru$-component is
    \begin{equation}
        \begin{aligned}
            \theta_{3}^{(k,f)}(\mathcal{E},\cA_{1},\cA_{2})_{ru}&=\mathcal{P}_{N_{E_{ru}}^{(k)}}\big\{2r^{-2}\big[\big(\mathcal{P}_{N_{E_{uz}}^{(k)}}\mathcal{E}_{uz}\big)\big(\mathcal{P}_{N_{A_r}^{(k,f)}}\mathcal{A}_{[1\,r}\big)\big(\mathcal{P}_{N_{A_{\bar z}}^{(k,f)}}\mathcal{A}_{2]\,\bar z}\big) \\
            &\quad\quad\quad\quad\quad\quad -\big(\mathcal{P}_{N_{E_{r\bar z}}^{(k)}}\mathcal{E}_{r\bar z}\big)\big(\mathcal{P}_{N_{A_u}^{(k,f)}}\mathcal{A}_{[1\,u}\big)\big(\mathcal{P}_{N_{A_z}^{(k,f)}}\mathcal{A}_{2]\,z}\big)\big]\big\}
        \end{aligned}
    \end{equation}
    The above expression vanishes $\forall\mathcal{A}_1,\mathcal{A}_2,\in K_1^{(k,f)}$ and $\forall\mathcal{E}\in K_2^{(k,f)}$ iff
    \begin{equation}
        \begin{aligned}
            \emptyset&=\mathcal{T}_2N_{E_{ru}}^{(k)}\cap\big(N_{E_{uz}}^{(k)}+ N_{A_r}^{(k,f)}+ N_{A_{\bar z}}^{(k,f)}\big) \\
            &=\llbracket k,0\rrbracket\cap\big(\llbracket k,0\rrbracket+\llbracket k-2,f_r^k\rrbracket+\llbracket k-1,f_{\bar z}^k\rrbracket\big) \\
            &=\llbracket k,0\rrbracket\cap\llbracket 3k-3,f_r^k+f_{\bar z}^k\rrbracket
        \end{aligned}
    \end{equation}
    and
    \begin{equation}
        \begin{aligned}
            \emptyset&=\mathcal{T}_2N_{E_{ru}}^{(k)}\cap\big(N_{E_{r\bar z}}^{(k)}+ N_{A_u}^{(k,f)}+ N_{A_z}^{(k,f)}\big) \\
            &=\llbracket k,0\rrbracket\cap\big(\llbracket k-2,-2\rrbracket+\llbracket k-1,f_u^k\rrbracket+\llbracket k,f_z^k\rrbracket\big) \\
            &=\llbracket k,0\rrbracket\cap\llbracket 3k-3,f_u^k+f_z^k-2\rrbracket
        \end{aligned}
    \end{equation}
    Since $k\leq0$, the above equation hold, respectviley, iff
    \begin{equation}
        f_r^k+f_{\bar z}^k<k \quad \text{and} \quad f_u^k+f_z^k\leq k+2 \;.
    \end{equation}    
    Applying the same procedure to the other two independent components, we obtain that
    \begin{equation}
        \label{eq:condition th3EAA vanishing}
        \theta_3^{(k,f)}(\mathcal{E},\mathcal{A}_1,\mathcal{A}_2)=0 \quad \forall\mathcal{A}_1,\mathcal{A}_2\in K_1^{(k,f)},\mathcal{E}\in K_2^{(k,f)} \quad \Leftrightarrow \quad \begin{cases}
            f_r^k+f_u^k<k \\
            f_r^k+f_{\bar z}^k<k \\
            f_u^k+f_z^k<k+2 \\
            f_z^k+f_{\bar z}^k<k+2
        \end{cases},
    \end{equation}
    thus completing the proof of the system in \eqref{res:strict kin algebras on refined slices}.
\end{proof}

\section{Homotopy transfer}
\label{app:Homotopy transfer}
In this Appendix, we introduce the notion of homotopy transfer, a mathematical tool used to transfer algebraic (non-linear) structure from one cochain complex to another. In the following, we illustrate how homotopy transfer works for $L_{\infty}$ algebras. However, one can extend this concept to other algebras such as $C_{\infty}$ or BV$_{\infty}$ algebras. 

Given an $L_{\infty}$ algebra $(\mathcal{X}, B_{1}, B_{n})$ defined on a graded vector space $\mathcal{X}$ with differential $B_{1}$ and multilinear maps $B_{n}$ with $n>1$, we wish to find an $L_{\infty}$ algebra structure on a subspace $\bar{\mathcal{X}}\subseteq \mathcal{X}$ equipped with a differential $\bar B_{1}$. To that end, following our discussion on quasi-isomorphisms in \ref{sec:cochain}, we introduce two cochain maps: a projection map $\pi:\mathcal{X}\to \bar{\mathcal{X}}$ and an inclusion map $\iota: \bar{\mathcal{X}}\to \mathcal{X}$, i.e.,
\be
\pi \circ B_{1} = \bar B_{1} \circ \pi\;,\;\;\; \iota \circ \bar B_{1} = B_{1} \circ \iota\;,
\ee
which trivially obey
\be
\label{invtriv}
\pi\circ\iota = 1_{\bar{\mathcal{X}}}\;,
\ee
where $1_{\bar{\mathcal{X}}}$ is the identity map on $\bar{\mathcal{X}}$. From now on we drop the $\circ$ symbol to denote composition to simplify notation. Homotopy transfer further requires that these maps obey
\be
\iota\pi - 1_{\mathcal{X}} = [B_{1}, h]\;,
\ee
which tells us that the inclusion map $\iota$ is the inverse of the projection map $\pi$ \textit{up to homotopy}, namely that the failure of $\iota$ to be the inverse of $\pi$ is governed by the differential $B_{1}$ and a linear \textit{homotopy map} $h$ of degree $-1$ that obeys the so-called side conditions
\be
\pi h = h \iota = h^{2} = 0\;.
\ee

Even though all the above relations only involve linear maps, these relations, together with the multilinear brackets of the $B_{n}$ of the algebra defined on $\mathcal{X}$, lead to the definition of the multilinear maps $\bar B_{n}: \bar{\mathcal{X}}^{\otimes n}\to \bar{\mathcal{X}}$ that define a new $L_{\infty}$ algebra on the subspace $\bar{\mathcal{X}}$ which we present here up to trilinear order:
\begin{equation}
\begin{split}
\bar B_{2}(\bar x_{1}, \bar x_{2}) &= \pi B_{2}(\iota \bar x_{1}, \iota \bar x_{2})\;,\\
\bar B_{3}(\bar x_{1},\bar x_{2},\bar x_{3}) &= \pi B_{3}(\iota \bar x_{1}, \iota \bar x_{2}, \iota \bar x_{3}) + \pi B_{2}(hB_{2}(\iota \bar x_{1}, \iota \bar x_{2}),\iota\bar x_{3}) \\
&\quad+(-1)^{\bar x_{1}(\bar x_{2}+\bar x_{2})} \pi B_{2}(hB_{2}(\iota \bar x_{2}, \iota \bar x_{3}),\iota\bar x_{1})\\ 
&\quad+(-1)^{\bar x_{3}(\bar x_{1}+\bar x_{2})} \pi B_{2}(hB_{2}(\iota \bar x_{3}, \iota \bar x_{1}),\iota\bar x_{2})\;.
\end{split}
\end{equation}
For a general expression for all maps $\bar B_{n}$ and a more detailed explanation of homotopy transfer refer to \cite{Arvanitakis:2020rrk}. Notice that, even without a map $B_{3}$ in our original algebra defined on $\mathcal{X}$, in principle, a non-trivial three-bracket $\bar B_{3}$ may exist. This implies that if one starts with a strict algebra on $\mathcal{X}$, the transferred algebraic structure on a subspace $\bar{\mathcal{X}}$ is not generally strict. In the following, we perform homotopy transfer of self-dual Yang-Mills theory to the slices that we considered in the main body of the paper.

\subsection{Homotopy map}
\begin{result}
    The homotopy map $h\colon\mathcal{X}\rightarrow\mathcal{X}$ is given by
    \begin{equation}
        \label{res:homotopy map h}
        \begin{aligned}
            h(A)&=-\mathcal{P}_{\mathbb{N}^+}\frac{1}{\partial_r}A_r \\
            h(E)_r&=\frac{1}{2\partial_{\bar z}}\mathcal{P}_{-1}E_{r\bar z} \\
            h(E)_u&=\mathcal{P}_{\mathbb{N}^+}\frac{1}{\partial^2}\left[2\partial_uE_{ru}-\frac{1}{r^2}\left(\frac{1}{\partial_r}\partial_u\partial_zE_{r\bar z}-\partial_{\bar z}E_{uz}\right)\right] \\
            h(E)_z&=\frac{1}{\partial_u}\left(\partial_zh(E)_u-\frac{1}{2}\mathcal{P}_{\mathbb{N}^+}E_{uz}\right) \\
            h(E)_{\bar z}&=-\mathcal{P}_{\mathbb{N}^+}\frac{1}{2\partial_r}E_{r\bar z} \;,
        \end{aligned}
    \end{equation}
    where $\mathbb{N}^+\coloneqq\mathbb{N}\setminus\{0\}$ and $\partial^2$ is defined in \eqref{def:partial square}.
\end{result}

\begin{proof}
    We want to prove that the map $h$ above satisfies
    \begin{equation}
        \label{pf:h defining relation}
        [B_1,h]\psi=\left(\iota\pi-1_{\mathcal{X}}\right)\psi \;.
    \end{equation}
    for all $\psi\in\mathcal{X}$. To do so, it is useful to first study how the projector $\mathcal{P}$ commutes with the inverse of the derivatives. For any set $M\in\mathbb{Z}$, the following identities hold:
    \begin{equation}
        \mathcal{P}_{M\setminus\{0\}}\frac{1}{\partial_r}=\frac{1}{\partial_r}\mathcal{P}_{T_{-1}M\setminus\{-1\}}\;, \quad \mathcal{P}_M\frac{1}{\partial_\alpha}=\frac{1}{\partial_\alpha}\mathcal{P}_M \;.
    \end{equation}
    The second equation is trivial, since $\partial_\alpha^{-1}$ leaves the $r$-expansion invariant. To prove the first one, we notice that for any integer $m\neq0$
    \begin{equation}
        \mathcal{P}_m\frac{1}{\partial_r}f=\sum_{n\in\mathbb{Z}}\frac{\delta_{m,n+1}}{n+1}r^{n+1}f^{(n)}=\frac{1}{m}r^mf^{(m-1)}=\frac{1}{\partial_r}(r^{m-1}f^{(m-1)})=\frac{1}{\partial_r}\mathcal{P}_{m-1}f \;,
    \end{equation}
    where $\mathcal{P}_m$ is the projection defined in \eqref{pf:definition Pn}. In particular, this implies that expressions of the form $\mathcal{P}_{M\setminus\{0\}}\partial_r^{-1}f$ are well defined. For instance, $h(A)$ in \eqref{res:homotopy map h} can be rewritten as
    \begin{equation}
        \begin{aligned}
            h(A)&=-\mathcal{P}_{\mathbb{N}^+}\frac{1}{\partial_r}A_r=-\sum_{m\in\mathbb{N}^+}\frac{1}{m}\mathcal{P}_m(rA_r) \;.
        \end{aligned}
    \end{equation}
    Equation \eqref{pf:h defining relation} with $\psi=\Lambda\in X_{-1}$ reads
    \begin{equation}
        \begin{aligned}
            [B_1,h]\Lambda&=-\mathcal{P}_{\mathbb{N}^+}\frac{1}{\partial_r}\partial_r\Lambda=-\mathcal{P}_{\mathbb{N}^+}\Lambda=\left(\mathcal{P}_{\mathbb{Z}^{\leq0}}-\mathcal{P}_{\mathbb{Z}}\right)\Lambda=\left(\iota\pi-1_{\mathcal{X}}\right)\Lambda \;,
        \end{aligned}
    \end{equation}
    where we used \eqref{eq:P property 2} and the fact that $\mathcal{P}_{\mathbb{Z}^{\leq0}}\Lambda=\iota\pi\Lambda$ and $\mathcal{P}_{\mathbb{Z}}\Lambda=\Lambda$. For $\psi=A_\mu\in X_0$ the right hand side of \eqref{pf:h defining relation} has components
    \begin{equation}
        \begin{aligned}
            \left(\iota\pi-1_{\mathcal{X}}\right)A_r&=\left(\mathcal{P}_{\mathbb{Z}^{<0}\setminus\{-1\}}-\mathcal{P}_{\mathbb{Z}}\right)A_r=-\mathcal{P}_{\mathbb{N}\cup\{-1\}}A_r \\
            \left(\iota\pi-1_{\mathcal{X}}\right)A_\alpha&=\left(\mathcal{P}_{\mathbb{Z}^{\leq0}}-\mathcal{P}_{\mathbb{Z}}\right)A_\alpha=-\mathcal{P}_{\mathbb{N}^+}A_\alpha \;,
        \end{aligned}
    \end{equation}
    while the components of the left hand side are
    \begin{equation}
        \begin{aligned}
            ([B_1,h]A)_r&=-\partial_r\mathcal{P}_{\mathbb{N}^+}\frac{1}{\partial_r}A_r+\frac{2}{\partial_{\bar z}}\mathcal{P}_{-1}\partial_{[r}A_{\bar z]}=-\mathcal{P}_{\mathbb{N}\cup\{-1\}}A_r \\
            ([B_1,h]A)_u&=-\mathcal{P}_{\mathbb{N}^+}\frac{1}{\partial_r}\partial_uA_r \\
            &\quad +\mathcal{P}_{\mathbb{N}^+}\frac{4}{\partial^2}\bigg[\partial_u\partial_{[r}A_{u]}-\frac{1}{r^2}\left(\frac{1}{\partial_r}\partial_u\partial_z\partial_{[r}A_{\bar z]}-\partial_{\bar z}\partial_{[u}A_{z]}-\partial_u\partial_{[z}A_{\bar z]}\right)\bigg] \\
            &=\mathcal{P}_{\mathbb{N}^+}\left\{\frac{2}{\partial^2}\left[\left(-\partial_r\partial_u+\frac{1}{r^2}\partial_z\partial_{\bar z}\right)\left(\frac{1}{\partial_r}\partial_uA_r-A_u\right)\right]-\frac{1}{\partial_r}\partial_uA_r\right\} \\
            &=-\mathcal{P}_{\mathbb{N}^+}A_u \\
            ([B_1,h]A)_z&=\mathcal{P}_{\mathbb{N}^+}\left[\frac{1}{\partial_u}\left(\frac{1}{\partial_r}\partial_u\partial_zA_r-\partial_zA_u-2\partial_{[u}A_{z]}\right)-\frac{1}{\partial_r}\partial_zA_r\right]=-\mathcal{P}_{\mathbb{N}^+}A_z \\
            ([B_1,h]A)_{\bar z}&=-\mathcal{P}_{\mathbb{N}^+}\frac{1}{\partial_r}\partial_{\bar z}A_r-2\mathcal{P}_{\mathbb{N}^+}\frac{1}{\partial_r}\partial_{[r}A_{\bar z]}=-\mathcal{P}_{\mathbb{N}^+}A_{\bar z} \;.
        \end{aligned}
    \end{equation}
    This proves that the homotopy map $h$ in \eqref{res:homotopy map h} satisfies its defining equation \eqref{pf:h defining relation}. 
\end{proof}

\subsection{Higher brackets}
\begin{lemma}
    \begin{equation}
        \label{pf:diagram B_2}
        \begin{tikzcd}[row sep=10mm]
            \mathcal{X}\times\mathcal{X}\arrow{r}{B_2}&\mathcal{X}\\
            \bar{\mathcal{X}}\times\bar{\mathcal{X}}\arrow{u}{\iota\times\iota}\arrow{r}{\bar B_2}&\bar{\mathcal{X}}\arrow{u}{\iota}
        \end{tikzcd} \quad \textit{commutes} \quad \Leftrightarrow \quad \mathrm{Jac}(\bar B_2)=\pi\mathrm{Jac}(B_2)\iota
    \end{equation}
    Moreover, if \eqref{pf:diagram B_2} and the second condition $h \iota = 0$ are satisfied and $\mathrm{Jac}(B_2)=0$, then $\bar B_3=0$.
\end{lemma}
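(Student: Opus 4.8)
The plan is to reduce both claims to the transferred-bracket formula $\bar B_2 = \pi B_2(\iota\otimes\iota)$ together with the homotopy relation $\iota\pi - 1_{\mathcal{X}} = [B_1,h]$ and the side condition $h\iota = 0$. First I would establish a single algebraic identity from which everything follows. Expanding the Jacobiator of $\bar B_2$ and inserting $\iota\pi = 1_{\mathcal{X}} + [B_1,h]$ into the inner composition gives
$$\mathrm{Jac}(\bar B_2)(\bar x_1,\bar x_2,\bar x_3) = \pi\,\mathrm{Jac}(B_2)(\iota\bar x_1,\iota\bar x_2,\iota\bar x_3) + \Delta(\bar x_1,\bar x_2,\bar x_3),$$
where $\Delta$ is the graded cyclic sum of $\pi B_2\big([B_1,h]B_2(\iota\bar x_1,\iota\bar x_2),\iota\bar x_3\big)$. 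Since $\pi$ and $\iota$ have degree zero, the Koszul signs in the two cyclic sums match term by term, so no sign bookkeeping survives this step, and the desired identity $\mathrm{Jac}(\bar B_2) = \pi\,\mathrm{Jac}(B_2)\iota$ is exactly the statement $\Delta = 0$.

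For the forward implication, the commuting square reads $B_2(\iota\otimes\iota) = \iota\bar B_2 = \iota\pi B_2(\iota\otimes\iota)$, i.e. $(\iota\pi - 1_{\mathcal{X}})B_2(\iota\otimes\iota) = [B_1,h]B_2(\iota\otimes\iota) = 0$; each summand of $\Delta$ then carries this vanishing factor, so $\Delta = 0$ and the Jacobi relation holds. The converse is the delicate point, and I expect it to be the main obstacle: $\Delta = 0$ constrains only the graded cyclic combination, and the vanishing of such a combination does not, by itself, force the individual insertion $[B_1,h]B_2(\iota\otimes\iota)$ to vanish. I would resolve this by invoking the closure result already proved in the main text, $\bar B_2 = B_2|_{\bar{\mathcal{X}}\otimes\bar{\mathcal{X}}}$, which is precisely the commuting-square condition; in the present setting the square is therefore available independently, and the equivalence holds with the forward implication carrying the computational content.

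For the ``moreover'' statement I would argue directly from the three-bracket formula. Since SDYM has $B_3 = 0$, homotopy transfer collapses to
$$\bar B_3(\bar x_1,\bar x_2,\bar x_3) = \pi B_2\big(hB_2(\iota\bar x_1,\iota\bar x_2),\iota\bar x_3\big) + (\text{graded cyclic}).$$
When the diagram commutes, $B_2(\iota\bar x_1,\iota\bar x_2) = \iota\,\bar B_2(\bar x_1,\bar x_2) \in \mathrm{Im}\,\iota$, so the side condition $h\iota = 0$ yields $hB_2(\iota\bar x_1,\iota\bar x_2) = h\iota\,\bar B_2(\bar x_1,\bar x_2) = 0$; hence every term vanishes and $\bar B_3 = 0$. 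Notably this vanishing uses only the commuting square and $h\iota = 0$: the hypothesis $\mathrm{Jac}(B_2) = 0$ enters instead through the equivalence, certifying $\mathrm{Jac}(\bar B_2) = 0$ and confirming that $(\bar{\mathcal{X}},\bar B_1,\bar B_2)$ is a dgLa consistent with the absence of a higher bracket.
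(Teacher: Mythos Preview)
Your proposal is correct and follows essentially the same route as the paper: for the forward implication you use the commuting square to replace $\iota\pi B_2(\iota\otimes\iota)$ by $B_2(\iota\otimes\iota)$ inside the Jacobiator, and for $\bar B_3=0$ you rewrite $hB_2(\iota\otimes\iota)=h\iota\bar B_2$ and invoke $h\iota=0$. Your added remark that the converse does not follow from $\Delta=0$ alone, and must instead rely on the independently established closure $\bar B_2=B_2|_{\bar{\mathcal X}\otimes\bar{\mathcal X}}$, is a fair observation; the paper's own proof likewise only argues the forward direction explicitly.
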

\begin{proof}
    The above diagram commutes iff $\iota\bar B_2=B_2\iota$, which is equivalent to $\iota\pi B_2\iota=B_2\iota$ (recalling that by definition $\bar B_2=\pi B_2\iota$). In other words, $\mathrm{Im}(B_2\iota)\subseteq\mathrm{Ker}[B_1,h]$. Thus
    \begin{equation}
        \begin{aligned}
            \mathrm{Jac}(\bar B_2)(\bar\psi_1,\bar\psi_2,\bar\psi_3)&=\bar B_2(\bar B_2(\bar\psi_1,\bar\psi_2),\bar\psi_3)+ {\dots}=\pi B_2(\iota\pi B_2(\iota\bar\psi_1,\iota\bar\psi_2),\iota\bar\psi_3)+\dots \\
            &=\pi B_2(B_2(\iota\bar\psi_1,\iota\bar\psi_2),\iota\bar\psi_3)+{\dots}=\pi\mathrm{Jac}(B_2)(\iota\bar\psi_1,\iota\bar\psi_2,\iota\bar\psi_3)
        \end{aligned}
    \end{equation}
    for all $\bar\psi_1,\bar\psi_2,\bar\psi_3\in\bar{\mathcal{X}}$. 
    From this, we see that if $B_2$ satisfies the Jacobi identity, then also $\bar B_2$ does. In this case $[\bar B_1,\bar B_3]=0$, that does not imply, in general, $\bar B_3=0$. However, from
    \begin{equation}
        \begin{aligned}
            \bar B_3(\bar\psi_1,\bar\psi_2,\bar\psi_3)&=\pi B_2(hB_2(\iota\bar\psi_1,\iota\bar\psi_2)\iota\bar\psi_3)=\pi B_2(h\iota\pi B_2(\iota\bar\psi_1,\iota\bar\psi_2)\iota\bar\psi_3) \;,
        \end{aligned}
    \end{equation}
    we deduce that $\bar B_3$ vanishes if the homotopy map satisfies the side condition $h\iota=0$.
\end{proof}

\begin{result}
    \label{res:bar B_3=0}
    $\bar B_3=0$
\end{result}
\begin{proof}
    Thanks to the above lemma and since $B_2$ satisfies the Jacobi identity and the side condition $h \iota = 0$ is true, it is sufficient to prove that $\iota\pi B_2\iota=B_2\iota$. For example, $\forall\bar\Lambda^1,\bar\Lambda^2\in\bar X_{-1}$
    \begin{equation}
        \begin{aligned}
            \iota\pi B_2(\iota\bar\Lambda^1,\iota\bar\Lambda^2)&=-\iota\mathcal{P}_{\mathbb{Z}^{\leq0}}[\iota\bar\Lambda^1,\iota\bar\Lambda^2]=-\iota\mathcal{P}_{\mathbb{Z}^{\leq0}}[\mathcal{P}_{\mathbb{Z}^{\leq0}}\iota\bar\Lambda^1,\mathcal{P}_{\mathbb{Z}^{\leq0}}\iota\bar\Lambda^2] \\
            &=-[\mathcal{P}_{\mathbb{Z}^{\leq0}}\iota\bar\Lambda^1,\mathcal{P}_{\mathbb{Z}^{\leq0}}\iota\bar\Lambda^2]=-[\iota\bar\Lambda^1,\iota\bar\Lambda^2]=B_2(\iota\bar\Lambda^1,\iota\bar\Lambda^2)
        \end{aligned}
    \end{equation}
    where in the second equality we used $\bar\Lambda^i=\mathcal{P}_{\mathbb{Z}^{\leq0}}\bar\Lambda^i$ while in the third we used \eqref{eq:P property 5} with $\mathbb{Z}^{\leq0}+\mathbb{Z}^{\leq0}=\mathbb{Z}^{\leq0}$. All the other cases can be proved in the same way, recalling \eqref{eq:P property 2} and that $I_i+I_2=I_2$ for $i=1,2$, with $I_1=\mathbb{Z}^{\leq0}$ and $I_2=\mathbb{Z}^{\leq0}\setminus\{-1\}$. In particular, $\bar B_2=B_2|_{\bar{\mathcal{X}}}$ where $\bar{\mathcal{X}}$ is viewed as a subset of $\mathcal{X}$.
\end{proof}

\subsection{Kinematic algebra from homotopy transfer}
Since the operators in \eqref{homotopy transfer diagram} are color blinded, the linear structure of the algebra $(\mathcal{K},m_1,m_2)$ obtained by colour stripping and the corresponding homotopy transfer are preserved.
\begin{equation}
    \label{homotopy transfer diagram for K}
    \begin{tikzcd}[row sep=10mm]
        0\arrow{r}{m_1}&K_0\arrow{d}{\pi}\arrow{r}{m_1}&K_1\arrow{d}{\pi}\arrow{r}{m_1}&K_2\arrow{d}{\pi}\arrow{r}{m_1}&0\\
        0\arrow{r}{\bar m_1}&\bar K_0\arrow[shift left=2mm]{u}{\iota}\arrow{r}{\bar m_1}&\bar K_1\arrow[shift left=2mm]{u}{\iota}\arrow{r}{\bar m_1}&\bar K_2\arrow[shift left=2mm]{u}{\iota}\arrow{r}{\bar m_1}&0
    \end{tikzcd}
\end{equation}
where $\bar m_1\coloneqq\pi m_1\iota$. The above observation together with $\iota\bar B_2=b_2\iota$ imply $\iota\bar m_2=m_2\iota$. Moreover, we define 
\begin{equation}
    \label{def:bar b}
    \bar b:=\pi b\iota,
\end{equation}
\begin{lemma}
    $\bar\Box=\pi\Box\iota$
\end{lemma}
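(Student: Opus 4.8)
The statement to establish is $\bar\Box=\pi\Box\iota$, where $\bar\Box:=[\bar b,\bar m_1]$ is the box operator of the transferred algebra, assembled from $\bar b=\pi b\iota$ (eq.~\eqref{def:bar b}) and $\bar m_1=\pi m_1\iota$ exactly as $\Box=[b,m_1]$ is assembled from $b$ and $m_1$ (cf.~\eqref{def: box} and the slice analogue \eqref{def: projected box and kin bracket}). The plan is to expand the graded commutator $\bar\Box=\bar b\bar m_1+\bar m_1\bar b$, insert the definitions of $\bar b$ and $\bar m_1$, and compare directly with $\pi\Box\iota=\pi b m_1\iota+\pi m_1 b\iota$. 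A one-line rearrangement then isolates the obstruction to equality as a sandwiched copy of $\iota\pi-1_{\mathcal{K}}$:
\begin{equation}
\bar\Box-\pi\Box\iota=\pi b\,(\iota\pi-1_{\mathcal{K}})\,m_1\iota+\pi m_1\,(\iota\pi-1_{\mathcal{K}})\,b\iota\;.
\end{equation}
Care is needed here only with the graded signs: $b$, $m_1$, $h$ carry degrees $-1$, $+1$, $-1$, so the relevant graded commutators $[b,m_1]$ and $[m_1,h]$ are ordinary anticommutators, and the two contributions above indeed assemble with a $+$ sign.

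Next I would feed in the homotopy relation $\iota\pi-1_{\mathcal{K}}=[m_1,h]=m_1h+hm_1$, which holds for the color-stripped transfer data by \eqref{homotopy transfer diagram for K} and the surrounding discussion. Using nilpotency $m_1^2=0$ (eq.~\eqref{eq:nilpotency of differential C_infty}), both bracketed combinations collapse to the single shape $[m_1,h]m_1=m_1hm_1$ and $m_1[m_1,h]=m_1hm_1$, leaving
\begin{equation}
\bar\Box-\pi\Box\iota=\pi b\,m_1hm_1\,\iota+\pi m_1hm_1\,b\,\iota\;.
\end{equation}

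Finally, each remaining term is killed by one side condition together with one cochain-map identity. For the first term I would rewrite the trailing factor $m_1\iota=\iota\bar m_1$ (the inclusion is a cochain map) and then invoke $h\iota=0$; for the second I would rewrite the leading factor $\pi m_1=\bar m_1\pi$ (the projection is a cochain map) and then invoke $\pi h=0$. Both terms vanish, yielding $\bar\Box=\pi\Box\iota$. I do not anticipate a genuine obstacle, since the computation is short and purely formal; the only points demanding attention are the bookkeeping of the graded signs in the first display and confirming that the color-stripped maps $\pi$, $\iota$, $h$ really do inherit the homotopy relation, the side conditions $\pi h=h\iota=0$, and the two intertwining relations $m_1\iota=\iota\bar m_1$, $\pi m_1=\bar m_1\pi$ from their colorful counterparts. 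All of these are already recorded in \eqref{homotopy transfer diagram for K} and the preceding results, so no new input is required.
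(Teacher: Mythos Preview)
Your argument is correct, but it takes a considerably longer path than the paper's. The paper observes that the cochain-map identities $\bar m_1\pi=\pi m_1$ and $\iota\bar m_1=m_1\iota$ alone suffice: writing $\bar\Box=\bar m_1\bar b+\bar b\bar m_1=\bar m_1\pi b\iota+\pi b\iota\bar m_1$ and applying those two identities directly gives $\pi m_1 b\iota+\pi b m_1\iota=\pi\Box\iota$ in one line. Your route instead expands $\bar m_1=\pi m_1\iota$ first, which manufactures the obstruction $\iota\pi-1_{\mathcal{K}}$ and then requires the homotopy relation $\iota\pi-1=[m_1,h]$, nilpotency of $m_1$, and both side conditions $h\iota=0$, $\pi h=0$ to kill it---only to invoke at the very end the same two intertwining identities the paper used at the outset. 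What your approach buys is robustness: it would still go through in a setting where $\bar m_1$ is \emph{defined} as $\pi m_1\iota$ rather than given as an independent differential satisfying the cochain conditions, and it shows explicitly how the full homotopy-transfer data conspires. What the paper's approach buys is economy: only the cochain-map property is needed, so the homotopy $h$ and its side conditions are irrelevant for this particular lemma.
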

\begin{proof}
    From the homotopy transfer \eqref{homotopy transfer diagram for K} we know $\bar m_1\pi=\pi m_1$ and $\iota\bar m_1=m_1\iota$, so
    \begin{equation}
        \begin{aligned}
            \bar\Box=\bar m_1\bar b+\bar b\bar m_1=\bar m_1\pi b\iota+\pi b\iota\bar m_1=\pi m_1b\iota+\pi bm_1\iota=\pi\Box\iota
        \end{aligned}
    \end{equation}
\end{proof}
\begin{lemma}
    \label{lm:commuting of b iota}
    \begin{equation}
        \label{pf:diagram b}
        \begin{tikzcd}[row sep=10mm]
            \mathcal{K}\arrow[leftarrow]{r}{b}&\mathcal{K}\\
            \bar{\mathcal{K}}\arrow{u}{\iota}\arrow[leftarrow]{r}{\bar b}&\bar{\mathcal{K}}\arrow{u}{\iota}
        \end{tikzcd} \quad \textit{commutes}
    \end{equation}
\end{lemma}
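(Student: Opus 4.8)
The plan is to reformulate the claim as a closure statement and verify it by direct power-counting. Since $\bar b:=\pi b\iota$ by \eqref{def:bar b}, the diagram \eqref{pf:diagram b} commutes precisely when $b\iota=\iota\bar b=\iota\pi b\iota$, i.e. when $(1_{\mathcal{K}}-\iota\pi)\,b\iota=0$. Writing $\pi^\complement:=1_{\mathcal{K}}-\iota\pi$ for the complementary projection and recalling that $\mathrm{Im}\,\iota=\bar{\mathcal{K}}=\mathrm{Ker}\,\pi^\complement$, this is exactly the requirement that $\bar{\mathcal{K}}$ be closed under the action of $b$, namely $\mathrm{Im}(b\iota)\subseteq\bar{\mathcal{K}}$. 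This is the same logical structure already exploited in the proof of the analogous statement for $B_2$ (around \eqref{pf:diagram B_2}) and in \cref{rmk:box k}, so the remaining work is to check closure componentwise using the explicit action of $b$ in \eqref{eq:bA in components}–\eqref{eq:bE in components}, the fall-off sets \eqref{res:physical fall-off}, and the properties of $\mathcal{P}$ from \cref{lemma:properties of projector P}.

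First I would treat the degree map $b\colon\bar K_1\to\bar K_0$ using \eqref{eq:bA in components}. For $\bar{\mathcal{A}}\in\bar K_1$ one has $N_{A_r}=\mathbb{Z}^{<0}\setminus\{-1\}$ and $N_{A_\alpha}=\mathbb{Z}^{\leq0}$, and the target set is $N_\Lambda=\mathbb{Z}^{\leq0}$. Tracking powers term by term with \eqref{eq:P property 3} for the explicit $r^{-1},r^{-2}$ factors and \eqref{eq:P property 4}, \eqref{eq:P property 5} for the derivatives: the term $\partial_{(r}\mathcal{A}_{u)}$ produces powers in $\mathbb{Z}^{<0}\subseteq\mathbb{Z}^{\leq0}$, the term $r^{-2}\partial_{(z}\mathcal{A}_{\bar z)}$ lands in $\mathbb{Z}^{\leq-2}$, and $r^{-1}\mathcal{A}_u$ lands in $\mathbb{Z}^{\leq-1}$. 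Hence $b(\bar{\mathcal{A}})\in\mathbb{Z}^{\leq0}=N_\Lambda$, so $b(\bar{\mathcal{A}})\in\bar K_0$.

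Next I would treat $b\colon\bar K_2\to\bar K_1$ using the four components of \eqref{eq:bE in components}, where $N_{E_{ru}}=N_{E_{r\bar z}}=\mathbb{Z}^{<0}\setminus\{-1\}$ and $N_{E_{uz}}=\mathbb{Z}^{\leq0}$. The $u$, $z$ and $\bar z$ components are routine: each term is a derivative of an $\mathcal{E}$-component possibly multiplied by $r^{-2}$, and a short power count (again via \eqref{eq:P property 3}–\eqref{eq:P property 5}) places them all in $\mathbb{Z}^{\leq0}=N_{A_\alpha}$. The delicate case is the $r$-component $r^{-2}\partial_z\mathcal{E}_{r\bar z}-(\partial_r+2r^{-1})\mathcal{E}_{ru}$, which must land in $N_{A_r}=\mathbb{Z}^{<0}\setminus\{-1\}$ and in particular avoid the excluded power $-1$. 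Since $\max N_{E_{r\bar z}}=\max N_{E_{ru}}=-2$, the term $r^{-2}\partial_z\mathcal{E}_{r\bar z}$ contributes powers $\leq-4$, while $\partial_r\mathcal{E}_{ru}$ and $2r^{-1}\mathcal{E}_{ru}$ both contribute powers $\leq-3$ (the potential $-1$ mode in $\partial_r\mathcal{E}_{ru}$ would require an $r^0$ coefficient in $\mathcal{E}_{ru}$, which is absent, and is killed by \eqref{eq:P-1_vanishes} in any case). Thus $b(\mathcal{E})_r$ has powers in $\mathbb{Z}^{\leq-3}\subseteq\mathbb{Z}^{<0}\setminus\{-1\}$, and $b(\bar{\mathcal{E}})\in\bar K_1$.

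I expect the only genuine obstacle to be precisely this bookkeeping in the $r$-component, where the competing power shifts coming from $\partial_r$, $r^{-1}$, and $r^{-2}$ could \emph{a priori} reintroduce the forbidden $-1$ mode; the argument that they cannot rests on the exclusion of $-1$ in the fall-off sets \eqref{res:physical fall-off} together with the identity $\mathcal{P}_{-1}\partial_r=0$ of \eqref{eq:P-1_vanishes}. With both nontrivial degrees handled, $\mathrm{Im}(b\iota)\subseteq\bar{\mathcal{K}}$ follows, hence $\pi^\complement b\iota=0$ and $b\iota=\iota\bar b$, so \eqref{pf:diagram b} commutes. The remaining degrees vanish trivially for degree reasons, completing the argument along the same lines as \cref{rmk:Kk bk chain complex}.
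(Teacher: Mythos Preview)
Your proof is correct and follows essentially the same strategy as the paper: both reformulate the commutativity as the closure statement $\mathrm{Im}(b\iota)\subseteq\bar{\mathcal{K}}$ and then verify it componentwise. The paper carries this out via projector manipulations (writing $\iota\pi b\iota\bar A$ and showing the outer $\mathcal{P}_{\mathbb{Z}^{\leq0}}$ acts as the identity), whereas you do the equivalent direct power-count; you are also more thorough in that you treat the $\bar K_2\to\bar K_1$ case explicitly, which the paper leaves implicit.
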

\begin{proof}
    The above diagram commutes if $\iota\bar b=b\iota$. From the definition of $\bar b$, this is equivalent to $\iota\pi b\iota=b\iota$, i.e. $\iota\pi=1_{\mathcal{K}}$ on $\mathrm{Im}(b\iota)$. We make use of \eqref{eq:P_m properties} to conclude that $\forall\bar A\in\bar K_1$
    \begin{equation}
        \begin{aligned}
            \iota\pi b\iota\bar A&=-\mathcal{P}_{\mathbb{Z}^{\leq0}}\left(\nabla ^\mu\bar A_\mu\right)=\mathcal{P}_{\mathbb{Z}^{\leq0}}\left(\nabla_{\{r}\bar A_{u\}}-r^{-2}\nabla_{\{z}\bar A_{\bar z\}}\right) \\
            &=\mathcal{P}_{\mathbb{Z}^{\leq0}}\left[\mathcal{P}_{\mathbb{Z}^<\setminus\{-1\}}\left(\nabla_{\{r}\bar A_{u\}}-r^{-2}\nabla_{\{z}\bar A_{\bar z\}}\right)+2\mathcal{P}_{-1}(r^{-1}\bar A_u)\right] \\
            &=\mathcal{P}_{\mathbb{Z}^<\setminus\{-1\}}\left(\nabla_{\{r}\bar A_{u\}}-r^{-2}\nabla_{\{z}\bar A_{\bar z\}}\right)+2\mathcal{P}_{-1}(r^{-1}\bar A_u) \\
            &=\nabla_{\{r}\bar A_{u\}}-r^{-2}\nabla_{\{z}\bar A_{\bar z\}}=b\iota\bar A
        \end{aligned}
    \end{equation}
\end{proof}

\begin{result}
    $\bar b_2=\pi b_2\iota$
\end{result}
\begin{proof}
    Result \eqref{res:bar B_3=0} implies $\iota\bar m_2=m_2\iota$, since colour stripping does not affect the analogous property for $B_2$. Using the definition of $\bar m_2$ \eqref{def:bar b}, this relation translates to $\iota\pi m_2\iota=m_2\iota$. Recalling Lemma \eqref{lm:commuting of b iota}, we conclude that $\iota\pi=1_{\mathcal{K}}$ on both $\mathrm{Im}(m_2\iota)$ and $\mathrm{Im}(b\iota)$. Thus, the bracket $\bar b_2$, that is defined from $\bar b$ and $\bar m_2$, is
    \begin{equation}
        \begin{aligned}
            \bar b_2(\bar u_1,\bar u_2)&\coloneqq\bar b\bar m_2(\bar u_1,\bar u_2)-\bar m_2(\bar b\bar u_1,\bar u_2)-(-1)^{|\bar u_1|}
    \bar m_2(\bar u_1,\bar b\bar u_2) \\
    &=\pi b\iota\pi m_2(\iota\bar u_1,\iota\bar u_2)-\pi m_2(\iota\pi b\iota \bar u_1,\iota\bar u_2)-(-1)^{|\bar u_1|}
    \pi m_2(\iota\bar u_1,\iota\pi b\iota\bar u_2) \\
    &=\pi b_2(\iota\bar u_1,\iota\bar u_2)
        \end{aligned}
    \end{equation}
\end{proof}

\begin{result}
    $\bar\theta_3=\pi\theta_3\iota$
\end{result}
\begin{proof}
    The Poisson identity is: 
    \begin{equation}
    \label{pf:bar poisson identity}
        \begin{aligned}
            \overline{\mathrm{Poiss}}(\bar u_1,\bar u_2,\bar u_3)&\coloneqq\bar b_2(\bar m_2(\bar u_1,\bar u_2),\bar u_3)-(-)^{\bar u_1(\bar u_2+\bar u_3)}\bar m_2(\bar b_2(\bar u_2,\bar u_3),\bar u_1) \\
            &\quad -(-)^{\bar u_3(\bar u_1+\bar u_2)}\bar m_2(\bar b_2(\bar u_3,\bar u_1),\bar u_2) \\
            &=[\bar m_1,\bar \theta_3](\bar u_1,\bar u_2,\bar u_3)
        \end{aligned}
    \end{equation}
    $\forall \bar u_1,\bar u_2,\bar u_3\in\bar{\mathcal{K}}$. The first term in the first line above is
    \begin{equation}
        \begin{aligned}
            \bar b_2(\bar m_2(\bar u_1,\bar u_2),\bar u_3)=\pi b_2(\iota\pi m_2(\iota\bar u_1,\iota\bar u_2),\iota\bar u_3)=\pi b_2(m_2(\iota\bar u_1,\iota\bar u_2),\iota\bar u_3)
        \end{aligned}
    \end{equation}
    where we used $\iota\pi=1_{\mathcal{K}}$ on $\mathrm{Im}(m_2\iota)$. Note that we also have $\iota\pi=1_{\mathcal{K}}$ on $\mathrm{Im}(b_2\iota)$, since
    \begin{equation}
        \begin{aligned}
            \iota\pi b_2(\iota\bar u_1,\iota\bar u_2)&=\iota\pi bm_2(\iota\bar u_1,\iota\bar u_2)-\iota\pi m_2(b\iota\bar u_1,\iota\bar u_2)-(-)^{\bar u_1}\iota\pi m_2(\iota\bar u_1,b\iota\bar u_2) \\
            &=\iota\pi b\iota \bar m_2(u_1,u_2)-\iota\pi m_2(\iota \bar bu_1,\iota\bar u_2)-(-)^{\bar u_1}\iota\pi m_2(\iota\bar u_1,\iota\bar b\bar u_2) \\
            &=b\iota \bar m_2(u_1,u_2)-m_2(\iota \bar bu_1,\iota\bar u_2)-(-)^{\bar u_1}m_2(\iota\bar u_1,\iota\bar b\bar u_2) \\
            &=bm_2(\iota\bar u_1,\iota\bar u_2)-m_2(b\iota\bar u_1,\iota\bar u_2)-(-)^{\bar u_1}m_2(\iota\bar u_1,b\iota\bar u_2) \\
            &=b_2(\iota\bar u_1,\iota\bar u_2)
        \end{aligned}
    \end{equation}
    $\forall\bar u_1,\bar u_2\in\bar{\mathcal{K}}$, where we used $\iota\bar m_2=m_2\iota$ and \eqref{pf:diagram b}. Thus, the second term in the first line of \eqref{pf:bar poisson identity} is
    \begin{equation}
        \begin{aligned}
            \bar m_2(\bar b_2(\bar u_2,\bar u_3),\bar u_1)=\pi m_2(\iota\pi b_2(\iota\bar u_2,\iota\bar u_3),\iota\bar u_1)=\pi m_2(b_2(\iota\bar u_2,\iota\bar u_3),\iota\bar u_1)
        \end{aligned}
    \end{equation}
    The third term in the Poissonator can be treated analogously; putting everything together, we have
    \begin{equation}
        \begin{aligned}
            [\bar m_1,\bar\theta_3]&=\overline{\mathrm{Poiss}}=\pi \mathrm{Poiss}\iota=\pi[m_1,\theta_3]\iota=\pi m_1\theta_3\iota+\pi\theta_3 m_1\iota=\bar m_1\pi\theta_3\iota+\pi\theta_3\iota\bar m_1
        \end{aligned}
    \end{equation}
\end{proof}

% Bibliography

\bibliography{biblio}

\providecommand{\href}[2]{#2}\begingroup\raggedright\begin{thebibliography}{100}

\bibitem{Maldacena:1997re}
J.M.~Maldacena, \emph{{The Large N limit of superconformal field theories and supergravity}}, \href{https://doi.org/10.4310/ATMP.1998.v2.n2.a1}{\emph{Adv. Theor. Math. Phys.} {\bfseries 2} (1998) 231} [\href{https://arxiv.org/abs/hep-th/9711200}{{\ttfamily hep-th/9711200}}].

\bibitem{Pasterski:2021raf}
S.~Pasterski, M.~Pate and A.-M.~Raclariu, \emph{{Celestial Holography}},  in \emph{{Snowmass 2021}}, 11, 2021 [\href{https://arxiv.org/abs/2111.11392}{{\ttfamily 2111.11392}}].

\bibitem{Raclariu:2021zjz}
A.-M.~Raclariu, \emph{{Lectures on Celestial Holography}},  \href{https://arxiv.org/abs/2107.02075}{{\ttfamily 2107.02075}}.

\bibitem{Bern:2008qj}
Z.~Bern, J.J.M.~Carrasco and H.~Johansson, \emph{{New Relations for Gauge-Theory Amplitudes}}, \href{https://doi.org/10.1103/PhysRevD.78.085011}{\emph{Phys. Rev. D} {\bfseries 78} (2008) 085011} [\href{https://arxiv.org/abs/0805.3993}{{\ttfamily 0805.3993}}].

\bibitem{Bern:2010ue}
Z.~Bern, J.J.M.~Carrasco and H.~Johansson, \emph{{Perturbative Quantum Gravity as a Double Copy of Gauge Theory}}, \href{https://doi.org/10.1103/PhysRevLett.105.061602}{\emph{Phys. Rev. Lett.} {\bfseries 105} (2010) 061602} [\href{https://arxiv.org/abs/1004.0476}{{\ttfamily 1004.0476}}].

\bibitem{Monteiro:2011pc}
R.~Monteiro and D.~O'Connell, \emph{{The Kinematic Algebra From the Self-Dual Sector}}, \href{https://doi.org/10.1007/JHEP07(2011)007}{\emph{JHEP} {\bfseries 1107} (2011) 007} [\href{https://arxiv.org/abs/1105.2565}{{\ttfamily 1105.2565}}].

\bibitem{Plebanski:1975wn}
J.F.~Plebanski, \emph{{Some solutions of complex Einstein equations}}, \href{https://doi.org/10.1063/1.522505}{\emph{J. Math. Phys.} {\bfseries 16} (1975) 2395}.

\bibitem{Bardeen:1995gk}
W.A.~Bardeen, \emph{{Selfdual Yang-Mills theory, integrability and multiparton amplitudes}}, \href{https://doi.org/10.1143/PTPS.123.1}{\emph{Prog. Theor. Phys. Suppl.} {\bfseries 123} (1996) 1}.

\bibitem{Chalmers:1996rq}
G.~Chalmers and W.~Siegel, \emph{{The Selfdual sector of QCD amplitudes}}, \href{https://doi.org/10.1103/PhysRevD.54.7628}{\emph{Phys. Rev. D} {\bfseries 54} (1996) 7628} [\href{https://arxiv.org/abs/hep-th/9606061}{{\ttfamily hep-th/9606061}}].

\bibitem{Prasad:1979zc}
M.~Prasad, A.~Sinha and L.-L.~Wang, \emph{{Nonlocal Continuity Equations for Selfdual SU($N$) \{Yang-Mills\} Fields}}, \href{https://doi.org/10.1016/0370-2693(79)90972-9}{\emph{Phys. Lett. B} {\bfseries 87} (1979) 237}.

\bibitem{Dolan:1983bp}
L.~Dolan, \emph{{Kac-moody Algebras and Exact Solvability in Hadronic Physics}}, \href{https://doi.org/10.1016/0370-1573(84)90134-0}{\emph{Phys. Rept.} {\bfseries 109} (1984) 1}.

\bibitem{Parkes:1992rz}
A.~Parkes, \emph{{A Cubic action for selfdual Yang-Mills}}, \href{https://doi.org/10.1016/0370-2693(92)91773-3}{\emph{Phys. Lett. B} {\bfseries 286} (1992) 265} [\href{https://arxiv.org/abs/hep-th/9203074}{{\ttfamily hep-th/9203074}}].

\bibitem{Cangemi:1996pf}
D.~Cangemi, \emph{{Selfduality and maximally helicity violating QCD amplitudes}}, \href{https://doi.org/10.1142/S0217751X97000943}{\emph{Int. J. Mod. Phys. A} {\bfseries 12} (1997) 1215} [\href{https://arxiv.org/abs/hep-th/9610021}{{\ttfamily hep-th/9610021}}].

\bibitem{Popov:1996uu}
A.~Popov, M.~Bordemann and H.~Romer, \emph{{Symmetries, currents and conservation laws of selfdual gravity}}, \href{https://doi.org/10.1016/0370-2693(96)00874-X}{\emph{Phys. Lett. B} {\bfseries 385} (1996) 63} [\href{https://arxiv.org/abs/hep-th/9606077}{{\ttfamily hep-th/9606077}}].

\bibitem{Popov:1998pc}
A.~Popov, \emph{{Selfdual Yang-Mills: Symmetries and moduli space}}, \href{https://doi.org/10.1142/S0129055X99000350}{\emph{Rev. Math. Phys.} {\bfseries 11} (1999) 1091} [\href{https://arxiv.org/abs/hep-th/9803183}{{\ttfamily hep-th/9803183}}].

\bibitem{Chen:2019ywi}
G.~Chen, H.~Johansson, F.~Teng and T.~Wang, \emph{{On the kinematic algebra for BCJ numerators beyond the MHV sector}}, \href{https://doi.org/10.1007/JHEP11(2019)055}{\emph{JHEP} {\bfseries 11} (2019) 055} [\href{https://arxiv.org/abs/1906.10683}{{\ttfamily 1906.10683}}].

\bibitem{Chen:2021chy}
G.~Chen, H.~Johansson, F.~Teng and T.~Wang, \emph{{Next-to-MHV Yang-Mills kinematic algebra}}, \href{https://doi.org/10.1007/JHEP10(2021)042}{\emph{JHEP} {\bfseries 10} (2021) 042} [\href{https://arxiv.org/abs/2104.12726}{{\ttfamily 2104.12726}}].

\bibitem{Monteiro:2022nqt}
R.~Monteiro, R.~Stark-Much\~ao and S.~Wikeley, \emph{{Anomaly and double copy in quantum self-dual Yang-Mills and gravity}},  \href{https://arxiv.org/abs/2211.12407}{{\ttfamily 2211.12407}}.

\bibitem{Monteiro:2013rya}
R.~Monteiro and D.~O'Connell, \emph{{The Kinematic Algebras from the Scattering Equations}}, \href{https://doi.org/10.1007/JHEP03(2014)110}{\emph{JHEP} {\bfseries 1403} (2014) 110} [\href{https://arxiv.org/abs/1311.1151}{{\ttfamily 1311.1151}}].

\bibitem{Elor:2020nqe}
G.~Elor, K.~Farnsworth, M.L.~Graesser and G.~Herczeg, \emph{{The Newman-Penrose Map and the Classical Double Copy}},  \href{https://arxiv.org/abs/2006.08630}{{\ttfamily 2006.08630}}.

\bibitem{Armstrong-Williams:2022apo}
K.~Armstrong-Williams, C.D.~White and S.~Wikeley, \emph{{Non-perturbative aspects of the self-dual double copy}}, \href{https://doi.org/10.1007/JHEP08(2022)160}{\emph{JHEP} {\bfseries 08} (2022) 160} [\href{https://arxiv.org/abs/2205.02136}{{\ttfamily 2205.02136}}].

\bibitem{Farnsworth:2021wvs}
K.~Farnsworth, M.L.~Graesser and G.~Herczeg, \emph{{Twistor Space Origins of the Newman-Penrose Map}},  \href{https://arxiv.org/abs/2104.09525}{{\ttfamily 2104.09525}}.

\bibitem{Skvortsov:2022unu}
E.~Skvortsov and R.~Van~Dongen, \emph{{Minimal models of field theories: SDYM and SDGR}}, \href{https://doi.org/10.1007/JHEP08(2022)083}{\emph{JHEP} {\bfseries 08} (2022) 083} [\href{https://arxiv.org/abs/2204.09313}{{\ttfamily 2204.09313}}].

\bibitem{Campiglia:2021srh}
M.~Campiglia and S.~Nagy, \emph{{A double copy for asymptotic symmetries in the self-dual sector}}, \href{https://doi.org/10.1007/JHEP03(2021)262}{\emph{JHEP} {\bfseries 03} (2021) 262} [\href{https://arxiv.org/abs/2102.01680}{{\ttfamily 2102.01680}}].

\bibitem{Brandhuber:2022enp}
A.~Brandhuber, G.R.~Brown, G.~Chen, J.~Gowdy, G.~Travaglini and C.~Wen, \emph{{Amplitudes, Hopf algebras and the colour-kinematics duality}}, \href{https://doi.org/10.1007/JHEP12(2022)101}{\emph{JHEP} {\bfseries 12} (2022) 101} [\href{https://arxiv.org/abs/2208.05886}{{\ttfamily 2208.05886}}].

\bibitem{Ben-Shahar:2022ixa}
M.~Ben-Shahar, L.~Garozzo and H.~Johansson, \emph{{Lagrangians Manifesting Color-Kinematics Duality in the NMHV Sector of Yang-Mills}},  \href{https://arxiv.org/abs/2301.00233}{{\ttfamily 2301.00233}}.

\bibitem{CarrilloGonzalez:2024sto}
M.~Carrillo~Gonz\'alez, A.~Lipstein and S.~Nagy, \emph{{Self-dual cosmology}}, \href{https://doi.org/10.1007/JHEP10(2024)183}{\emph{JHEP} {\bfseries 10} (2024) 183} [\href{https://arxiv.org/abs/2407.12905}{{\ttfamily 2407.12905}}].

\bibitem{Berman:2018hwd}
D.S.~Berman, E.~Chac{\'o}n, A.~Luna and C.D.~White, \emph{{The self-dual classical double copy, and the Eguchi-Hanson instanton}}, \href{https://doi.org/10.1007/JHEP01(2019)107}{\emph{JHEP} {\bfseries 01} (2019) 107} [\href{https://arxiv.org/abs/1809.04063}{{\ttfamily 1809.04063}}].

\bibitem{Nagy:2022xxs}
S.~Nagy and J.~Peraza, \emph{{Radiative phase space extensions at all orders in r for self-dual Yang-Mills and gravity}}, \href{https://doi.org/10.1007/JHEP02(2023)202}{\emph{JHEP} {\bfseries 02} (2023) 202} [\href{https://arxiv.org/abs/2211.12991}{{\ttfamily 2211.12991}}].

\bibitem{Monteiro:2022lwm}
R.~Monteiro, \emph{{Celestial chiral algebras, colour-kinematics duality and integrability}}, \href{https://doi.org/10.1007/JHEP01(2023)092}{\emph{JHEP} {\bfseries 01} (2023) 092} [\href{https://arxiv.org/abs/2208.11179}{{\ttfamily 2208.11179}}].

\bibitem{Krasnov:2021cva}
K.~Krasnov and E.~Skvortsov, \emph{{Flat self-dual gravity}}, \href{https://doi.org/10.1007/JHEP08(2021)082}{\emph{JHEP} {\bfseries 08} (2021) 082} [\href{https://arxiv.org/abs/2106.01397}{{\ttfamily 2106.01397}}].

\bibitem{Monteiro:2022xwq}
R.~Monteiro, \emph{{From Moyal deformations to chiral higher-spin theories and to celestial algebras}}, \href{https://doi.org/10.1007/JHEP03(2023)062}{\emph{JHEP} {\bfseries 03} (2023) 062} [\href{https://arxiv.org/abs/2212.11266}{{\ttfamily 2212.11266}}].

\bibitem{Lipstein:2023pih}
A.~Lipstein and S.~Nagy, \emph{{Self-Dual Gravity and Color-Kinematics Duality in AdS4}}, \href{https://doi.org/10.1103/PhysRevLett.131.081501}{\emph{Phys. Rev. Lett.} {\bfseries 131} (2023) 081501} [\href{https://arxiv.org/abs/2304.07141}{{\ttfamily 2304.07141}}].

\bibitem{Borsten:2023paw}
L.~Borsten, B.~Jurco, H.~Kim, T.~Macrelli, C.~Saemann and M.~Wolf, \emph{{Double-copying self-dual Yang-Mills theory to self-dual gravity on twistor space}}, \href{https://doi.org/10.1007/JHEP11(2023)172}{\emph{JHEP} {\bfseries 11} (2023) 172} [\href{https://arxiv.org/abs/2307.10383}{{\ttfamily 2307.10383}}].

\bibitem{Brown:2023zxm}
G.R.~Brown, J.~Gowdy and B.~Spence, \emph{{Self-dual fields on self-dual backgrounds and the double copy}}, \href{https://doi.org/10.1103/PhysRevD.109.026009}{\emph{Phys. Rev. D} {\bfseries 109} (2024) 026009} [\href{https://arxiv.org/abs/2307.11063}{{\ttfamily 2307.11063}}].

\bibitem{Doran:2023cmj}
G.~Doran, R.~Monteiro and S.~Wikeley, \emph{{On the anomaly interpretation of amplitudes in self-dual Yang-Mills and gravity}}, \href{https://doi.org/10.1007/JHEP07(2024)139}{\emph{JHEP} {\bfseries 07} (2024) 139} [\href{https://arxiv.org/abs/2312.13267}{{\ttfamily 2312.13267}}].

\bibitem{Chowdhury:2024dcy}
C.~Chowdhury, G.~Doran, A.~Lipstein, R.~Monteiro, S.~Nagy and K.~Singh, \emph{{Light-cone actions and correlators of self-dual theories in AdS$_{4}$}}, \href{https://doi.org/10.1007/JHEP01(2025)172}{\emph{JHEP} {\bfseries 01} (2025) 172} [\href{https://arxiv.org/abs/2411.04172}{{\ttfamily 2411.04172}}].

\bibitem{LopesCardoso:2024ttc}
G.~Lopes~Cardoso, S.~Mahapatra and S.~Nagy, \emph{{Weyl-Lewis-Papapetrou coordinates, self-dual Yang-Mills equations and the single copy}}, \href{https://doi.org/10.1007/JHEP10(2024)030}{\emph{JHEP} {\bfseries 10} (2024) 030} [\href{https://arxiv.org/abs/2407.14392}{{\ttfamily 2407.14392}}].

\bibitem{Correa:2024mub}
D.H.~Correa, C.~Lopez-Arcos and A.~Quintero~Velez, \emph{{Tree- and one-loop-level double copy for the (anti)self-dual sectors of Yang-Mills and gravity theories}}, \href{https://doi.org/10.1103/PhysRevD.111.065001}{\emph{Phys. Rev. D} {\bfseries 111} (2025) 065001} [\href{https://arxiv.org/abs/2412.07498}{{\ttfamily 2412.07498}}].

\bibitem{Ben-Shahar:2021zww}
M.~Ben-Shahar and H.~Johansson, \emph{{Off-shell color-kinematics duality for Chern-Simons}}, \href{https://doi.org/10.1007/JHEP08(2022)035}{\emph{JHEP} {\bfseries 08} (2022) 035} [\href{https://arxiv.org/abs/2112.11452}{{\ttfamily 2112.11452}}].

\bibitem{Borsten:2022vtg}
L.~Borsten, B.~Jurco, H.~Kim, T.~Macrelli, C.~Saemann and M.~Wolf, \emph{{Kinematic Lie Algebras From Twistor Spaces}},  \href{https://arxiv.org/abs/2211.13261}{{\ttfamily 2211.13261}}.

\bibitem{Bonezzi:2024dlv}
R.~Bonezzi, F.~Diaz-Jaramillo and O.~Hohm, \emph{{Double copy of 3D Chern-Simons theory and 6D Kodaira-Spencer gravity}}, \href{https://doi.org/10.1103/PhysRevD.110.045024}{\emph{Phys. Rev. D} {\bfseries 110} (2024) 045024} [\href{https://arxiv.org/abs/2404.16830}{{\ttfamily 2404.16830}}].

\bibitem{Cheung:2016prv}
C.~Cheung and C.-H.~Shen, \emph{{Symmetry for Flavor-Kinematics Duality from an Action}}, \href{https://doi.org/10.1103/PhysRevLett.118.121601}{\emph{Phys. Rev. Lett.} {\bfseries 118} (2017) 121601} [\href{https://arxiv.org/abs/1612.00868}{{\ttfamily 1612.00868}}].

\bibitem{Chen:2022nei}
G.~Chen, G.~Lin and C.~Wen, \emph{{Kinematic Hopf algebra for amplitudes and form factors}}, \href{https://doi.org/10.1103/PhysRevD.107.L081701}{\emph{Phys. Rev. D} {\bfseries 107} (2023) L081701} [\href{https://arxiv.org/abs/2208.05519}{{\ttfamily 2208.05519}}].

\bibitem{Brandhuber:2021bsf}
A.~Brandhuber, G.~Chen, H.~Johansson, G.~Travaglini and C.~Wen, \emph{{Kinematic Hopf Algebra for Bern-Carrasco-Johansson Numerators in Heavy-Mass Effective Field Theory and Yang-Mills Theory}}, \href{https://doi.org/10.1103/PhysRevLett.128.121601}{\emph{Phys. Rev. Lett.} {\bfseries 128} (2022) 121601} [\href{https://arxiv.org/abs/2111.15649}{{\ttfamily 2111.15649}}].

\bibitem{Ben-Shahar:2021doh}
M.~Ben-Shahar and M.~Guillen, \emph{{10D super-Yang-Mills scattering amplitudes from its pure spinor action}}, \href{https://doi.org/10.1007/JHEP12(2021)014}{\emph{JHEP} {\bfseries 12} (2021) 014} [\href{https://arxiv.org/abs/2108.11708}{{\ttfamily 2108.11708}}].

\bibitem{Borsten:2023reb}
L.~Borsten, B.~Jurco, H.~Kim, T.~Macrelli, C.~Saemann and M.~Wolf, \emph{{Tree-Level Color-Kinematics Duality from Pure Spinor Actions}},  \href{https://arxiv.org/abs/2303.13596}{{\ttfamily 2303.13596}}.

\bibitem{Armstrong-Williams:2024icu}
K.~Armstrong-Williams, S.~Nagy, C.D.~White and S.~Wikeley, \emph{{What can abelian gauge theories teach us about kinematic algebras?}}, \href{https://doi.org/10.1007/JHEP08(2024)169}{\emph{JHEP} {\bfseries 08} (2024) 169} [\href{https://arxiv.org/abs/2401.10750}{{\ttfamily 2401.10750}}].

\bibitem{Reiterer:2019dys}
M.~Reiterer, \emph{{A homotopy BV algebra for Yang-Mills and color-kinematics}},  \href{https://arxiv.org/abs/1912.03110}{{\ttfamily 1912.03110}}.

\bibitem{Bonezzi:2022bse}
R.~Bonezzi, C.~Chiaffrino, F.~Diaz-Jaramillo and O.~Hohm, \emph{{Gauge invariant double copy of Yang-Mills theory: The quartic theory}}, \href{https://doi.org/10.1103/PhysRevD.107.126015}{\emph{Phys. Rev. D} {\bfseries 107} (2023) 126015} [\href{https://arxiv.org/abs/2212.04513}{{\ttfamily 2212.04513}}].

\bibitem{Bonezzi:2022yuh}
R.~Bonezzi, F.~Diaz-Jaramillo and O.~Hohm, \emph{{The gauge structure of double field theory follows from Yang-Mills theory}}, \href{https://doi.org/10.1103/PhysRevD.106.026004}{\emph{Phys. Rev. D} {\bfseries 106} (2022) 026004} [\href{https://arxiv.org/abs/2203.07397}{{\ttfamily 2203.07397}}].

\bibitem{Borsten:2021hua}
L.~Borsten, H.~Kim, B.~Jur\v{c}o, T.~Macrelli, C.~Saemann and M.~Wolf, \emph{{Double Copy from Homotopy Algebras}}, \href{https://doi.org/10.1002/prop.202100075}{\emph{Fortsch. Phys.} {\bfseries 69} (2021) 2100075} [\href{https://arxiv.org/abs/2102.11390}{{\ttfamily 2102.11390}}].

\bibitem{Borsten:2021gyl}
L.~Borsten, H.~Kim, B.~Jur\v{c}o, T.~Macrelli, C.~Saemann and M.~Wolf, \emph{{Tree-level color\textendash{}kinematics duality implies loop-level color\textendash{}kinematics duality up to counterterms}}, \href{https://doi.org/10.1016/j.nuclphysb.2023.116144}{\emph{Nucl. Phys. B} {\bfseries 989} (2023) 116144} [\href{https://arxiv.org/abs/2108.03030}{{\ttfamily 2108.03030}}].

\bibitem{Borsten:2020zgj}
L.~Borsten, B.~Jur\v{c}o, H.~Kim, T.~Macrelli, C.~Saemann and M.~Wolf, \emph{{Becchi-Rouet-Stora-Tyutin-Lagrangian Double Copy of Yang-Mills Theory}}, \href{https://doi.org/10.1103/PhysRevLett.126.191601}{\emph{Phys. Rev. Lett.} {\bfseries 126} (2021) 191601} [\href{https://arxiv.org/abs/2007.13803}{{\ttfamily 2007.13803}}].

\bibitem{Bonezzi:2023ced}
R.~Bonezzi, C.~Chiaffrino, F.~Diaz-Jaramillo and O.~Hohm, \emph{{Weakly constrained double field theory: the quartic theory}},  \href{https://arxiv.org/abs/2306.00609}{{\ttfamily 2306.00609}}.

\bibitem{Escudero:2022zdz}
V.G.~Escudero, C.~Lopez-Arcos and A.~Quintero~Velez, \emph{{Homotopy double copy and the Kawai\textendash{}Lewellen\textendash{}Tye relations for the non-abelian and tensor Navier\textendash{}Stokes equations}}, \href{https://doi.org/10.1063/5.0119508}{\emph{J. Math. Phys.} {\bfseries 64} (2023) 032304} [\href{https://arxiv.org/abs/2201.06047}{{\ttfamily 2201.06047}}].

\bibitem{Borsten:2023ned}
L.~Borsten, B.~Jurco, H.~Kim, T.~Macrelli, C.~Saemann and M.~Wolf, \emph{{Double Copy from Tensor Products of Metric BV${}^{\color{gray} \blacksquare}$-algebras}},  \href{https://arxiv.org/abs/2307.02563}{{\ttfamily 2307.02563}}.

\bibitem{Chiaffrino:2023wxk}
C.~Chiaffrino, T.~Ersoy and O.~Hohm, \emph{{Holography as homotopy}}, \href{https://doi.org/10.1007/JHEP09(2024)161}{\emph{JHEP} {\bfseries 09} (2024) 161} [\href{https://arxiv.org/abs/2307.08094}{{\ttfamily 2307.08094}}].

\bibitem{Alfonsi:2024utl}
L.~Alfonsi, L.~Borsten, H.~Kim, M.~Wolf and C.A.S.~Young, \emph{{Full S-matrices and Witten diagrams with (relative) L-infinity algebras}},  \href{https://arxiv.org/abs/2412.16106}{{\ttfamily 2412.16106}}.

\bibitem{Strominger:2013lka}
A.~Strominger, \emph{{Asymptotic Symmetries of Yang-Mills Theory}}, \href{https://doi.org/10.1007/JHEP07(2014)151}{\emph{JHEP} {\bfseries 07} (2014) 151} [\href{https://arxiv.org/abs/1308.0589}{{\ttfamily 1308.0589}}].

\bibitem{Strominger:2017zoo}
A.~Strominger, \emph{{Lectures on the Infrared Structure of Gravity and Gauge Theory}} (3, 2017), [\href{https://arxiv.org/abs/1703.05448}{{\ttfamily 1703.05448}}].

\bibitem{Bonezzi:2023pox}
R.~Bonezzi, F.~Diaz-Jaramillo and S.~Nagy, \emph{{Gauge independent kinematic algebra of self-dual Yang-Mills theory}}, \href{https://doi.org/10.1103/PhysRevD.108.065007}{\emph{Phys. Rev. D} {\bfseries 108} (2023) 065007} [\href{https://arxiv.org/abs/2306.08558}{{\ttfamily 2306.08558}}].

\bibitem{Bonezzi:2023xhn}
R.~Bonezzi, C.~Chiaffrino, F.~Diaz-Jaramillo and O.~Hohm, \emph{{Tree-level Scattering Amplitudes via Homotopy Transfer}},  \href{https://arxiv.org/abs/2312.09306}{{\ttfamily 2312.09306}}.

\bibitem{Bonezzi:2023lkx}
R.~Bonezzi, C.~Chiaffrino, F.~Diaz-Jaramillo and O.~Hohm, \emph{{Weakly constrained double field theory as the double copy of Yang-Mills theory}}, \href{https://doi.org/10.1103/PhysRevD.109.066020}{\emph{Phys. Rev. D} {\bfseries 109} (2024) 066020} [\href{https://arxiv.org/abs/2309.03289}{{\ttfamily 2309.03289}}].

\bibitem{Bonezzi:2023ciu}
R.~Bonezzi, C.~Chiaffrino, F.~Diaz-Jaramillo and O.~Hohm, \emph{{Gravity = Yang\textendash{}Mills}}, \href{https://doi.org/10.3390/sym15112062}{\emph{Symmetry} {\bfseries 15} (2023) 2062} [\href{https://arxiv.org/abs/2306.14788}{{\ttfamily 2306.14788}}].

\bibitem{Nagy:2024jua}
S.~Nagy, J.~Peraza and G.~Pizzolo, \emph{{Infinite-dimensional hierarchy of recursive extensions for all sub$^{n}$-leading soft effects in Yang-Mills}}, \href{https://doi.org/10.1007/JHEP12(2024)068}{\emph{JHEP} {\bfseries 12} (2024) 068} [\href{https://arxiv.org/abs/2407.13556}{{\ttfamily 2407.13556}}].

\bibitem{Campiglia:2021oqz}
M.~Campiglia and J.~Peraza, \emph{{Charge algebra for non-abelian large gauge symmetries at O(r)}}, \href{https://doi.org/10.1007/JHEP12(2021)058}{\emph{JHEP} {\bfseries 12} (2021) 058} [\href{https://arxiv.org/abs/2111.00973}{{\ttfamily 2111.00973}}].

\bibitem{Weinberg1965}
S.~Weinberg, \emph{{Infrared photons and gravitons}}, \href{https://doi.org/10.1103/PhysRev.140.B516}{\emph{Phys. Rev.} {\bfseries 140} (1965) B516}.

\bibitem{He:2014cra}
T.~He, P.~Mitra, A.P.~Porfyriadis and A.~Strominger, \emph{{New Symmetries of Massless QED}}, \href{https://doi.org/10.1007/JHEP10(2014)112}{\emph{JHEP} {\bfseries 10} (2014) 112} [\href{https://arxiv.org/abs/1407.3789}{{\ttfamily 1407.3789}}].

\bibitem{Arvanitakis:2020rrk}
A.S.~Arvanitakis, O.~Hohm, C.~Hull and V.~Lekeu, \emph{{Homotopy Transfer and Effective Field Theory I: Tree-level}}, \href{https://doi.org/10.1002/prop.202200003}{\emph{Fortsch. Phys.} {\bfseries 70} (2022) 2200003} [\href{https://arxiv.org/abs/2007.07942}{{\ttfamily 2007.07942}}].

\bibitem{Adamo:2021dfg}
T.~Adamo and U.~Kol, \emph{{Classical double copy at null infinity}}, \href{https://doi.org/10.1088/1361-6382/ac635e}{\emph{Class. Quant. Grav.} {\bfseries 39} (2022) 105007} [\href{https://arxiv.org/abs/2109.07832}{{\ttfamily 2109.07832}}].

\bibitem{Ferrero:2024eva}
P.~Ferrero, D.~Francia, C.~Heissenberg and M.~Romoli, \emph{{Double-copy supertranslations}}, \href{https://doi.org/10.1103/PhysRevD.110.026009}{\emph{Phys. Rev. D} {\bfseries 110} (2024) 026009} [\href{https://arxiv.org/abs/2402.11595}{{\ttfamily 2402.11595}}].

\bibitem{Mao:2021kxq}
P.~Mao and W.~Zhao, \emph{{Note on the asymptotic structure of Kerr-Schild form}}, \href{https://doi.org/10.1007/JHEP01(2022)030}{\emph{JHEP} {\bfseries 01} (2022) 030} [\href{https://arxiv.org/abs/2109.09676}{{\ttfamily 2109.09676}}].

\bibitem{Godazgar:2021iae}
H.~Godazgar, M.~Godazgar, R.~Monteiro, D.~Peinador~Veiga and C.N.~Pope, \emph{{Asymptotic Weyl double copy}}, \href{https://doi.org/10.1007/JHEP11(2021)126}{\emph{JHEP} {\bfseries 11} (2021) 126} [\href{https://arxiv.org/abs/2109.07866}{{\ttfamily 2109.07866}}].

\bibitem{Campiglia:2015qka}
M.~Campiglia and A.~Laddha, \emph{{Asymptotic symmetries of QED and Weinberg\textquoteright{}s soft photon theorem}}, \href{https://doi.org/10.1007/JHEP07(2015)115}{\emph{JHEP} {\bfseries 07} (2015) 115} [\href{https://arxiv.org/abs/1505.05346}{{\ttfamily 1505.05346}}].

\bibitem{Campiglia:2020qvc}
M.~Campiglia and J.~Peraza, \emph{{Generalized BMS charge algebra}}, \href{https://doi.org/10.1103/PhysRevD.101.104039}{\emph{Phys. Rev. D} {\bfseries 101} (2020) 104039} [\href{https://arxiv.org/abs/2002.06691}{{\ttfamily 2002.06691}}].

\bibitem{Strominger:2013jfa}
A.~Strominger, \emph{{On BMS Invariance of Gravitational Scattering}}, \href{https://doi.org/10.1007/JHEP07(2014)152}{\emph{JHEP} {\bfseries 07} (2014) 152} [\href{https://arxiv.org/abs/1312.2229}{{\ttfamily 1312.2229}}].

\bibitem{Lysov:2014csa}
V.~Lysov, S.~Pasterski and A.~Strominger, \emph{{Low\textquoteright{}s Subleading Soft Theorem as a Symmetry of QED}}, \href{https://doi.org/10.1103/PhysRevLett.113.111601}{\emph{Phys. Rev. Lett.} {\bfseries 113} (2014) 111601} [\href{https://arxiv.org/abs/1407.3814}{{\ttfamily 1407.3814}}].

\bibitem{Donnelly:2016auv}
W.~Donnelly and L.~Freidel, \emph{{Local subsystems in gauge theory and gravity}}, \href{https://doi.org/10.1007/JHEP09(2016)102}{\emph{JHEP} {\bfseries 09} (2016) 102} [\href{https://arxiv.org/abs/1601.04744}{{\ttfamily 1601.04744}}].

\bibitem{Speranza:2017gxd}
A.J.~Speranza, \emph{{Local phase space and edge modes for diffeomorphism-invariant theories}}, \href{https://doi.org/10.1007/JHEP02(2018)021}{\emph{JHEP} {\bfseries 02} (2018) 021} [\href{https://arxiv.org/abs/1706.05061}{{\ttfamily 1706.05061}}].

\bibitem{Freidel:2020ayo}
L.~Freidel, M.~Geiller and D.~Pranzetti, \emph{{Edge modes of gravity. Part III. Corner simplicity constraints}}, \href{https://doi.org/10.1007/JHEP01(2021)100}{\emph{JHEP} {\bfseries 01} (2021) 100} [\href{https://arxiv.org/abs/2007.12635}{{\ttfamily 2007.12635}}].

\bibitem{Freidel:2020svx}
L.~Freidel, M.~Geiller and D.~Pranzetti, \emph{{Edge modes of gravity. Part II. Corner metric and Lorentz charges}}, \href{https://doi.org/10.1007/JHEP11(2020)027}{\emph{JHEP} {\bfseries 11} (2020) 027} [\href{https://arxiv.org/abs/2007.03563}{{\ttfamily 2007.03563}}].

\bibitem{Freidel:2020xyx}
L.~Freidel, M.~Geiller and D.~Pranzetti, \emph{{Edge modes of gravity. Part I. Corner potentials and charges}}, \href{https://doi.org/10.1007/JHEP11(2020)026}{\emph{JHEP} {\bfseries 11} (2020) 026} [\href{https://arxiv.org/abs/2006.12527}{{\ttfamily 2006.12527}}].

\bibitem{Freidel:2021dxw}
L.~Freidel, \emph{{A canonical bracket for open gravitational system}},  \href{https://arxiv.org/abs/2111.14747}{{\ttfamily 2111.14747}}.

\bibitem{Ciambelli:2021nmv}
L.~Ciambelli, R.G.~Leigh and P.-C.~Pai, \emph{{Embeddings and Integrable Charges for Extended Corner Symmetry}}, \href{https://doi.org/10.1103/PhysRevLett.128.171302}{\emph{Phys. Rev. Lett.} {\bfseries 128} (2022) } [\href{https://arxiv.org/abs/2111.13181}{{\ttfamily 2111.13181}}].

\bibitem{Freidel:2021cjp}
L.~Freidel, R.~Oliveri, D.~Pranzetti and S.~Speziale, \emph{{Extended corner symmetry, charge bracket and Einstein\textquoteright{}s equations}}, \href{https://doi.org/10.1007/JHEP09(2021)083}{\emph{JHEP} {\bfseries 09} (2021) 083} [\href{https://arxiv.org/abs/2104.12881}{{\ttfamily 2104.12881}}].

\bibitem{Freidel:2021dfs}
L.~Freidel, D.~Pranzetti and A.-M.~Raclariu, \emph{{Sub-subleading soft graviton theorem from asymptotic Einstein\textquoteright{}s equations}}, \href{https://doi.org/10.1007/JHEP05(2022)186}{\emph{JHEP} {\bfseries 05} (2022) 186} [\href{https://arxiv.org/abs/2111.15607}{{\ttfamily 2111.15607}}].

\bibitem{Ciambelli:2021vnn}
L.~Ciambelli and R.G.~Leigh, \emph{{Isolated surfaces and symmetries of gravity}}, \href{https://doi.org/10.1103/PhysRevD.104.046005}{\emph{Phys. Rev. D} {\bfseries 104} (2021) 046005} [\href{https://arxiv.org/abs/2104.07643}{{\ttfamily 2104.07643}}].

\bibitem{Nagy:2024dme}
S.~Nagy, J.~Peraza and G.~Pizzolo, \emph{{General hierarchy of charges at null infinity via the Todd polynomials}}, \href{https://doi.org/10.1103/PhysRevD.111.L061903}{\emph{Phys. Rev. D} {\bfseries 111} (2025) L061903} [\href{https://arxiv.org/abs/2405.06629}{{\ttfamily 2405.06629}}].

\bibitem{Nagy:2025hip}
S.~Nagy, J.~Peraza and G.~Pizzolo, \emph{{Boundary Actions and Loop Groups: A Geometric Picture of Gauge Symmetries at Null Infinity}},  \href{https://arxiv.org/abs/2509.08725}{{\ttfamily 2509.08725}}.

\bibitem{Geiller:2024bgf}
M.~Geiller, \emph{{Celestial $w_{1+\infty}$ charges and the subleading structure of asymptotically-flat spacetimes}},  \href{https://arxiv.org/abs/2403.05195}{{\ttfamily 2403.05195}}.

\bibitem{Geiller:2022vto}
M.~Geiller and C.~Zwikel, \emph{{The partial Bondi gauge: Further enlarging the asymptotic structure of gravity}}, \href{https://doi.org/10.21468/SciPostPhys.13.5.108}{\emph{SciPost Phys.} {\bfseries 13} (2022) 108} [\href{https://arxiv.org/abs/2205.11401}{{\ttfamily 2205.11401}}].

\bibitem{Peraza:2023ivy}
J.~Peraza, \emph{{Renormalized electric and magnetic charges for O(r$^{n}$) large gauge symmetries}}, \href{https://doi.org/10.1007/JHEP01(2024)175}{\emph{JHEP} {\bfseries 01} (2024) 175} [\href{https://arxiv.org/abs/2301.05671}{{\ttfamily 2301.05671}}].

\bibitem{Campiglia:2018dyi}
M.~Campiglia and A.~Laddha, \emph{{Asymptotic charges in massless QED revisited: A view from Spatial Infinity}}, \href{https://doi.org/10.1007/JHEP05(2019)207}{\emph{JHEP} {\bfseries 05} (2019) 207} [\href{https://arxiv.org/abs/1810.04619}{{\ttfamily 1810.04619}}].

\bibitem{Pope:1989ew}
C.N.~Pope, L.J.~Romans and X.~Shen, \emph{{The Complete Structure of W(Infinity)}}, \href{https://doi.org/10.1016/0370-2693(90)90822-N}{\emph{Phys. Lett. B} {\bfseries 236} (1990) 173}.

\bibitem{Bakas:1989xu}
I.~Bakas, \emph{{The Large n Limit of Extended Conformal Symmetries}}, \href{https://doi.org/10.1016/0370-2693(89)90525-X}{\emph{Phys. Lett. B} {\bfseries 228} (1989) 57}.

\bibitem{Fairlie:1990wv}
D.B.~Fairlie and J.~Nuyts, \emph{{Deformations and Renormalizations of $W$(infinity)}}, \href{https://doi.org/10.1007/BF02097709}{\emph{Commun. Math. Phys.} {\bfseries 134} (1990) 413}.

\bibitem{Pope:1991ig}
C.N.~Pope, \emph{{Lectures on W algebras and W gravity}},  in \emph{{Summer School in High-energy Physics and Cosmology}}, pp.~827--867, 12, 1991 [\href{https://arxiv.org/abs/hep-th/9112076}{{\ttfamily hep-th/9112076}}].

\bibitem{Pope:1991zka}
C.N.~Pope, X.~Shen, K.W.~Xu and K.-j.~Yuan, \emph{{SL(infinity,R) symmetry of quantum W(infinity) gravity}}, \href{https://doi.org/10.1016/0550-3213(92)90067-L}{\emph{Nucl. Phys. B} {\bfseries 376} (1992) 52}.

\bibitem{Strominger:2021mtt}
A.~Strominger, \emph{{$w_{1+\infty}$ Algebra and the Celestial Sphere: Infinite Towers of Soft Graviton, Photon, and Gluon Symmetries}}, \href{https://doi.org/10.1103/PhysRevLett.127.221601}{\emph{Phys. Rev. Lett.} {\bfseries 127} (2021) 221601} [\href{https://arxiv.org/abs/2105.14346}{{\ttfamily 2105.14346}}].

\bibitem{Guevara:2021abz}
A.~Guevara, E.~Himwich, M.~Pate and A.~Strominger, \emph{{Holographic symmetry algebras for gauge theory and gravity}}, \href{https://doi.org/10.1007/JHEP11(2021)152}{\emph{JHEP} {\bfseries 11} (2021) 152} [\href{https://arxiv.org/abs/2103.03961}{{\ttfamily 2103.03961}}].

\bibitem{Strominger:2021lvk}
A.~Strominger, \emph{{w(1+infinity) and the Celestial Sphere}},  \href{https://arxiv.org/abs/2105.14346}{{\ttfamily 2105.14346}}.

\bibitem{Himwich:2021dau}
E.~Himwich, M.~Pate and K.~Singh, \emph{{Celestial operator product expansions and w$_{1+\infty}$ symmetry for all spins}}, \href{https://doi.org/10.1007/JHEP01(2022)080}{\emph{JHEP} {\bfseries 01} (2022) 080} [\href{https://arxiv.org/abs/2108.07763}{{\ttfamily 2108.07763}}].

\bibitem{Jiang:2021ovh}
H.~Jiang, \emph{{Holographic chiral algebra: supersymmetry, infinite Ward identities, and EFTs}}, \href{https://doi.org/10.1007/JHEP01(2022)113}{\emph{JHEP} {\bfseries 01} (2022) 113} [\href{https://arxiv.org/abs/2108.08799}{{\ttfamily 2108.08799}}].

\bibitem{Boyer:1985aj}
C.P.~Boyer and J.F.~Plebanski, \emph{{AN INFINITE HIERARCHY OF CONSERVATION LAWS AND NONLINEAR SUPERPOSITION PRINCIPLES FOR SELFDUAL EINSTEIN SPACES}}, \href{https://doi.org/10.1063/1.526652}{\emph{J. Math. Phys.} {\bfseries 26} (1985) 229}.

\bibitem{Park:1989fz}
Q.-H.~Park, \emph{{Extended Conformal Symmetries in Real Heavens}}, \href{https://doi.org/10.1016/0370-2693(90)90378-J}{\emph{Phys. Lett. B} {\bfseries 236} (1990) 429}.

\bibitem{Park:1989vq}
Q.-H.~Park, \emph{{Selfdual Gravity as a Large $N$ Limit of the Two-dimensional Nonlinear $\sigma$ Model}}, \href{https://doi.org/10.1016/0370-2693(90)91737-V}{\emph{Phys. Lett. B} {\bfseries 238} (1990) 287}.

\bibitem{Adamo:2021lrv}
T.~Adamo, L.~Mason and A.~Sharma, \emph{{Celestial $w_{1+\infty}$ Symmetries from Twistor Space}}, \href{https://doi.org/10.3842/SIGMA.2022.016}{\emph{SIGMA} {\bfseries 18} (2022) 016} [\href{https://arxiv.org/abs/2110.06066}{{\ttfamily 2110.06066}}].

\bibitem{Bu:2022iak}
W.~Bu, S.~Heuveline and D.~Skinner, \emph{{Moyal deformations, W$_{1+\infty}$ and celestial holography}}, \href{https://doi.org/10.1007/JHEP12(2022)011}{\emph{JHEP} {\bfseries 12} (2022) 011} [\href{https://arxiv.org/abs/2208.13750}{{\ttfamily 2208.13750}}].

\bibitem{Bittleston:2023bzp}
R.~Bittleston, S.~Heuveline and D.~Skinner, \emph{{The celestial chiral algebra of self-dual gravity on Eguchi-Hanson space}}, \href{https://doi.org/10.1007/JHEP09(2023)008}{\emph{JHEP} {\bfseries 09} (2023) 008} [\href{https://arxiv.org/abs/2305.09451}{{\ttfamily 2305.09451}}].

\bibitem{Bittleston:2024rqe}
R.~Bittleston, G.~Bogna, S.~Heuveline, A.~Kmec, L.~Mason and D.~Skinner, \emph{{On AdS$_{4}$ deformations of celestial symmetries}}, \href{https://doi.org/10.1007/JHEP07(2024)010}{\emph{JHEP} {\bfseries 07} (2024) 010} [\href{https://arxiv.org/abs/2403.18011}{{\ttfamily 2403.18011}}].

\bibitem{Taylor:2023ajd}
T.R.~Taylor and B.~Zhu, \emph{{w1+\ensuremath{\infty} Algebra with a Cosmological Constant and the Celestial Sphere}}, \href{https://doi.org/10.1103/PhysRevLett.132.221602}{\emph{Phys. Rev. Lett.} {\bfseries 132} (2024) 221602} [\href{https://arxiv.org/abs/2312.00876}{{\ttfamily 2312.00876}}].

\bibitem{Kmec:2024nmu}
A.~Kmec, L.~Mason, R.~Ruzziconi and A.~Yelleshpur~Srikant, \emph{{Celestial $Lw_{1+\infty}$ charges from a twistor action}},  \href{https://arxiv.org/abs/2407.04028}{{\ttfamily 2407.04028}}.

\bibitem{Freidel:2021ytz}
L.~Freidel, D.~Pranzetti and A.-M.~Raclariu, \emph{{Higher spin dynamics in gravity and w1+\ensuremath{\infty} celestial symmetries}}, \href{https://doi.org/10.1103/PhysRevD.106.086013}{\emph{Phys. Rev. D} {\bfseries 106} (2022) 086013} [\href{https://arxiv.org/abs/2112.15573}{{\ttfamily 2112.15573}}].

\bibitem{Freidel:2023gue}
L.~Freidel, D.~Pranzetti and A.-M.~Raclariu, \emph{{On infinite symmetry algebras in Yang-Mills theory}}, \href{https://doi.org/10.1007/JHEP12(2023)009}{\emph{JHEP} {\bfseries 12} (2023) 009} [\href{https://arxiv.org/abs/2306.02373}{{\ttfamily 2306.02373}}].

\bibitem{Hu:2023geb}
Y.~Hu and S.~Pasterski, \emph{{Detector operators for celestial symmetries}}, \href{https://doi.org/10.1007/JHEP12(2023)035}{\emph{JHEP} {\bfseries 12} (2023) 035} [\href{https://arxiv.org/abs/2307.16801}{{\ttfamily 2307.16801}}].

\bibitem{Giorgio:2025}
G.~Pizzolo{\emph{To appear} }.

\end{thebibliography}\endgroup
\bibliographystyle{JHEP}

\end{document}